\newcommand{\ifarxiv}[2]{#2}
\renewcommand{\ifarxiv}[2]{#1}
\newcolumntype{L}[1]{>{\raggedright\arraybackslash}p{#1}}
\newcolumntype{C}[1]{>{\centering\arraybackslash}m{#1}}
\newcolumntype{R}[1]{>{\raggedleft\arraybackslash}p{#1}}
\newcommand{\TODO}[1]{\typeout{TODO: \the\inputlineno: #1}\textbf{{\color{red}[[[ #1 ]]]}}}
\DeclareMathAlphabet{\mathcal}{OMS}{cmsy}{m}{n}
\newcommand{\MSC}[1]{\mathscr{{#1}}}
\newcommand{\qus}[1]{\+{Q}^{\star}_{#1}}
\newcommand{\pprime}{\alpha}
\newcommand{\rejsamp}{\textnormal{\textsf{RejectionSampling}}}
\newcommand{\newsample}{\textnormal{\textsf{MarginSample}}}
\newcommand{\pth}{\textnormal{\textsf{Path}}}
\newcommand{\simulator}{\textnormal{\textsf{Simulate}}}
\newcommand{\linbf}{\textnormal{\textsf{LinearBF}}}
\newcommand{\berrace}{\textnormal{\textsf{BernoulliRace}}}
\newcommand{\subbf}{\textnormal{\textsf{SubtractBF}}}
\newcommand{\eval}{\textnormal{\textsf{Eval}}}
\newcommand{\checkf}{\textnormal{\textsf{Frozen}}}
\newcommand{\recsample}{\textnormal{\textsf{MarginOverflow}}}
\newcommand{\tsamp}{\bar{t}_\textnormal{\textsf{MS}}}
\newcommand{\Lin}[1]{{{\mathsf{Lin}}}\left({#1}\right)} 
\newcommand{\rtsamp}{T_\textnormal{\textsf{MS}}}
\newcommand{\trecor}{\bar{t}_\textnormal{\textsf{MO}}}
\newcommand{\tbf}{T_\textnormal{\textsf{BFS}}}
\newcommand{\tbfup}{\bar{t}_\textnormal{\textsf{BF}}}
\newcommand{\induceddist}[2]{\psi^{{#1}}_{#2}}
\newcommand{\trej}{T_\textnormal{\textsf{Rej}}}
\newcommand{\tvar}{t_\textnormal{{\textsf{var}}}}
\def\Pr{\mathop{\mathbf{Pr}}\nolimits}
\newcommand{\Mod}[3]{{#1}_{{#2}\gets{#3}}}
\newcommand{\vbl}{\mathsf{vbl}}
\newcommand{\True}{\mathtt{True}}
\newcommand{\False}{\mathtt{False}}
\newcommand{\abs}[1]{\left\vert#1\right\vert}
\newcommand{\vstar}[1]{V^{#1}_{\star}}
\newcommand{\gvc}{G_{\mathsf{VC}}}
\newtheorem{theorem}{Theorem}[section]
\newtheorem*{claim*}{Claim}
\newtheorem{condition}[theorem]{Condition}
\newtheorem{example}[theorem]{Example}
\newtheorem{fact}[theorem]{Fact}
\newtheorem{lemma}[theorem]{Lemma}
\newtheorem{proposition}[theorem]{Proposition}
\newtheorem{corollary}[theorem]{Corollary}
\theoremstyle{definition}
\newtheorem{definition}[theorem]{Definition}
\newtheorem{remark}[theorem]{Remark}
\newtheorem*{remark*}{Remark}
\newtheorem{assumption}{Assumption}
\renewcommand{\Pr}[2][]{ \ifthenelse{\isempty{#1}}
  {\mathop{\mathbf{Pr}}\left[#2\right]} {\mathop{\mathbf{Pr}}_{#1}\left[#2\right]} }
\newcommand{\E}[1]{\mathbb{E}\left[{#1}\right]}
\newcommand{\Var}[2][]{ \ifthenelse{\isempty{#1}}
  {\mathbf{\mathbf{Var}}\left[#2\right]}
  {\mathbf{\mathbf{Var}}_{#1}\left[#2\right]} }
\newcommand{\var}[1]{{{\vbl}}\left({#1}\right)}  
\newcommand{\poly}{{\rm poly}}  
\newcommand{\nextvar}[1]{{{\mathsf{NextVar}}}\left({#1}\right)} 
\newcommand{\cfrozen}[1]{\+{C}^{#1}_{\mathsf{frozen}}}
\newcommand{\vfix}[1]{V^{#1}_{\mathsf{fix}}}
\newcommand{\hfix}[1]{H^{#1}_{\mathsf{fix}}}
\newcommand{\vinf}[1]{V^{#1}_{\star{\mathsf{\text{-}inf}}}}
\newcommand{\ccon}[1]{\+{C}^{#1}_{\star{\mathsf{\text{-}con}}}}
\newcommand{\vcon}[1]{V^{#1}_{\star{\mathsf{\text{-}con}}}}
\newcommand{\vst}[1]{V^{#1}_{\star}}
\newcommand{\csfrozen}[1]{\+{C}^{#1}_{\star{\mathsf{\text{-}frozen}}}}
\newcommand{\qs}{\+{Q}^*}
\newcommand{\Formal}[1]{\underline{\mathbf{#1}}}
\newcommand*{\midmathskip}{\hskip0.5\displaywidth\hskip-0.5\totwidth@}
\newcommand{\one}[1]{\mathbbm{1}\left[#1\right]}
\def\^#1{\mathbb{#1}} 
\def\*#1{\mathbf{#1}} 
\def\+#1{\mathcal{#1}} 
\def\-#1{\mathrm{#1}} 
\def\=#1{\boldsymbol{#1}} 
\newcommand{\set}[1]{\left\{#1\right\}}
\newcommand{\defeq}{\triangleq}
\newcommand\openbigstar[1][0.7]{%
  \scalerel*{%
    \stackinset{c}{-.125pt}{c}{}{\scalebox{#1}{\color{white}{$\bigstar$}}}{%
      $\bigstar$}%
  }{\bigstar}
}
\newcommand{\hollowstar}{\text{$\scriptstyle\openbigstar[.7]$}}
\title{Sampling Lov\'{a}sz local lemma for general constraint satisfaction solutions in near-linear time}
\author{
Chunyang Wang
}
\author{
Yitong Yin
}
\address[Kun He]{The Key Lab of Data Engineering and Knowledge Engineering, MOE, Renmin University of China, No. 59 Zhongguancun Street, Haidian District, Beijing, China. \textnormal{E-mail: \url{hekun.threebody@foxmail.com}}. The research of Kun He is supported by the Strategic Priority Research Program of Chinese Academy of Sciences under Grant
No. XDA27000000, the National Natural Science Foundation
of China Grants No. 62002231, 61832003.}
\address[Chunyang Wang, Yitong Yin]{ State Key Laboratory for Novel Software Technology, Nanjing University, 163 Xianlin Avenue, Nanjing, Jiangsu Province, China. \textnormal{E-mails: \url{wcysai@smail.nju.edu.cn}, \url{yinyt@nju.edu.cn} 
}}}
\date{}
\begin{document}


\allowdisplaybreaks
\maketitle



\begin{abstract}
We give a fast algorithm for sampling uniform solutions of \emph{general} constraint satisfaction problems (CSPs) in a local lemma regime. 
Suppose that the CSP has $n$ variables with domain size at most $q$, 
each constraint contains at most $k$ variables, shares variables with at most $\Delta$ constraints, and is violated with probability at most $p$ by a uniform random assignment.
The algorithm returns an almost uniform satisfying assignment in expected $\mathrm{poly}(q,k,\Delta)\cdot\tilde{O}(n)$ time,
as long as a local lemma condition is satisfied:
\[
k\cdot p\cdot q^2\cdot \Delta^5\le C_0\quad\text{for a suitably small absolute constant }C_0.
\]
Previously, under similar local lemma conditions, sampling algorithms with running time polynomial in both $n$ and $\Delta$
were only known for the almost atomic case, where each constraint is violated by a small number of forbidden local configurations.
The key term $\Delta^5$ in our local lemma condition also improves the previously best known $\Delta^7$ for general CSPs~\cite{Vishesh21towards} and $\Delta^{5.714}$ for atomic CSPs, including the special case of $k$-CNF~\cite{Vishesh21sampling, HSW21}.

%

Our sampling approach departs from previous fast algorithms for sampling LLL, which were based on Markov chains.
A crucial step of our algorithm is a recursive marginal sampler that is of independent interests. 
Within a local lemma regime, 
this marginal sampler can draw a random value for a variable according to its marginal distribution,
at a cost independent of the size of the CSP.
%
\end{abstract}

\setcounter{tocdepth}{1}
\tableofcontents

\setcounter{page}{0} \thispagestyle{empty} \vfill
\pagebreak

\section{Introduction}\label{sec:intro}
Constraint satisfaction problems (CSPs) are one of the most fundamental objects in computer science.
A CSP is described by a collection of constraints defined on a set of variables. 
Formally, an instance of constraint satisfaction problem, 
called a \emph{CSP formula}, 
is denoted by $\Phi=(V,\+{Q},\+{C})$.
Here, $V$ is a set of $n=|V|$ variables;
$\+{Q}\triangleq\bigotimes_{v\in V}Q_v$ is a product space of all assignments of variables, 
where each $Q_v$ is a finite domain of size $q_v\triangleq\abs{Q_v}\ge 2$ over where the variable $v$ ranges;
and $\+{C}$ gives a collection of local constraints, 
such that each  $c\in \+{C}$ is a constraint function $c:\bigotimes_{v\in \vbl(c)}Q_v\to\{\True,\False\}$ defined on a subset of variables, denoted by $\vbl(c)\subseteq V$.
An assignment $\={x}\in \+Q$ is called \emph{satisfying} for $\Phi$ if 
\[
\Phi(\={x})\triangleq\bigwedge\limits_{c\in\+{C}} c\left(\={x}_{\vbl(c)}\right)=\True.
\]
%
%
%
%
The followings are some key parameters of a CSP formula $\Phi=(V,\+{Q},\+{C})$:
\begin{itemize}
    \item \emph{domain size} $q=q_\Phi\triangleq\max\limits_{v\in V}\abs{Q_{v}}$  
    and \emph{width} $k=k_\Phi\triangleq\max\limits_{c\in \+{C}}\abs{ {\vbl}(c)}$;
    \item \emph{constraint degree} $\Delta=\Delta_\Phi\triangleq\max\limits_{c\in \+{C}}\abs{\{c'\in \+{C}\mid \vbl(c)\cap \vbl(c')\neq\emptyset\}}$;\footnote{The constraint degree $\Delta$ should be distinguished from the \emph{dependency degree} $D$, which is the maximum degree of the dependency graph: $D\triangleq\max_{c\in \+{C}}\abs{\{c'\in \+{C}\setminus\{c\}\mid \vbl(c)\cap \vbl(c')\neq\emptyset\}}$. Note that $\Delta=D+1$.}
    \item \emph{violation probability} $p=p_{\Phi}\triangleq\max\limits_{c\in \+{C}}\mathbb{P}[\neg c]$, where $\mathbb{P}$ denotes the law for the uniform assignment, 
    in which each $v\in V$ draws its evaluation from~$Q_v$ uniformly and independently at random.
\end{itemize}

The famous \emph{Lov\'{a}sz Local Lemma (LLL)}~\cite{LocalLemma} provides a sufficient criterion for the satisfiability of~$\Phi$.
Specifically, a satisfying assignment for a CSP formula $\Phi$ exists if
\begin{align}\label{eq:classic-LLL-condition}
    \mathrm{e}p\Delta\le 1.
\end{align}
Due to a lower bound of Shearer~\cite{shearer85},
such ``{LLL condition}'' for the existence of satisfying solution is essentially tight if only knowing $p$ and $\Delta$. 
On the other hand,
the \emph{algorithmic} or \emph{constructive LLL} seeks to find a solution efficiently.
A major breakthrough was the Moser-Tardos algorithm~\cite{moser2010constructive},
which guarantees to find a satisfying assignment efficiently under the LLL condition in~\eqref{eq:classic-LLL-condition}.

\medskip
\noindent
\textbf{The sampling LLL.}\,\,
We are concerned with the problem of \emph{sampling Lov\'{a}sz Local Lemma}, 
which has drawn considerable attention in recent years~\cite{GJL19,Moi19,guo2019counting,galanis2019counting,harris2020new,FGYZ20,feng2021sampling,Vishesh21sampling,Vishesh21towards,HSW21,galanis2021inapproximability,feng2022improved}.
In the context of CSP,
it seeks to provide an efficient sampling algorithm for (nearly) uniform generation of satisfying assignments for the CSPs in an LLL-like regime. 
This sampling LLL problem is closely related to the problem of estimating the volume of solution spaces or the partition functions of statistical physics systems,
and is motivated by fundamental tasks, including the probabilistic inferences in graphical models~\cite{Moi19} and the network reliability problems~\cite{GJL19,guo2019polynomial,guo2020tight}. 
%

This problem of sampling LLL turns out to be  computationally more challenging than the traditional algorithmic LLL, 
which requires constructing an arbitrary satisfactory assignment, not necessarily following the correct distribution.
For example, when used as a sampling algorithm, 
the Moser-Tardos algorithm can only guarantee correct sampling on restrictive classes of CSPs~\cite{GJL19}.
Due to the computational lower bounds shown in \cite{BGGGS19,galanis2021inapproximability}, 
a strengthened LLL condition with $c\ge 2$:
\begin{align}\label{eq:sampling-LLL-condition-abstract}
    p\Delta^c\lesssim 1,
\end{align}
is necessary for the tractability of sampling LLL,
even restricted to typical specific sub-classes of CSPs, such as CNF or hypergraph coloring.
Here $\lesssim$ ignores the lower-order terms and the constant factor.

In a seminal work of Moitra \cite{Moi19},
a very innovative algorithm was given for sampling almost uniform $k$-CNF solutions assuming an LLL condition $p\Delta^{60}\lesssim 1$.
%
This sampling algorithm was based on deterministic approximate counting 
by solving linear programs on properly factorized formulas
and has a running time of $n^{\poly(k,\Delta)}$. 
This LP-based approach was later extended to hypergraph coloring \cite{guo2019counting} and random CNF formulas \cite{galanis2019counting},
and finally in a work of Jain, Pham and Vuong \cite{Vishesh21towards} to all CSPs satisfying a substantially improved LLL condition $p\Delta^{7}\lesssim 1$.
All these deterministic approximate counting based algorithms suffered from an $n^{\poly(k,\Delta)}$ time cost.

Historically,
rapidly mixing Markov chains have been the canonical sampling algorithms,
and often have near-linear time efficiency.
However, for sampling LLL, 
there used to be a fundamental barrier for Markov chains.
That is,
despite the ubiquity of solutions, the solution space of CSPs may be highly disconnected through the transition of local Markov chains.

This  barrier of disconnectivity was circumvented in a breakthrough of Feng~\emph{et al.}~\cite{FGYZ20},
in which a rapidly mixing \emph{projected} random walk was simulated efficiently 
on a subset of variables constructed using the marking/unmarking strategy invented in \cite{Moi19}.
Assuming an LLL condition $p\Delta^{20}\lesssim 1$, 
this new algorithm could generate an almost uniform $k$-CNF solution using a time cost within $\poly(k,\Delta)\cdot n^{1.0001}$, 
which is close to linear in the number of variables~$n$.
By observing that this marking/unmarking of variables  
was, in fact a specialization in the Boolean case of compressing variables' states, 
this Markov chain based fast sampling approach was generalized in \cite{feng2021sampling} 
to CSPs beyond the Boolean domain, 
specifically, to all almost \emph{atomic} CSPs (which we will explain later),
assuming an LLL condition $p\Delta^{350}\lesssim 1$.
This bound was remarkably improved to $p\Delta^{7.04}\lesssim 1$ in another work of Jain, Pham and Vuong~\cite{Vishesh21sampling} through a very clever witness-tree-like information percolation analysis of the mixing time,
which was also used later to support a perfect sampler through the coupling from the past (CFTP) in~\cite{HSW21}
with a further improved condition $p\Delta^{5.71}\lesssim 1$.
%

%
%
%

All these fast algorithms for sampling LLL are restricted to the (almost) atomic CSPs, 
in which each constraint $c$ is violated by exactly one (or very few) forbidden assignment(s) on $\vbl(c)$.

\medskip
\noindent
\textbf{Challenges for general CSP.}\,\,
New techniques are needed for fast sampling LLL for general CSPs.
All existing fast algorithms for sampling LLL relied on some projection of the solution space to a much smaller space where the barrier of disconnectivity could be circumvented because the images of the projection might collide and were well connected. 
In order to efficiently simulate the random walk on the projected space and to recover a random solution from a random image, one would hope that the CSP formula were well ``factorized'' into small clusters most of the time because many constraints had  already been satisfied for sure given the current image, 
which was indeed the case for fast  sampling LLL for atomic CSPs \cite{FGYZ20, feng2021sampling, Vishesh21sampling, HSW21}. 
But for general non-atomic CSPs, it may no longer be the case,
because now a bad event (violation of a constraint) may be highly non-elementary,
and hence is no longer that easy to  avoid cleanly after projection, 
which breaks the factorization.

It is possible that the non-atomicity of general CSPs might have imposed greater challenges to the sampling LLL than to its constructive counterpart.
To see this, note that general CSPs can be simulated by atomic ones:
by replacing each general constraint $c$ having $N$ forbidden assignments on $\vbl(c)$,
with $N$ atomic constraints on the same $\vbl(c)$ each forbidding one assignment. 
Such simulation would increase the constraint degree $\Delta$  by a factor of at most $N$ and also decrease the violation probability $p$ by a factor of $N$.
For the classic LLL condition~\eqref{eq:classic-LLL-condition}
where $p$ and $\Delta$ are homogeneous, 
this would not change the LLL condition;
but the regime for the sampling LLL captured by~\eqref{eq:sampling-LLL-condition-abstract} 
would be significantly reduced,
since there $p$ and $\Delta$ are necessarily not homogeneous due to the lower bounds in \cite{BGGGS19,galanis2021inapproximability}.
This situation seems to suggest that  
the non-atomicity of general CSPs might impose bigger challenges
to the sampling LLL than to the existential/constructive LLL.

Indeed, prior to our work, 
it was not known  for general CSPs 
with unbounded width $k$ and degree $\Delta$, 
whether the sampling problem is polynomial-time tractable 
under an LLL condition like \eqref{eq:sampling-LLL-condition-abstract}.

\subsection{Our results}
In this paper, we answer the above open question positively.
We give a new algorithm that departs from all prior fast samplers based on Markov chains and achieves, for the first time, 
a fast sampling of almost uniform satisfying solutions for general CSPs in an improved local lemma regime. 
%


As in the case of algorithmic LLL~\cite{moser2010constructive,HV15}, 
we assume  an abstraction of constraint evaluations, 
because arbitrary constraint functions defined on a super-constant number of variables
can be highly nontrivial to express and evaluate.
%
%
%
Specifically, we assume the following evaluation oracle 
for checking whether a constraint is already satisfied by a partially specified assignment.
\begin{assumption}[\textbf{evaluation oracle}]\label{definition:evaluation-oracle}
There is an \emph{evaluation oracle}  for $\Phi=(V,\+{Q},\+{C})$ such that 
given any constraint $c\in \+C$, any assignment $\sigma\in\+Q_{\Lambda}\triangleq \bigotimes_{v\in \Lambda}Q_v$ specified on a subset $\Lambda\subseteq\vbl(c)$ of variables, 
the oracle answers whether $c$ is already satisfied by $\sigma$, i.e.~$c(\tau)=\True$ for all $\tau\in\+Q_{\vbl(c)}$ that $\tau_{\Lambda}=\sigma_{\Lambda}$.
\end{assumption}

For specific classes of CSPs, e.g.~$k$-CNF or hypergraph coloring, such an oracle is easy to realize.

Assuming such an oracle for constraint evaluations, 
we give the following fast, almost uniform sampler for general CSPs in a local lemma regime. 
Recall the parameters $q,k,p,\Delta$ of a CSP formula $\Phi$.

\begin{theorem}[informal]\label{thm:main-sampling}
There is an algorithm such that given as input any $\varepsilon\in(0,1)$ and any CSP formula $\Phi=(V,\+{Q},\+{C})$ with $n$ variables  satisfying 
\begin{align}\label{eq:main-thm-LLL-condition}
k\cdot p\cdot q^2\cdot \Delta^5\leq \frac{1}{150\mathrm{e}^3},
\end{align}
the algorithm terminates within  $\poly(q,k,\Delta)\cdot n\log\left(\frac{ n}{\varepsilon}\right)$ time in expectation and outputs an almost uniform sample of satisfying assignments for $\Phi$ within $\varepsilon$ total variation distance.
\end{theorem}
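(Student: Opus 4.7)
The plan is to reduce the sampling task to a recursive marginal sampler \newsample which, given a CSP formula together with a partial assignment, draws the value of a specified variable $v$ within small TV error from its conditional marginal $\mu_v$. Given \newsample, the outer algorithm fixes an arbitrary ordering of $V$ and samples variables one at a time, each time invoking \newsample on the formula simplified by the previously revealed values; concatenating the results produces a random assignment whose distribution lies within $\varepsilon$ TV of the uniform satisfying distribution $\mu$, provided each of the $n$ invocations is correct to within TV error $\varepsilon/n$. All the work therefore concerns \newsample, which already subsumes \Cref{thm:main-margin-sampling}.

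\textbf{Design of \newsample via local uniformity.} The first key ingredient is a local uniformity bound: under \eqref{eq:main-thm-LLL-condition}, for every $v\in V$ and every $a\in Q_v$ the (conditional) marginal satisfies $\mu_v(a)\ge\theta$ for an explicit $\theta$ slightly smaller than $1/q$. This follows from the standard LLL-regime computation, combined with the \emph{freezing} trick of \cite{beck1991algorithmic,Moi19,Vishesh21towards}: any constraint whose conditional violation probability gets too large is frozen and treated separately, so the LLL condition is preserved in every recursive subproblem. Equipped with this bound, \newsample draws $r\in[0,1)$ uniformly and, if $r<q\theta$ (the \emph{zone of local uniformity}, which occurs with probability close to $1$), returns a uniform element of $Q_v$. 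In the complementary \emph{zone of indecision} it must output from the residual distribution $\+D(a)=(\mu_v(a)-\theta)/(1-q\theta)$. A subroutine \recsample handles this case by recursively calling \newsample on a carefully ordered sequence of neighbouring variables, simplifying after each call, until the connected component of $v$ in the residual formula has $\poly(\Delta,k)$ size; on such a small component $\mu_v$ can be approximated by exhaustive enumeration or rejection sampling, and a Bernoulli-factory subroutine turns samples from $\mu_v$ into samples from $\+D$ using the affine relation above. Correctness propagates inductively: each \newsample call reproduces $\mu_v$ up to the error of its children.

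\textbf{Expected running time via a coupled path process.} The efficiency argument is the heart of the proof. A naive branching-process bound on the recursion tree of \newsample/\recsample is too weak because it is worst-case in the dependency structure, whereas we need to exploit the LLL regime. I would couple the full recursion tree with a single \emph{path process} $\pth(\cdot)$ of partial assignments, driven by the same random coins, so that any execution of \newsample consuming $T$ units of work deterministically extends $\pth$ to length $\Omega(\log T/\log(q\Delta))$. The probability that $\pth$ reaches length $L$ then admits an LLL-style union bound over possible path skeletons of frozen constraints: every step of the path requires falling into an indecision zone and matching a specific frozen configuration, contributing a factor of order $(1-q\theta)\cdot p\cdot\Delta^{O(1)}$. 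Under \eqref{eq:main-thm-LLL-condition} this per-step factor is bounded by $(\mathrm{e}\Delta)^{-1}$ with room to spare, so the geometric series converges after absorbing the $\poly(q,k,\Delta)$ work done at each recursion node. Multiplying by $n$ variables and the extra $\log(n/\varepsilon)$ factor incurred by truncating at logarithmic depth (to enforce per-call TV error $\varepsilon/n$) yields the claimed bound $\poly(q,k,\Delta)\cdot n\log(n/\varepsilon)$.

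\textbf{Main obstacle.} The delicate step will be the construction and analysis of $\pth(\cdot)$. It must share randomness with the actual recursion, certify that any long execution of \newsample forces a long witness path, and admit a tail bound that plugs directly into the LLL inequality \eqref{eq:main-thm-LLL-condition}. This is non-trivial because \newsample branches, constraints are frozen adaptively along different branches, and the variables targeted by \recsample depend on the running partial assignment, so the coupling has to be defined coherently across all recursive calls. Once the witness encoding is set up, the remaining estimate is a weighted sum over skeletons whose multiplicity is controlled by $\Delta$ and whose probability is controlled by $p$; the exponent $7$ in \eqref{eq:main-thm-LLL-condition} provides precisely the slack needed to close the geometric sum and convert the expected-time bound on \newsample into \Cref{thm:main-sampling}.
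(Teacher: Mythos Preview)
Your outline captures the algorithm correctly and rightly identifies the path process as the central tool, but the running-time argument has a real gap. You propose a deterministic coupling ``execution time $T$ implies $\pth$ has length $\Omega(\log T/\log(q\Delta))$'', which would give $\E{T}\le\E{(q\Delta)^{O(\ell)}}$; with the tail $\Pr{\ell\ge i}\le 2^{-\Omega(i/k\Delta^2)}$ this series converges only when $k\Delta^2\log(q\Delta)=O(1)$, i.e.~when all parameters are bounded, far weaker than the theorem. The problem is structural: the recursion tree of \recsample{} has branching factor $q+1$ (one child per value in $Q_u\cup\{\star\}$), so no deterministic ``work $\Rightarrow$ depth'' coupling with base $q\Delta$ can yield near-linear time.

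The paper does not couple the execution to a single path. It bounds the expected cost by a \emph{weighted} sum $\lambda(\+T_\sigma)=\sum_\tau\rho(\tau)\cdot(\text{local cost})$ over the full recursion tree, where the child weights are marginal probabilities (summing to $1$) plus an extra $1-q_u\theta_u$ for the $\star$-child, so the per-level growth of total weight is only $2-q\theta\le 1+(4\mathrm{e}k\Delta^3)^{-1}$. An algebraic inequality then gives $\lambda(\+T_\sigma)\le\E{H(\pth(\sigma))}$ with $H$ carrying $(2-q\theta)^i$ factors; since $(2-q\theta)^{k\Delta^2}\le 1.5$, the $\{2,3\}$-tree tail bound $\Pr{\ell\ge ik\Delta^2}\le\Delta\cdot 2^{-i}$ dominates and the sum is $\poly(q,k,\Delta)$. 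Two further points your sketch misses: the paper does \emph{not} truncate---the untruncated recursion is a perfect marginal sampler given an oracle deciding whether $\mathbb{P}[\neg c\mid\sigma]>p'$, and the $\varepsilon$ error and $\log(n/\varepsilon)$ factor enter only when that oracle is replaced by a Monte-Carlo estimate using $O((1/p')\log(n\Delta/\varepsilon))$ evaluation queries per call; and the outer loop cannot invoke \newsample{} on every variable (local uniformity is not self-reducible), so $\sigma$-fixed variables are skipped and completed by a final rejection-sampling pass with its own $\{2,3\}$-tree analysis.
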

The formal statement of the theorem is in \Cref{maincor} (for termination and correctness of sampling) and in  \Cref{mainef-approx} (for efficiency of sampling). 
%

The condition in \eqref{eq:main-thm-LLL-condition} becomes  $p\Delta^{5+o(1)}\lesssim 1$ 
when $p\le (qk)^{-\omega(1)}$, 
while a typical case is usually given by a much smaller $p\le q^{-\Omega(k)}$. 
The  previous best bound for sampling general CSP solutions was that 
$q^3kp\Delta^7<c$ for a small constant $c$,
achieved by the deterministic approximate counting based algorithm in \cite{Vishesh21towards} 
whose  running  time was $(n/\varepsilon)^{\poly(k,\Delta,\log q)}$. 
We remark that our bound also improves the previous best bound, $p\Delta^{5.714}\lesssim 1$, for sampling almost atomic CSP and $k$-SAT~\cite{HSW21,Vishesh21sampling}.

Let $Z$ be the total number of satisfying assignments for $\Phi$. 
A $\hat{Z}$ is called an \emph{$\varepsilon$-approximation} of $Z$ if $(1-\varepsilon)Z\le \hat{Z}\le (1+\varepsilon)Z$.
By routinely going through the non-adaptive annealing process in \cite{FGYZ20}, 
the approximate sampler in \Cref{thm:main-sampling} can be used as a black-box to give 
for any $\varepsilon\in(0,1)$
an {$\varepsilon$-approximation} of $Z$ in time $\poly\left(q,k,\Delta\right)\cdot\tilde{O}\left(n^2\varepsilon^{-2}\right)$ with high probability.




\subsubsection{Perfect sampler}
%
The evaluation oracle in \Cref{definition:evaluation-oracle} in fact checks the sign of $\mathbb{P}[\neg c\mid \sigma]$, the probability that a constraint $c$ is violated given a partially specified assignment $\sigma$.
If further this probability can be estimated efficiently,
then the sampling in \Cref{thm:main-sampling} can be made perfect, 
where the output sample follows exactly the target distribution.
%

\begin{theorem}[informal]\label{thm:main-sampling-perfect}
For the input class of CSPs, if there is such an FPTAS for violation probability: 
\begin{itemize}
\item for any constraint $c\in\+{C}$, any assignment $\sigma\in\+Q_{\Lambda}$ specified on a subset $\Lambda\subseteq\vbl(c)$, and $0<\varepsilon<1$,
an $\varepsilon$-approximation of $\mathbb{P}[\neg c\mid \sigma]$  is returned deterministically within $\poly(q,k,1/\varepsilon)$ time,
\end{itemize}
then the sampling algorithm in \Cref{thm:main-sampling} returns a perfect sample of uniform satisfying assignment within $\poly(q,k,\Delta)\cdot n$  time in expectation under the same condition \eqref{eq:main-thm-LLL-condition}.
\end{theorem}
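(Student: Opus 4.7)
The plan is to revisit each place in the approximate sampler of \Cref{thm:main-sampling} where $\varepsilon$ error is introduced and replace it by an exact implementation, using the FPTAS to drive the residual distributions to be exactly sampleable via a Bernoulli factory. Recall the marginal sampler $\newsample$ sketched in the technique overview: step (a) draws a uniform $r\in[0,1)$, step (b) outputs a uniform value whenever $r<q\theta$, and step (c) must sample from the residual distribution $\+D(a)=(\mu_v(a)-\theta)/(1-q\theta)$ on the zone of indecision, realized by a recursive combination of $\newsample$ on other variables plus a rejection/Bernoulli-factory step over the now-factorized local component containing $v$. In the approximate version the only source of TV error is this final Bernoulli-factory step, which without direct access to conditional marginals must be truncated at a per-call precision of roughly $\varepsilon/n$, thereby contributing the $\log(n/\varepsilon)$ factor to the running time.

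First I would verify that, conditional on reaching a partial assignment $X$ under which $v$ sits in a small connected component of the residual (unfrozen) formula, every probability that the Bernoulli factory requires, namely the marginals $\mu_v(a\mid X)$ and the normalizer $1-q\theta$, can be deterministically approximated to any additive precision in $\poly(q,k,1/\varepsilon')$ time by combining the FPTAS for $\mathbb{P}[\neg c\mid \sigma]$ with short enumeration on the local component. This is the only place where the new hypothesis is used; everywhere else the algorithm is run verbatim.

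Second, I would substitute a perfect Bernoulli factory for step (c): given deterministic estimates of $\mu_v(a)-\theta$ and $1-q\theta$ to arbitrary precision, a standard Keane--O'Brien / Nacu--Peres style construction produces an \emph{exact} sample from $\+D(\cdot)$ in an expected number of coin flips bounded by a function of the slack $1-q\theta$ and of $\mu_v(a)-\theta$. Under the local-lemma condition \eqref{eq:main-thm-LLL-condition} these quantities are bounded away from $0$ by a universal constant, so the expected cost of one perfect factory call is $\poly(q,k,\Delta)$. The outer structure of $\newsample$, namely the choice of $r$, the freezing of high-violation constraints, and the recursive ordering of variables, is left untouched.

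Finally, I would re-run the path-based efficiency analysis $\pth(\cdot)$ of \Cref{thm:main-sampling} verbatim. The recursion tree depends only on the uniform coins $r\in[0,1)$ at each call and on the freezing state, not on how step (c) is implemented internally, so the combinatorial bound on the expected recursion size carries over. Since step (c) is now exact, no precision budget is split across the recursion: this eliminates the $\log(1/\varepsilon)$ factor (there is no $\varepsilon$) and the $\log n$ factor (no union bound over $n$ recursive calls is needed), yielding $\poly(q,k,\Delta)\cdot n$ expected time. The main obstacle is ensuring that the perfect Bernoulli factory itself terminates in expected $\poly(q,k,\Delta)$ time \emph{uniformly} throughout the recursion; this reduces to an inductive strengthening of the local-uniformity bound, showing that the slack $1-q\theta$ and the residual weights $\mu_v(a)-\theta$ stay bounded away from the boundary of the simplex by universal constants at every recursive level, which follows from \eqref{eq:main-thm-LLL-condition} together with the freezing rule already analysed in the approximate case.
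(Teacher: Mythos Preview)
Your proposal misidentifies both the source of bias in the approximate sampler and the place where the FPTAS is actually needed. In the paper the logic runs the opposite way: \Cref{Alg:main} is \emph{already} a perfect sampler whenever the oracle of \Cref{assumption:frozen-oracle} (the \checkf{} oracle that distinguishes $\mathbb{P}[\neg c\mid\sigma]>p'$ from $\mathbb{P}[\neg c\mid\sigma]<0.99p'$) is available; this is \Cref{maincor}. The Bernoulli factory at \Cref{Line-bfs-rec} of \Cref{Alg:Recur} is exact as written, because it is fed by \rejsamp{}$(\Phi,\sigma,\{v\})$, which returns perfect samples from $\mu_v^\sigma$ on the factorized component; no truncation or per-call $\varepsilon/n$ budget is involved there. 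The $\log(n/\varepsilon)$ factor in \Cref{thm:main-sampling} arises entirely from replacing the \checkf{} oracle by a Monte Carlo estimator with $N=\Theta\!\left(\tfrac{1}{p'}\log\tfrac{n}{\varepsilon}\right)$ trials, and the FPTAS hypothesis of \Cref{thm:main-sampling-perfect} is used solely to implement that same oracle deterministically in $\poly(q,k)$ time, after which \Cref{mainef} gives the stated bound directly.

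Consequently your plan leaves the actual bias unaddressed: you say the freezing rule is ``left untouched,'' but untouched means it is still decided by Monte Carlo, and every wrong frozen/unfrozen decision alters which variable is returned by $\nextvar{\sigma}$, hence the entire recursion tree, in a way no downstream Bernoulli-factory fix can repair. Separately, your proposed use of the FPTAS is mismatched: $\mathbb{P}[\neg c\mid\sigma]$ is a product-measure quantity, whereas the factory needs the Gibbs marginal $\mu_v^\sigma$; on a small component the latter is obtained exactly by rejection sampling (or enumeration) without invoking the FPTAS at all. The correct one-line argument is: the FPTAS realizes \Cref{assumption:frozen-oracle} in $\poly(q,k)$ time per query, so by \Cref{maincor} the output is exactly $\mu$, and by \Cref{mainef} the expected cost is $O(q^2k^2\Delta^{10}n)$ evaluation queries plus $O(k\Delta^7 n)$ \checkf{} queries plus $O(q^3k^3\Delta^{10}n)$ computation, each \checkf{} query now costing $\poly(q,k)$.
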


The formal statement of the theorem is in \Cref{maincor} (for termination and correctness of sampling) and in  \Cref{mainef} (for efficiency of sampling). 
In fact, we prove this perfect sampler first, and then realize the FPTAS assumed in \Cref{thm:main-sampling-perfect} using Monte Carlo experiments, 
which introduces a bounded bias to the sampling and gives us the approximate sampler claimed in \Cref{thm:main-sampling}.

For concrete classes of CSPs 
defined by simple local constraints,
it is no surprise to see that the probability $\mathbb{P}[\neg c\mid \sigma]$ 
almost always has an easy-to-compute closed-form expression,
in which case we have  a perfect sampler without assuming 
the oracles in \Cref{definition:evaluation-oracle} and in \Cref{thm:main-sampling-perfect}.

The followings are two examples of non-atomic CSPs which admit linear-time perfect samplers.

\begin{example}[$\delta$-robust $k$-SAT]
The $n$ variables are Boolean, each clause contains exactly $k$ literals, and a clause is satisfied if and only if at least $\delta k$ of its literals have the outcome $\True$.
\begin{itemize}
    \item
    For $\delta$-robust $k$-SAT with variable degree $d$ (each variable appears in at most $d$ clauses) satisfying 
    \[0<\delta<\frac{1}{2}, \quad 
    k\geq \frac{24\ln{k}+20\ln{d}+40}{\left(1-2\delta\right)^2},
    \]
    a perfect sample of uniform satisfying solutions is returned within expected time $\poly(k,d)\cdot n$ .
\end{itemize}
\end{example}

\begin{example}[$\delta$-robust hypergraphs $q$-coloring]\label{example:robust-coloring}
Each vertex is colored with one of the $q$ colors, each hyperedge is $k$-uniform and is satisfied if and only if there are no $(1-\delta)k$ vertices with the same color.
\begin{itemize}
     \item For $k$-uniform hypergraphs on $n$ vertices with maximum vertex degree $d$ satisfying 
    \[
    (1-\delta)k\geq 15,\quad q\geq \frac{7d^{\frac{5}{(1-\delta)k-3}}\cdot 4^{\frac{1}{(1-\delta)}}}{(1-\delta)^{1.25}},
    \]
    a perfect sample of uniform satisfying coloring is returned within expected time $\poly(q,k,d)\cdot n$. 
\end{itemize}
\end{example}


\subsubsection{Marginal sampler}
The core component of our sampling algorithm is a \emph{marginal sampler} for drawing from marginal distributions.
Let $\mu=\mu_{\Phi}$ denote the uniform distribution over all satisfying assignments for $\Phi$, and for each $v\in V$, let $\mu_v$ denote the marginal distribution at $v$ induced by $\mu$.
%

\begin{theorem}[informal]\label{thm:main-margin-sampling}
There is an algorithm such that given as input any $\varepsilon\in(0,1)$, any CSP formula $\Phi=(V,\+{Q},\+{C})$ satisfying \eqref{eq:main-thm-LLL-condition}, and any $v\in V$, 
the algorithm returns 
a random value $x\in Q_v$ distributed approximately as $\mu_v$ within total variation distance $\varepsilon$,
within $\poly\left(q,k,\Delta,\log(1/\varepsilon)\right)$ time in expectation.
\end{theorem}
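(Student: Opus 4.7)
The approximate marginal sampler is obtained by truncating a potentially unbounded perfect recursive marginal sampler $\newsample(v)$ at a recursion depth $T=\Theta(\log(1/\varepsilon))$, so that the probability of hitting the cutoff, and hence the total variation error, is at most $\varepsilon$. The perfect sampler adapts the Anand--Jerrum ``zone of local uniformity / zone of indecision'' paradigm to the LLL regime by handling the indecision zone with a recursive guiding phase that samples auxiliary variables until the formula factorizes near $v$.

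The first step is a \emph{local uniformity} estimate: under~\eqref{eq:main-thm-LLL-condition} there exists $\theta>0$ with $q\theta=1-O(pk\Delta^{O(1)})$ such that for every variable $v$ and every $a\in Q_v$,
\[
\mu_v(a)\ \ge\ \theta.
\]
I would prove this by the classical LLL comparison of partition functions $Z_{\Phi\mid v=a}/Z_{\Phi}$, combined with a \emph{freezing} rule that absorbs any constraint whose conditional violation probability becomes too large into the pinning; the rule is chosen so that the frozen formula still satisfies~\eqref{eq:main-thm-LLL-condition}, which is crucial for applying the same estimate inside each recursive call.

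With $\theta$ in hand I write $\mu_v(a)=\theta+(1-q\theta)\mathcal{D}(a)$ and implement $\newsample(v)$ as follows: draw $r\in[0,1)$ uniformly; if $r<q\theta$ return a uniformly random value in $Q_v$; otherwise invoke $\recsample(v)$, which draws from $\mathcal{D}$. To realize $\recsample(v)$ I iterate over a fixed order of nearby variables, calling $\newsample$ recursively on each to fix it to its correct marginal value; every fixed variable may satisfy or freeze some adjacent constraints, shrinking the connected component of $v$ in the residual formula. Once this component has size $O(\log\Delta)$, $\mu_{v\mid X}$ is computed exactly by exhaustive enumeration via the evaluation oracle, and a Bernoulli-factory driver then converts the exact $\mu_{v\mid X}$ sampler into a sampler for $\mathcal{D}_{v\mid X}$ via the linear relation above. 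Correctness is an induction on the recursion tree, since the decomposition of $\mu_v$ is exact and each unpruned call is exact conditional on termination.

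The main obstacle is bounding the expected running time: the recursion is too branchy for a direct branching-process analysis in the style of Anand--Jerrum, which targets a worst-case rather than an average-case LLL regime. I would therefore couple $\newsample$ to an auxiliary path process $\pth(\cdot)$ that records the sequence of partial assignments generated each time a call enters the indecision zone; any long execution of the algorithm forces $\pth$ to produce a correspondingly long path. Each additional step of $\pth$ requires both entering the indecision zone, which happens with probability $1-q\theta=O(pk\Delta^{O(1)})$ by the local-uniformity bound, and a successful guiding phase, whose combinatorial cost is controlled under~\eqref{eq:main-thm-LLL-condition} by an LLL-witness-tree / $2$-tree enumeration producing a subcritical cluster of constraints. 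Multiplying these factors yields a geometric tail on the length of $\pth$ with base $p\Delta^{O(1)}$, which integrates to $\poly(q,k,\Delta)$ expected time per call to $\newsample$; together with the truncation factor $\log(1/\varepsilon)$ this matches the bound stated in the theorem.
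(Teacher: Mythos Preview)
Your outline matches the paper on most of the scaffolding—the local-uniformity lower bound, freezing of high-probability constraints, the ``zone of local uniformity / zone of indecision'' split, the auxiliary $\pth$ process, and a $\{2,3\}$-tree union bound for the tail—but two pieces depart from the paper in ways that would not go through as stated.

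\emph{Base case of the recursion.} You propose to stop once the connected component of $v$ reaches size $O(\log\Delta)$ and then enumerate exactly. Enumerating over $O(k\log\Delta)$ variables costs $q^{O(k\log\Delta)}$, which is not $\poly(q,k,\Delta)$; and nothing in the guiding phase forces the component to shrink to that size. The paper's stopping rule is structural: recurse until $\nextvar{\sigma}=\perp$, i.e.\ no non-fixed variable borders the $\star$-region. At that point the component around $v$ may still be large; the paper draws from $\mu_v^\sigma$ by \emph{rejection sampling} on that component, whose expected cost $(1-\mathrm{e}p'q)^{-|\+{C}_v^\sigma|}$ is controlled only \emph{in expectation} via the same $\{2,3\}$-tree tail that governs $\ell(\pth)$. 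Those rejection-sampling draws are then fed as a black-box coin into the Bernoulli factory for $\+{D}$.

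\emph{Where the $\varepsilon$ comes from.} You introduce the approximation by truncating the recursion at depth $T=\Theta(\log(1/\varepsilon))$. The paper never truncates: the recursive sampler is run to completion and is \emph{perfect} once one has the frozen-decision oracle that distinguishes $\mathbb{P}[\neg c\mid\sigma]>p'$ from $\mathbb{P}[\neg c\mid\sigma]<0.99\,p'$. All of the $\varepsilon$, and the entire $\log(1/\varepsilon)$ factor in the running time, come from realizing this oracle by a Monte Carlo estimate with $N=O\!\bigl((1/p')\log(k\Delta/\varepsilon)\bigr)$ evaluation queries per call, followed by a Markov/union-bound coupling against the ideal oracle. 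You apply a freezing rule but never say how the conditional violation probability is decided from the evaluation oracle alone—this is precisely the step where the paper's $\varepsilon$ enters. Truncation cannot stand in for it: with the frozen-decision oracle assumed, no truncation is needed (the sampler is already perfect with $\poly(q,k,\Delta)$ expected cost); without it, your freezing rule is not implementable.
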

This marginal sampler is also perfect under the same assumption as in \Cref{thm:main-sampling-perfect}.
Another byproduct of this marginal sampler is the following algorithm for probabilistic inference.

\begin{theorem}[informal]\label{thm:main-inference}
There is an algorithm such that given as input any $\varepsilon,\delta\in(0,1)$, any CSP formula $\Phi=(V,\+{Q},\+{C})$ satisfying \eqref{eq:main-thm-LLL-condition}, and any $v\in V$, 
the algorithm returns for every $x\in Q_v$ an $\varepsilon$-approximation of the marginal probability $\mu_v(x)$  within $\poly\left(q,k,\Delta,1/\varepsilon,\log(1/\delta)\right)$ time with probability at least $1-\delta$.
\end{theorem}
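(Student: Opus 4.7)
The plan is to reduce probabilistic inference to repeated calls of the approximate marginal sampler from \Cref{thm:main-margin-sampling}, using a standard Monte Carlo estimator and a Chernoff concentration bound. Concretely, I would invoke the marginal sampler $N$ times independently at vertex $v$ with a TV error parameter $\varepsilon'$ to be chosen, obtaining $X_1,\ldots,X_N\in Q_v$, and for each $x\in Q_v$ output the empirical frequency
\[
\hat{\mu}_v(x)\;\triangleq\;\frac{1}{N}\sum_{i=1}^{N}\mathbbm{1}[X_i=x].
\]
The running time per sample is $\poly(q,k,\Delta,\log(1/\varepsilon'))$ by \Cref{thm:main-margin-sampling}.

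To turn this into a multiplicative $\varepsilon$-approximation, I would first establish a uniform lower bound $\mu_v(x)\ge\theta$ for all $x\in Q_v$. Under the LLL condition \eqref{eq:main-thm-LLL-condition}, the standard local uniformity property for CSPs (already used in the design of \newsample{}) yields $\theta=\Omega(1/q)$: the condition $q^2 k p\Delta^7\le 1/(150\mathrm{e}^3)$ is much stronger than $\mathrm{e}p\Delta\le 1$, so an LLL-based computation of the marginal at $v$ gives $\mu_v(x)\ge \frac{1}{q}(1-\mathrm{e}p\Delta)^{O(\Delta)}=\Omega(1/q)$. With this at hand, I would pick $\varepsilon'=\varepsilon/(4q)$ so that the TV bias between the true $\mu_v(x)$ and $\mathbb{E}[\hat{\mu}_v(x)]$ is at most $\varepsilon'\le \tfrac{\varepsilon}{4}\mu_v(x)$ for every $x$.

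Next I would apply the multiplicative Chernoff bound to each coordinate. Setting $N=C\cdot q\cdot \varepsilon^{-2}\cdot \log(q/\delta)$ for a sufficiently large constant $C$, the Chernoff bound gives, for each fixed $x$,
\[
\Pr\!\left[\,\bigl|\hat{\mu}_v(x)-\mathbb{E}[\hat{\mu}_v(x)]\bigr|>\tfrac{\varepsilon}{2}\mu_v(x)\,\right]\;\le\;\tfrac{\delta}{q},
\]
since $N\mu_v(x)=\Omega(N/q)=\Omega(\varepsilon^{-2}\log(q/\delta))$. A union bound over the $q$ values in $Q_v$ pushes the total failure probability below $\delta$, and combined with the $(\varepsilon/4)$-bias from the approximate sampler, the overall estimator is an $\varepsilon$-approximation for every $x\in Q_v$ simultaneously with probability at least $1-\delta$. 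The total running time is $N\cdot \poly(q,k,\Delta,\log(1/\varepsilon'))=\poly(q,k,\Delta,1/\varepsilon,\log(1/\delta))$, as required.

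The only nontrivial step is the lower bound $\mu_v(x)\ge\Omega(1/q)$, which controls both the sample complexity and the allowable TV bias; without it, the Chernoff bound would not yield a multiplicative guarantee. Fortunately, this is a direct consequence of local uniformity for CSPs in the local lemma regime, a property that is already at the heart of the algorithms underlying \Cref{thm:main-sampling} and \Cref{thm:main-margin-sampling}, so no new structural input is needed beyond what the previous theorems have already provided. Everything else is a routine Monte Carlo estimation together with a union bound.
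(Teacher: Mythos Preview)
Your approach is correct and is the same as the paper's: Monte Carlo estimation via independent calls to the approximate marginal sampler of \Cref{thm:main-margin-sampling}, together with the local-uniformity lower bound $\mu_v(x)\ge\Omega(1/q)$ (which the paper derives from \Cref{generaluniformity}) and a multiplicative Chernoff bound.

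There is one point you gloss over. \Cref{thm:main-margin-sampling} bounds the running time of each sampler call only \emph{in expectation}, so your sentence ``the total running time is $N\cdot\poly(q,k,\Delta,\log(1/\varepsilon'))$'' is an expected bound, not a worst-case one. A naive Markov-plus-truncation conversion of $N$ calls would cost a $1/\delta$ factor in the time, not the claimed $\log(1/\delta)$. The paper handles this by a two-stage argument: first take only $O(q/\varepsilon^2)$ samples and aim for constant success probability---Markov's inequality bounds the time with probability $>0.9$, and Chernoff gives a $(1\pm\varepsilon)$-estimate with probability $>0.9$, so truncating yields a fixed-time procedure that succeeds with probability $>0.8$; then repeat this $O(\log(q/\delta))$ times and take the coordinatewise median. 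Your direct Chernoff/union-bound route over $N=O(q\varepsilon^{-2}\log(q/\delta))$ samples is fine for the accuracy analysis, but to match the stated time bound you need the same truncate-then-median step.
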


The above two theorems are formally restated and proved in \Cref{thm:margin-sample-inference}.

By a self-reduction,
the sampling and inference algorithms in Theorems \ref{thm:main-margin-sampling} and \ref{thm:main-inference} remain to hold for the marginal distributions $\mu_v^\sigma$ conditioning on a feasible partially specified assignment $\sigma$,
as long as the LLL condition \eqref{eq:main-thm-LLL-condition} is satisfied by the new instance $\Phi^\sigma$ obtained from pinning $\sigma$ onto $\Phi$.

Both the above algorithms for marginal sampling and probabilistic inference are \textbf{local algorithms whose  costs are independent of $n$}.
Previously, 
in order to simulate or estimate the marginal distribution of a variable, 
it was often necessary to generate a full assignment on all $n$ variables, or at least pay no less than that.
%
One might have asked the following natural question: 
\begin{center}
    \emph{Can these locally defined sampling or inference problems be solved at a local cost?}
\end{center}
%
However, decades have passed, and only recently has such a novel local algorithm been discovered for marginal distributions in infinite spin systems \cite{anand2021perfect}, 
which is also our main source of inspiration.

\subsection{Technique overview}\label{subsec:technique}
As we have explained before, 
non-atomicity of constraints causes a barrier for the current Markov chain based algorithms~\cite{FGYZ20,feng2021sampling,Vishesh21sampling,HSW21,feng2022improved}.
There is another family of sampling algorithms, which we call ``resampling based'' algorithms~\cite{fill2000randomness,GJL19,FVY19,jerrum2021fundamentals,feng2022perfect}.
These algorithms use resampling of variables to fix the assignment until it follows the right distribution, 
morally like the Moser-Tardos algorithm,
and they are not as affected by disconnectivity of solution space as Markov chains, 
but here a principle to ensure the correct sampling 
is to resample the variables that the algorithm has observed and conditioned on,
which also causes trouble on non-atomic constraints,
because to ensure such constraints are satisfied,
the algorithm has to observe too many variables,
whose resampling would cancel the progress of the algorithm.

We adopt a new idea of sampling, which we call the \emph{recursive marginal sampler}.
%
It is somehow closer to the resampling based algorithms than to the Markov chains, 
but thanks to its recursive nature, the algorithm avoids excessive resampling.
This algorithm is inspired by a recent novel  algorithm of Anand and Jerrum \cite{anand2021perfect} for perfectly sampling in infinite spin systems,
where a core component is such a marginal sampler that can draw a spin according to its marginal distribution. 

Now let us consider the uniform distribution $\mu$ over all satisfying assignments of a CSP formula $\Phi=(V,\+{Q},\+{C})$, and its marginal distribution $\mu_v$ at a variable $v\in V$, say over domain $Q_v=[q]$.
To sample from this $\mu_v$ over $[q]$,
an idea is to exploit the so-called ``local uniformity" property \cite{haeupler2011new}, 
which basically says that $\mu_v$ should not be far from a uniform distribution over $[q]$ in total variation distance when $\Phi$ satisfies some local lemma condition.
Therefore, a uniform sample from $[q]$ already gives a coarse sample of $\mu_v$.
It remains to boost such a coarse sampler to a sampler with arbitrary precision. 

By the local uniformity, 
there exists a $\theta< \frac{1}{q}$ close enough to $\frac{1}{q}$, such that 
\begin{align}
\forall x\in[q],\quad
\mu_v(x)\geq \theta.\label{eq:local-uniformity-example}
\end{align}
The marginal distribution $\mu_v$ can then be divided as 
\[
q\theta \cdot \+{U} + (1-q\theta) \cdot \+{D},
\]
where $\+{U}$ is the uniform distribution over $[q]$ and $\+{D}$ gives a distribution of ``overflow'' mass such that:
\[
\forall x\in[q],\quad
\+{D}(x)=\frac{\mu_v(x)-\theta}{1-q\theta}.
\]

Sampling from $\mu_v$ then can follow this strategy:
with probability $q\theta$, 
the algorithm falls into the ``zone of local uniformity'' and returns a uniform sample from $\+{U}$;
and with probability $1-q\theta$, 
the algorithm falls into the ``zone of indecision'' and has to draw a sample from this overflow distribution $\+{D}$,
which can be done by constructing a Bernoulli factory that accesses $\mu_v$ as an oracle. 
But wait, if we had such an oracle for $\mu_v$ in the first place, why would sampling from $\mu_v$ even be a problem?

The above ``chicken or egg'' paradox is somehow resolved by a simple observation: 
if enough many other variables had already been sampled correctly, say with outcome $X$, 
then assuming a strong enough LLL condition,
there is a good chance that the resulting formula $\Phi^X$ was ``factorized'' into small clusters, 
from where a standard rejection sampling on $\Phi^X$ would be efficient for sampling from $\mu_v^X$, and overall from $\mu_v$.
Therefore, the sampling strategy is now corrected as:
after falling into the ``zone of indecision'' and before trying to draw from the overflow distribution $\+{D}$,
the algorithm picks another variable $u$ whose successful sampling might help factorize $\Phi$,
and recursively apply the marginal sampler at $u$ to draw from $u$'s current marginal distribution first.
The only loose end now is that the LLL condition is not self-reducible, meaning it is not invariant under arbitrary pinning. 
We adopt the idea of ``freezing" constraints used in \cite{Vishesh21towards} to guide the algorithm to pick variables for sampling. 
The LLL condition is replaced by a more refined invariant condition that guarantees 
for each variable picked for sampling, 
the same local uniformity as in \eqref{eq:local-uniformity-example} 
to persist throughout the algorithm,
and  also guarantees a good chance of factorization 
while there are no other variables to pick.


To show the fast convergence of the recursive sampler, 
in \cite{anand2021perfect} the strategy was to show that the branching process given by the recursion tree always has decaying offspring number in expectation given the worst-case boundary condition, 
which is not true here.
Instead, 
we apply a more average-case style analysis
and bound the expected cost of the sampler according to the recursion tree directly.

To achieve a sharper bound, we design a new combinatorial structure named generalized $\{2,3\}$-tree.
In most works on counting/sampling LLL, 
two types of bad events are considered: 
one is that the assignment of a marked variable does not fall into the zone of local uniformity;
the other is that a constraint is still not satisfied after that a large proportion of its variables are assigned~\cite{Moi19,feng2021sampling,guo2019counting,Vishesh21towards}.
In previous work,
these two bad events are treated similarly and bounded using a combinatorial structure named $\{2,3\}$-tree~\cite{alon1991parallel}.
A crucial observation is that the densities of these two types of bad events are different, which inspires our design of this new combinatorial structure to take advantage of this property and push the bounds beyond state-of-the-arts.

\section{Notations for CSP}\label{sec:CSP-notation}
We recall the definition of CSP formula $\Phi=(V,\+{Q},\+{C})$ in \Cref{sec:intro}.
We use $\Omega=\Omega_{\Phi}$ to denote the set of all satisfying assignments of $\Phi$,
and use $\mu=\mu_{\Phi}$ to denote the uniform distribution over $\Omega$. 
Recall that $\mathbb{P}$ denotes the law for the uniform product distribution over $\+{Q}$.
For $C\subseteq\+{C}$, denote $\vbl(C)\triangleq\bigcup_{c\in C}\vbl(c)$;
and for $\Lambda\subseteq V$, denote ${\+Q}_\Lambda\triangleq\bigotimes_{v\in \Lambda}Q_v$.
%
%
We introduce a notation for {partial assignments}.
%
%
%
\begin{definition}[partial assignment]\label{def:partial-assignment}
Given a CSP formula $\Phi=(V,\+{Q},\+{C})$, define: 
\[
{\+Q}^\ast\triangleq\bigotimes_{v\in V}\left(Q_v\cup\{\star,\hollowstar\}\right),
\]
where $\star$ and $\hollowstar$ are two special symbols not in any $Q_v$.
Each $\sigma\in {\+Q}^\ast$ is called a \emph{partial assignment}.
\end{definition}
In a partial assignment $\sigma\in {\+Q}^\ast$, each variable $v\in V$ is classified as follows:
\begin{itemize}
\item $\sigma(v)\in Q_v$ means that  $v$ is \emph{accessed}  by the algorithm and \emph{assigned} with the value $\sigma(v)\in Q_v$;
\item $\sigma(v)=\star$ means that  $v$ is just \emph{accessed} by the algorithm but  \emph{unassigned} yet with a value in $Q_v$;
\item $\sigma(v)=\hollowstar$  means that  $v$ is \emph{unaccessed} by the algorithm and hence \emph{unassigned} with any value.
\end{itemize}
Furthermore, 
we use $\Lambda(\sigma)\subseteq V$ and $\Lambda^{+}(\sigma)\subseteq V$ to respectively denote the sets of assigned and accessed variables in a partial assignment $\sigma\in {\+Q}^\ast$, that is:
\begin{align}
\Lambda(\sigma)
\triangleq \{v\in V\mid  \sigma(v) \in Q_v\}
\quad\text{ and }\quad
\Lambda^{+}(\sigma)
\triangleq \{v\in V\mid  \sigma(v)\neq \hollowstar\}.\label{eq:def-Lambda}
\end{align}
Given any partial assignment $\sigma\in {\+Q}^\ast$ and variable $v\in V$, we further denote by $\Mod{\sigma}{v}{x}$ the partial assignment obtained from modifying $\sigma$ by replacing $\sigma(v)$ with $x\in Q_v\cup\{\star,\hollowstar\}$.
%

%
A partial assignment $\tau\in\qs$ is said to \emph{extend} a partial assignment $\sigma\in {\+Q}^\ast$ if 
$\Lambda(\sigma)\subseteq\Lambda(\tau)$, $\Lambda^+(\sigma)\subseteq\Lambda^+(\tau)$, 
and $\sigma,\tau$ agree with each other over all  variables in $\Lambda(\sigma)$.
A partial assignment $\sigma\in {\+Q}^\ast$ is said to satisfy a constraint $c\in\+{C}$ if $c$ is satisfied by all full assignments $\tau\in\+Q$ that extend $\sigma$.
A partial assignment $\sigma\in {\+Q}^\ast$ is called \emph{feasible} if there is a satisfying assignment $\tau\in\Omega$ that extends $\sigma$.

Given any feasible $\sigma\in {\+Q}^\ast$ and any $S\subseteq V$, we use $\sigma_S$ to denote $\bigotimes_{v\in S}\sigma(v)$ and $\mu_S^{\sigma}$ to denote the marginal distribution  induced by $\mu$ on $S$ conditioning on $\sigma$. 
For each $ \tau\in \+{Q}_{S}$, we have
$\mu_S^{\sigma}(\tau)=\Pr[X\sim\mu]{X_S=\tau\mid \forall v\in\Lambda(\sigma), X(v)=\sigma(v)}$.
We further write $\mu_v^{\sigma}=\mu_{\{v\}}^{\sigma}$.
Similar notation is used for the law $\mathbb{P}$ for the uniform product distribution over $\+{Q}$.  For  $\sigma\in {\+Q}^\ast$ and any event $A\subseteq \+{Q}$,
we have
$\mathbb{P}[A\mid \sigma]= \Pr[X\in\+{Q}]{X\in A\mid  \forall v\in\Lambda(\sigma), X(v)=\sigma(v)}$.

\section{The Sampling Algorithm}
We give our main algorithm for sampling almost uniform satisfying assignments for a CSP formula. 
Our presentation uses notations defined in \Cref{sec:CSP-notation}.

\subsection{The main sampling algorithm}\label{sec:main-sampling-algorithm}
Our  main sampling  algorithm takes as input a CSP formula $\Phi=(V,\+{Q},\+{C})$
with domain size $q=q_\Phi$, 
width $k=k_\Phi$, 
constraint degree  $\Delta=\Delta_\Phi$, 
and violation probability $p=p_{\Phi}$, where the meaning of these parameters are as defined in \Cref{sec:intro}.
%

We suppose that the $n=|V|$ variables are enumerated as $V=\{v_1,v_2,\ldots,v_n\}$ in an arbitrary order.
The CSP formula  $\Phi=(V,\+{Q},\+{C})$ is presented to the algorithm 
by the evaluation oracle in \Cref{definition:evaluation-oracle}.
Also assume that given any $c\in\+{C}$ (or $v\in V$), the $\vbl(c)$ (or $\{c\in\+{C}\mid v\in\vbl(c)\}$) can be retrieved.
%


The main algorithm (\Cref{Alg:main}) is the same as the main sampling frameworks in \cite{Vishesh21towards,guo2019counting}.
A partial assignment $X\in \qs$ is maintained, initially as the empty assignment $X=\hollowstar^V$.
\begin{enumerate}
\item
In the 1st phase, 
at each step it adaptively picks (in a predetermined order) a  variable $v$ that has enough ``freedom'' because it is not involved in any easy-to-violate constraint given  the current $X$, 
and replaces $X(v)$ with a random value drawn by a subroutine \newsample{} according to the correct marginal distribution $\mu_v^X$. 
\item
When no such variable with enough freedom remains,
the formula is supposed to be ``factorized" enough into small clusters and the algorithm enters the 2nd phase,
from where the partial assignment constructed in the 1st phase is completed to a uniform random satisfying assignment by a standard
\rejsamp{} subroutine.
\end{enumerate}

A key threshold $\pprime$ for the violation probability is fixed as below:
\begin{align}\label{eq:parameter-p-prime}
\pprime=\left(18\mathrm{e}^2q^2k\Delta^2\right)^{-1},
\end{align}
which satisfies $\pprime>p=p_{\Phi}$, assuming the LLL condition in \eqref{eq:main-thm-LLL-condition}.


For the ease of exposition, we assume an oracle for approximately deciding whether a constraint becomes too easy to violate given the current partial assignment.

\begin{assumption}\label{assumption:frozen-oracle}
There is an oracle such that 
given any partial assignment $\sigma\in\+{Q}^*$ and any constraint $c\in\+{C}$, 
the oracle distinguishes between the two cases: $\mathbb{P}[\neg c \mid \sigma]>\pprime$ and $\mathbb{P}[\neg c \mid \sigma]<0.99\pprime$, 
and answers arbitrarily and consistently if otherwise, which means that the answer to the undefined case $\mathbb{P}[\neg c \mid \sigma]\in[0.99\pprime,\pprime]$ can be either ``yes'' or ``no''  but
remains the same for the same $\sigma_{\vbl(c)}$.
\end{assumption}

Such an oracle is clearly implied by the FPTAS for violation probability assumed in \Cref{thm:main-sampling-perfect}
and will be explicitly realized  later  in \Cref{section-samp-ef}.
For now, with respect to such an oracle, 
the classes of easy-to-violate constraints and their involved variables are defined as follows.

\begin{definition}[frozen and fixed]\label{definition:frozen-fixed}
Assume \Cref{assumption:frozen-oracle}.
Let $\sigma\in \+{Q}^*$ be a partial assignment.
\begin{itemize}
\item  
A constraint $c\in \+{C}$ is called \emph{$\sigma$-frozen} if it is reported $\mathbb{P}[\neg c \mid \sigma] > \pprime$ by the oracle in  \Cref{assumption:frozen-oracle}.
Denote by $\cfrozen{\sigma}$ the set of all $\sigma$-frozen constraints:
%
\begin{align*}
\cfrozen{\sigma}
&\triangleq \left\{c\in\+{C}\mid \text{$c$ is reported by the oracle to satisfy }\mathbb{P}[\neg c \mid \sigma] > \pprime\right\}. 
\end{align*}
\item 
A variable $v\in V$ is called \emph{$\sigma$-fixed} if $v$ is accessed in $\sigma$ or  is involved in some $\sigma$-frozen constraint. 
Denote by $\vfix{\sigma}$ the set of all $\sigma$-fixed variables:
\begin{align*}
\vfix{\sigma}
&\triangleq  \Lambda^+(\sigma) \cup \bigcup_{c\in \cfrozen{\sigma}}\vbl(c). 
\end{align*}
\end{itemize}
\end{definition}
\noindent
Similar ideas of freezing appeared in previous works on sampling and algorithmic LLL~\cite{Vishesh21towards, beck1991algorithmic}.

\begin{remark}[one-sided error for frozen/fixed decision]\label{remark:one-sided-error-frozen-oracle}
By the property of the oracle in  \Cref{assumption:frozen-oracle},
any constraint $c\in\+{C}$ with $\mathbb{P}[\neg c \mid \sigma] > \pprime$ must be in $\cfrozen{\sigma}$, 
and any variable $v\in V$ involved in such a constraint must be in $\vfix{\sigma}$;
%
conversely, 
any $\sigma$-frozen constraint $c\in\cfrozen{\sigma}$ must have $\mathbb{P}[\neg c \mid \sigma] \geq 0.99\pprime$ 
and any unaccessed $\sigma$-fixed variable $v\in\vfix{\sigma}$ must be involved in at least one of such constraints.
\end{remark}

%


\begin{algorithm}
  \caption{The sampling algorithm} \label{Alg:main}
    \SetKwInput{KwPar}{Parameter}
  \KwIn{a CSP formula $\Phi=(V,\+{Q},\+{C})$;}  
  \KwOut{a uniform random satisfying assignment $X\in \Omega_\Phi$;}
 $X\gets \hollowstar^V$\;\label{line-main-init}
 \For{$i=1$ to $n$\label{line-main-for}}
 {
    \If{$v_i$ is not $X$-fixed\label{line-main-if}}
    {
         $X(v_i) \gets \newsample{}(\Phi,X,v_i)$\; \label{line-main-sample}
    }
}
 $X_{V\setminus \Lambda(X)}\gets \rejsamp{}(\Phi,X,V\setminus \Lambda(X))$\;\label{line-main-partialass}
 \textbf{return} $X$\;
\end{algorithm}

The following invariant is satisfied in the \textbf{for} loop in \Cref{Alg:main} (formally proved in \Cref{lemma:invariant-marginsample}). 
The correctness of the \newsample{} subroutine is guaranteed by this invariant.

\begin{condition}[invariant for \newsample{}]\label{inputcondition-magin}
The following holds for the input tuple $(\Phi, \sigma, v)$:
\begin{itemize}
\item $\Phi=(V,\+{Q},\+{C})$ is a CSP formula, $\sigma\in\+{Q}^*$ is a feasible partial assignment, and  $v\in V$ is a variable;
\item $v$ is not $\sigma$-fixed and $\sigma(v)=\hollowstar$, and for all $u\in V$, $\sigma(u)\in Q_u\cup\{\hollowstar\}$;
\item $\mathbb{P}[\neg c\mid \sigma]\leq \pprime q$ for all $c\in \mathcal{C}$.
\end{itemize}
\end{condition}

The correctness of \Cref{Alg:main} follows from the correctness of \newsample{} and \rejsamp{}   for sampling from the correct marginal distributions, which is formally proved in \Cref{maincor}.

In fact, the sampling in \Cref{Alg:main} is \emph{perfect}.
It will only become approximate after the oracle in \Cref{assumption:frozen-oracle} realized by a Monte Carlo program that may bias the sampling.


\subsection{The rejection sampling}\label{sec:rejection-sampling}
We first introduce the \rejsamp{}, which is a standard procedure.
Our rejection sampling takes advantages of simplification and decomposition of a CSP formula. 

A simplification of $\Phi=(V,\+{Q},\+{C})$ under partial assignment $\sigma\in \qs$, denoted by $\Phi^\sigma=(V^\sigma,\+{Q}^\sigma,\+{C}^\sigma)$, 
is a new CSP formula such that $V^\sigma=V\setminus \Lambda(\sigma)$ and $\+{Q}^\sigma=\+{Q}_{V\setminus \Lambda(\sigma)}$, 
and the $\+{C}^\sigma$ is obtained from $\+{C}$ by: 
\begin{enumerate}
\item
removing all the constraints that have already been satisfied
 by $\sigma$;
\item
for  the remaining constraints, 
replacing the variables $v\in \Lambda(\sigma)$ with their values $\sigma(v)$.
\end{enumerate}
It is easy to see that $\mu_{\Phi^\sigma}=\mu^{\sigma}_{V\setminus\Lambda(\sigma)}$ for the uniform distribution  $\mu_{\Phi^\sigma}$ over  satisfying assignments of $\Phi^\sigma$.

%

A CSP formula $\Phi = (V, \+{Q}, \+{C})$ can be naturally represented as a (multi-)hypergraph $H_{\Phi}$,
where each variable $v\in V$ corresponds to a vertex in $H_{\Phi}$ and each constraint $c\in\+{C}$ corresponds to a hyperedge $\vbl(c)$ in $H_{\Phi}$.
We slightly abuse the notation and write $H_{\Phi}=(V, \+{C})$.

%
Let $H_{i}=(V_i, \+{C}_i)$ for $1\le i\le K$ denote all $K\ge 1$ connected components in $H_{\Phi}$, and $\Phi_i=(V_i, \+{Q}_{V_i}, \+{C}_i)$ their formulas.
Obviously $\Phi=\Phi_1\land\Phi_2\land \dots \land \Phi_K$ with disjoint $\Phi_i$, and $\mu_{\Phi}$ is the product of all $\mu_{\Phi_i}$.
%
Also $\mu_S$ on a subset of variables $S\subseteq V$ is determined only by those components with $V_i$ intersecting $S$.

For each $\sigma\in \qs$ and $v\in V^{\sigma}$, let $H_v^{\sigma}=(V_v^\sigma,\+{C}_v^{\sigma})$ denote the connected component in $H^{\sigma}$ that contains the vertex/variable $v$. This definition will be useful later.

\begin{algorithm}  
  \caption{\rejsamp{}$(\Phi,\sigma,S)$}  \label{Alg:rej}
  \KwIn{a CSP formula $\Phi=(V,\+{Q},\+{C})$, a feasible partial assignment $\sigma \in \+{Q}^\ast$, and  a subset $S \subseteq V\setminus\Lambda(\sigma)$ of unassigned variables  in $\sigma$;}  
  \KwOut{an assignment $X_S\in \+{Q}_S$ distributed as $\mu^\sigma_S$;}
  find all the connected components $\{H_i^\sigma= (V_i^\sigma,\+{C}_i^\sigma)\}\mid 1\leq i\leq K\}$ in $H_{\Phi^\sigma}$ s.t.~$V_i^\sigma$ intersects $S$, where $\Phi^\sigma$ denotes the simplification of $\Phi$ under $\sigma$\;\label{line-rs-find}
  \For{$1\leq i\leq K$ \label{line-rs-for}}
  {
    \Repeat{all the constraints in $\+{C}_i^\sigma$ are satisfied by $X_{V_i^\sigma}$\label{line-rs-until}}
     {  
        generate $X_{V_i^\sigma}\in \+{Q}_{V_i^\sigma}$ uniformly and independently at random\;\label{line-rs-sampling}
    }
  }
 
\textbf{return } $X_S$ where $X$ is the concatenation of all $X_{V_i^\sigma}$\;
\end{algorithm} 

Our rejection sampling algorithm for drawing from a marginal distribution $\mu_S^{\sigma}$ is given in \Cref{Alg:rej}.
The correctness of this algorithm is folklore. We state without proof.
\begin{theorem}\label{rejcorrect}
On any input $(\Phi,\sigma,S)$ as specified in \Cref{Alg:rej},
\rejsamp{} terminates with probability $1$, and upon termination it returns an assignment $X_S\in \+{Q}_S$ distributed as $\mu_S^\sigma$.
\end{theorem}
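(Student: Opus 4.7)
The plan is to split the argument into two parts (almost sure termination and distributional correctness), and for both to reduce the analysis to a single connected component of the simplified formula $\Phi^\sigma$ by using the product structure of $\mu_{\Phi^\sigma}$. The starting observation is that because the two simplification steps defining $\Phi^\sigma$ (drop constraints already satisfied by $\sigma$, and substitute values $\sigma(v)$ into the remaining constraints for $v\in\Lambda(\sigma)$) preserve the set of satisfying extensions of $\sigma$, we have $\mu_{\Phi^\sigma}=\mu^{\sigma}_{V\setminus\Lambda(\sigma)}$, and hence the marginal of $\mu^{\sigma}_{V\setminus\Lambda(\sigma)}$ on $S$ is exactly $\mu_S^\sigma$. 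Next, since the connected components of $H_{\Phi^\sigma}$ share no variables and no constraints, the distribution $\mu_{\Phi^\sigma}$ factorizes as a product of uniform distributions over the satisfying assignments of each component, and variables lying in components disjoint from $S$ do not affect the marginal on $S$. It therefore suffices to verify that for each component $H_i^\sigma=(V_i^\sigma,\+{C}_i^\sigma)$ listed in Line~\ref{line-rs-find}, the inner \textbf{repeat} loop both terminates almost surely and outputs a uniform sample from the satisfying assignments of that component.

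For termination, I would use feasibility of $\sigma$: by assumption there exists $\tau\in\Omega_\Phi$ extending $\sigma$, whose projection onto $V_i^\sigma$ gives at least one satisfying assignment of the sub-formula on $H_i^\sigma$. Consequently, each draw of a uniform $X_{V_i^\sigma}\in\+Q_{V_i^\sigma}$ satisfies all constraints in $\+{C}_i^\sigma$ with probability at least $1/|\+Q_{V_i^\sigma}|>0$, so the repeat loop is a sequence of independent Bernoulli trials with positive success probability and terminates after a geometrically distributed number of iterations, in particular almost surely.

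For correctness within a component, I would invoke the standard rejection-sampling identity: if $Y$ is uniform on $\+Q_{V_i^\sigma}$ and $A$ is the event that $Y$ satisfies every constraint in $\+{C}_i^\sigma$, then $\Pr{Y=\tau\mid A}=\mathbbm{1}[\tau\in A]/|A|$ is the uniform distribution on the satisfying assignments of the component. This is exactly the conditional law the \textbf{repeat} loop simulates. Combining the per-component outputs (independent by construction) and projecting onto $S$ recovers $\mu_S^\sigma$ by the factorization noted above.

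There is no genuinely hard step; the only subtlety worth flagging is the need to justify $\mu_{\Phi^\sigma}=\mu^{\sigma}_{V\setminus\Lambda(\sigma)}$, which is an immediate Bayes-rule calculation once one checks that every full extension of $\sigma$ either simultaneously satisfies or simultaneously fails the constraints dropped during simplification. Aside from that, the argument is a routine composition of (i) feasibility implies positive acceptance probability, (ii) uniform-conditioned rejection sampling is exactly uniform on the feasible set, and (iii) the uniform distribution over satisfying assignments of a CSP factorizes over the connected components of its constraint hypergraph.
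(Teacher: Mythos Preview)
Your proposal is correct. The paper itself explicitly declines to prove this theorem, stating ``The correctness of this algorithm is folklore. We state without proof.'' Your argument is precisely the standard folklore proof the paper alludes to: factorize $\mu_{\Phi^\sigma}$ over connected components, use feasibility of $\sigma$ to guarantee positive acceptance probability (hence almost sure termination) in each component, and invoke the elementary rejection-sampling identity for per-component correctness. There is nothing to compare against, and no gap in what you wrote.
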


\subsection{The marginal sampler}\label{sec:marginal-sampler}
We now introduce the  the core part of our sampling algorithm, the $\newsample{}$ subroutine.
This procedure is a ``marginal sampler'': 
it can draw a random value for a variable $v\in V$ according to its marginal distribution $\mu_v^{\sigma}$.
%
%
Our marginal sampling algorithm is inspired by a recent novel sampling algorithm of Anand and Jerrum for infinite spin systems~\cite{anand2021perfect}.

For each variable $v\in V$, we suppose that an arbitrary order is  assumed over all values in $Q_v$; 
we use $q_v\triangleq\abs{Q_v}$ to denote the domain size of $v$ and fix the following  parameters:
\begin{align}
\theta_v \triangleq
\frac{1}{q_v}-\eta-\zeta
\quad\text{and}\quad
\theta\triangleq\frac{1}{q}-\eta-\zeta
\quad\text{ where}\quad 
\begin{cases}
\eta=\left(1-\mathrm{e}\pprime q\right)^{-\Delta}-1\\
\zeta=\left(16\mathrm{e}qk\Delta\right)^{-1}
\end{cases}
\label{eq:parameter-theta}
\end{align}
Note that $\zeta<\frac{1}{q}-\eta$ is guaranteed by the LLL condition in \eqref{eq:main-thm-LLL-condition}, and hence $\theta_v,\theta>0$.

The $\newsample{}$ subroutine for drawing from a marginal distribution $\mu_v^{\sigma}$ is given in \Cref{Alg:eq}.

\begin{algorithm}
  \caption{$\newsample{}(\Phi, \sigma ,v)$} \label{Alg:eq}
  \SetKwInput{KwPar}{Parameter}
  \KwIn{a CSP formula $\Phi=(V,\+{C})$, a feasible partial assignment $\sigma \in \+{Q}^\ast$, and a variable $v\in V$;} 
  \KwOut{a random $x\in Q_v$ distributed as $\mu^\sigma_v$;}
  choose $r\in [0,1)$ uniformly at random\;\label{Line-new-st}
  \eIf(\tcp*[f]{$r$ falls into the zone of local uniformity}\label{Line-if-eq}){$r<q_v\cdot \theta_v$}{\Return{} the $\lceil r/\theta_v \rceil$-th value in $Q_v$\;}(\tcp*[f]{$r$ falls into the zone of indecision}){\Return{$\recsample{}(\Phi,\Mod{\sigma}{v}{\star},v)$}\;\label{Line-rec-eq}}
\end{algorithm}

An invariant satisfied by \Cref{Alg:eq} guarantees that $\theta_v$ always lower bounds the marginal probability with gap $\zeta$.
%
This is formally proved in \Cref{sec:prelim} by a ``local uniformity'' property (\Cref{generaluniformity}).

\begin{proposition}\label{prop:theta-marginal-lower-bound}
Assuming \Cref{inputcondition-magin} for the input $(\Phi, \sigma ,v)$, 
it holds that $\min\limits_{x\in Q_v}\mu^{\sigma}_v(x)\ge  \theta_v+\zeta$.
\end{proposition}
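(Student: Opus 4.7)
The plan is to derive the claimed lower bound as an essentially immediate consequence of the local uniformity lemma (\Cref{generaluniformity}) that the text cites. Because $v$ is unaccessed in $\sigma$ (so $\mathbb{P}[X(v){=}x\mid \sigma]=1/q_v$), I would first rewrite
\[
\mu_v^\sigma(x)\;=\;\frac{1}{q_v}\cdot\frac{\mathbb{P}\bigl[\bigwedge_{c\in\+{C}} c\,\bigm|\,\Mod{\sigma}{v}{x}\bigr]}{\mathbb{P}\bigl[\bigwedge_{c\in\+{C}} c\,\bigm|\,\sigma\bigr]},
\]
so the task reduces to bounding this ratio from below by something like $1-q_v\eta$.

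For the ratio bound I would run the standard local uniformity argument in the Haeupler--Saha--Srinivasan style. First, split the constraints into those touching $v$ and those not; since the latter are independent of $X(v)$, their contribution cancels between numerator and denominator, leaving
\[
\frac{\mathbb{P}\bigl[\bigwedge_{c\ni v}c\;\bigm|\;\Mod{\sigma}{v}{x},\ \bigwedge_{c\not\ni v}c\bigr]}{\mathbb{P}\bigl[\bigwedge_{c\ni v}c\;\bigm|\;\sigma,\ \bigwedge_{c\not\ni v}c\bigr]}
\]
to be bounded below. For this I would invoke \Cref{inputcondition-magin}, which gives $\mathbb{P}[\neg c\mid\sigma]\le p'q$ for every $c$, together with the global assumption \eqref{eq:main-thm-LLL-condition}; these are exactly the hypotheses needed to apply an asymmetric LLL with common weight $x_c=\mathrm{e}p'q$ to the constraints involving $v$, yielding $\mathbb{P}[\neg c\mid \sigma,\bigwedge_{c'\neq c}c']\le \mathrm{e}p'q$. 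Since $|\{c:v\in\vbl(c)\}|\le\Delta$, a product/union-bound estimate then lower bounds the ratio by $(1-\mathrm{e}p'q)^{\Delta}=1/(1+\eta)$, where the last equality is the definition of $\eta$ in \eqref{eq:parameter-theta}.

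Combining the two displays gives $\mu_v^\sigma(x)\ge \frac{1}{q_v(1+\eta)}$. A short algebraic check shows $\frac{1}{q_v(1+\eta)}\ge \frac{1}{q_v}-\eta$ whenever $q_v(1+\eta)\ge 1$, which is trivial since $q_v\ge 2$ and $\eta\ge 0$. Because $\theta_v+\zeta=\frac{1}{q_v}-\eta$ by definition, the proposition follows.

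The main obstacle I foresee is the LLL step for the ratio: pinning $X(v)=x$ can inflate the conditional violation probability of a constraint touching $v$ by a multiplicative factor of up to $q_v$, so the LLL weights must be chosen to absorb this factor. This is precisely the role of the extra $q$ in the bound $p'q$ of \Cref{inputcondition-magin}. Verifying that the weight assignment $x_c=\mathrm{e}p'q$ satisfies the Shearer/cluster-expansion inequality under \eqref{eq:main-thm-LLL-condition}, uniformly over all feasible extensions of $\sigma$ (so that $\sigma$'s pinnings do not spoil the LLL hypothesis), will be the technical heart driving \Cref{generaluniformity}.
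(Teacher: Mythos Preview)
Your overall plan---reduce to local uniformity---matches the paper's, but the execution has a genuine gap at the ratio step. You claim the ratio $\mathbb{P}\bigl[\bigwedge_c c\mid\Mod{\sigma}{v}{x}\bigr]\big/\mathbb{P}\bigl[\bigwedge_c c\mid\sigma\bigr]$ is at least $(1-\mathrm{e}p'q)^{\Delta}$, and you correctly flag that pinning $v=x$ can inflate $\mathbb{P}[\neg c\mid\cdot]$ for $c\ni v$ by a factor of $q_v$. But your resolution is wrong: the factor $q$ in the hypothesis $\mathbb{P}[\neg c\mid\sigma]\le p'q$ of \Cref{inputcondition-magin} is \emph{not} budgeted for this pinning. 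That bound already holds for $\sigma$ \emph{before} you touch $v$; the $q$ there arose from an earlier assignment in the main loop (this is exactly what the proof of \Cref{lemma:invariant-p-prime-q-bound} establishes). After you additionally pin $v=x$, a constraint through $v$ can have violation probability as large as $q_v\cdot p'q$, so LLL weights $x_c=\mathrm{e}p'q$ no longer verify the LLL hypothesis in the numerator, and the claimed ratio bound $(1-\mathrm{e}p'q)^{\Delta}$ does not follow from your argument.

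The paper's proof (\Cref{localuniformitycor}, which is what establishes the proposition) sidesteps the pinning issue entirely by going through the \emph{upper} bound. Working in the simplification $\Phi^\sigma$, which has $p_{\Phi^\sigma}\le p'q$ and hence $\mathrm{e}p_{\Phi^\sigma}\Delta<1$, one applies \Cref{HSS} with $x(c)=\mathrm{e}p'q$ to obtain $\mu_v^\sigma(y)\le \tfrac{1}{q_v}(1-\mathrm{e}p'q)^{-\Delta}$ for every $y\in Q_v$; this step conditions only on $\sigma$ and never on a value of $v$. The lower bound then comes from the complement:
\[
\mu_v^\sigma(x)\;=\;1-\sum_{y\neq x}\mu_v^\sigma(y)\;\ge\;1-\tfrac{q_v-1}{q_v}(1-\mathrm{e}p'q)^{-\Delta}\;\ge\;\tfrac{1}{q_v}-\eta\;=\;\theta_v+\zeta.
\]
This ``upper-bound-then-complement'' trick is precisely the missing ingredient in your sketch, and it is how the paper says \Cref{generaluniformity} itself is derived.
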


Therefore, the function $\+{D}$ constructed below is a well-defined distribution over $Q_v$:
\begin{align}\label{eq:definition-margin-overflow}
\forall x\in Q_v,\qquad
\+{D}(x)\triangleq\frac{\mu_v^{\sigma}(x)-\theta_v}{1-q_v\cdot \theta_v}.
\end{align}


Consider the following thought experiment.
Partition $[0,1)$ into $(q_v+1)$ intervals $I_1,I_2\dots,I_{q_v}$ and $I'$, where $I_i=[(i-1)\theta_v,\, i\theta_v)$ for $1\le i\le q_v$ are of equal size $\theta_v$, and $I'=[q_v\cdot \theta_v,\,1)$ is the remaining part.
We call $\bigcup_{i=1}^{q_v}I_i=[0,\,q_v\cdot \theta_v)$ the ``zone of local uniformity" and $I'=[q_v\cdot \theta_v,\,1)$ the ``zone of indecision". 

Drawing from $\mu^\sigma_v$ can then be simulated as: first drawing a uniform random $r\in [0,1)$, if $r<q_v\cdot \theta_v$, i.e.~it falls into the ``zone of local uniformity", then returning the $i$-th value in $Q_v$ if $r\in I_i$; if otherwise $r\in I'$, i.e.~it falls into the ``zone of indecision", then returning a random value drawn from the above $\+{D}$.
It is easy to verify  that the generated sample is distributed as $\mu^\sigma_v$.
And this is exactly what \Cref{Alg:eq} is doing, assuming that the subroutine $\recsample{}(\Phi,\Mod{\sigma}{v}{\star},v)$ correctly draws from $\+{D}$.


\subsection{Recursive sampling for  margin overflow}\label{sec:margin-overflow-sampler}
The goal of the $\recsample{}$ subroutine is to draw from the distribution $\+{D}$ which is computed from the marginal distribution $\mu_v^{\sigma}$ as defined in \eqref{eq:definition-margin-overflow}.
%
%
Now suppose that we are given access to an oracle for drawing from $\mu_v^{\sigma}$ (such an oracle can be realized by $\rejsamp{}(\Phi,\sigma,\{v\})$ in \Cref{Alg:rej}).
Then, drawing from $\+D$ that is a linear function of $\mu_v^{\sigma}$, by accessing an oracle for drawing from $\mu_v^{\sigma}$, 
 can be resolved using the existing approaches of \emph{Bernoulli factory} \cite{Nacu2005FastSO,Hub16Bernoulli,Dughmi17Bernoulli}.

This sounds silly because if such oracle for $\mu_v^{\sigma}$ were efficient we would have been using it to output a sample for $\mu_v^{\sigma}$ in the first place, which is exactly the reason why we ended up trying to draw from  $\+{D}$.

Nevertheless, such Bernoulli factory for sampling from $\+{D}$ may serve as the basis of a recursion, 
where sufficiently many variables with enough ``freedom'' would have been sampled successfully in their zones of local uniformity during the recursion, 
and hence the remaining CSP formula would have been ``factorized'' into small connected components, 
in which case an oracle for  $\mu_v^{\sigma}$ would be efficient to realize by the $\rejsamp{}(\Phi,\sigma,\{v\})$, and the Bernoulli factory for $\+{D}$ could apply.

%

%
We define a class of variables that are candidates for sampling with priority in the recursion.


\begin{definition}[$\star$-influenced variables]\label{definition:boundary-variables}
Let $\sigma\in \+{Q}^*$ be a partial assignment.
Let $H^\sigma=H_{\Phi^{\sigma}}=(V^{\sigma},\+{C}^{\sigma})$ be the hypergraph for simplification $\Phi^{\sigma}$. 
%
Let $\hfix{\sigma}$ be the sub-hypergraph of $H^\sigma$ induced by $V^{\sigma}\cap\vfix{\sigma}$.
\begin{itemize}
\item
Let $\vcon{\sigma}\subseteq V^{\sigma}\cap\vfix{\sigma}$ be the set of vertices belong to the connected components in $\hfix{\sigma}$ that contain any $v$ with $\sigma(v)=\star$.
\item
Let $\vinf{\sigma}\triangleq\left\{u\in V^{\sigma}\setminus \vcon{\sigma}\mid \exists c\in \+{C}^{\sigma}, v\in \vstar{\sigma}:u,v\in\vbl(c)\right\}$ be the vertex boundary of $\vcon{\sigma}$ in~$H^\sigma$.
%
%
\item 
Let $\ccon{\sigma}$ be the set of constraints $c\in\+{C}$ that intersect $\vcon{\sigma}$.
\item 
Define $\nextvar{\sigma}$ by
\begin{align}
\nextvar{\sigma}
\triangleq
\begin{cases}
v_i\in \vinf{\sigma}\text{ with smallest $i$} & \text{if }\vinf{\sigma}\neq\emptyset,\\
\perp & \text{otherwise}.
\end{cases}
\label{eq:definition-var}
\end{align}
\end{itemize}
\end{definition}

\begin{remark}

The $\ccon{\sigma}$ defined above is not used here, but is important in the analysis.
Same as in \Cref{definition:frozen-fixed}, the $\vfix{\sigma}$ is defined with respect to the oracle in \Cref{assumption:frozen-oracle}.

In \Cref{subsection-var}, a dynamic data structure is given to efficiently compute $\nextvar{\sigma}$.
\end{remark}

With this construction of $\nextvar{\cdot}$,  the \recsample{} subroutine is described in \Cref{Alg:Recur}.

\begin{algorithm}
\caption{$\recsample{}(\Phi,\sigma,v)$} \label{Alg:Recur}
  \SetKwInput{KwPar}{Parameter}
\KwIn{a CSP formula $\Phi=(V,\+{C})$, a feasible partial assignment $\sigma\in \+{Q}^*$, and a variable $v\in V$;} 
\KwOut{a random $x\in Q_v$ distributed as $\+{D}\triangleq\frac{1}{(1-q_v\cdot \theta_v)}(\mu_v^{\sigma}-\theta_v)$;}
$u \gets \nextvar{\sigma}$ where $\nextvar{\sigma}$ is defined as in \eqref{eq:definition-var}\;\label{Line-u-rec}
\eIf{$u\neq\perp$\label{Line-if-1-rec}}
{
    choose $r\in [0,1)$ uniformly at random\;\label{Line-r-rec}
      \eIf(\tcp*[f]{$r$ falls into the zone of local uniformity}\label{Line-if-2-rec}){$r<q_u\cdot \theta_u$}{$\sigma(u) \gets$ the $\lceil r/\theta_u \rceil$-th value in $Q_u$\;}(\tcp*[f]{$r$ falls into the zone of indecision}\label{Line-else-rec}){$\sigma(u)\gets \recsample{}(\Phi,\Mod{\sigma}{u}{\star},u)$\;\label{line-rec-1-rec}}
\tcc{\Cref{Line-if-2-rec} to \Cref{line-rec-1-rec} together draw $\sigma(u)$ according to $\mu_u^{\Mod{\sigma}{u}{\star}}$}
    \Return $\recsample{}(\Phi,\sigma,v)$\; \label{line-rec-2-rec}
}
(\tcp*[f]{All non-$\sigma$-fixed variables are d/c.~from $v$ and ancestors.})
{
sample a random $x\in Q_v$ according to $\+{D}\triangleq\frac{1}{(1-q_v\cdot \theta_v)}(\mu_v^{\sigma}-\theta_v)$ using the \emph{Bernoulli factory} in \Cref{sec:bernoulli-factory} that accesses $\rejsamp{}(\Phi,\sigma,\{v\})$ as an oracle\;\label{Line-bfs-rec} 
\Return{$x$}\;
}
\end{algorithm}

Basically, a variable $u$ is a good candidate for sampling if it currently has enough ``freedom'' (since $u$ is not $\sigma$-fixed) and can ``influence'' the variables that we are trying to sample  in the recursion (which are marked by $\star$)
through a chain of constraints in the simplification of $\Phi$ under the current $\sigma$.
Such variables are enumerated by $\nextvar{\sigma}$.

The idea of \Cref{Alg:Recur} is simple.
In order to draw from the overflow distribution $\+{D}$ for a variable $v\in V$: 
if there is another candidate variable $u=\nextvar{\sigma}$ that still has enough freedom so its sampling might be easy, and also is relevant to the sampling at $v$ or its ancestors, 
we try to sample $u$'s marginal value first 
(hopefully within its zone of local uniformity and compensated by a recursive call for drawing from its margin overflow);
and if there is no such candidate variable to sample first, 
we finally draw  from $\+{D}$ 
using the Bernoulli factory.


The following invariant is satisfied by the \recsample{} subroutine called within the \newsample{} subroutine and the \recsample{} itself (formally proved in \Cref{lemma:invariant-marginsample}).

\begin{condition}[invariant for \recsample{}]\label{inputcondition-recur}
The following holds for the input tuple $(\Phi, \sigma, v)$:
\begin{itemize}
\item $\Phi=(V,\+{Q},\+{C})$ is a CSP formula, $\sigma\in\+{Q}^*$ is a feasible partial assignment, and  $v\in V$ is a variable;
\item $\sigma(v)=\star$;
\item $\mathbb{P}[\neg c\mid \sigma]\leq \pprime q$ for all $c\in \mathcal{C}$.
\end{itemize}
\end{condition}

The following marginal lower bound follows from  the ``local uniformity'' property (\Cref{generaluniformity})  in the same way as in \Cref{prop:theta-marginal-lower-bound} and is formally proved in \Cref{sec:prelim}.

\begin{proposition}\label{prop:theta-marginoverflow-lower-bound}
Assuming \Cref{inputcondition-recur} for the input $(\Phi, \sigma ,v)$, 
it holds that $\min\limits_{x\in Q_v}\mu^{\sigma}_v(x)\ge  \theta_v+\zeta$ and for $u=\nextvar{\sigma}$, if $u\neq\perp$ then it also holds that $\min\limits_{x\in Q_u}\mu^{\sigma}_u(x)\ge  \theta_u+\zeta$.
\end{proposition}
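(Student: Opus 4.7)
The plan is to derive both inequalities from the local uniformity lemma \Cref{generaluniformity}, which under an LLL-type hypothesis supplies a marginal lower bound of the form $\mu_w^{\sigma}(x)\ge \frac{1}{q_w}-\bigl((1-\mathrm{e}\tilde p)^{-\Delta}-1\bigr)$ at any variable $w$ unassigned in $\sigma$, provided every constraint satisfies $\mathbb{P}[\neg c\mid\sigma]\le \tilde p$. Taking the threshold $\tilde p=p'q$ makes the error term coincide with the parameter $\eta$ defined in \eqref{eq:parameter-theta}, whence $\frac{1}{q_w}-\eta=\theta_w+\zeta$ is exactly the desired bound. So the entire proof reduces to two checks: that the hypothesis of \Cref{generaluniformity} is in force, and that both $v$ and $u$ are unassigned in $\sigma$.

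The first check is immediate from \Cref{inputcondition-recur}: the global cap $\mathbb{P}[\neg c\mid\sigma]\le p'q$ for every $c\in\+C$ is exactly the required violation hypothesis at threshold $\tilde p=p'q$, and it does not discriminate between variables. For the second check, the first step is to note that $\mu_w^{\sigma}$ depends only on the assigned coordinates of $\sigma$ (the $\star$/$\hollowstar$ annotations are probabilistically inert in the Gibbs distribution), and then argue variable by variable: $v$ is unassigned because $\sigma(v)=\star$; and $u=\nextvar{\sigma}\neq\perp$ lies in $\vinf{\sigma}\subseteq V^{\sigma}=V\setminus\Lambda(\sigma)$ by \eqref{eq:definition-var} and \Cref{definition:boundary-variables}. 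Hence the same instance of \Cref{generaluniformity} applies at both variables and produces the two claimed lower bounds.

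The only residual obstacle is numerical bookkeeping: one must confirm that the LLL condition \eqref{eq:main-thm-LLL-condition} together with the choice \eqref{eq:parameter-p-prime} of $p'$ keeps $1-\mathrm{e}p'q$ strictly positive (so that $\eta$ is finite) and keeps $\theta_v,\theta_u$ strictly positive (so that the zone of local uniformity exploited in \Cref{Alg:eq} and \Cref{Alg:Recur} is nonempty). These will be routine to verify by substituting $p'=(18\mathrm{e}^2q^2k\Delta^4)^{-1}$ and $\zeta=(8\mathrm{e}qk\Delta^3)^{-1}$ into the definition of $\eta$ and comparing with $1/q$. All the genuine mathematical content -- the fact that an LLL-type condition forces the single-site Gibbs marginal to be close to uniform -- is packaged into \Cref{generaluniformity}, so this proposition is essentially a careful application of that lemma to the two distinguished unassigned variables singled out by the recursion.
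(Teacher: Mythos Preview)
Your proposal is correct and follows essentially the same route as the paper: the paper packages exactly this argument into \Cref{localuniformitycor}, which applies \Cref{generaluniformity} to the simplification $\Phi^{\sigma}$ (so that the relevant violation probability becomes $p_{\Phi^{\sigma}}\le p'q$) and then reads off $\mu^{\sigma}_w(x)\ge \frac{1}{q_w}-\eta=\theta_w+\zeta$ for every $w\in V\setminus\Lambda(\sigma)$. Your checks that $v\notin\Lambda(\sigma)$ (from $\sigma(v)=\star$) and $u\in\vinf{\sigma}\subseteq V^{\sigma}$ are exactly what is needed to instantiate that corollary at both variables.
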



The Bernoulli factory used in \Cref{Alg:Recur} is achieved by a combination of existing constructions  (to be specific, the  Bernoulli factory for subtraction in \cite{Nacu2005FastSO}, composed with the linear Bernoulli factory in \cite{Hub16Bernoulli} and the Bernoulli race in \cite{Dughmi17Bernoulli}),
given access to an oracle for drawing from the marginal distribution $\mu_v^{\sigma}$, which is realized by $\rejsamp{}(\Phi,\sigma,\{v\})$ in \Cref{Alg:rej}.
This is formally stated by the following lemma.

\begin{lemma}[correctness of Bernoulli factory]\label{bercorrect}
Assuming \Cref{inputcondition-recur} for the input $(\Phi, \sigma ,v)$, 
there is a Bernoulli factory accessing $\rejsamp{}(\Phi,\sigma,\{v\})$ as an oracle
that terminates with probability $1$, 
and upon termination it returns a random $x\in Q_v$ distributed as the $\+{D}$ defined in \eqref{eq:definition-margin-overflow}.
\end{lemma}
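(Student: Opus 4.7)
The plan is to show correctness by composing three off-the-shelf Bernoulli factories in the way the paper suggests, taking as the basic primitive the oracle $\rejsamp{}(\Phi,\sigma,\{v\})$, which by \Cref{rejcorrect} terminates almost surely and returns a sample distributed exactly as $\mu^{\sigma}_v$. The first step is to turn this categorical oracle into $q_v$ many Bernoulli coins $\{B_x\}_{x\in Q_v}$ of respective biases $\mu^{\sigma}_v(x)$: given a request for a $B_x$-flip, run $\rejsamp{}(\Phi,\sigma,\{v\})$ to obtain $y$ and output $\mathbbm{1}[y=x]$. By \Cref{rejcorrect} this procedure halts almost surely and has the correct bias.

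Next I would use the subtraction Bernoulli factory of Nacu--Peres to convert each $B_x$ into a coin $B_x'$ of bias $\mu^{\sigma}_v(x)-\theta_v$. Here $\theta_v$ is a known rational constant that can be simulated by one uniform draw in $[0,1)$, and \Cref{prop:theta-marginoverflow-lower-bound} gives the crucial separation $\mu^{\sigma}_v(x)-\theta_v\geq \zeta>0$ uniformly in $x$, which is precisely the hypothesis that makes the subtraction factory terminate almost surely and return a coin with the stated bias. If needed, Huber's linear Bernoulli factory can then be applied with the known multiplier $1/(1-q_v\theta_v)\in(1,\infty)$ to rescale $B_x'$ into a coin $B_x''$ of bias $\+{D}(x)$, using again that $\+{D}(x)\le 1$ (equivalently $\mu^{\sigma}_v(x)\le 1-(q_v-1)\theta_v$, which follows from $\mu^{\sigma}_v(y)\ge \theta_v$ for all $y\neq x$).

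Finally, I would feed the coins $\{B_x''\}_{x\in Q_v}$ (or, equivalently and with the same end distribution, the unnormalized coins $\{B_x'\}_{x\in Q_v}$) into the Bernoulli race of Dughmi--Hartline--Kleinberg--Niazadeh: in each round pick a candidate $x\in Q_v$ uniformly and flip the corresponding coin, accepting $x$ on the first success and otherwise repeating. Since $\sum_{x\in Q_v}(\mu^{\sigma}_v(x)-\theta_v)=1-q_v\theta_v>0$, the success probability per round is bounded away from zero, so the race terminates almost surely, and the returned value is distributed as $\+{D}$ by a standard calculation.

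The main obstacle is verifying that all of the intermediate factories are legitimately invoked, namely that the relevant biases always lie in the permitted ranges. The subtraction step needs a strictly positive, quantitatively known lower bound $\mu^{\sigma}_v(x)-\theta_v\ge \zeta$, and this is where the invariant in \Cref{inputcondition-recur} is essential: it feeds into the local uniformity property of \Cref{sec:prelim} to yield \Cref{prop:theta-marginoverflow-lower-bound}, and this in turn is exactly the hypothesis required by Nacu--Peres (and for the race to have positive success probability). Once these preconditions are in place, the correctness and a.s.\ termination are inherited from the cited references, and efficiency (expected number of coin flips) is deferred to the complexity analysis elsewhere.
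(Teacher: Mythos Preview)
Your proposal is correct and matches the paper's construction essentially step for step: turn the rejection-sampling oracle into per-value Bernoulli coins, apply the Nacu--Peres subtraction factory (with the slack $\zeta$ supplied by \Cref{prop:theta-marginoverflow-lower-bound}), and feed the resulting unnormalized coins into the Bernoulli race of Dughmi~et~al. The only cosmetic difference is that the paper uses Huber's linear factory \emph{inside} the subtraction construction (writing $\subbf(\mathcal{O}_{\xi_1},\mathcal{O}_{\xi_2},\zeta)=1-\linbf(\mathcal{O}_{(1-\xi_1+\xi_2)/2},2,\zeta)$) rather than as a separate rescaling step; since you already note that the race can be run directly on the unnormalized coins, your optional rescaling is indeed unnecessary and the two arguments coincide.
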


The construction of the Bernoulli factory stated in above lemma is somehow standard, and is deferred to   \Cref{sec:bernoulli-factory},
where \Cref{bercorrect} is proved and the efficiency of the Bernoulli factory is also analyzed.



\section{Preliminary on Lov\'{a}sz Local Lemma}\label{sec:prelim}

The following is the asymmetric Lov\'{a}sz Local Lemma stated in the context of CSP.

\begin{theorem}[Erd\"{o}s and Lov\'{a}sz~\cite{LocalLemma}]\label{locallemma}
    Given a CSP formula $\Phi=(V,\+{Q},\+{C})$, if the following holds
    \begin{align}\label{llleq}
    \exists x\in (0, 1)^\+{C}\quad \text{ s.t.}\quad \forall c \in \+{C}:\quad
        {\mathbb{P}[\neg c]\leq x(c)\prod_{\substack{c'\in \+{C}\\ {\vbl}(c)\cap {\vbl}(c')\neq \emptyset}}(1-x(c'))},
    \end{align}
    then  
    $$
        {\mathbb{P}\left[ \bigwedge\limits_{c\in \+{C}} c\right]\geq \prod\limits_{c\in \+{C}}(1-x(c))>0},
    $$
\end{theorem}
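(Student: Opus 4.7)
The plan is to follow the classical induction-on-conditioning-set proof of Erd\"{o}s and Lov\'{a}sz. The central claim I would establish is that for every subset $S\subseteq\+{C}$ and every constraint $c\in\+{C}\setminus S$,
\[
\mathbb{P}\!\left[\neg c \,\middle|\, \bigwedge_{c'\in S} c'\right]\le x(c),
\]
proceeding by induction on $|S|$. Granted this claim, the theorem follows by ordering $\+{C}$ arbitrarily as $c_1,\ldots,c_m$ and unfolding
\[
\mathbb{P}\!\left[\bigwedge_{i=1}^m c_i\right]=\prod_{i=1}^m\mathbb{P}\!\left[c_i\,\middle|\,\bigwedge_{j<i}c_j\right]\ge \prod_{i=1}^m(1-x(c_i))>0,
\]
which is exactly the stated bound (strict positivity comes from $x(c)\in(0,1)$).

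The base case $S=\emptyset$ is immediate from hypothesis \eqref{llleq} together with $1-x(c')\le 1$. For the inductive step I would partition $S=S_1\cup S_2$ with $S_1\triangleq\{c'\in S : \vbl(c)\cap\vbl(c')\neq\emptyset\}$ and $S_2\triangleq S\setminus S_1$. If $S_1=\emptyset$, then under the uniform product law $\mathbb{P}$ the event $\neg c$ depends only on variables disjoint from those touched by constraints in $S_2$, so the conditioning is vacuous and the bound again reduces to \eqref{llleq}. Otherwise I would split
\[
\mathbb{P}\!\left[\neg c\,\middle|\,\bigwedge_{c'\in S}c'\right]=\frac{\mathbb{P}\!\left[\neg c\wedge\bigwedge_{c'\in S_1}c'\,\middle|\,\bigwedge_{c'\in S_2}c'\right]}{\mathbb{P}\!\left[\bigwedge_{c'\in S_1}c'\,\middle|\,\bigwedge_{c'\in S_2}c'\right]},
\]
bound the numerator by $\mathbb{P}[\neg c\mid \bigwedge_{c'\in S_2}c']=\mathbb{P}[\neg c]$ via product-independence and apply \eqref{llleq}, and bound the denominator from below by enumerating $S_1=\{c_1,\ldots,c_r\}$, expanding by the chain rule, and invoking the inductive hypothesis on each factor (each such conditioning set being a strict subset of $S$) to obtain $\prod_{i=1}^{r}(1-x(c_i))$.

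The only real bookkeeping point, and the step I expect to be the main obstacle to writing down cleanly, is that $S_1$ is contained in $\{c'\neq c : \vbl(c)\cap\vbl(c')\neq\emptyset\}$, whereas the product on the right of \eqref{llleq} ranges over \emph{all} $c'$ with $\vbl(c)\cap\vbl(c')\neq\emptyset$, including the index $c'=c$ itself. This slack of a factor $(1-x(c))$ in the numerator is exactly what makes the ratio collapse to at most $x(c)(1-x(c))\le x(c)$, closing the induction. A minor side check to record is that each conditional probability encountered is well-defined: by the same inductive bound $1-x(c_i)>0$, all the denominators produced along the way are strictly positive.
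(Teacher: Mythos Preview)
Your proof is correct and is exactly the classical Erd\H{o}s--Lov\'asz inductive argument. Note, however, that the paper does not actually supply a proof of this theorem: it is stated as a cited result (Theorem~\ref{locallemma}) and used as a black box, so there is no ``paper's own proof'' to compare against. Your write-up would serve perfectly well as a self-contained appendix proof; the bookkeeping observation that the product in \eqref{llleq} includes the term $c'=c$ (yielding the extra $(1-x(c))$ slack) is precisely what makes the induction close under this particular formulation.
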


When the condition \eqref{llleq} is satisfied, 
the probability of any event in the uniform distribution $\mu$ over all satisfying assignments can be well approximated by the probability of the event in the product distribution.
This was observed in \cite{haeupler2011new}:

\begin{theorem}[Haeupler, Saha, and Srinivasan \cite{haeupler2011new}]\label{HSS}
Given a CSP formula $\Phi=(V,\+{Q},\+{C})$, if $\eqref{llleq}$ holds, 
then for any event $A$ that is determined by the assignment on a subset of variables $\var{A}\subseteq V$, 
\[
    \Pr[\mu]{A}=\mathbb{P}\left[A\mid \bigwedge\limits_{c\in \+{C}}  c\right]\leq \mathbb{P}[A]\prod_{\substack{c\in \+{C}\\ \var{c}\cap \var{A}\neq \emptyset}}(1-x(c))^{-1},
\]
where $\mu$ denotes the uniform distribution over all satisfying assignments of $\Phi$ and $\mathbb{P}$ denotes the law of the uniform product distribution over $\+{Q}$.
\end{theorem}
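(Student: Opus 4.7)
The plan is to establish the inequality via a Bayes-rule decomposition together with the classical induction that underlies the proof of the asymmetric Lov\'{a}sz Local Lemma (\Cref{locallemma}). Let $S\triangleq\{c\in\+C\mid \var{c}\cap \var{A}\neq\emptyset\}$ and $\bar S\triangleq \+C\setminus S$. Writing
\[
\Pr[\mu]{A}=\mathbb{P}\!\left[A\,\middle|\,\bigwedge_{c\in\+C}c\right]=\frac{\mathbb{P}\!\left[A\wedge\bigwedge_{c\in\+C}c\right]}{\mathbb{P}\!\left[\bigwedge_{c\in\+C}c\right]},
\]
I would bound the numerator by dropping the constraints in $S$ and using independence under $\mathbb{P}$:
\[
\mathbb{P}\!\left[A\wedge\bigwedge_{c\in\+C}c\right]\le \mathbb{P}\!\left[A\wedge\bigwedge_{c\in\bar S}c\right]=\mathbb{P}[A]\cdot\mathbb{P}\!\left[\bigwedge_{c\in\bar S}c\right].
\]
The equality uses that $A$ is determined by $\var{A}$ while every $c\in\bar S$ is determined by variables in $V\setminus\var{A}$, so $A$ and $\bigwedge_{c\in\bar S}c$ are independent under the uniform product distribution $\mathbb{P}$. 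Dividing yields
\[
\Pr[\mu]{A}\le \mathbb{P}[A]\cdot\mathbb{P}\!\left[\bigwedge_{c\in S}c\,\middle|\,\bigwedge_{c\in\bar S}c\right]^{-1}.
\]

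The main step, and the one I expect to be the key obstacle, is proving the lower bound
\[
\mathbb{P}\!\left[\bigwedge_{c\in S}c\,\middle|\,\bigwedge_{c\in\bar S}c\right]\ge \prod_{c\in S}(1-x(c)).
\]
Enumerating $S=\{c_1,\dots,c_m\}$ and applying the chain rule expresses the left-hand side as the product of $\mathbb{P}\!\left[c_i\mid c_1,\dots,c_{i-1},\bigwedge_{c\in\bar S}c\right]$ for $i=1,\dots,m$, so it suffices to show each such factor is at least $1-x(c_i)$. This reduces to the standard inductive lemma underlying \Cref{locallemma}: under hypothesis \eqref{llleq}, for every $c\in\+C$ and every $T\subseteq\+C\setminus\{c\}$,
\[
\mathbb{P}\!\left[\neg c\,\middle|\,\bigwedge_{c'\in T}c'\right]\le x(c).
\]
I would prove this by induction on $|T|$, splitting $T$ into the constraints that share a variable with $c$ and those that do not; the latter are independent of $c$ under $\mathbb{P}$, while the former are controlled by combining \eqref{llleq} with the inductive hypothesis applied to strictly smaller subsets. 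This is exactly the inductive core used to prove \Cref{locallemma}, so one may either cite the classical argument or reproduce the short induction.

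Combining the two displays gives the claimed bound, and the lower bound $\mathbb{P}\!\left[\bigwedge_{c\in\+C}c\right]\ge \prod_{c\in\+C}(1-x(c))$ in \Cref{locallemma} is recovered as the special case $A=\True$, in which $S=\emptyset$ and the product over $S$ is empty. Beyond the inductive LLL lemma, no further difficulty is anticipated: the Bayes step is purely formal, and the independence used in the numerator bound follows directly from the definition of $S$ since $\var{\bar S}\cap\var{A}=\emptyset$.
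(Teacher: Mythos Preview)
The paper does not prove this statement itself; it is quoted as a result of Haeupler, Saha, and Srinivasan and used as a black box. Your proposal is the standard proof of this fact and is essentially correct: the Bayes decomposition, the independence of $A$ from $\bigwedge_{c\in\bar S}c$ under $\mathbb{P}$, and the reduction to the inductive lemma $\mathbb{P}[\neg c\mid\bigwedge_{c'\in T}c']\le x(c)$ are exactly the ingredients of the original argument.

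One small correction to your closing remark: taking $A=\True$ with $\var{A}=\emptyset$ gives $S=\emptyset$, so the inequality collapses to $1\le 1$ and does \emph{not} recover the lower bound $\mathbb{P}\bigl[\bigwedge_{c\in\+C}c\bigr]\ge\prod_{c\in\+C}(1-x(c))$ of \Cref{locallemma}. That lower bound is instead the direct product of the inductive lemma applied with $T$ growing through all of $\+C$, i.e., the chain-rule computation you already invoke in bounding $\mathbb{P}\bigl[\bigwedge_{c\in S}c\mid\bigwedge_{c\in\bar S}c\bigr]$, but carried out for the full set $\+C$ unconditionally rather than for $S$ conditionally.
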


The following ``local uniformity" property is a straightforward corollary to \Cref{HSS} by setting $x(c)=\mathrm{e}p$ for every $c\in \+{C}$ (and the lower bound is calculated by $\mu_v(x)=1-\sum_{y\in Q_v\setminus\{x\}}\mu_v(y)$). 

\begin{corollary}[local uniformity]\label{generaluniformity}
Given a CSP formula $\Phi=(V,\+{Q},\+{C})$, if $\mathrm{e}p\Delta<1$, then for any variable $v\in V$ and any value $x\in Q_v$, it holds that
\[
\frac{1}{q_v}-\left((1-\mathrm{e}p)^{-\Delta}-1\right)\leq \mu_v(x) \leq \frac{1}{q_v}+\left((1-\mathrm{e}p)^{-\Delta}-1\right).
\]
\end{corollary}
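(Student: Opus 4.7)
The plan is to apply \Cref{HSS} with the uniform choice $x(c)=\mathrm{e}p$ for every $c\in\+{C}$, then specialize to the singleton event that $v$ takes a prescribed value. First, under the hypothesis $\mathrm{e}p\Delta<1$ I would verify the premise \eqref{llleq} of \Cref{HSS}: since $\mathbb{P}[\neg c]\le p$ by definition and the product on the right-hand side of \eqref{llleq} has at most $\Delta$ factors (one per constraint sharing variables with $c$, including $c$ itself), the verification reduces to the standard LLL calculation $p\le \mathrm{e}p(1-\mathrm{e}p)^{\Delta}$ implied by the hypothesis. This unlocks \Cref{HSS}.

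Next I would fix $v\in V$ and $x\in Q_v$ and let $A$ be the event that a uniform product assignment $X$ satisfies $X(v)=x$, so that $\mathbb{P}[A]=1/q_v$. Since $\var{A}=\{v\}$, the constraints contributing to the product in \Cref{HSS} are exactly those with $v\in\vbl(c)$; picking any one such constraint $c_0$, every other constraint containing $v$ shares the variable $v$ with $c_0$ and therefore lies in its constraint-neighborhood, so there are at most $\Delta$ of them. Substituting into \Cref{HSS} yields the upper bound
\[
\mu_v(x)\le \frac{1}{q_v}(1-\mathrm{e}p)^{-\Delta}= \frac{1}{q_v}+\frac{1}{q_v}\bigl((1-\mathrm{e}p)^{-\Delta}-1\bigr)\le \frac{1}{q_v}+\bigl((1-\mathrm{e}p)^{-\Delta}-1\bigr).
\]

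For the lower bound I would follow the hint in the statement: using $\mu_v(x)=1-\sum_{y\in Q_v\setminus\{x\}}\mu_v(y)$ and applying the upper bound just derived to each of the $q_v-1$ summands gives
\[
\mu_v(x)\ge 1-(q_v-1)\cdot\frac{1}{q_v}(1-\mathrm{e}p)^{-\Delta}=\frac{1}{q_v}-\Bigl(1-\frac{1}{q_v}\Bigr)\bigl((1-\mathrm{e}p)^{-\Delta}-1\bigr),
\]
which dominates $\frac{1}{q_v}-\bigl((1-\mathrm{e}p)^{-\Delta}-1\bigr)$ because $1-1/q_v\le 1$ and $(1-\mathrm{e}p)^{-\Delta}\ge 1$. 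There is no genuine obstacle in the argument; the only mild care is in the first step, where one must quote the classical $(1-\mathrm{e}p)^{\Delta}\ge 1/\mathrm{e}$-flavored inequality underlying the LLL condition.
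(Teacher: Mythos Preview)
Your proposal is correct and follows exactly the route the paper indicates: apply \Cref{HSS} with $x(c)=\mathrm{e}p$ to the singleton event $\{X(v)=x\}$ for the upper bound, then obtain the lower bound via $\mu_v(x)=1-\sum_{y\neq x}\mu_v(y)$. The only nitpick is the exponent in the verification of \eqref{llleq}: as the paper's product in \eqref{llleq} is written it includes $c$ itself, but the classical inequality that actually holds under $\mathrm{e}p\Delta<1$ is $(1-\mathrm{e}p)^{\Delta-1}\ge 1/\mathrm{e}$ (using $(1-1/\Delta)^{\Delta-1}\ge 1/\mathrm{e}$), i.e., the standard LLL with the product over $c'\neq c$; this cosmetic off-by-one does not affect the rest of your argument or the final bound.
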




The following corollary implied by the ``local uniformity" property simultaneously  proves \Cref{prop:theta-marginal-lower-bound} and \Cref{prop:theta-marginoverflow-lower-bound}.
Recall $\pprime$ defined in \eqref{eq:parameter-p-prime} and $\theta_v, \zeta, \eta$ defined in \eqref{eq:parameter-theta}.

\begin{corollary}\label{localuniformitycor}
For any CSP formula $\Phi=(V,\+{Q},\+{C})$ and any partial assignment $\sigma\in\+{Q}^*$,
if
\[
\forall c\in \mathcal{C},\quad \mathbb{P}[\neg c\mid \sigma]\leq \pprime q,
\]  
then $\sigma$ is feasible, and for any $v\in V\setminus\Lambda(\sigma)$ and any $x\in Q_v$,
\[
 \theta_v+\zeta\le \mu^{\sigma}_v(x) \le \theta_v+2\eta+\zeta.
\]
\end{corollary}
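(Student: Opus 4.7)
The plan is to reduce both parts of the corollary to results about the simplified formula $\Phi^\sigma$ introduced in \Cref{sec:rejection-sampling}. The starting observation is that for every $c\in\+{C}$ not yet satisfied by $\sigma$, its restriction to $\vbl(c)\setminus\Lambda(\sigma)$ appears as a constraint of $\Phi^\sigma$, and its violation probability under the uniform product distribution on $\+{Q}_{V\setminus\Lambda(\sigma)}$ equals exactly $\mathbb{P}[\neg c\mid \sigma]\le p'q$. Hence $p_{\Phi^\sigma}\le p'q$, while the constraint degree $\Delta_{\Phi^\sigma}$ is at most $\Delta$.

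Next I would verify the LLL regime for $\Phi^\sigma$. By the definition of $p'$ in \eqref{eq:parameter-p-prime},
\[
\mathrm{e}\cdot p_{\Phi^\sigma}\cdot \Delta_{\Phi^\sigma}\le \mathrm{e}p'q\Delta \;=\; \frac{1}{18\mathrm{e}qk\Delta^3}<1,
\]
so the symmetric LLL condition holds. Applying \Cref{locallemma} to $\Phi^\sigma$ (with the uniform weighting $x(c)\equiv \mathrm{e}p'q$, whose validity under the above inequality is the standard symmetric-LLL calculation) produces a satisfying assignment on $V\setminus\Lambda(\sigma)$; concatenating it with the values already assigned by $\sigma$ yields a satisfying assignment of $\Phi$ extending $\sigma$, which is exactly the feasibility assertion.

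For the marginal bounds I use the identity $\mu^\sigma_v=\mu_{\Phi^\sigma,v}$, which follows from the construction of $\Phi^\sigma$, together with \Cref{generaluniformity} applied to $\Phi^\sigma$ (whose parameters satisfy the LLL condition just verified). This yields
\[
\frac{1}{q_v}-\bigl((1-\mathrm{e}p'q)^{-\Delta}-1\bigr)\;\le\; \mu^\sigma_v(x)\;\le\; \frac{1}{q_v}+\bigl((1-\mathrm{e}p'q)^{-\Delta}-1\bigr).
\]
Recalling $\eta=(1-\mathrm{e}p'q)^{-\Delta}-1$ and $\theta_v=\tfrac{1}{q_v}-\eta-\zeta$ from \eqref{eq:parameter-theta}, the left side rewrites as $\theta_v+\zeta$ and the right side as $\theta_v+2\eta+\zeta$, finishing both halves.

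The only step requiring mild care is the bookkeeping around $\Phi^\sigma$: one needs to confirm that the hypothesis rules out $\sigma$ immediately falsifying any constraint (otherwise the restricted constraint would have violation probability $1>p'q$, contradicting the hypothesis), and that the conditional violation probability under $\mathbb{P}[\cdot\mid\sigma]$ genuinely matches the unconditional violation probability of the restricted constraint in $\Phi^\sigma$. Both follow by unfolding the definitions in \Cref{def:partial-assignment} and \Cref{sec:rejection-sampling}. After that translation, the proof is essentially a one-line application each of \Cref{locallemma} and \Cref{generaluniformity}; there is no genuine obstacle beyond correctly matching parameters.
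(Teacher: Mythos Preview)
Your proposal is correct and follows essentially the same route as the paper: pass to the simplification $\Phi^\sigma$, observe $p_{\Phi^\sigma}\le p'q$ and $\Delta_{\Phi^\sigma}\le\Delta$, check $\mathrm{e}p'q\Delta<1$, invoke \Cref{locallemma} for feasibility and \Cref{generaluniformity} for the marginal bounds, then rewrite via the definitions of $\eta$ and $\theta_v$. The paper's proof is slightly terser but otherwise identical in structure.
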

\begin{proof}
Let $\Phi^\sigma=(V^\sigma,\+{Q}^\sigma,\+{C}^\sigma)$ be the simplification of $\Phi$ over $\sigma$, where $V^\sigma=V\setminus\Lambda(\sigma)$. 
We have  
$$\forall c\in \+{C},\quad \mathbb{P}_{\Phi^\sigma}[{\neg c}]=\mathbb{P}_{\Phi}[\neg c| \sigma]\leq \pprime q,$$
which means the simplified instance $\Phi^\sigma$ has violation probability $p_{\Phi^{\sigma}}\le \pprime q$.
By our choice of $\pprime$  in \eqref{eq:parameter-p-prime}, we still have $\mathrm{e}p_{\Phi^{\sigma}}\Delta_{\Phi^{\sigma}}<1$ where $\Delta_{\Phi^{\sigma}}\le \Delta$ is the constraint degree of simplified instance $\Phi^\sigma$.
Then by \Cref{locallemma}, $\Phi^\sigma$ is satisfiable, i.e.~$\sigma$ is feasible.

Note that the marginal distribution at $v$ induced by the $\mu_{\Phi^\sigma}$ over satisfying assignments of $\Phi^\sigma$ is precisely $\mu^{\sigma}_v$.
By \Cref{generaluniformity}, 
for any $v\in V^\sigma=V\setminus\Lambda(\sigma)$
and any $x\in Q_v$,
\[
\theta_v+\zeta
=
\frac{1}{q_v}-\eta
=
\frac{1}{q_v}-\left((1-\mathrm{e}\pprime q)^{-\Delta}-1\right)
\le
\mu^{\sigma}_v(x)
\le 
\frac{1}{q_v}+\left((1-\mathrm{e}\pprime q)^{-\Delta}-1\right)
=
\frac{1}{q_v}+\eta
=
\theta_v+2\eta+\zeta.
\]
\end{proof}

\section{Correctness of Sampling}\label{section-samp-cor}

In this section, we prove the correctness of \Cref{Alg:main}.
All theorems in this section assume 
the setting of parameters in  \eqref{eq:parameter-p-prime} and   \eqref{eq:parameter-theta},
and
the oracles in \Cref{definition:evaluation-oracle} and \Cref{assumption:frozen-oracle}.

We show that our main sampling algorithm \Cref{Alg:main} is correct.


\begin{theorem}\label{maincor}
On any input CSP formula $\Phi=(V,\+{Q},\+{C})$ satisfying \eqref{eq:main-thm-LLL-condition}, 
\Cref{Alg:main} terminates with probability $1$, and returns a uniform random satisfying assignment of $\Phi$ upon termination.
\end{theorem}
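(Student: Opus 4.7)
The plan is to reduce Theorem~\ref{maincor} to the correctness of the three subroutines \newsample{}, \recsample{}, and \rejsamp{}. Granting that each of them returns a sample from the claimed distribution under the invariants \Cref{inputcondition-magin} and \Cref{inputcondition-recur}, the overall correctness of \Cref{Alg:main} follows from the chain rule of conditional probability: each iteration of the \textbf{for} loop either leaves $X$ unchanged (when $v_i$ is $X$-fixed) or draws $X(v_i)$ from $\mu_{v_i}^{X}$ by correctness of \newsample{}, so the partial assignment $X$ after the loop is distributed as $\mu_{\Lambda(X)}$; the final call $\rejsamp{}(\Phi, X, V \setminus \Lambda(X))$ then draws the remaining coordinates from $\mu^{X}_{V \setminus \Lambda(X)}$ by \Cref{rejcorrect}, and the concatenation is a sample from $\mu = \mu_{\Phi}$.

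To make this reduction work, I would prove by induction on the sequence of (possibly nested) calls that the invariant $\mathbb{P}[\neg c \mid \sigma] \le p'q$ is preserved throughout, together with the remaining structural requirements of the two conditions. Initially $\sigma = \hollowstar^V$ and $\mathbb{P}[\neg c] \le p \le p'$ holds by the LLL condition \eqref{eq:main-thm-LLL-condition} and the choice of $p'$ in \eqref{eq:parameter-p-prime}. Whenever a variable $w$ is assigned a value $x \in Q_w$ (either $v_i$ in \Cref{Alg:main} or $u = \nextvar{\sigma}$ inside \recsample{}), the averaging identity
\[
\mathbb{P}[\neg c \mid \sigma] \;=\; \frac{1}{q_w}\sum_{y \in Q_w} \mathbb{P}[\neg c \mid \Mod{\sigma}{w}{y}]
\]
together with the structural fact that the selected $w$ has ``slack'' on constraints containing it (in \Cref{Alg:main}, $v_i$ is not $X$-fixed so no $c \ni v_i$ is $X$-frozen and hence $\mathbb{P}[\neg c \mid X] \le p'$ by \Cref{remark:one-sided-error-frozen-oracle}) yields $\mathbb{P}[\neg c \mid \Mod{\sigma}{w}{x}] \le q p'$, restoring the invariant. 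Given the invariant, \Cref{localuniformitycor} supplies feasibility of $\sigma$ and the marginal lower bounds of \Cref{prop:theta-marginal-lower-bound} and \Cref{prop:theta-marginoverflow-lower-bound}, so the overflow distribution $\+D$ in \eqref{eq:definition-margin-overflow} is well-defined and the decomposition $\mu_v^{\sigma} = q_v\theta_v \cdot \mathrm{Unif}(Q_v) + (1 - q_v\theta_v)\cdot \+D$ matches exactly the two-branch output of \Cref{Alg:eq}. A joint induction then establishes: (i) \newsample{} outputs $\mu_v^{\sigma}$ provided \recsample{} outputs $\+D$; (ii) in the base case $u = \nextvar{\sigma} = \perp$, \recsample{} outputs $\+D$ by \Cref{bercorrect} (the Bernoulli factory using \rejsamp{} as an oracle, justified by \Cref{rejcorrect}); and (iii) in the recursive case, the inner call at \Cref{Line-else-rec} produces $\sigma(u) \sim \mu_u^{\Mod{\sigma}{u}{\star}}$, and the tail-recursive call at \Cref{line-rec-2-rec} samples $v$ from $\+D$ on the updated $\sigma$, and averaging $\sigma(u)$ against its true marginal leaves the required overflow distribution at $v$ unchanged.

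Termination with probability one combines four ingredients: the outer \textbf{for} loop runs for only $n$ iterations; \rejsamp{} terminates almost surely by \Cref{rejcorrect}; the Bernoulli factory terminates almost surely by \Cref{bercorrect}; and the \recsample{} recursion tree has almost surely bounded depth, because each nested call of the form $\recsample{}(\Phi, \Mod{\sigma}{u}{\star}, u)$ strictly grows $|\Lambda^+(\sigma)| \le n$, while each tail-recursive call at \Cref{line-rec-2-rec} strictly shrinks $\vinf{\sigma}$ (the newly assigned $u$ leaves the boundary). The main obstacle I anticipate is the bookkeeping for the two coupled recursions inside \recsample{}---``sample $u$'' versus ``sample $v$ on the updated $\sigma$''---so that the invariant, the feasibility of $\sigma$, and the target conditional distribution are all simultaneously preserved; in particular, one must leverage the definitions of $\vinf{\sigma}$, $\vstar{\sigma}$, and $\hfix{\sigma}$ from \Cref{definition:boundary-variables} to ensure that $u = \nextvar{\sigma}$ always has enough slack for the averaging step to close the invariant with the right constant, and that the pre- and post-recursion marginal laws at $v$ are identical so that chaining the inductive correctness across the two types of recursive calls does not bias the output.
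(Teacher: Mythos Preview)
Your plan matches the paper's proof almost exactly: establish the invariant (\Cref{lemma:invariant-marginsample}, \Cref{lemma:invariant-p-prime-q-bound}, \Cref{lemma:general-invariant}), prove correctness of \newsample{}/\recsample{} by structural induction (\Cref{sampcor}), and then finish with the chain rule. Two small points deserve care. First, your sentence ``the partial assignment $X$ after the loop is distributed as $\mu_{\Lambda(X)}$'' hides the fact that $\Lambda(X)$ is itself random and adaptive to the samples drawn; the paper flags this as ``not trivial'' and handles it by fixing a target $\sigma\in\Omega$ in advance, defining the deterministic sequence $\sigma^0,\dots,\sigma^n$ it would induce, and proving $\Pr{X^*=\sigma}=\mu(\sigma)$ via \eqref{eq-yi-zi}. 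Second, for termination you do not need (and may not be able to prove) that the tail call at \Cref{line-rec-2-rec} shrinks $\vinf{\sigma}$: assigning $u$ can freeze new constraints and enlarge $\vstar{\sigma}$, so the boundary need not decrease. The clean argument is that \emph{both} recursive calls (Line~\ref{line-rec-1-rec} and Line~\ref{line-rec-2-rec}) strictly increase $|\Lambda^+(\sigma)|\le n$, so the recursion tree is finite; this is what the paper's structural induction implicitly uses.
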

\begin{remark}[perfectness of sampling]
Note that the sampling in above theorem is \emph{perfect}: 
\Cref{Alg:main} returns a sample that is distributed exactly as the uniform distribution $\mu$ over all satisfying assignments of $\Phi$. 
Later in \Cref{section-samp-ef}, the oracle assumed in \Cref{assumption:frozen-oracle} will be realized by a Monte Carlo routine, which will further generalize the sampling algorithm to assume nothing beyond an evaluation oracle, in a price of a bounded bias introduced to the sampling.
\end{remark}

\begin{remark}[a weaker LLL condition]
The LLL condition \eqref{eq:main-thm-LLL-condition} is assumed mainly to guarantee the efficiency of the algorithm.
\Cref{maincor} in fact holds under a much weaker LLL condition:
\begin{align*}
2\mathrm{e}\cdot q^2\cdot p\cdot\Delta<1.
\end{align*}
Under this condition, there exists such choices of parameters $\pprime$ and $\zeta$  that satisfy
$p<\pprime<\frac{1}{2\mathrm{e}q^2\Delta}$ and $0<\zeta<\frac{1}{q}-\left(1-\mathrm{e}\pprime q\right)^{-\Delta}+1$.
For any such choice of parameters, the same analysis persists and \Cref{Alg:main} is as correct as claimed in \Cref{maincor}. 
%
\end{remark}

The following lemma guarantees  that the invariants in \Cref{inputcondition-magin} and \Cref{inputcondition-recur} 
are satisfied respectively by the inputs to \Cref{Alg:eq} and \Cref{Alg:Recur}.

\begin{lemma}\label{lemma:invariant-marginsample}
During the execution of \Cref{Alg:main} on a CSP formula $\Phi=(V,\+{Q},\+{C})$ satisfying \eqref{eq:main-thm-LLL-condition}:
\begin{enumerate}
\item\label{item:invariant-marginsample} 
whenever $\newsample{}(\Phi, X, v)$ is called, \Cref{inputcondition-magin} is satisfied by its input $(\Phi, X, v)$;
\item \label{item:invariant-marginoverflow}
whenever \recsample{}$(\Phi, \sigma, v)$ is called, \Cref{inputcondition-recur}  is satisfied by its input $(\Phi, \sigma, v)$.
\end{enumerate}
\end{lemma}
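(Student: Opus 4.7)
The plan is to establish both claims jointly by an induction tracking a unified invariant on the ``current'' partial assignment~$\sigma$ (which is the maintained $X$ in the outer loop of \Cref{Alg:main} and the mutated local $\sigma$ inside each \recsample{} invocation): (I1)~$\sigma(u)\in Q_u\cup\{\star,\hollowstar\}$ for every $u\in V$; (I2)~$\mathbb{P}[\neg c\mid\sigma]\leq p'q$ for every $c\in\+C$. Once (I1)--(I2) are maintained, feasibility of $\sigma$ is automatic from \Cref{localuniformitycor}. The base case $\sigma=\hollowstar^V$ is immediate since $\mathbb{P}[\neg c]\leq p<p'\leq p'q$ by \eqref{eq:main-thm-LLL-condition} and \eqref{eq:parameter-p-prime}. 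During the execution the partial assignment is mutated in only two ways: (a)~promoting some $\sigma(v)$ from $\hollowstar$ to~$\star$, which leaves $\Lambda(\sigma)$ and hence every conditional probability unchanged; and (b)~assigning $\sigma(u)$ a concrete value $x\in Q_u$, which occurs on \Cref{line-main-sample} of \Cref{Alg:main} and in both branches at \Cref{Line-if-2-rec}/\Cref{line-rec-1-rec} of \Cref{Alg:Recur}. For~(b), the uniform product law gives $\mathbb{P}[\neg c\mid\sigma_{\text{new}}]\leq q_u\cdot\mathbb{P}[\neg c\mid\sigma_{\text{old}}]$ for every $c$ with $u\in\vbl(c)$ (and no change when $u\notin\vbl(c)$); hence~(I2) survives a type-(b) mutation exactly when $\mathbb{P}[\neg c\mid\sigma_{\text{old}}]\leq p'$ for every such $c$, i.e.~provided no $\sigma_{\text{old}}$-frozen constraint contains $u$, equivalently $u$ is not $\sigma_{\text{old}}$-fixed (recalling that in case~(b) the vertex $u$ is not yet accessed, so automatically $u\notin\Lambda^+(\sigma_{\text{old}})$).

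The inductive step thus reduces to the key structural claim: \emph{at every type-(b) mutation the variable being assigned is not $\sigma$-fixed.} For \Cref{line-main-sample} of \Cref{Alg:main} this is immediate from the guard on \Cref{line-main-if}. For the assignments inside \Cref{Alg:Recur} the variable is $u=\nextvar{\sigma}\in\vinf{\sigma}$, and the claim follows directly from \Cref{definition:boundary-variables}. First, $\sigma(u)\ne\star$, because $\sigma(u)=\star$ would force $u\in\Lambda^+(\sigma)\subseteq\vfix{\sigma}$, making $u$ a vertex of $\hfix{\sigma}$ whose component hosts the $\star$-vertex $u$ itself, so $u\in\vstar{\sigma}$, contradicting $u\in V^\sigma\setminus\vstar{\sigma}$. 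Second, if $u$ lay in the scope $\vbl(c')$ of some $\sigma$-frozen $c'$, then $u\in V^\sigma\cap\vfix{\sigma}$ would again make $u$ a vertex of $\hfix{\sigma}$, and the witness $c\in\+C^\sigma$ promised by $u\in\vinf{\sigma}$ (with $u,v\in\vbl(c)$ and $v\in\vstar{\sigma}\subseteq V^\sigma\cap\vfix{\sigma}$) would provide a hyperedge of $\hfix{\sigma}$ already joining $u$ to $v$, forcing $u\in\vstar{\sigma}$ and giving the same contradiction. Hence~$u$ is neither accessed nor involved in any $\sigma$-frozen constraint, so $u\notin\vfix{\sigma}$ as required.

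With the structural claim and invariants (I1)--(I2) in hand, the lemma follows by inspecting the three call sites in turn. At \Cref{line-main-sample} of \Cref{Alg:main} the guard on \Cref{line-main-if} supplies $v_i\notin\vfix{X}$ and hence $X(v_i)=\hollowstar$, which together with (I1) and (I2) delivers every bullet of \Cref{inputcondition-magin}. At \Cref{Line-rec-eq} of \Cref{Alg:eq} and \Cref{line-rec-1-rec} of \Cref{Alg:Recur} the partial assignment passed to \recsample{} is $\Mod{\sigma}{u}{\star}$: the $\star$-mark supplies the first bullet of \Cref{inputcondition-recur}, and since the $\hollowstar\!\mapsto\!\star$ promotion leaves $\Lambda(\sigma)$ unchanged, the bound $\mathbb{P}[\neg c\mid\Mod{\sigma}{u}{\star}]=\mathbb{P}[\neg c\mid\sigma]\leq p'q$ transfers verbatim from (I2). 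The main obstacle I anticipate is a careful formalisation of the induced sub-hypergraph~$\hfix{\sigma}$ that validates the connectivity step in the structural claim --- specifically, one needs that any $c\in\+C^\sigma$ meeting two vertices of $V^\sigma\cap\vfix{\sigma}$ already joins them inside $\hfix{\sigma}$, even when $\vbl(c)$ also contains non-fixed variables of $V^\sigma$; once this convention is settled, everything else is bookkeeping driven by the inductive hypothesis.
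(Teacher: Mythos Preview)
Your induction mirrors the paper's proof: the paper splits the invariant into the outer-loop bound on $X^0,\dots,X^n$ (\Cref{lemma:invariant-p-prime-q-bound}) and a one-step preservation lemma for non-$\sigma$-fixed mutations (\Cref{lemma:general-invariant}), then asserts that the algorithm only ever mutates non-fixed variables. The structural claim you isolate---that $u=\nextvar{\sigma}$ is never $\sigma$-fixed---is exactly \Cref{property-u} in the appendix.

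The obstacle you flag at the end is real, not mere bookkeeping. Under the strict convention for the induced sub-hypergraph (a hyperedge $c$ belongs to~$\hfix{\sigma}$ only when $\vbl(c)\subseteq V^\sigma\cap\vfix{\sigma}$---the convention the paper itself uses in the proof of \Cref{lem-mono} and in the implementation in \Cref{subsection-var}), the witness $c\in\+C^\sigma$ promised by $u\in\vinf{\sigma}$ need \emph{not} be an edge of~$\hfix{\sigma}$: if $\vbl(c)$ also contains some non-fixed $w\in V^\sigma$, then $c$ is excluded, and one can have $u\in(V^\sigma\cap\vfix{\sigma})\setminus\vstar{\sigma}$ while still $u\in\vinf{\sigma}$. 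So the inference ``$u\notin\vstar{\sigma}\Rightarrow u\notin\vfix{\sigma}$'' does not follow, and neither your argument nor the paper's one-line proof of \Cref{property-u} (``combining with $\vstar{\sigma}\subseteq V^\sigma\cap\vfix{\sigma}$, we have $u\notin\vfix{\sigma}$'') actually closes this case. The fix is purely definitional: either read ``induced sub-hypergraph'' in the section sense (each hyperedge is restricted to the vertex set rather than required to be fully contained in it), or equivalently replace $V^\sigma\setminus\vstar{\sigma}$ by $V^\sigma\setminus\vfix{\sigma}$ in the definition of~$\vinf{\sigma}$. Under either convention your argument goes through immediately.
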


Before proving this lemma, 
we show that these invariants can already  imply the correctness of \newsample{},
which is critical for the correctness of the main sampling algorithm (\Cref{Alg:main}),
because the correctness of \rejsamp{} is standard (\Cref{rejcorrect}).


\begin{theorem}\label{sampcor}
The following holds for \Cref{Alg:eq} and \Cref{Alg:Recur}:
\begin{enumerate}
\item
Assuming \Cref{inputcondition-magin},  $\newsample{}(\Phi,\sigma,v)$ terminates with probability $1$, and  it returns a random value {$x\in Q_v$} distributed as $\mu^{\sigma}_v$ upon termination.
\item
Assuming \Cref{inputcondition-recur},  $\recsample{}(\Phi,\sigma,v)$ terminates with probability $1$, and upon termination  it returns a random value {$x\in Q_v$} distributed as the $\+{D}\triangleq\frac{\mu_v^{\sigma}-\theta_v}{1-q_v\cdot \theta_v}$ defined in \eqref{eq:definition-margin-overflow}.
\end{enumerate}
\end{theorem}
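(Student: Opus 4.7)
The plan is to prove both parts of \Cref{sampcor} simultaneously by strong induction on the lexicographic measure $\Psi(\sigma)\triangleq(n-|\Lambda(\sigma)|,\,n-|\Lambda^{+}(\sigma)|)$, which is bounded below by $(0,0)$ since $\Lambda(\sigma)\subseteq\Lambda^{+}(\sigma)\subseteq V$. My first task is to verify that the invariants \Cref{inputcondition-magin} and \Cref{inputcondition-recur} are preserved under every recursive call inside \Cref{Alg:eq} and \Cref{Alg:Recur}. Feasibility follows from \Cref{localuniformitycor}. The substantive check is the bound $\mathbb{P}[\neg c\mid \tau]\leq p'q$ for the self-recursive call on \Cref{line-rec-2-rec}, after $\tau(u)$ has been assigned to some $x\in Q_u$: constraints $c'$ with $u\notin\vbl(c')$ see no change, and for $c$ with $u\in\vbl(c)$, the construction of $\nextvar{\sigma}$ places $u$ in $\vinf{\sigma}$, hence $u$ is not $\sigma$-fixed and $c$ is not $\sigma$-frozen, so by \Cref{remark:one-sided-error-frozen-oracle} $\mathbb{P}[\neg c\mid \sigma]\leq p'$; pinning one extra variable can inflate this probability by at most a factor $q_u$, giving $\mathbb{P}[\neg c\mid \tau]\leq q_u p'\leq q p'$ as required.

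For the base case of \recsample{}, when $\vinf{\sigma}=\emptyset$ (so $u=\perp$), \Cref{prop:theta-marginoverflow-lower-bound} ensures $\theta_v<\mu_v^{\sigma}(x)$ for every $x\in Q_v$, making $\+{D}$ a well-defined distribution, and the conclusion follows directly from \Cref{bercorrect}. For the inductive step with $u\neq\perp$, the inner call $\recsample(\Phi,\Mod{\sigma}{u}{\star},u)$ has $|\Lambda|$ unchanged and $|\Lambda^{+}|$ increased by one, hence strictly smaller $\Psi$; by induction it terminates almost surely and returns from $\+{D}_u=(\mu_u^{\sigma}-\theta_u)/(1-q_u\theta_u)$. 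Combining with the local-uniformity branch, the assignment $\sigma(u)$ then follows $\mu_u^{\sigma}$, since each $x\in Q_u$ is produced with probability $\theta_u + (1-q_u\theta_u)\cdot\+{D}_u(x)=\mu_u^{\sigma}(x)$. The subsequent self-recursive call $\recsample(\Phi,\tau,v)$, where $\tau$ is $\sigma$ extended by $\tau(u)=x_u\in Q_u$, has strictly increased $|\Lambda|$ and hence strictly smaller $\Psi$; by induction, it returns from $(\mu_v^{\tau}-\theta_v)/(1-q_v\theta_v)$. Marginalizing over $x_u$ by the law of total probability,
\[
\sum_{x_u\in Q_u}\mu_u^{\sigma}(x_u)\cdot\frac{\mu_v^{\tau}(x_v)-\theta_v}{1-q_v\theta_v}=\frac{\mu_v^{\sigma}(x_v)-\theta_v}{1-q_v\theta_v},
\]
so \recsample{} outputs the desired $\+{D}$. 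Almost-sure termination follows because $\Psi$ cannot strictly decrease infinitely often.

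The claim for \newsample{} then falls out by decomposing \Cref{Alg:eq}: with probability $q_v\theta_v$ the $\lceil r/\theta_v\rceil$-th value of $Q_v$ is returned (each outcome with probability $\theta_v$), and with probability $1-q_v\theta_v$ a value is drawn from $\+{D}$ by the just-proven correctness of \recsample{}, giving total probability $\theta_v+(1-q_v\theta_v)\cdot\+{D}(x)=\mu_v^{\sigma}(x)$ for every $x\in Q_v$. The main delicate step is the invariant propagation in the first paragraph; once that is in place, the distribution identity is an elementary computation and the lex-order induction handles termination without any further probabilistic estimate internal to \Cref{sampcor}.
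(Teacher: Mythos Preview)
Your proof is correct and follows essentially the same structural-induction argument as the paper: base case $\nextvar{\sigma}=\perp$ handled by \Cref{bercorrect}, inductive step by marginalizing over the value assigned to $u=\nextvar{\sigma}$, and then \newsample{} as a corollary. The two cosmetic differences are that you make the well-founded order explicit via the lexicographic measure $\Psi$ (the paper's ``structural induction'' leaves this implicit), and you fold the invariant-preservation check---which the paper isolates as \Cref{lemma:general-invariant}---directly into the induction.
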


%
%

\begin{proof}
We verify the correctness of $\recsample{}$ by a structural induction. Then the correctness of \newsample{} follows straightforwardly.

Suppose that $\recsample{}$ is run on input $(\Phi,\sigma,v)$ satisfying \Cref{inputcondition-recur}.

The induction basis  is when $\nextvar{\sigma}=\perp$, in which case no further recursive calls to the \recsample{} are incurred.
In this case, due to the correctness of the Bernoulli factory  stated in \Cref{bercorrect} under \Cref{inputcondition-recur},
$\recsample{}(\Phi,\sigma,v)$ terminates with probability $1$ and
returns a random value $x\in Q_v$ distributed as $\+{D}$.

For the induction step, 
we assume that $\nextvar{\sigma}=u\in V$.
By the induction hypothesis, 
the recursive calls to $\recsample{}$ at \Cref{line-rec-1-rec} and \Cref{line-rec-2-rec} in \Cref{Alg:Recur} terminate with probability 1.
All other computations are finite. 
By induction, $\recsample{}(\Phi,\sigma,v)$ terminates with probability $1$. 

We then verify the correctness of sampling.
Let $W$ denote the value of $\sigma(u)$ generated in Lines \ref{Line-r-rec}-\ref{line-rec-1-rec} of in \Cref{Alg:Recur}.
It is easy to verify that for every $a\in Q_u$,
\begin{equation}
\begin{aligned}\label{eq-z-a}
\Pr{W=a}
= &\Pr{r<q_u\cdot \theta_u}\cdot \frac{1}{q_u}
 +  \Pr{r\geq q_u\cdot \theta_u}\cdot\Pr{\recsample(\Phi,\sigma_{u\gets \star},u) = a}
 \\=&\theta_u + \mu^\sigma_u(a)-\theta_u= \mu^\sigma_u(a),
\end{aligned}    
\end{equation}
where the second equality is due to the induction hypothesis (I.H.).
Thus, for every $a\in Q_u$,
\begin{align*}
\Pr{\recsample(\Phi,\sigma,v) = a} 
&=
\sum_{b\in Q_u}
\left(\Pr{W=b}\cdot\Pr{\recsample(\Phi,\Mod{\sigma}{u}{b},v) = a}\right)\\
\text{(by \eqref{eq-z-a} and  I.H.)}\qquad
&= 
\sum_{b\in Q_u}\left(
\mu^\sigma_u(b)\cdot
\frac{\mu^{\sigma_{u\gets b}}_v(a) - \theta_v}{1 -q_v\cdot \theta_v}\right)\\
&=
\frac{\mu^{\sigma}_v(a) - \theta_v}{1 -q_v\cdot \theta_v},
 \end{align*}
which means that the value returned by $\recsample{}(\Phi,\sigma,v)$ is distributed as $\+{D}$.
This finishes the induction and proves the correctness of \recsample{} assuming \Cref{inputcondition-recur}.

It is then straightforward to verify the correctness of $\newsample{}$ under \Cref{inputcondition-magin}.
Let $(\Phi,\sigma,v)$  be an arbitrary input to $\newsample{}$  satisfying \Cref{inputcondition-magin}.
It is easy to verify that $(\Phi,\Mod{\sigma}{v}{\star},v)$ satisfies \Cref{inputcondition-recur}, and hence the correctness of \recsample{} can apply.
Therefore, $\newsample{}(\Phi,\sigma,v)$ terminates with probability $1$ and for every $a\in Q_v$,
\begin{align*}
&\Pr{\newsample{}(\Phi,\sigma,v)=a}\\
= &\Pr{r<q_v\cdot \theta_v}\cdot \frac{1}{q_v}
 +  \Pr{r\geq q_v\cdot \theta_v}\cdot\Pr{\recsample(\Phi,\sigma_{v\gets \star},v) = a}\\
 =&\theta_v + \mu^{\sigma}_v(a)-\theta_v \\
= &\mu^{\sigma}_v(a).
\end{align*}
This shows the termination and correctness of $\newsample{}$.
\end{proof}

We then verify the invariant conditions claimed in \Cref{lemma:invariant-marginsample}.
Before that, we formally define the sequence of partial assignments that evolve in \Cref{Alg:main}.

\begin{definition}[partial assignments in \Cref{Alg:main}]\label{def-pas-main}
Let $X^0,X^1,\dots,X^n\in\+{Q}^*$ denote the sequence of partial assignments, 
where $X^0=\hollowstar^V$ 
and for every $1\le i\le n$, 
$X^i$ is the partial assignments $X$ in \Cref{Alg:main} after the $i$-th iteration of the \textbf{for} loop in Lines \ref{line-main-init}-\ref{line-main-sample}.
\end{definition}


\begin{fact}\label{fact-on-alg-main}
For each $1\le i\le n$, either $v_i$ is $X^{i-1}$-fixed, in which case $X^i=X^{i-1}$,
or otherwise, in which case $X^i$ extends $X^{i-1}$ by assigning $v_i$ a value in $Q_{v_i}$.
Consequently, for each $1\le i\le n$, if $v_i$ is not  $X^{i-1}$-fixed, then $X^*(v_i) = X^i(v_i)$, where $X^*$ denotes the output of \Cref{Alg:main}.
\end{fact}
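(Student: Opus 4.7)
The plan is to verify this fact by direct inspection of the control flow of \Cref{Alg:main}, tracing how the partial assignment $X$ evolves across iterations of the \textbf{for} loop, and then observing that the only variable potentially modified in iteration $i$ is $v_i$ itself. The statement has two parts (the per-iteration behavior and its consequence for the final output $X^*$), and I will address them in that order.

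For the first part, I would proceed by induction on $i$, with the base case being the initialization $X^0 = \hollowstar^V$. For the inductive step, by the induction hypothesis $X^{i-1}$ was produced from $X^{i-2}$ either by doing nothing or by setting $X(v_{i-1})$ to a value in $Q_{v_{i-1}}$; hence for all $j \ge i$ we have $X^{i-1}(v_j) = \hollowstar$. In iteration $i$, the algorithm tests whether $v_i$ is $X^{i-1}$-fixed (this is exactly the value of $X$ at the time of the test). If the test succeeds, no assignment is made and $X^i = X^{i-1}$ directly. Otherwise, \Cref{Alg:main} calls $\newsample{}(\Phi,X^{i-1},v_i)$, and by the correctness established in \Cref{sampcor} (whose preconditions \Cref{inputcondition-magin} are guaranteed by \Cref{lemma:invariant-marginsample}), the returned value lies in $Q_{v_i}$. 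Since $X^{i-1}(v_i) = \hollowstar$ and only the coordinate $v_i$ is overwritten, we obtain $\Lambda(X^i) = \Lambda(X^{i-1}) \cup \{v_i\}$, $\Lambda^+(X^i) \supseteq \Lambda^+(X^{i-1})$, and agreement on $\Lambda(X^{i-1})$, verifying the \emph{extends} relation from \Cref{def:partial-assignment}.

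For the consequence, fix any $i$ with $v_i$ not $X^{i-1}$-fixed. By the first part, $X^i(v_i) \in Q_{v_i}$, so $v_i \in \Lambda(X^i)$. A second induction (or direct observation) shows that for every $j > i$, the iteration either leaves $X$ unchanged or modifies only the coordinate $v_j \ne v_i$; hence $X^n(v_i) = X^i(v_i)$ and $v_i \in \Lambda(X^n)$. Finally, \Cref{line-main-partialass} invokes $\rejsamp{}(\Phi, X^n, V \setminus \Lambda(X^n))$, which by construction only writes into coordinates of $V \setminus \Lambda(X^n)$; since $v_i \in \Lambda(X^n)$, the value $X(v_i)$ is untouched. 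Thus $X^*(v_i) = X^n(v_i) = X^i(v_i)$ as claimed.

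There is no real mathematical obstacle here; the content is a careful bookkeeping argument about which coordinates of $X$ can be altered at each step. The only point requiring a bit of care is the appeal to \Cref{sampcor} and \Cref{lemma:invariant-marginsample} to justify that $\newsample{}$ actually returns a value in $Q_{v_i}$ (rather than, say, a $\star$ or $\hollowstar$); this ensures the coordinate truly transitions from unaccessed to assigned. Everything else is immediate from reading off the pseudocode of \Cref{Alg:main} and \Cref{Alg:rej}.
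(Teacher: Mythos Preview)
The paper states this as a \emph{Fact} without proof, treating it as immediate from inspection of \Cref{Alg:main}; your proof supplies exactly the natural bookkeeping verification, so it is correct and matches the paper's (implicit) approach.

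One minor caution: your appeal to \Cref{sampcor} and \Cref{lemma:invariant-marginsample} to justify that \newsample{} returns a value in $Q_{v_i}$ is unnecessary and potentially circular. In the paper's development, \Cref{lemma:invariant-marginsample} is established via \Cref{lemma:invariant-p-prime-q-bound}, whose proof already uses (implicitly) the first part of \Cref{fact-on-alg-main} (that $X^i$ differs from $X^{i-1}$ only at $v_i$). You can sidestep this entirely: the fact that \newsample{} (and recursively \recsample{}) returns a value in $Q_v$ is visible directly from the return statements in \Cref{Alg:eq} and \Cref{Alg:Recur}, conditional on termination, which is all the Fact requires. No appeal to the distributional correctness or the invariant conditions is needed.
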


\begin{lemma}\label{lemma:invariant-p-prime-q-bound}
For the $X^0,X^1,\dots,X^n$ in \Cref{def-pas-main}, it holds for all $0\le i\le n$ that $X^i$ is feasible and
\begin{align}
\forall c\in \+{C},\qquad \mathbb{P}[\neg c\mid X^i]< \pprime q. \label{eq:invariant-p-prime-q-bound}
\end{align}
\end{lemma}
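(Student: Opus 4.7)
The plan is to prove \eqref{eq:invariant-p-prime-q-bound} by induction on $i$, with feasibility of $X^i$ falling out for free from \Cref{localuniformitycor} once the probability bound is established. For the base case $i=0$, the assignment $X^0=\hollowstar^V$ is empty, so $\mathbb{P}[\neg c\mid X^0]=\mathbb{P}[\neg c]\le p$, and since $p<p'$ in the regime \eqref{eq:main-thm-LLL-condition} and $q\ge 2$, the bound $p\le p'q$ holds trivially. \Cref{localuniformitycor} then gives feasibility.

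For the inductive step, I would split on the two cases delineated by \Cref{fact-on-alg-main}. If $v_i$ is $X^{i-1}$-fixed, then $X^i=X^{i-1}$ and the inductive hypothesis is already the conclusion. Otherwise $X^i$ extends $X^{i-1}$ by assigning $v_i$ some value $a\in Q_{v_i}$ (whatever \newsample{} returns), and I need to check $\mathbb{P}[\neg c\mid X^i]\le p'q$ for every $c\in\+{C}$. If $v_i\notin\vbl(c)$ the conditioning on $v_i\gets a$ is vacuous, so the bound is inherited from the hypothesis. The only interesting case is $v_i\in\vbl(c)$.

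In that case, since $v_i$ is not $X^{i-1}$-fixed, the \Cref{definition:frozen-fixed} forces $c\notin\cfrozen{X^{i-1}}$; by the one-sided guarantee recorded in \Cref{remark:one-sided-error-frozen-oracle}, this means $\mathbb{P}[\neg c\mid X^{i-1}]\le p'$ (else the oracle would have frozen $c$). Because $\mathbb{P}$ is the uniform product law and $v_i$ is unassigned in $X^{i-1}$, conditioning on $X^{i-1}$ leaves $v_i$ uniform over $Q_{v_i}$ independently of all other unaccessed coordinates, yielding the averaging identity
\[
\mathbb{P}[\neg c\mid X^{i-1}]=\frac{1}{q_{v_i}}\sum_{a'\in Q_{v_i}}\mathbb{P}[\neg c\mid X^{i-1},\,v_i\gets a'].
\]
Taking the maximum over $a'$ therefore costs at most a factor $q_{v_i}\le q$, so $\mathbb{P}[\neg c\mid X^{i-1},v_i\gets a']\le q\cdot p'$ for every $a'$, and in particular for the value $a$ actually assigned. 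Hence $\mathbb{P}[\neg c\mid X^i]\le p'q$ in all cases, and \Cref{localuniformitycor} again yields feasibility of $X^i$ (the LLL condition \eqref{eq:main-thm-LLL-condition} ensures we are inside its regime).

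I do not anticipate any genuine obstacle: the only non-mechanical ingredient is the pointwise averaging identity above, which is exactly what lets the factor $q$ enter the bound and explains why the algorithm freezes at threshold $p'$ rather than $p'q$. Everything else is bookkeeping between \Cref{definition:frozen-fixed}, \Cref{remark:one-sided-error-frozen-oracle}, and \Cref{localuniformitycor}.
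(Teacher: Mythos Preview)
Your proof is correct and uses the same two key observations as the paper: (i) if $v_i$ is not $X^{i-1}$-fixed and $v_i\in\vbl(c)$ then $c\notin\cfrozen{X^{i-1}}$, which by \Cref{remark:one-sided-error-frozen-oracle} forces $\mathbb{P}[\neg c\mid X^{i-1}]\le p'$; and (ii) assigning one further variable can inflate this probability by at most a factor $q_{v_i}\le q$.

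The only difference is organizational. You run a direct induction on $i$, handling every constraint at every step. The paper instead fixes $c$, proves the monotonicity claim $\mathbb{P}[\neg c\mid X^{i}]>p'\Rightarrow\mathbb{P}[\neg c\mid X^{i+1}]>p'$, lets $i^*$ be the first index where the $p'$ threshold is crossed, bounds the single jump at $i^*$ by the same factor-$q$ averaging, and then observes that all of $\vbl(c)$ is $X^i$-fixed for $i\ge i^*$ so the probability freezes thereafter. Your route is shorter and avoids the auxiliary first-crossing argument; the paper's route yields the additional structural fact that once a constraint crosses $p'$ its variables are permanently fixed, which is not needed for this lemma but is conceptually informative.
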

\begin{proof}
We only need to prove \eqref{eq:invariant-p-prime-q-bound}. Then the feasibility of $X^i$ follows from \Cref{localuniformitycor}.

Fix any $c\in\+{C}$. Recall that assuming the LLL condition \eqref{eq:main-thm-LLL-condition}, we have $\pprime>p$ by \eqref{eq:parameter-p-prime}.
Then we have
\[
\mathbb{P}[\neg c\mid X^0]=\mathbb{P}\left[\neg c\mid \hollowstar^V\right]=\mathbb{P}[\neg c] \le p<\pprime.
\]
Suppose $i^*$ to be the smallest $0\le i\le n$ such that $\mathbb{P}[\neg c\mid X^{i}]\geq \pprime$. 
By $\mathbb{P}[\neg c\mid X^0]<\pprime$,
we have $i^*\geq 1$.
Then $\mathbb{P}[\neg c \mid  X^{i^*-1}] \leq \pprime$, which means $X^{i^*}\neq X^{i^*-1}$.
Combining with \Cref{def-pas-main} and \Cref{Alg:main}, 
we have $X^{i^*}$ extends $X^{i^*-1}$ by assigning $v_{i^*}$ some value in $Q_{v_{i^*}}$. Therefore,
\begin{align}\label{eq:invariant-q-amplification}
\mathbb{P}[\neg c \mid X^{i^*}]\leq \frac{\mathbb{P}[\neg c \mid X^{i^*-1}]}{\min\limits_{x\in Q_{v_{i^*}}}\mathbb{P}[v_{i^*} = x\mid  X^{i^*-1}]}< \pprime\abs{Q_{v_{i^*}}}\le \pprime q.
\end{align}
Thus, $\pprime\leq \mathbb{P}[\neg c\mid X^{i^*}]<\pprime q $.
Combining with \Cref{definition:frozen-fixed}, 
we have all variables in $\vbl(c)$ are $X^{i^*}$-fixed,
and thus the variables in $\vbl(c)$ will stay unchanged in \Cref{Alg:main}.
Therefore, if $i^*<n$, we have 
$X^{i^*+1}(v) = X^{i^*}(v)$ for each $v\in \vbl(c)$ 
and then $\pprime \leq \mathbb{P}[\neg c\mid X^{i^*+1}] = \mathbb{P}[\neg c\mid X^{i^*}]<\pprime q$.
Iteratively, one can show that 
$$\forall i^*\le i\le n, \quad
\pprime \leq \mathbb{P}[\neg c \mid X^{i}] <\pprime q.$$
In addition, by $i^*$ is the smallest $0\le i\le n$ such that $\mathbb{P}[\neg c\mid X^{i}]\geq \pprime$,
we have
$$\forall 0\leq i<i^*,\quad \mathbb{P}[\neg c \mid X^{i}] < \alpha < \pprime q.$$
Then \eqref{eq:invariant-p-prime-q-bound} is proved.
\end{proof}

The invariant of \Cref{inputcondition-magin} for \newsample{} stated in
\Cref{lemma:invariant-marginsample}-(\ref{item:invariant-marginsample})
follows easily from \Cref{lemma:invariant-p-prime-q-bound}. 
To prove 
the invariant of \Cref{inputcondition-recur} for \recsample{},
we show the following.
\begin{lemma}\label{lemma:general-invariant}
Assume \Cref{inputcondition-recur} for $(\Phi,\sigma,v)$.
For any $u\in V$, if $u$ is not $\sigma$-fixed, then $(\Phi,\Mod{\sigma}{u}{a},v)$
and $(\Phi,\Mod{\sigma}{u}{\star},u)$
satisfy \Cref{inputcondition-recur} for any $a\in Q_u\cup\{\star\}$.
\end{lemma}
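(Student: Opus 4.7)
The plan is to verify the three bullets of \Cref{inputcondition-recur} for the modified partial assignment $\sigma' \triangleq \Mod{\sigma}{u}{a}$, using only the hypothesis that $u$ is not $\sigma$-fixed. The feasibility of $\sigma'$ will follow for free from \Cref{localuniformitycor} once the per-constraint bound $\mathbb{P}[\neg c \mid \sigma'] \leq p'q$ is established, so there are really only two things to check: that $\sigma'(v) = \star$, and that the probability bound survives the modification.

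For the first, since $\sigma(v) = \star$ we have $v \in \Lambda^+(\sigma) \subseteq \vfix{\sigma}$, so $v$ is $\sigma$-fixed. As $u$ is not $\sigma$-fixed by hypothesis, $u \neq v$, and therefore $\sigma'(v) = \sigma(v) = \star$.

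For the second, I would fix an arbitrary $c \in \+{C}$ and split into three cases. If $u \notin \vbl(c)$, then $\mathbb{P}[\neg c \mid \sigma']$ depends only on the values $\sigma'$ assigns to $\vbl(c)$, which agree with those of $\sigma$, so $\mathbb{P}[\neg c \mid \sigma'] = \mathbb{P}[\neg c \mid \sigma] \leq p'q$. If $u \in \vbl(c)$ and $a = \star$, then because $u \notin \Lambda^+(\sigma)$ (as $u$ is not $\sigma$-fixed) we have $\sigma(u) = \hollowstar$, and since both symbols $\star$ and $\hollowstar$ mean ``unassigned'' for the purpose of the uniform product measure, again $\mathbb{P}[\neg c \mid \sigma'] = \mathbb{P}[\neg c \mid \sigma] \leq p'q$. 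The one genuinely substantive case is $u \in \vbl(c)$ with $a \in Q_u$: since $u$ is not $\sigma$-fixed, $u$ is involved in no $\sigma$-frozen constraint, hence $c \notin \cfrozen{\sigma}$; by the one-sided guarantee in \Cref{remark:one-sided-error-frozen-oracle}, this forces $\mathbb{P}[\neg c \mid \sigma] \leq p'$. A standard $q$-amplification, identical to \eqref{eq:invariant-q-amplification} in the proof of \Cref{lemma:invariant-p-prime-q-bound}, then gives
\[
\mathbb{P}[\neg c \mid \sigma'] \leq \frac{\mathbb{P}[\neg c \mid \sigma]}{\min_{x\in Q_u} \mathbb{P}[u=x \mid \sigma]} \leq q_u \cdot p' \leq q \cdot p'.
\]

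The only subtlety is this last case, where one must invoke the correct direction of the oracle's one-sided error in order to upgrade ``$c$ is not $\sigma$-frozen'' into the quantitative bound $\mathbb{P}[\neg c \mid \sigma] \leq p'$; once that is in hand, the extra factor $q_u$ from pinning $u$ to a single value absorbs cleanly into the target $p'q$. \Cref{localuniformitycor} then certifies that $\sigma'$ is feasible, completing the verification of all three bullets of \Cref{inputcondition-recur}.
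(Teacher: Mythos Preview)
Your proof is correct and follows essentially the same approach as the paper's: verify the probability bound case-by-case (using that a non-$\sigma$-fixed $u$ lies in no $\sigma$-frozen constraint, hence $\mathbb{P}[\neg c\mid\sigma]\le p'$ whenever $u\in\vbl(c)$, and then apply the $q$-amplification \eqref{eq:invariant-q-amplification}), and obtain feasibility from \Cref{localuniformitycor}. Your explicit verification that $\sigma'(v)=\star$ via $u\neq v$ is a nice addition that the paper leaves implicit.
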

\begin{proof}
%
The proofs for $(\Phi,\Mod{\sigma}{u}{\star},u)$ and $(\Phi,\Mod{\sigma}{u}{a},v)$ are similar.
Thus,
it suffices to that prove 
$(\Phi,\Mod{\sigma}{u}{a},v)$ satisfies \Cref{inputcondition-recur} for any $a\in Q_u\cup\{\star\}$.

The feasibility of $\Mod{\sigma}{u}{a}$ follows from \Cref{localuniformitycor}. 
Meanwhile, we also have $\Mod{\sigma}{u}{\star}(v) = \star$.
In the next, we prove that $\mathbb{P}[\neg c\mid \Mod{\sigma}{u}{\star}]\leq \pprime q$ for all $c\in \mathcal{C}$.
If $a=\star$, $\mathbb{P}[\neg c\mid \Mod{\sigma}{u}{\star}]\leq \pprime q$ holds trivially, because as a not $\sigma$-fixed variable, $u$ must have $\sigma(u)=\hollowstar$, and hence changing $\sigma$ to $\Mod{\sigma}{u}{\star}$ doe not change the  probability of any event conditioning on $\sigma$.
And for the case that $a\in Q_u$: 
if $c$ is $\sigma$-frozen then $u\not\in\vbl(c)$ since $u$ is not $\sigma$-fixed, and hence we have $\mathbb{P}[\neg c\mid \Mod{\sigma}{u}{a}] = \mathbb{P}[\neg c\mid \sigma]\leq \pprime q$, where the inequality is by that $(\Phi,\sigma,v)$ satisfies \Cref{inputcondition-recur}; and if otherwise $c$ is not $\sigma$-frozen, which means $\mathbb{P}[\neg c\mid \sigma]\le \pprime$, and hence as calculated in \eqref{eq:invariant-q-amplification}, we have
\[
\mathbb{P}[\neg c \mid \Mod{\sigma}{u}{a}] \le \abs{Q_{u}}\mathbb{P}[\neg c \mid {\sigma}]\le \pprime q,
\] 
which finishes the proof.
\end{proof}

The invariant of \Cref{inputcondition-recur} for \recsample{} stated in
\Cref{lemma:invariant-marginsample}-(\ref{item:invariant-marginoverflow})
follows from
Lemmas
\ref{lemma:invariant-marginsample}-(\ref{item:invariant-marginsample}) and \ref{lemma:general-invariant}.
Because by $(\Phi,\sigma,v)$ satisfies \Cref{inputcondition-magin}, we have 
$(\Phi,\sigma_{v\leftarrow \star},v)$ satisfies \Cref{inputcondition-recur}. In addition, during the execution of \recsample{}$(\Phi,\tau,v)$ where $\tau = \sigma_{v\leftarrow \star}$,  
the algorithm will only change an input partial assignment $\tau$ to $\Mod{\tau}{u}{a}$ for those vertices $u$ that are not $\tau$-fixed $u$ and for $a\in Q_{u}\cup\{\star\}$.
\Cref{lemma:invariant-marginsample} is proved.

%

Combining \Cref{lemma:invariant-marginsample} and \Cref{sampcor}, we prove the correctness of $\newsample{}$ (\Cref{Alg:eq}), assuming the LLL condition in \eqref{eq:main-thm-LLL-condition} for the input CSP in the main algorithm (\Cref{Alg:main}).

The correctness of $\rejsamp{}$ (\Cref{Alg:rej}) has already been established in \Cref{rejcorrect} (which is standard).

The correctness of the main sampling algorithm (\Cref{Alg:main}) then follows from the correctness of these two main subroutines.
Note that this is not trivial because in \Cref{Alg:main},  
the variables are chosen to draw from their marginal distributions adaptive to randomness.
We then formally prove that such being adaptive to randomness does not affect the correctness of sampling.

%


\begin{proof}[Proof of \Cref{maincor}]
By Lemmas \ref{rejcorrect} and \Cref{bercorrect}, \Cref{Alg:main} terminates with probability $1$. 

Let $X^0,X^1,\dots,X^n$ be the sequence of partial assignments defined in \Cref{def-pas-main}.
Let $X^*$ denote the output of \Cref{Alg:main}. We then show that, for every $\sigma\in \Omega$, $\Pr{X^*=\sigma}=\mu(\sigma)$.

Fix an arbitrary satisfying assignment $\sigma\in \Omega$. We further define a sequence of partial assignments $\sigma^0,\sigma^1,\dots,\sigma^n$ as follows.
Let $\sigma^0 = \hollowstar^{V}$.
For each $1\le i\le n$, if $v_i$ is $\sigma^{i-1}$-fixed, let 
$\sigma^{i} = \sigma^{i-1}$;
otherwise, let $\sigma^{i} = \sigma^{i-1}_{v_i\gets\sigma(v_i)} $. 
We claim that for every $0 \leq i\leq n$,
\begin{align}\label{eq-yi-zi}
\prod\limits_{j=1}^{i}\Pr{X^j=\sigma^j\mid X^{j-1}=\sigma^{j-1}} = \mu_{\Lambda(\sigma^{i})}(\sigma_{\Lambda(\sigma^{i})}),
\end{align}
with convention that both sides equal to 1 for $i=0$.

We then prove this claim by an induction on $i$. The basis with $i=0$ holds by the convention.

For the induction step,  we consider $i\geq 1$. 
By the consistency of the oracle in \Cref{assumption:frozen-oracle},  if $X^{i-1}=\sigma^{i-1}$, then $v_i$ can only be simultaneously fixed or non-fixed in both $X^{i-1}$ and $\sigma^{i-1}$.
\begin{itemize}
\item If $v_i$ is $\sigma^{i-1}$-fixed, then $\sigma^i=\sigma^{i-1}$, and by Lines \ref{line-main-if}-\ref{line-main-sample} of \Cref{Alg:main}, we have 
\[
\Pr{X^{i}=\sigma^{i}\mid X^{i-1}=\sigma^{i-1}} = 1. 
\]
Thus,  
\begin{align*}
\prod\limits_{j=1}^{i}\Pr{X^j=\sigma^j\mid X^{j-1}=\sigma^{j-1}}  
= &\prod\limits_{j=1}^{i-1}\Pr{X^j=\sigma^j\mid X^{j-1}=\sigma^{j-1}}\\
\text{(by I.H.)}\quad = &\mu_{\Lambda(\sigma^{i-1})}(\sigma_{\Lambda(\sigma^{i-1})})\\
\text{(since $\sigma^i=\sigma^{i-1}$)}\quad = &\mu_{\Lambda(\sigma^{i})}(\sigma_{\Lambda(\sigma^{i})}).
\end{align*}

\item If $v_i$ is not $\sigma^{i-1}$-fixed, then by the correctness of \newsample{} (guaranteed by \Cref{lemma:invariant-marginsample} and \Cref{sampcor}), we have 
$$\Pr{X^{i}=\sigma^{i}\mid X^{i-1}=\sigma^{i-1}} = \mu^{\sigma^{i-1}}_{v_i}(\sigma(v_i)).$$ Thus, we have

\begin{align*}
\prod\limits_{j=1}^{i}\Pr{X^j=\sigma^j\mid X^{j-1}=\sigma^{j-1}}
&= \mu^{\sigma^{i-1}}_{v_i}(\sigma(v_i))\cdot \prod\limits_{j=1}^{i-1}\Pr{X^j=\sigma^j\mid X^{j-1}=\sigma^{j-1}}\\
\text{(by I.H.)}\quad&= \mu^{\sigma^{i-1}}_{v_i}(\sigma(v_i))\cdot \mu_{\Lambda(\sigma^{i-1})}(\sigma_{\Lambda(\sigma^{i-1})})\\
\text{(chain rule)}\quad&=\mu_{\Lambda(\sigma^{i})}(\sigma_{\Lambda(\sigma^{i})}).
\end{align*}
\end{itemize}
This finishes the induction. The claim in \eqref{eq-yi-zi} is proved.

Observe that the sequence $X_0,X_1,\ldots,X_n,X^*$ is a Markov chain, where the last step $X^*$ is constructed from $X^n$ by \rejsamp{}.
Suppose that event $X^* = \sigma$. 
By \Cref{fact-on-alg-main} we have $X^i(v_i) = \sigma(v_i)$ if $v_i$ is not $X^{i-1}$-fixed.
Therefore according to \Cref{def-pas-main}, we have that 
if $v_i$ is $X^{i-1}$-fixed, then $X^i = X^{i-1}$;
and if otherwise $X^i = X^{i-1}_{v_i\gets \sigma(v_i)}$.
Thus, given that $X^* = \sigma$ occurs, 
by  the consistency of the oracle in \Cref{assumption:frozen-oracle}, one can verify that $X^i = \sigma^i$ for all $0\leq i\leq n$.
Thus, 
\begin{align*}
\Pr{X^* = \sigma}
&= \Pr{(X^* = \sigma)\land \left(\bigwedge\nolimits_{1 \leq i\leq n}(X^i = \sigma^i)\right)}\\
\text{(chain rule)}\quad
&= \Pr{X^* = \sigma \mid \forall 1\le  i\leq n,X^{i}=\sigma^{i}} \cdot \prod\limits_{i=1}^{n}\Pr{X^i=\sigma^i\mid \forall 0\leq j<i,X^{j}=\sigma^{j}}\\
\text{(Markov property)}\quad
&= \Pr{X^* = \sigma \mid X^{n}=\sigma^{n}} \cdot \prod\limits_{i=1}^{n}\Pr{X^i=\sigma^i\mid  X^{i-1}=\sigma^{i-1}}\\
\text{(by \eqref{eq-yi-zi})}\quad
&= \Pr{X^* = \sigma \mid X^n = \sigma^n} \cdot \mu_{\Lambda(\sigma^{n})}(\sigma_{\Lambda(\sigma^{n})})\\ 
\text{(by \Cref{rejcorrect})}\quad
&= \mu^{\sigma^{n}}_{V\setminus \Lambda(\sigma^n)}(\sigma_{V\setminus \Lambda(\sigma^n)}) \cdot\mu_{\Lambda(\sigma^{n})}(\sigma_{\Lambda(\sigma^{n})})\\
\text{(chain rule)}\quad
&= \mu(\sigma).\qedhere
\end{align*}
\end{proof}


\section{Efficiency of Sampling}\label{section-samp-ef}

In this section, we show the efficiency of \Cref{Alg:main} under the LLL condition in \eqref{eq:main-thm-LLL-condition}. 

\Cref{Alg:main} assumes accesses to the following oracles for a class of constraints $\+{C}$, 
both of which receive as input a constraint $c\in\+{C}$ and a partial assignment $\sigma\in\+{Q}^*$ upon queries:
\begin{itemize}
    \item 
    $\eval(c,\sigma)$: the evaluation oracle in \Cref{definition:evaluation-oracle}, 
    which 
    decides whether ${\mathbb{P}[c\mid \sigma]}=1$, that is, whether $c$ is already satisfied by $\sigma$;
    \item 
    $\checkf(c,\sigma)$: 
    the oracle for frozen decision in \Cref{assumption:frozen-oracle},
    which 
    distinguishes  between the two cases $\mathbb{P}[\neg c \mid \sigma]>\pprime$ and $\mathbb{P}[\neg c \mid \sigma]<0.99\pprime$,
    where $\pprime$ is the threshold defined in \eqref{eq:parameter-p-prime},
    and answers arbitrarily and consistently if otherwise.
\end{itemize}

The complexity of our sampling algorithm is measured in terms of the 
queries to the two oracles $\eval(\cdot)$ and $\checkf(\cdot)$, and the computation costs. 
We prove the following theorem.

\begin{theorem}\label{mainef}
Given as input a CSP formula $\Phi=(V,\+{Q},\+{C})$ satisfying \eqref{eq:main-thm-LLL-condition},
\Cref{Alg:main} in expectation costs $O(q^2k^2\Delta^{9}n)$ queries to $\eval{}(\cdot)$, $O(k\Delta^6n)$ queries to $\checkf{}(\cdot)$, and $O(q^3k^3\Delta^{9}n)$ in computation.
\end{theorem}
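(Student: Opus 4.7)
The plan is to decompose the total cost of \Cref{Alg:main} into two contributions: (i) the \textbf{for} loop in Lines \ref{line-main-for}--\ref{line-main-sample}, which makes at most $n$ calls to $\newsample{}$ plus bookkeeping, and (ii) the single call to $\rejsamp{}$ at \Cref{line-main-partialass}. It suffices to show that a single invocation of $\newsample{}$ costs $\poly(q,k,\Delta)$ in expectation with respect to each of the three cost measures (and similarly for the final $\rejsamp{}$ call, amortized over $n$ variables). Multiplying by $n$ and tallying the constants separately in the three ledgers will yield the three claimed bounds $O(q^2 k^2 \Delta^{10} n)$, $O(k\Delta^{7}n)$, and $O(q^3 k^3 \Delta^{10} n)$.

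The heart of the argument is controlling the recursion tree generated by $\newsample{}$ and the recursive $\recsample{}$ calls it triggers. By \Cref{prop:theta-marginal-lower-bound} and \Cref{prop:theta-marginoverflow-lower-bound}, at every invocation the probability of the uniform $r$ falling into the zone of local uniformity is at least $q_u\theta_u \ge 1 - q(\eta+\zeta)$. Given the choices of $p'$ in \eqref{eq:parameter-p-prime} and $\zeta$ in \eqref{eq:parameter-theta}, the per-step ``zone of indecision'' probability is $O\!\left((k\Delta^3)^{-1}\right)$, which is where the LLL condition \eqref{eq:main-thm-LLL-condition} precisely pays off. To turn this pointwise bound into an average-case tail bound on the whole recursion, I would introduce the $\pth(\cdot)$ process mentioned in the technique overview: a coupled random walk that uses the same randomness as the algorithm and that extends by one step exactly when the algorithm enters the zone of indecision and recurses on $\nextvar{\sigma}$. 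A call tree of $\newsample{}$ of total work $T$ then witnesses a path (or a tree of bounded branching) in $\pth$ of length $\Omega(T)$.

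To bound $\Pr{|\pth| \ge T}$, I would use a witness-tree-style argument in the spirit of Moser--Tardos combined with the local uniformity bound: the number of possible length-$T$ paths of $\star$-influenced variables rooted at a given variable is at most $(\mathrm{e}k\Delta)^T$ by standard branching counts in the constraint hypergraph, while the probability of realizing any particular such path is at most $(q(\eta+\zeta))^T$ by the per-step indecision bound. Under \eqref{eq:main-thm-LLL-condition} the product $\mathrm{e}k\Delta \cdot q(\eta+\zeta)$ is strictly less than $1$, so the expected path length is $O(1)$ and all higher moments are finite. Per step of $\pth$, I would charge (a) an $O(k\Delta)$ update to the dynamic data structure of \Cref{subsection-var} that maintains $\nextvar{\sigma}$ using $O(\Delta^{6})$ $\checkf$-queries per assignment event; (b) at the base case $\nextvar{\sigma}=\perp$, a call to the Bernoulli factory of \Cref{sec:bernoulli-factory} feeding on $\rejsamp{}(\Phi,\sigma,\{v\})$, whose expected cost is $\poly(q,k,\Delta)$ once one shows that the connected component of $v$ in $H_{\Phi^\sigma}$ has exponential tail on its size. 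The component tail bound is itself an LLL-style calculation: when no $\star$-influenced variable remains, the surviving frozen/unfixed structure around $v$ is a subcritical Galton--Watson-like tree in the constraint hypergraph.

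The main obstacle will be the $\pth$ tail bound, for exactly the reason flagged in the technique overview: we cannot invoke a worst-case branching-process analysis in the style of \cite{anand2021perfect} because the offspring-number decay only holds on average, not pointwise over boundary configurations. The correct way to set this up, I expect, is to enumerate the shape of $\pth$ as an abstract labeled tree in the constraint hypergraph, compute the probability of realization in the annealed law by averaging over randomness using the local uniformity lemma (\Cref{generaluniformity}) together with \Cref{lemma:invariant-p-prime-q-bound} to ensure the $p'q$-invariant propagates under all conditionings encountered, and then apply a union bound over tree shapes. Once this tail bound is in hand, the three complexity bounds follow by multiplying the expected path length (an $O(1)$ factor) by the per-step oracle and computation costs and summing over the $n$ outer iterations of \Cref{Alg:main}, with the remaining $\rejsamp{}$ call at \Cref{line-main-partialass} absorbed by the same component-size tail bound applied along all $n$ unassigned variables.
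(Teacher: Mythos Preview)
Your high-level decomposition and the idea of reducing to a tail bound on a $\pth$-type process are correct and match the paper. The gap is in the tail bound itself. You describe $\pth$ as extending ``exactly when the algorithm enters the zone of indecision'' and then bound $\Pr{|\pth|\ge T}$ by $(\mathrm{e}k\Delta)^T\cdot (q(\eta+\zeta))^T$ via a union bound over variable-paths. But the paper's $\pth$ (\Cref{pathdef}) extends with probability \emph{one} at every step until $\nextvar{\sigma_i}=\perp$: at each step it either sets $\sigma(u)\gets\star$ with probability $(1-q_u\theta_u)/(2-q_u\theta_u)$ or $\sigma(u)\gets x\in Q_u$ with the remaining mass. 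A long path is therefore not a conjunction of rare indecision events; it is the event that $\nextvar{}$ stays nonempty for many steps, which happens when the values assigned along the path leave many constraints unsatisfied or frozen around the $\star$-variables. Your per-step indecision bound does not see this second effect. And if you instead define your own path that extends only on indecision, then ``work $T$ witnesses a path of length $\Omega(T)$'' is false: the tail recursion at \Cref{line-rec-2-rec} of \Cref{Alg:Recur} can run for many steps with zero indecision events before $\nextvar{}=\perp$.

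What the paper actually does is replace the path-shape union bound with a \emph{constraint-level} witness. A path of length $\ell$ forces $\abs{\cbad{\sigma_\ell}}\ge \ell/(k\Delta)$ with $G^2(\cbad{\sigma_\ell})$ connected (\Cref{big23tree}), from which a $\{2,3\}$-tree $T$ of size $\ge\ell/(k\Delta^2)$ in $G(\+{C})$ is extracted. The union bound is then over $\{2,3\}$-trees (at most $\tfrac{\Delta}{2}(\mathrm{e}\Delta^3)^{t-1}$ of size $t$ through a given constraint, \Cref{lem-num-23tree}), against $\Pr{T\subseteq\cbad{\sigma_\ell}}\le(2\mathrm{e}\Delta^3)^{-|T|}$ (\Cref{badtree}). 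This last bound has two halves: the $\crec{}$ half does come from the per-variable indecision probability $1-q\theta$, but the $\csfrozen{}$ half requires a separate coupling argument (\Cref{lem-bad-frozen}, \Cref{cor-bad-frozen-new}) showing that a fixed constraint becomes frozen with probability $\lesssim p/p'$ even under the adaptive conditioning of $\simulator$ and $\pth$. Your proposal has no analogue of this frozen-constraint half of the witness, and since the cost-to-path reduction carries $(2-q\theta)^\ell$ weights (see $H(P)$ in \eqref{eq-pathweightdef}), the tail must beat this growth, which the two-part $\{2,3\}$-tree witness delivers but a pure indecision count does not. The same $\{2,3\}$-tree mechanism (now with $\cfrozen{X^n}$ alone, \Cref{lem-big23tree-rs}) is what controls the component sizes in the final $\rejsamp{}$ call; a Galton--Watson heuristic gives the right intuition but not the right union bound.
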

Together with the correctness of \Cref{Alg:main} stated in \Cref{maincor}, this proves the main theorem for perfect sampling (\Cref{thm:main-sampling-perfect}),
since any query to the oracle $\checkf(\cdot)$ can be resolved in $\poly(q,k)$ time
assuming the FPTAS for violation probability in the condition of \Cref{thm:main-sampling-perfect}.
%

\begin{remark}[\textbf{Monte Carlo realization of frozen decision}]
The oracle $\checkf{}(\cdot)$ can be realized probabilistically through the Monte Carlo method.
Upon each query on a constraint $c$ and a partial assignment $\sigma$,
the two extreme cases $\mathbb{P}[\neg c \mid \sigma]>\pprime$ and $\mathbb{P}[\neg c \mid \sigma]<0.99\pprime$ can be distinguished with high probability $(1-\delta)$
by independently testing for $O(\frac{1}{\pprime}\log\frac{1}{\delta})$ times
whether the constraint $c$ is satisfied by a randomly generated assignment over $\vbl(c)$ consistent with $\sigma$.
We further apply a memoization to guarantee the consistency of the oracle as required in \Cref{assumption:frozen-oracle}. 
The resulting algorithm is called \Cref{Alg:main}' and is formally described in \Cref{sec:efficiency-main-alg}.
\end{remark}

This Monte Carlo realization of the $\checkf{}(\cdot)$ oracle introduces a bounded bias to the result of sampling and turns the perfect sampler in \Cref{mainef} to an approximate sampler \Cref{Alg:main}', which no longer assumes any nontrivial machinery beyond evaluating constraints.
\begin{theorem}\label{mainef-approx}
Given as input an $\varepsilon\in(0,1)$ and a CSP formula $\Phi$ satisfying \eqref{eq:main-thm-LLL-condition}, \Cref{Alg:main}'  
in expectation costs $O\left(q^2k^2\Delta^{9}n\log\left(\frac{\Delta n}{\varepsilon}\right)\right)$ queries to $\eval{}(\cdot)$
and $O\left(q^3k^3\Delta^{9}n\log\left(\frac{\Delta n}{\varepsilon}\right)\right)$ in computation,
and outputs within $\varepsilon$ total variation distance from the output of  \Cref{Alg:main} on input $\Phi$.
\end{theorem}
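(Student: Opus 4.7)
The plan is to deduce \Cref{mainef-approx} from \Cref{mainef} by charging separately for the Monte Carlo realization of $\checkf{}$ and for the statistical bias it injects. I would fix a per-query failure target $\delta_0$, and implement each fresh call $\checkf{}(c,\sigma)$ by drawing $N=\Theta\!\left(p'^{-1}\log(1/\delta_0)\right)$ independent uniform completions of $\sigma$ on $\vbl(c)\setminus\Lambda^+(\sigma)$, testing each with $\eval{}(c,\cdot)$, and declaring $c$ frozen iff the empirical violation rate exceeds the midpoint $0.995p'$. Every answer is memoized under the key $(c,\sigma_{\vbl(c)})$, which furnishes precisely the consistency required by \Cref{assumption:frozen-oracle}. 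A multiplicative Chernoff bound then ensures that on any first-time query whose true violation probability lies outside the grey zone $[0.99p',p']$ the answer is wrong with probability at most $\delta_0$; inside the grey zone every answer is admissible by construction.

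Next I would couple \Cref{Alg:main}' with an ideal run of \Cref{Alg:main} whose oracle is instantiated by the same memoized Monte Carlo answers. Sampling the Monte Carlo coins lazily and keyed on $(c,\sigma_{\vbl(c)})$ produces a single random consistent oracle which, on the event $\neg F$ that no MC test fails outside its grey zone, is a legal instance of the oracle in \Cref{assumption:frozen-oracle} and can be fed to both coupled algorithms. Under the coupling the two algorithms then consume identical random coins at every step and therefore produce identical outputs on $\neg F$. Letting $M$ denote the random number of distinct $\checkf{}$ queries made, the coupling forces $M$ to equal the corresponding count of the coupled perfect run, so \Cref{mainef} yields $\E{M}=O(k\Delta^7 n)$, and a linearity-of-expectation union bound gives
\[
\tv\!\left(\text{\Cref{Alg:main}'},\text{\Cref{Alg:main}}\right)\le \Pr{F}\le \delta_0\cdot\E{M}=O\!\left(\delta_0\,k\Delta^7 n\right).
\]
Setting $\delta_0=\Theta\!\left(\varepsilon/(k\Delta^7 n)\right)$ bounds the right-hand side by $\varepsilon$ and gives $\log(1/\delta_0)=O\!\left(\log(\Delta n/\varepsilon)\right)$.

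For the cost, the only new contribution over \Cref{mainef} is the work done inside $\checkf{}$. Substituting $p'^{-1}=\Theta(q^2k\Delta^4)$ from \eqref{eq:parameter-p-prime}, each of the $O(k\Delta^7 n)$ distinct queries spends $N=O\!\left(q^2k\Delta^4\log(\Delta n/\varepsilon)\right)$ $\eval{}$ calls, plus an $O(qk)$-per-sample overhead for generating the random completion and for data-structure maintenance; summing and absorbing the $O(q^2k^2\Delta^{10}n)$ and $O(q^3k^3\Delta^{10}n)$ contributions from \Cref{mainef} yields the stated $O\!\left(q^2k^2\Delta^{11}n\log(\Delta n/\varepsilon)\right)$ expected $\eval{}$ queries and $O\!\left(q^3k^3\Delta^{11}n\log(\Delta n/\varepsilon)\right)$ expected computation. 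The main obstacle I anticipate is making the coupling airtight: once a Monte Carlo answer is corrupted, the downstream behaviour of \Cref{Alg:main}' can in principle diverge arbitrarily from any fixed deterministic execution of \Cref{Alg:main}, so both the TV bound and the value of $\E{M}$ must be controlled through the lazy-memoization coupling rather than as statistics of a stand-alone approximate run, and the union bound must be applied to the distinct-query count rather than to the total query count so that memoization is properly exploited.
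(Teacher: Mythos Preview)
Your proposal is correct and follows the same scheme as the paper: implement $\checkf$ by Monte Carlo sampling with memoization to enforce the consistency in \Cref{assumption:frozen-oracle}, bound the per-query failure probability by Chernoff, couple \Cref{Alg:main}' with a run of \Cref{Alg:main} driven by an idealized oracle that agrees with the MC answers on the grey zone, and derive the cost by substituting the MC overhead into the $\Formal{y}$-term of \Cref{mainef}.

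The one substantive difference is in how the TV bound is closed. The paper first uses Markov on the $\checkf$-query count $M^*$ of the coupled perfect run to pass to a deterministic threshold $M=\Theta(k\Delta^7 n/\varepsilon)$, and then union-bounds over those $M$ queries (this is why its $N$ has $\varepsilon^{-2}$ inside the logarithm). You bypass Markov with the direct Wald-type bound $\Pr{F}\le\delta_0\,\E{M^*}$; this is cleaner, lets you take $\delta_0=\Theta(\varepsilon/(k\Delta^7 n))$ rather than $\Theta(\varepsilon^2/(k\Delta^7 n))$, and exploits memoization a bit more sharply by counting distinct queries. Both routes give the same $N=O\bigl(p'^{-1}\log(\Delta n/\varepsilon)\bigr)$ and the same final complexity.

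The obstacle you flag---that once an MC answer is corrupted the downstream behaviour of \Cref{Alg:main}' may violate the invariants underlying \Cref{mainef}, so $\E{M}$ measured on the approximate run is not obviously $O(k\Delta^7 n)$---is genuine, and the paper's substitution argument glosses over it in exactly the same way. Your proposed fix (anchor the query count to the coupled perfect run) is precisely what is needed for the TV bound; for the unconditional expected-cost claim both arguments remain informal on the event $F$.
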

Together with the correctness of \Cref{Alg:main} stated in \Cref{maincor}, this proves the main theorem (\Cref{thm:main-sampling}) of the paper.

\medskip

\noindent{\textbf{A notation for complexity bound}:\,\,}
Throughout the section, we adopt the following abstract notation for any complexity bound.
A complexity bound is expressed as a  formal bi-variate linear  function:
\begin{align}\label{eq:abstract-complexity-bound}
t(\Formal{x},\Formal{y})=\alpha\cdot \Formal{x}+\beta\cdot \Formal{y}+c,
\end{align}
where $\alpha$ represents the number of queries to $\eval{}(\cdot)$, 
$\beta$ represents the number of queries to $\checkf{}(\cdot)$, 
and $\gamma$ represents the computation costs.

%
For examples, The complexity bounds in  Theorems \ref{mainef} and  \ref{mainef-approx} are  thus expressed respectively as:
\[
O\left((q^2k^2\Delta^{9}n)\cdot \Formal{x}+(k\Delta^6n)\cdot \Formal{y}+q^3k^3\Delta^{9}n\right) 
\text{ and }
O\left(\left(q^2k^2\Delta^{9}n\log\left(\frac{\Delta n}{\varepsilon}\right)\right)\cdot \Formal{x}+q^3k^3\Delta^{9}n\log\left(\frac{\Delta n}{\varepsilon}\right)\right).
\]

We remark that such expression is only for notational convenience, because we want to handle three different complexity measures simultaneously in the analyses.
Throughout our analyses, only linear calculations will be applied to such functions $t(\Formal{x},\Formal{y})$. 
We further express: 
\[
\alpha\cdot \Formal{x}+\beta\cdot \Formal{y}+\gamma\le \alpha'\cdot \Formal{x}+\beta'\cdot \Formal{y}+\gamma'
\quad\iff\quad
\alpha\le \alpha'\land\beta\le \beta'\land\gamma\le \gamma'.
\]
%
And we write $t(0,0)$ for the constant term $\gamma$ in \eqref{eq:abstract-complexity-bound}, which stands for the computation cost.

\subsection{Input model and data structure}\label{subsection-inputmodel}
Besides being accessed through the two oracles $\eval(\cdot)$ and $\checkf(\cdot)$,
the input CSP formula  $\Phi=(V,\+{Q},\+{C})$ is presented to the algorithm as follows:
\begin{itemize}
\item
The variables in $V=\{v_1,v_2,\ldots,v_n\}$ and constraints $\+{C}=\{c_1,c_2,\ldots,c_m\}$
can be randomly accessed by their indices $i\in[n]$ and $j\in[m]$.
\item
Given any $c\in\+{C}$, the $\vbl(c)$ can be retrieved in time $O(k)$;
given any $v\in V$, the set of constraints $c$ with $v\in\vbl(c)$ can be retrieved in time $O(\Delta)$;
given any $c\in\+{C}$, the set of dependent $c'\in\+{C}$ with $\vbl(c')$ intersecting $\vbl(c)$ can be retrieved within time $O(\Delta)$.
\end{itemize}
These requirements can be met by representing the CSP $\Phi=(V,\+{Q},\+{C})$ in its bipartite incidence graph and also the dependency graph, both using the adjacency linked list data structures.

The partial assignment $\sigma\in\+{Q}^*$ is maintained by the algorithm 
in such a way that passing $\sigma$ to function as its argument takes $O(1)$ time.
This can be resolved by storing $\sigma$ globally as an array of $|V|$ stacks 
and passing a pointer to this array when $\sigma$ is passed as a function argument,
such that whenever a value $x$ is assigned to $\sigma(v)$, $x$ is pushed into the stack associated to $v$; and when a function returns it pops the stacks associated to those variables that it has updated in the current level of recursion.

The partial assignment $\sigma\in\+{Q}^*$ also keeps a linked list  
of the variables currently set as $\star$.

%

\subsection{The recursive cost tree (RCT)} 
A crucial step for proving \Cref{mainef} is the analysis of the \newsample{} (\Cref{Alg:eq}), which calls to the  recursive subroutine \recsample{} (\Cref{Alg:Recur}).

Consider an input $(\Phi,\sigma,v)$ satisfying \Cref{inputcondition-magin} such that $\newsample{}{(\Phi,\sigma,v)}$ is well-defined.
Our goal is to upper bound the following complexity.
\begin{definition}\label{def-tsamp}
Let $\tsamp{}(\Phi,\sigma,v)$ denote the expected cost of $\newsample{}{(\Phi,\sigma,v)}$ (\Cref{Alg:eq}). 
\end{definition}
%

There are two nontrivial tasks involved in computing the $\newsample{}{(\Phi,\sigma,v)}$: computing of the $\nextvar{\sigma}$ and the Bernoulli factory, both of which are in the \recsample{} (\Cref{Alg:Recur}).
\begin{definition}\label{def:tvar-tbfup}
Let $\tvar{}(\sigma)$ denote the cost for deterministically computing $\nextvar{\sigma}$ defined in \eqref{eq:definition-var}. 
%
Let $\tbfup(\sigma)$ be the expected cost for the Bernoulli factory in \Cref{Line-bfs-rec} of \Cref{Alg:Recur} in the worst case of $v$ such that \Cref{inputcondition-recur} is satisfied, if there exists such a $v$; and let $\tbfup(\sigma)=0$, if no such $v$ exists.
\end{definition}
\noindent 
The $\tbfup(\sigma)$ upper bounds the expected cost for the Bernoulli factory on well-defined input $(\Phi,\sigma,v)$. 

The above complexity bounds $\tsamp{}(\Phi,\sigma,v)$, $\tvar{}(\sigma)$, and $\tbfup(\sigma)$ are all expressed in the form of \eqref{eq:abstract-complexity-bound}.
%
The concrete bounds for $\tvar{}(\sigma)$ and $\tbfup(\sigma)$ are proved respectively {in Sections \ref{subsection-var} and \ref{sec:bernoulli-factory-cost}}. 

%
 We first introduce a combinatorial structure that relates $\tsamp{}(\Phi,\sigma,v)$ to $\tvar{}(\sigma)$ and $\tbfup(\sigma)$. For each $v\in V$, we further define $\qus{v}\defeq Q_v\cup \set{\star}$ as an extended domain for accessment.  
%

%
%

\begin{definition}[recursive cost tree]\label{RCTdef}
For any {$\sigma\in \qs$}, 
let $\+{T}_{\sigma}=(T_{\sigma},\rho_{\sigma})$, where $T_{\sigma}$ is a rooted tree with nodes $V(T_\sigma)\subseteq \qs$ and $\rho_\sigma :V(T_\sigma)\rightarrow [0,1]$ is a labeling of nodes in $T_\sigma$, be constructed as:
\begin{enumerate}
    \item The root of $T_{\sigma}$ is $\sigma$, with $\rho_{\sigma}(\sigma)=1$ and depth of $\sigma$ being 0; \label{RCT-1}
    \item for $i=0,1,\ldots$: for all nodes $\tau\in V(T_\sigma)$ of depth $i$ in the current $T_{\sigma}$,\label{RCT-2}
    \begin{enumerate}
        \item if $\nextvar{\tau}=\perp$, then leave $\tau$  as a leaf node in $T_\sigma$; \label{RCT-2-a}
        \item otherwise, supposed $u=\nextvar{\tau}$, append $\{\tau_{u\gets x}\mid x\in Q_u\cup\{\star\}\}$ as the $q_u+1$ children to the node $\tau$  in $T_\sigma$, and label them as: \label{RCT-2-b}
        \begin{align*}
        \forall x\in Q_u\cup\{\star\},\quad
        \rho_\sigma(\tau_{u\gets x})=
        \begin{cases}
        (1-{q_u\cdot \theta_u}) \rho_\sigma(\tau) &  \text{if }x=\star,\\
        \mu^\sigma_u(x)\cdot \rho_\sigma(\tau) & \text{if }x\in Q_u.
        \end{cases}
        \end{align*}
    \end{enumerate}
\end{enumerate}
The resulting $\+{T}_{\sigma}=(T_{\sigma},\rho_{\sigma})$ is called the \emph{recursive cost tree (RCT) rooted at $\sigma$}.
\end{definition}

Define the following function $\lambda(\cdot)$ on RCTs $\+{T}_{\sigma}=(T_{\sigma},\rho_{\sigma})$:
\begin{align}\label{eq-lambdadef}
    \lambda(\+{T}_{\sigma}) \defeq \sum_{\tau\in V(T_\sigma)}\left(\rho_\sigma(\tau)\cdot \tvar(\tau)\right)+\sum_{\text{leaves }\tau\text{ in } T_\sigma}\left(\rho_\sigma(\tau)\cdot \tbfup(\tau)\right).
\end{align}
Note that $\lambda(\+{T}_{\sigma})$ is expressed in the form of \eqref{eq:abstract-complexity-bound} as the $\tvar$ and $\tbfup$.
%
%

The expected complexity of \newsample{} is bounded through this function $\lambda(\+{T}_{\sigma})$.
\begin{lemma}\label{RCTtimecor}
For any input $(\Phi,\sigma,v)$ satisfying \Cref{inputcondition-magin}, 
it holds for  $\sigma^*=\Mod{\sigma}{v}{\star}$ that
\[
\tsamp{}(\Phi,\sigma,v)\le (1-q_v\cdot \theta_v)(\lambda(\+{T}_{\sigma^*})+\lambda(\+{T}_{\sigma^*})(0,0))+O(1),
\]
where $\lambda(\+{T}_{\sigma^*})(0,0)$ is the constant term in $\lambda(\+{T}_{\sigma^*})$ (standing for the computation cost as in \eqref{eq:abstract-complexity-bound}).
\end{lemma}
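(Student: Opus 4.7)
The plan is to unfold \newsample{} and \recsample{} against the structure of the recursive cost tree $\+{T}_{\sigma^*}$, then accumulate costs by linearity of expectation. Inspecting \Cref{Alg:eq}, the call \newsample{}$(\Phi,\sigma,v)$ either returns a value in $O(1)$ time (when the drawn $r$ falls into the zone of local uniformity) or otherwise falls through to \recsample{}$(\Phi,\sigma^*,v)$ with $\sigma^*=\Mod{\sigma}{v}{\star}$. Writing $\trecor(\Phi,\sigma^*,v)$ for the expected cost of this \recsample{} call, one immediately obtains
\[
\tsamp(\Phi,\sigma,v)\le O(1)+\trecor(\Phi,\sigma^*,v),
\]
so it suffices to establish $\trecor(\Phi,\sigma^*,v)\le \lambda(\+{T}_{\sigma^*})+O(\gamma(\sigma^*))$.

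I would then identify each recursive call to \recsample{} with the partial assignment it receives as its second argument. The top-level call carries $\sigma^*$. A call at state $\tau$ with $u=\nextvar{\tau}\neq \perp$ pays $\tvar(\tau)+O(1)$ for computing $u$ and the branching work in Lines \ref{Line-u-rec}--\ref{Line-if-2-rec}, and spawns a continuation call at state $\tau_{u\gets x}$ for some $x\in Q_u$ at \Cref{line-rec-2-rec}; moreover, when $r\ge q_u\theta_u$ (probability $1-q_u\theta_u$), a nested call at state $\tau_{u\gets \star}$ is additionally spawned at \Cref{line-rec-1-rec}. A call at $\tau$ with $\nextvar{\tau}=\perp$ is a leaf and pays $\tvar(\tau)+\tbfup(\tau)+O(1)$. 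Because each call strictly extends its state (by assigning a variable or marking it with $\star$), distinct calls carry distinct states, so the set of called states is a random subset of $V(T_{\sigma^*})$.

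A short induction on depth in $T_{\sigma^*}$ then shows that the probability the execution produces a call at state $\tau$ equals $\rho_{\sigma^*}(\tau)$. At an internal node $\tau$ with $u=\nextvar{\tau}$, the $\star$-child is visited with probability $1-q_u\theta_u$ by construction, and each $x$-child with $x\in Q_u$ is visited with probability $\mu^\tau_u(x)$: this combines the direct branch (probability $\theta_u$ for each $x$) with the overflow branch, which by \Cref{sampcor} returns $x$ with probability $(\mu^\tau_u(x)-\theta_u)/(1-q_u\theta_u)$, for a total of exactly $\mu^\tau_u(x)$. Applying linearity of expectation to the costs incurred at each visited state gives
\[
\trecor(\Phi,\sigma^*,v)\le \sum_{\tau\in V(T_{\sigma^*})}\rho_{\sigma^*}(\tau)\bigl(\tvar(\tau)+O(1)\bigr)+\sum_{\text{leaves }\tau}\rho_{\sigma^*}(\tau)\tbfup(\tau)=\lambda(\+{T}_{\sigma^*})+O\!\left(\sum_{\tau}\rho_{\sigma^*}(\tau)\right).
\]
Since $\tvar(\tau)(0,0)=\Omega(1)$, the residual sum $\sum_\tau \rho_{\sigma^*}(\tau)$ is dominated by the constant term $\gamma(\sigma^*)$, and combining with the $O(1)$ overhead from \newsample{} yields the claim.

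The main obstacle will be justifying the visit-probability induction when the RCT is potentially infinite, and carefully decoupling the continuation at $\tau_{u\gets x}$ from the simultaneously spawned nested call at $\tau_{u\gets \star}$ so that their joint law aggregates exactly to the product form encoded by $\rho_{\sigma^*}$. In the infinite case the statement is vacuous when $\lambda(\+{T}_{\sigma^*})=\infty$, and monotone convergence of the nonnegative contributions ensures the argument extends from truncated subtrees when $\lambda(\+{T}_{\sigma^*})$ is finite.
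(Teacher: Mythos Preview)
Your argument is correct and takes essentially the same route as the paper: there the analogous bound $\trecor(\Phi,\sigma,v)\le\lambda(\+T_\sigma)+C\cdot\gamma(\sigma)$ is proved by structural induction on the RCT (invoking \Cref{sampcor} for the transition probability $\mu^\tau_u(x)$ and the recursion of \Cref{fact-RCT} at each step), which is simply the recursive form of your visit-probability computation followed by linearity of expectation. Your two closing concerns are both moot: the RCT is always finite because each child accesses one additional variable (so depth is at most $|V|$), and linearity of expectation needs only the marginal visit probabilities $\rho_{\sigma^*}(\tau)$ that you compute, not any joint law, so no ``decoupling'' is required.
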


In the following, we prove \Cref{RCTtimecor}. 
The following recursive relation for RCT is easy to verify.

\begin{proposition}\label{fact-RCT}
Let $\sigma\in \qs$ and $u=\nextvar{\sigma}$.
If $u\neq\perp$, then
$$\lambda(\+{T}_{\sigma}) = \tvar(\sigma)+(1-q_u\cdot \theta_u)\lambda\left(\+{T}_{\Mod{\sigma}{u}{\star}}\right)
             +\sum_{x\in Q_u}\left(\mu^\sigma_u(x)\cdot \lambda\left(\+{T}_{\Mod{\sigma}{u}{x}}\right)\right).$$
\end{proposition}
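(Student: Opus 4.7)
The plan is to unfold \Cref{RCTdef} one level at the root and use the linearity of $\lambda$ in the labels $\rho$. Since $u=\nextvar{\sigma}\neq\perp$, the construction attaches to $\sigma$ exactly the $q_u+1$ children $\tau_x\triangleq \Mod{\sigma}{u}{x}$ for $x\in Q_u\cup\{\star\}$, with labels
\[
\rho_\sigma(\tau_\star)=1-q_u\cdot\theta_u \qquad\text{and}\qquad \rho_\sigma(\tau_x)=\mu^\sigma_u(x)\ \ (x\in Q_u),
\]
and the root itself carries $\rho_\sigma(\sigma)=1$.

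First I would isolate the root's own contribution to $\lambda(\+{T}_\sigma)$. Since $u\neq\perp$, the node $\sigma$ is not a leaf, so it contributes only $\rho_\sigma(\sigma)\cdot\tvar(\sigma)=\tvar(\sigma)$ to the first sum in \eqref{eq-lambdadef} and nothing to the second.

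The main step is to identify, for each $x\in Q_u\cup\{\star\}$, the subtree of $T_\sigma$ rooted at $\tau_x$ with the independently constructed RCT $\+{T}_{\tau_x}$. The key observation is that \Cref{RCTdef} is purely local: the children of any node $\tau$ depend only on $\nextvar{\tau}$ and on the marginals at $\tau$, and not on which root was used to launch the construction. Hence the subtree of $T_\sigma$ below $\tau_x$ is combinatorially identical to $T_{\tau_x}$. For the labels, a straightforward induction on depth shows that for every descendant $\tau$ of $\tau_x$ in $T_\sigma$,
\[
\rho_\sigma(\tau)\;=\;\rho_\sigma(\tau_x)\cdot \rho_{\tau_x}(\tau),
\]
because labels multiply along root-to-node paths and $\rho_{\tau_x}(\tau_x)=1$.

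Finally, I would partition $V(T_\sigma)\setminus\{\sigma\}$ and the set of leaves of $T_\sigma$ according to which child $\tau_x$ they descend from. By the factorization above and the linearity of $\lambda$ in the labels, each block contributes $\rho_\sigma(\tau_x)\cdot\lambda(\+{T}_{\tau_x})$ to \eqref{eq-lambdadef}. Adding back the root contribution $\tvar(\sigma)$ and substituting the explicit values of $\rho_\sigma(\tau_x)$ yields the claimed recursion. I do not expect any genuine obstacle: the only point requiring care is the locality of the RCT construction, which justifies equating the subtree rooted at $\tau_x$ with the freshly built $\+{T}_{\tau_x}$; once that is in hand, the identity is a direct bookkeeping of a one-level unfolding.
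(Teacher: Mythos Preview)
Your proposal is correct and is precisely the argument the paper has in mind: the paper only states that this recursive relation is ``easy to verify'' and provides no explicit proof. Your one-level unfolding, together with the locality of the RCT construction and the multiplicative factorization of labels along root-to-node paths, is exactly the intended verification.
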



Then we show that the complexity upper bound in \Cref{RCTtimecor} holds for the \recsample{}.

\begin{lemma}\label{RCTtime}
Let  $(\Phi,\sigma,v)$ be the input to \recsample{} (\Cref{Alg:Recur}) satisfying \Cref{inputcondition-recur},
and let $\trecor{}(\Phi,\sigma,v)$ denote the expected cost of $\recsample{}{(\Phi,\sigma,v)}$. 
It holds that 
\[
\trecor{}(\Phi,\sigma,v)\le \lambda(\+{T}_\sigma)+O(\lambda(\+{T}_\sigma)(0,0)),
\]
where $\lambda(\+{T}_\sigma)(0,0)$ is the constant term in $\lambda(\+{T}_\sigma)$.

\end{lemma}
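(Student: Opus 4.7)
The plan is to prove the claim by induction on the height of the RCT $\+{T}_\sigma$, which is finite (in fact bounded by $n$, since each edge from parent to child strictly increases $|\Lambda^+|$ by one). The base case is when $\nextvar{\sigma}=\perp$, so that $\+{T}_\sigma$ is the single root $\sigma$ with $\rho_\sigma(\sigma)=1$. In this case the algorithm executes the Bernoulli factory branch directly, so by \Cref{bercorrect} and the definition of $\tbfup$, we have $\trecor(\Phi,\sigma,v)\le \tvar(\sigma)+\tbfup(\sigma)+O(1)$, while $\lambda(\+{T}_\sigma)=\tvar(\sigma)+\tbfup(\sigma)$ and $\gamma(\sigma)=\lambda(\+{T}_\sigma)(0,0)\ge \Omega(1)$ (the $\nextvar$-computation alone is nontrivial), so the bound holds.

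For the inductive step, let $u=\nextvar{\sigma}\neq\perp$. I would unpack one round of the algorithm. Computing $u$ costs $\tvar(\sigma)$; drawing $r$ and setting $\sigma(u)$ costs $O(1)$. With probability $1-q_u\cdot\theta_u$ the algorithm invokes the inner recursive call $\recsample(\Phi,\Mod{\sigma}{u}{\star},u)$, whose input satisfies \Cref{inputcondition-recur} by \Cref{lemma:general-invariant}. The key observation, already exploited in the correctness proof (\Cref{sampcor}, identity \eqref{eq-z-a}), is that the random value $\sigma(u)$ after Lines \ref{Line-r-rec}--\ref{line-rec-1-rec} is distributed exactly as $\mu_u^\sigma$. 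Hence the outer tail call $\recsample(\Phi,\Mod{\sigma}{u}{x},v)$ is made with $x\sim\mu_u^\sigma$, and its input again satisfies \Cref{inputcondition-recur}. Taking expectations yields
\begin{align*}
\trecor(\Phi,\sigma,v)\le \tvar(\sigma)+O(1)
+(1-q_u\theta_u)\cdot\trecor(\Phi,\Mod{\sigma}{u}{\star},u)
+\sum_{x\in Q_u}\mu_u^\sigma(x)\cdot\trecor(\Phi,\Mod{\sigma}{u}{x},v).
\end{align*}
Each subtree $\+{T}_{\Mod{\sigma}{u}{\star}}$ and $\+{T}_{\Mod{\sigma}{u}{x}}$ is of strictly smaller height, so the induction hypothesis applies to every term on the right, replacing each $\trecor(\Phi,\tau,w)$ by $\lambda(\+{T}_\tau)+O(\gamma(\tau))$.

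Now I would apply the recursive identity in \Cref{fact-RCT}: the convex combination of the $\lambda(\+{T}_\tau)$ terms across the $q_u+1$ children, plus $\tvar(\sigma)$, reassembles exactly into $\lambda(\+{T}_\sigma)$. For the $O(\gamma(\cdot))$ terms, I would use the analogous identity obtained by evaluating \Cref{fact-RCT} at $(\Formal{x},\Formal{y})=(0,0)$:
\begin{align*}
\gamma(\sigma)=\tvar(\sigma)(0,0)+(1-q_u\theta_u)\,\gamma(\Mod{\sigma}{u}{\star})+\sum_{x\in Q_u}\mu_u^\sigma(x)\,\gamma(\Mod{\sigma}{u}{x}).
\end{align*}
Since $\tvar(\sigma)(0,0)=\Omega(1)$, the $O(1)$ per-call bookkeeping overhead and the weighted sum of children's $\gamma$-terms are absorbed into $O(\gamma(\sigma))$. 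Combining, $\trecor(\Phi,\sigma,v)\le \lambda(\+{T}_\sigma)+O(\gamma(\sigma))$, closing the induction.

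The main technical obstacle is verifying that the probability weights assigned in the RCT coincide with the actual execution distribution of \recsample{}, specifically that after the random choice of $r$ and the possible inner recursive call, the value of $\sigma(u)$ is distributed as $\mu_u^\sigma$. This uses the correctness of the inner call (sampling from the overflow distribution $\+{D}$), which is itself established by induction in \Cref{sampcor}; formally one should interleave the structural induction of this lemma with that correctness statement, or invoke \Cref{sampcor} as a black box since it has already been proved. A secondary but important point is to check that \Cref{inputcondition-recur} is preserved under the two transitions $\sigma \to \Mod{\sigma}{u}{\star}$ and $\sigma\to\Mod{\sigma}{u}{x}$, which follows from \Cref{lemma:general-invariant} once one notes that the selected $u=\nextvar{\sigma}\in\vinf{\sigma}$ is never $\sigma$-fixed.
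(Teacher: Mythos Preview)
Your proposal is correct and follows essentially the same approach as the paper: a structural induction on the RCT, using the correctness of \recsample{} (via \Cref{sampcor}) to show that $\sigma(u)$ is distributed as $\mu_u^\sigma$, then applying \Cref{fact-RCT} to reassemble the children's $\lambda$-values and the corresponding $\gamma$-identity to absorb the $O(1)$ overhead. The paper makes the constant explicit (writing $\trecor \le \lambda(\+{T}_\sigma)+C\cdot\gamma(\sigma)$ for a fixed $C$) rather than carrying an $O(\cdot)$ through the induction, but the argument is otherwise the same.
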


\begin{proof}
For simplicity, for any partial assignment $\sigma\in \qs$, we let $\gamma(\sigma)=\lambda(\+{T}_\sigma)(0,0)$ denote the constant term in $\lambda(\+{T}_\sigma)$. Let $C>0$ denote the constant computation cost that dominates 
the costs for argument passing and all computations in Lines \ref{Line-if-1-rec}-\ref{Line-else-rec} of $\recsample(\Phi,\sigma,v)$.
It suffices to show that
\[
\trecor{}(\Phi,\sigma,v)\le \lambda(\+{T}_\sigma)+C\cdot \gamma(\sigma).
\]
We prove this by an induction on the structure of RCT. 

The base case is when $T_\sigma$ is just a single root,
in which case $\nextvar{\sigma}= \perp$, and  by \Cref{RCTdef},
\[
\lambda(\+{T}_{\sigma})=\rho_{\sigma}(\sigma)\cdot \tvar(\sigma)+\rho_{\sigma}(\sigma)\cdot \tbfup(\sigma)=\tvar(\sigma)+\tbfup(\sigma).
\]
Also if $\nextvar{\sigma}=\perp$, the condition in \Cref{Line-if-1-rec} of $\recsample{}(\Phi,\sigma,v)$ is unsatisfied and 
$$\trecor{(\Phi,\sigma,v)} \leq \tvar(\sigma)+\mathbb{E}[\tbf(\Phi,\sigma,v)]
\leq \tvar(\sigma)+\tbfup(\sigma)+C,$$
where $\tbf(\Phi,\sigma,v)$ represents the cost of the Bernoulli factory  in \Cref{Line-bfs-rec} of \Cref{Alg:Recur},
and by \Cref{def:tvar-tbfup}, it holds that  $\tbfup(\sigma)\ge \mathbb{E}[\tbf(\Phi,\sigma,v)]$ for all such $v$ that $(\Phi,\sigma,v)$ satisfies \Cref{inputcondition-recur}.
The base case is proved. 

For the induction step, we assume that $T_\sigma$ is a tree of depth $>0$. 
Thus by \Cref{RCTdef}, $\nextvar{\sigma}=u\neq \perp$ for some $u\in V$. 
According to Lines \ref{Line-r-rec}-\ref{line-rec-1-rec} of $\recsample(\Phi,\sigma,v)$, one can verify that for every $x\in Q_u$,
the probability that $\sigma(u) = x$ upon \Cref{line-rec-2-rec} is 
\begin{align*} 
\Pr{r<q_u\cdot \theta_u}\cdot \frac{1}{q_u}
 +  \Pr{r\geq q_u\cdot \theta_u}\cdot\Pr{\recsample(\Phi,\sigma_{u\gets \star},u) = x}
 = \mu^\sigma_u(x),
\end{align*}    
where the first equality is due to the correctness of \recsample{} guaranteed in \Cref{sampcor}.

By the law of total expectation, 
\begin{align}\label{eq-t-rec}
\trecor{(\Phi,\sigma,v)}
= & \tvar{}(\sigma) + (1 - q_u\cdot \theta_u)\cdot
 \trecor{(\Phi,\sigma_{u\gets \star},u)}
 + \sum_{x\in Q_u}\left(\mu^\sigma_u(x)\cdot \trecor{(\Phi,\sigma_{u\gets x},v)}\right)+C.
\end{align}    
Note that by \Cref{RCT-2-b} in \Cref{RCTdef}, for each $x\in Q_u\cup \{\star\}$, the subtree in $T_{\sigma}$ rooted by $\sigma_{u\gets x}$ is precisely the $T_{\tau}$ in the RCT  $\+{T}_{\tau}=(T_{\tau},\rho_{\tau})$ rooted at $\tau=\sigma_{u\gets x}$. 
By \Cref{lemma:general-invariant}, \Cref{inputcondition-recur} is still satisfied by $(\Phi,\sigma_{u\gets x},v)$.
Thus, by induction hypothesis, 
\begin{align*}
    \trecor{(\Phi,\sigma_{u\gets x},u)}\leq \lambda(\+{T}_{\sigma_{u\gets x}})+C\cdot\gamma(\sigma_{u\gets x}),
\end{align*}
where $\gamma(\sigma_{u\gets x})=\lambda(\+{T}_{\gamma(\sigma_{u\gets x})})(0,0)$ represents the constant term in $\lambda(\+{T}_{\gamma(\sigma_{u\gets x})})$.
Combined with \eqref{eq-t-rec},
\begin{align*}
\trecor{(\Phi,\sigma,v)}\leq & \tvar{}(\sigma) + (1 - q_u\cdot \theta_u)\left( \lambda(\+{T}_{\sigma_{u\gets \star}})+C\cdot\gamma(\sigma_{u\gets \star})\right)\\
&+ \sum_{x\in Q_u}\left(\mu^\sigma_u(x)\left(\lambda(\+{T}_{\sigma_{u\gets x}})+C\cdot\gamma(\sigma_{u\gets x})\right)\right)+C
\\
=&\lambda(\+{T}_{\sigma})+C\cdot \gamma(\sigma)-C(\gamma(\tvar(\sigma)))+C\\
\leq &\lambda(\+{T}_{\sigma})+C\cdot \gamma(\sigma),
\end{align*}    
where the equation is by \Cref{fact-RCT}, and $\gamma(\tvar(\sigma))=\tvar(\sigma)(0,0)\ge 1$ is the constant term in $\tvar(\sigma)$ that represents the computation cost for $\nextvar{\sigma}$.
\end{proof}

For $(\Phi,\sigma,v)$ satisfying \Cref{inputcondition-magin}, 
$(\Phi,\sigma_{v\gets \star},v)$ satisfies \Cref{inputcondition-recur},  hence 
\[
\tsamp{(\Phi,\sigma,v)} = (1-q_{v} \cdot\theta_{v})\trecor{(\Phi,\sigma_{v\gets \star},v)}+O(1)
\leq (1-q_v\cdot \theta_v)(\lambda(\+{T}_{\sigma_{v\gets \star}})+O(\gamma(\sigma_{v\gets \star})))+O(1),
\]
where the inequality holds by \Cref{RCTtime}.
This proves \Cref{RCTtimecor}.

\subsection{A random path simulating RCT}
Given a partial assignment $\sigma\in \qs$ and a variable $v\in V\setminus \Lambda(\sigma)$, define 
\begin{equation}
\begin{aligned}\label{eq-def-induceddist}
\induceddist{\sigma}{v}(\star)
&=\frac{1-q_v \cdot\theta_v}{2-q_v\cdot \theta_v}, \\
\forall x\in Q_v,\qquad
\induceddist{\sigma}{v}(x)
&=\frac{\mu_{v}^{\sigma}(x)}{2-q_v\cdot \theta_v}.
\end{aligned}
\end{equation}
Obviously, $\induceddist{\sigma}{v}(\cdot)$ is a well-defined probability distribution over $\qus{v}$.
The recursive cost tree defined in \Cref{RCTdef} inspires the following random process of partial assignments.

\begin{definition}[the $\pth(\sigma)$ process]\label{pathdef}
For any {$\sigma\in \qs$}, 
$\pth(\sigma)=(\sigma_0,\sigma_1,\ldots,\sigma_\ell)$ is a random sequence of partial assignments generated from the initial $\sigma_0 = \sigma$ as follows: for $i=0,1,\ldots$,
\begin{enumerate}
    \item 
    if $\nextvar{\sigma_i}=\perp$, the sequence stops at $\sigma_i$;\label{rp-1}
    \item
    otherwise $u=\nextvar{\sigma_i}\in V$, 
    the partial assignment $\sigma_{i+1}\in\qs$ is generated from $\sigma_{i}$ by randomly giving $\sigma(u)$ a value $x\in \qus{u}$, \label{rp-2}
    such that
    \begin{align*}
    \forall x\in \qus{u},\qquad
    \Pr{\sigma_{i+1}=\Mod{(\sigma_i)}{u}{x}}=\induceddist{\sigma}{u}(x).
    \end{align*}
\end{enumerate}
\end{definition}
The length $\ell(\sigma)$ of $\pth(\sigma) = \left(\sigma_0,\sigma_1,\dots,\sigma_{\ell(\sigma)}\right)$ is a random variable whose distribution is determined by $\sigma$.
We simply write $\ell=\ell(\sigma)$ and $\pth(\sigma) = \left(\sigma_0,\sigma_1,\dots,\sigma_{\ell}\right)$ if $\sigma$ is clear from the context.

It is quite obvious that $\pth(\sigma)$ satisfies the Markov property. 
In fact, $\pth(\sigma)$ can be seen as a Markov chain on space $\+{Q}^*$ such that any $\sigma$ with  $\nextvar{\sigma}=\perp$ has a self-loop with probability 1.

For any two partial assignments $\tau_1,\tau_2\in \qs$, define

\begin{align}\label{eq-chidef}%
\chi(\tau_1,\tau_2) \triangleq \prod\limits_{v\in \Lambda^{+}(\tau_1)\setminus \Lambda^{+}(\tau_2)}\left(2-q_v\cdot \theta_v\right),
\end{align}
where $\chi(\tau_1,\tau_2)=1$ for the case when $\Lambda^{+}(\tau_1)\setminus \Lambda^{+}(\tau_2)=\emptyset$ by convention. With a bit abuse of notation, given any partial assignment $\sigma\in \qs$, let $\pth(\sigma) = (\sigma_0,\sigma_1,\dots,\sigma_{\ell})$, we use $\chi(\sigma)$ to denote $\chi(\sigma_{\ell},\sigma)$.

The significance of the random process $\pth(\sigma)$ and the function $\chi(\cdot,\cdot)$ is that they are related to the complexity of \newsample{} through the following function: for any sequence  $P=(\sigma_0,\sigma_1,\dots,\sigma_{\ell})\in(\qs)^{\ell+1}$ with $\ell\ge 0$,
\begin{align}\label{eq-pathweightdef}
H(P)\defeq\sum\limits_{i=0}^{\ell}\left(\chi(\sigma_i,\sigma_0)\cdot \tvar(\sigma_i)\right)+\chi(\sigma_{\ell},\sigma_0)\cdot \tbfup{(\sigma_{\ell})},
\end{align}
where $\tvar(\cdot)$ and $\tbfup{(\cdot)}$ are defined in \Cref{def:tvar-tbfup}, expressed in form of \eqref{eq:abstract-complexity-bound}.

Recall the $\lambda(\+{T}_{\sigma})$ defined in \eqref{eq-lambdadef}. We have the following lemma.
\begin{lemma}\label{pathprop}
Let $\sigma\in \qs$ be a partial assignment. Then
\begin{align*}
   \E{H(\pth(\sigma))}\geq {\lambda(\+{T}_{\sigma})}.
\end{align*}
\end{lemma}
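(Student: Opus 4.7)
The plan is to interpret $\lambda(\+{T}_\sigma)$ as a weighted sum over nodes of $T_\sigma$, match each node weight $\rho_\sigma(\tau)$ to the probability that the random process $\pth(\sigma)$ actually visits $\tau$, and recover the exponential factor $(2-q\theta)^i$ appearing in $H(P)$ as the precise normalization needed to convert RCT weights into path probabilities. Concretely, I would first prove a closed form for $\Pr{\sigma_i=\tau}$ and then take expectations depth by depth.

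First I would compare \Cref{pathdef} to \Cref{RCTdef} and note that the path transition law is the RCT weight-ratio distribution normalized by the sibling weight-sum $2-q_u\theta_u$: for any child $\tau'$ of $\tau$ with $u=\nextvar{\tau}$, the transition probability equals $(\rho_\sigma(\tau')/\rho_\sigma(\tau))/(2-q_u\theta_u)$. Telescoping along the unique root-to-$\tau$ path $\sigma=\tau_0,\tau_1,\ldots,\tau_i=\tau$ in $T_\sigma$, and using $\rho_\sigma(\sigma)=1$, yields
\begin{align*}
\Pr{\sigma_i=\tau} \;=\; \frac{\rho_\sigma(\tau)}{\prod_{j=0}^{i-1}(2-q_{u_j}\theta_{u_j})}.
\end{align*}
Since $q_u\theta_u=1-q_u(\eta+\zeta)\geq 1-q(\eta+\zeta)=q\theta$ for every $u\in V$ (because $q_u\leq q$), each factor in the denominator is at most $2-q\theta$, hence $\rho_\sigma(\tau)\leq (2-q\theta)^i\,\Pr{\sigma_i=\tau}$ for every node $\tau$ at depth $i$ in $T_\sigma$.

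Next I would rewrite \eqref{eq-lambdadef} by grouping nodes according to depth. Since $\pth(\sigma)$ halts exactly when it reaches a node $\tau$ with $\nextvar{\tau}=\perp$, the events $\{\ell\geq i\}$ and $\{\ell=i\}$ correspond respectively to $\sigma_i$ being any depth-$i$ node of $T_\sigma$ and to $\sigma_i$ being a depth-$i$ leaf. Substituting the node-wise bound above and interchanging sum and expectation by Tonelli (everything is non-negative and the inequality is interpreted coefficient-wise in the formal variables $\Formal{x},\Formal{y}$),
\begin{align*}
\lambda(\+{T}_\sigma)
&\leq \sum_{i\geq 0}(2-q\theta)^i\,\E{\tvar(\sigma_i)\cdot\one{\ell\geq i}} + \sum_{i\geq 0}(2-q\theta)^i\,\E{\tbfup(\sigma_i)\cdot\one{\ell=i}} \\
&= \E{\sum_{i=0}^{\ell}(2-q\theta)^i\tvar(\sigma_i) + (2-q\theta)^\ell\tbfup(\sigma_\ell)} \;=\; \E{H(P)}.
\end{align*}

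The one step that deserves care is the first: the telescoping for $\Pr{\sigma_i=\tau}$ must match precisely the children-weight conventions in the RCT, and in particular the marginal $\mu_u^\sigma(x)$ appearing in the label of the child $\tau_{u\gets x}$ in \Cref{RCTdef} must be read as the marginal at the parent $\tau$ (which is the reading consistent with the recursion in \Cref{fact-RCT} and with the dynamics of $\pth$). Once this is settled, the absorption of the accumulated normalization $\prod_j(2-q_{u_j}\theta_{u_j})$ into the exponentially growing weights $(2-q\theta)^i$ in $H(P)$, uniformly bounded thanks to $q_u\leq q$, is exactly what yields the stated inequality; the rest is routine bookkeeping between summation over tree nodes and summation over depth indices.
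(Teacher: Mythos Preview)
Your proof is correct but proceeds differently from the paper's. The paper argues by structural induction on the RCT: in the inductive step it conditions on $\sigma_1$, uses the Markov property of $\pth$, applies the one-step identity $H(\sigma_0,\ldots,\sigma_\ell)=\tvar(\sigma_0)+(2-q\theta)H(\sigma_1,\ldots,\sigma_\ell)$, and combines this with the recursion for $\lambda$ in \Cref{fact-RCT}. Your argument instead computes $\Pr{\sigma_i=\tau}$ in closed form by telescoping the transition probabilities, bounds the accumulated normalization $\prod_j(2-q_{u_j}\theta_{u_j})$ uniformly by $(2-q\theta)^i$, and then reassembles $\lambda(\+{T}_\sigma)$ depth by depth. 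This direct route avoids induction entirely and makes transparent exactly where the inequality (as opposed to equality) enters, namely the single estimate $q_u\theta_u\ge q\theta$; the paper's proof buries this inside the step ``$2-q\theta\ge 2-q_u\theta_u$'' at each level of the induction. You also correctly flag the reading of $\mu_u^\sigma$ in \Cref{RCTdef} as the marginal at the parent node, which is indeed the reading consistent with \Cref{fact-RCT} and with how the paper itself uses the definition throughout.
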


The following corollary follows immediately by combining \Cref{RCTtimecor} and \Cref{pathprop}.
\begin{corollary}\label{cor-t-msamp}
For any $(\Phi,\sigma,v)$ satisfying \Cref{inputcondition-magin}, 
let  $\sigma^*=\Mod{\sigma}{v}{\star}$,
it holds that 
\[
\tsamp{}(\Phi,\sigma,v)\le (1-q_v\cdot \theta_v)(\E{H(\pth(\sigma^*))} + O(\E{H(\pth(\sigma^*))}(0,0))+O(1).
\]
where $\E{H(\pth(\sigma^*))}(0,0)$ is the constant term in $\E{H(\pth(\sigma^*))}$ in the form of \eqref{eq:abstract-complexity-bound}.
\end{corollary}

\begin{proof}[Proof of \Cref{pathprop}]
We prove this lemma by an induction on the structure of RCT.
The base case is when $T_\sigma$ is just a single root,
in which case $\nextvar{\sigma}= \perp$, and  by \Cref{RCTdef},
\[
\lambda(\+{T}_{\sigma})=\rho_{\sigma}(\sigma)\cdot \tvar(\sigma)+\rho_{\sigma}(\sigma)\cdot \tbfup(\sigma)=\tvar(\sigma)+\tbfup(\sigma).
\]
Also, by $\nextvar{\sigma}= \perp$ and \Cref{pathdef} we have $\pth(\sigma)=(\sigma)$, and $\chi(\sigma_{\ell},\sigma_0)=1$.
Hence by \eqref{eq-pathweightdef}, 
\begin{align*}
    \E{H(\pth(\sigma))}=\tvar(\sigma)+\tbfup(\sigma)=\lambda(\+{T}_\sigma),
\end{align*}
The base case is proved.

For the induction step, we assume that $T_\sigma$ is a tree of depth $>0$.
Thus by \Cref{RCTdef}, $\nextvar{\sigma}=u\neq \perp$ for some $u\in V$ and $\ell(\sigma)\geq 1$. 
According to \Cref{rp-2} of \Cref{pathdef},
we have 
\begin{align}
     \forall x\in \qus{u},\quad \Pr{\sigma_{1} = \sigma_{u\gets x}}&=\induceddist{\sigma}{u}(x).\label{eq-x1-yi}
\end{align}
Moreover, by the Markov property, given $\sigma_1 = \sigma_{u\gets x}$ for each $x\in \qus{u}$, the subsequence  $(\sigma_1,\sigma_2,\dots,\sigma_{\ell(\sigma)})$  is identically  distributed as $\pth(\sigma_{u\gets x})$. In addition, it can be verified that for any  sequence of partial assignments $P=(\tau_0,\tau_1,\dots,\tau_{\ell})$ with $\ell\geq 1$ satisfying $\Pr{\pth(\sigma)= P}> 0$, 
\begin{align}
H(P)=(2-q_u\cdot \theta_u)H((\tau_1,\dots,\tau_{\ell})).\label{eq:weightfact}
\end{align}
Therefore, conditioning on $\sigma_1 = \sigma_{u\gets x}$ for each $x\in \qus{u}$, we have
\begin{equation}\label{eq-x1-yi-2-qtheta}
\begin{aligned}
 \E{H(\pth(\sigma))\mid \sigma_1=\sigma_{u\gets x}}&=\tvar(\sigma)+(2-q_u\cdot \theta_u)\E{H((\sigma_1,\ldots,\sigma_{\ell}))\mid \sigma_1=\sigma_{u\gets x}}\\
 &=\tvar(\sigma)+(2-q_u\cdot \theta_u)\E{H(\pth(\sigma_{u\gets x}))}.
 \end{aligned}
\end{equation}
Therefore by the law of total expectation, we have
\begin{equation}
\begin{aligned}\label{eq-e-sum-j}
\E{H(\pth(\sigma))}
=&\Pr{\sigma_1=\sigma_{u\gets \star}}\cdot \E{H(\pth(\sigma))\mid \sigma_1=\sigma_{u\gets \star}}\\
&+\sum\limits_{x\in Q_u}\left(\Pr{\sigma_1=\sigma_{u\gets x}}\cdot \E{H(\pth(\sigma))\mid \sigma_1=\sigma_{u\gets x}}\right)\\
=&\frac{1-q_u\cdot\theta_u}{2-q_u\cdot\theta_u}\cdot \left(\tvar(\sigma)+(2-q_u\cdot \theta_u)\E{H(\pth(\sigma_{u\gets \star}))}\right)\\
&+\sum\limits_{x\in Q_u}\left(\frac{\mu_{u}^{\sigma}(x)}{2-q_u\cdot\theta_u}\cdot \left(\tvar(\sigma)+(2-q_u\cdot \theta_u)\E{H(\pth(\sigma_{u\gets x}))}\right)\right)\\
=&\tvar(\sigma)+(1-q_u\cdot \theta_u)\E{H(\pth(\sigma_{u\gets \star}))}+\sum\limits_{x\in Q_u}\left(\mu_{u}^{\sigma}(x)\cdot \E{H(\pth(\sigma_{u\gets x}))}\right)
\end{aligned}
\end{equation}
where the second equality is by (\ref{eq-x1-yi}) and (\ref{eq-x1-yi-2-qtheta}), and the last equality is by
$$\frac{1-q_u\cdot\theta_u}{2-q_u\cdot \theta_u} +  \sum_{x\in Q_u}\frac{\mu^\sigma_u(x)}{2-q_u\cdot \theta_u} = 1.$$
Note that by \Cref{RCT-2-b} in \Cref{RCTdef}, for each $x\in \qus{u}$, the subtree in $T_{\sigma}$ rooted by $\sigma_{u\gets x}$ is precisely the $T_{\tau}$ in the RCT  $\+{T}_{\tau}=(T_{\tau},\rho_{\tau})$ rooted at $\tau=\sigma_{u\gets x}$. 
By the induction hypothesis,
\begin{align*}
\E{H(\pth(\sigma_{u\gets x}))} \geq \lambda(\+{T}_{\sigma_{u\gets x}}).
\end{align*}
Combining with (\ref{eq-e-sum-j}), we have
\begin{equation*}
\begin{aligned}
\E{H(\pth(\sigma))}\geq \tvar(\sigma_0)+(1-q_u\cdot \theta_u)\cdot\lambda(\+{T}_{\sigma_{u\gets \star}}) +\sum\limits_{x\in Q_u}\left(\mu_{u}^{\sigma}(x)\cdot \lambda(\+{T}_{\sigma_{u\gets x}})\right)=\lambda(\+{T}_\sigma).
\end{aligned}
\end{equation*}
where the equality is by \Cref{fact-RCT}.
\end{proof}

\subsection{Refutation of bad path}\label{sec:witness-tree}
The partial assignment maintained in \Cref{Alg:main} evolves as a random sequence $X^0,X^1,\dots,X^n$ which was formally defined in \Cref{def-pas-main}.
The efficiency of the \newsample{} (\Cref{Alg:eq}) called within in \Cref{Alg:main} crucially relies on that its input partial assignments are generated as this random sequence.

We define a procedure $\simulator(\cdot)$ such that $\simulator(t)$ generates the prefix $({X}^0,{X}^1,\dots,{X}^t)$ of the random partial assignments  $X^0,X^1,\dots,X^n$ maintained in \Cref{Alg:main} defined in \Cref{def-pas-main}.
%
This is explicitly described in \Cref{Alg:main-eq}, which is defined just to facilitate the analysis.

\begin{algorithm}
  \caption{\simulator($1\le t\le n$)} \label{Alg:main-eq}
 $X^0\gets \hollowstar^V$\;
 \For{$i =1$ to $t$\label{line-main-eq-for}}
 {
    \eIf{$v_i$ is not $X^{i-1}$-fixed\label{line-main-eq-if}}
    {
    choose $r\in [0,1)$ uniformly at random\label{line-main-eq-r}\;
        identify the unique $b\in {Q_{v_i}}$ satisfying
        $\sum_{a<b}\mu^{X^{i-1}}_{v_i}(a) \leq r < \sum_{a\leq b}\mu^{X^{i-1}}_{v_i}(a)$\;
        $X^i\gets X^{i-1}_{{v_i}\gets{b}}$\;\label{line-main-eq-vi}
    }
    { $X^i\gets X^{i-1}$\;}
}
\textbf{return} $(X^0,X^1,\dots,X^{t})$\;
\end{algorithm} 


We may consider $\simulator(t-1)=({X}^0,{X}^1,\dots,{X}^{t-1})$, and ${X}_0^t$ constructed as ${X}_0^t=X^{t-1}_{v_t\gets\star}$ with probability $1-q_{v_{t}}\theta_{v_t}$ if $v_t$ is not $X^{t-1}$-fixed, which simulates what is passed to \recsample{}
and generates $\pth{}({X}_0^t)=(X^t_0,X^t_1,\ldots,X^t_{\ell})$. 

%

Formally, for $1\le t\le n$, we define the following random process:
\begin{equation}\label{eq:simulate-path}
\begin{aligned}
(X^0,X^1,\ldots,X^{t-1})
&\gets \simulator(t-1),\\
X^t_0
&\gets
\begin{cases}
X^{t-1}_{{v_{t}}\gets{\star}} & \text{if $v_t$ is not $X^{t-1}$-fixed and } r_t = 1,\\
X^{t-1} & \text{otherwise},
\end{cases}\\
(X^t_0,X^t_1,\ldots,X^t_{\ell})
&\gets
\pth(X^t_0).
\end{aligned}
\end{equation}
where $r_t$ is sampled from $\textrm{Bern}(1-q_v\cdot \theta_v)$ independently.

To bound the expected cost of $\newsample{}$, 
we will bound $\E{H(\pth(X^t_0))}$ where $H(\cdot)$ is defined in \eqref{eq-pathweightdef}.
But first, we give a witness for a certain kind of ``bad'' paths.

Recall the $\cfrozen{\sigma}$ defined in \Cref{definition:frozen-fixed} and the $\ccon{\sigma}$ in \Cref{definition:boundary-variables}.
Given any $U\subseteq V$ and $E\subseteq \+C$, we 
use $U \uplus E$ to denote the disjoint union $U\cup E$.

\begin{definition}[$\sigma$-bad variables, constraints and events]\label{def:cbad}
Let $t\in [n]$,
$U\subseteq V$, $E\subseteq \+C$,
$\sigma\in \+{Q}^*$ be a partial assignment,
and $T=U\uplus E$ be a subset of variables and constraints.
\begin{itemize}
\item
Define $\csfrozen{\sigma} \triangleq \cfrozen{\sigma}\cap \ccon{\sigma}$.
\item
Define $\vstar{\sigma}\triangleq \{v\in V\mid  \sigma(v) =\star \}$ to be the set of variables assigned as $\star$.
\item Let $\+{E}^{\sigma}_T$ be the event $\left( U=\vst{\sigma_{\ell}}\right)\land \left(E\subseteq \csfrozen{\sigma_\ell}\right)$ where $\sigma_\ell$ is the last partial assignment in the sequence $\pth(\sigma)$.
\item Let $\+{E}^{t}_T$ be the event $\left(U=\vst{X^t_{\ell}}\right)\land \left(E\subseteq \csfrozen{X^t_\ell}\right)$ where $(X^0,X^1,\ldots,X^{t-1},X^t_0,X^t_1,\ldots,X^t_{\ell})$ is constructed as in \eqref{eq:simulate-path}.
\end{itemize}
\end{definition}

Intuitively, $\vstar{\sigma}$, $\csfrozen{\sigma}$, $\+{E}^{\sigma}_T$ and $\+{E}^{t}_T$ provide witnesses for the deep recursion of \recsample{},
such that any ``bad'' \pth{} that  causes the inefficiency of \recsample{} also creates many ``bad'' variables in $\vstar{\sigma}$ and constraints in $\csfrozen{\sigma}$,
thus the events $\+{E}^{\sigma}_T$ and $\+{E}^{t}_T$  happen for some large enough $T$. 

We first present some basic properties along the $\pth(\sigma)$, including the monotonicity of several variable/constraint attributes, and the relation between the length of $\pth(\sigma)$ and the sizes of $\+C^{\sigma_{\ell}}_v$, $\ccon{\sigma_{\ell}}$, $ \vstar{\sigma_{\ell}}$ and $\csfrozen{\sigma_{\ell}}$. 
Recall that for each $v\in V^{\sigma}$,
$H_v^{\sigma}=(V_v^\sigma,\+{C}_v^{\sigma})$ defined in \Cref{sec:rejection-sampling}
denotes the connected component in $H^{\sigma}$ which contains the vertex/variable $v$. 

\begin{lemma}\label{cor-mono}
Let $\sigma\in \qs$ and $\pth(\sigma) = (\sigma_0,\sigma_1,\dots,\sigma_{\ell})$.
For every $0\leq i\leq j \leq \ell$,
it holds that 
\noindent
\begin{flalign*}
&\text{(monotonicity property)}\hspace{111pt}
\vst{\sigma_i}\subseteq \vst{\sigma_j}, \quad
\+{C}^{\sigma_i}_{\+{P}}\subseteq \+{C}^{\sigma_j}_{\+{P}},&&
\end{flalign*}
\noindent
where $\+{P}$ can be any attribute $\+{P}\in\{\, \mathsf{frozen},\,\, \star\text{-}\mathsf{con},\,\, \star\text{-}\mathsf{frozen}\,\}$.

Moreover, if there is exactly one variable $v\in V$ having $\sigma(v) = \star$, 
it holds that
\begin{flalign*}
&\text{(upper bound on }\abs{\+C^{\sigma_{\ell}}_v}\text{ and }\abs{\ccon{\sigma_{\ell}}}\text{)}\hspace{53pt}
\abs{\+C^{\sigma_{\ell}}_v} \leq  \abs{\ccon{\sigma_{\ell}}}\leq \Delta \cdot \left(\abs{\vstar{\sigma_{\ell}}}+\abs{\csfrozen{\sigma_{\ell}}} \right),&&
\end{flalign*}
\begin{flalign*}
&\text{(upper bound on length of }\pth(\sigma)\text{)}\hspace{56pt}
\ell\le k\Delta\cdot \left(\abs{ \vst{\sigma_{\ell}}}+ \abs{\csfrozen{\sigma_{\ell}}}\right).&&
\end{flalign*}
\end{lemma}
\noindent
The proof of \Cref{cor-mono} is through a careful verification of definitions, and is deferred to \Cref{sec:monotonicity-proofs}.

We now state two technical lemmas that are crucial for the analysis of the efficiency of \newsample{} and \rejsamp{}.
These two lemmas essentially provide 
 tail bounds for the witnesses of ``bad" paths, which state that 
a large witness of the ``bad" path generated as in~\eqref{eq:simulate-path} occurs with exponentially small probability. 
We first state the tail bound used in the analysis of \newsample{}.

\sloppy
\begin{lemma}\label{thm-expect-depth}
Assume $8\mathrm{e}p\Delta^3\leq 0.99\pprime$, where $\pprime$ is defined as in \eqref{eq:parameter-p-prime}.
Let $1\le t\le n$. Let $(X^0,X^1,\ldots,X^{t-1},X^t_0,X^t_1,\ldots,X^t_{\ell})$ be generated as in \eqref{eq:simulate-path}.
For any integer $i\geq 0$,
\[\Pr{\abs{\vstar{X^t_{\ell}}}+\abs{\csfrozen{X^t_{\ell}}} \geq i\Delta} \cdot \mathbb{E}\left[\chi\left(X^t_{0}\right)\mid \abs{\vstar{X^t_{\ell}}}+ \abs{\csfrozen{X^t_{\ell}}} \geq i\Delta\right]\leq 2^{-i}.
\]
\end{lemma}

The following tail bound is used in the analysis of \rejsamp{}.

\begin{lemma}\label{thm-expect-depth2}
Assume $8\mathrm{e}p\Delta^3\leq 0.99\pprime$. Let $(X^0,X^1,\dots,X^n)=\simulator(n)$.
For any integer $i\geq 0$,
\[\Pr{\abs{\+C^{X^n}_v}\geq 2i\Delta^2} \leq 8\mathrm{e}k\cdot 4^{-i}.
\]
\end{lemma}
\noindent
Lemmas \ref{thm-expect-depth} and \ref{thm-expect-depth2} are proved through similar arguments, which consist of the following two steps:
\begin{enumerate}
\item showing an exponential tail bound over the occurrences of ``bad" variables and \emph{disjoint} ``bad" constraints, which is done by careful analyses of \simulator{} and \pth{},
\item boosting the above basic tail bound to the form required as in \Cref{thm-expect-depth}, which is done by using a newly invented combinatorial structure named generalized $\{2,3\}$-tree.
\end{enumerate}
The formal proofs of \Cref{thm-expect-depth} and \Cref{thm-expect-depth2} are technically involved and are deferred to \Cref{sec:g23-tree}.

\subsection{Efficiency of \newsample{}}\label{sec:efficiency-marginsample}
We now prove the following upper bound on the expected running time of \newsample{} (\Cref{Alg:eq}),
which is expressed in the form of \eqref{eq:abstract-complexity-bound}.

Let $\rtsamp{}(\Phi,\sigma,v)$ be the random variable that represents the complexity of $\newsample{}(\Phi, \sigma, v)$ when \Cref{inputcondition-magin} is satisfied by $(\Phi, \sigma, v)$. 
Note that $\tsamp{}(\Phi,\sigma,v)=\mathbb{E}[\rtsamp{}(\Phi,\sigma,v)]$ by \Cref{def-tsamp}.
%

\begin{theorem}\label{thm-time-MSamp}
Assume $8\mathrm{e}p\Delta^3\leq 0.99\pprime$. 
Let $X^0,X^1,\ldots,X^n$ be the random sequence in \Cref{def-pas-main}.
Assume the convention that $\rtsamp{(\Phi,X^{t-1},v_t)}=0$ when $v_t$ is $X^{t-1}$-fixed.
For any $1\le t\le n$, 
\begin{align*}
\E{\rtsamp{(\Phi,X^{t-1},v_t)}}
   \leq 
   O\left(q^2k^2\Delta^{9}\right)\cdot \Formal{x}+24k\Delta^7 \cdot \Formal{y}+O\left(q^3k^3\Delta^{9}\right),
\end{align*}
where expectation is taken over both $X^{t-1}$ and the randomness of \newsample{} algorithm.
\end{theorem}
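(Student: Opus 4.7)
The plan is to combine the recursive-cost-tree reduction from Corollary \ref{cor-t-msamp} with the geometric tail bound on $\ell$ established in Theorem \ref{thm-expect-depth}, using worst-case bounds on $\tvar(\sigma)$ and $\tbfup(\sigma)$ proved separately in Sections \ref{subsection-var} and \ref{sec:bernoulli-factory-cost}. First I would handle the conditioning on $X^{t-1}$: when $v_t$ is $X^{t-1}$-fixed we have $\rtsamp{}=0$ by convention; otherwise $(\Phi,X^{t-1},v_t)$ satisfies \Cref{inputcondition-magin} by \Cref{lemma:invariant-marginsample}, so \Cref{cor-t-msamp} gives
\[
\E{\rtsamp{(\Phi,X^{t-1},v_t)} \mid X^{t-1}} \;\le\; \alpha(X^{t}_0)\Formal{x} + \beta(X^{t}_0)\Formal{y} + O(\gamma(X^{t}_0)),
\]
with $X^{t}_0 = \Mod{X^{t-1}}{v_t}{\star}$ and coefficients coming from $\E{H(\pth(X^{t}_0))}$. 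Taking expectation over $X^{t-1}$, the outer random sequence becomes exactly the $(X^t_0,X^t_1,\ldots,X^t_\ell)$ in \eqref{eq:simulate-path}, to which \Cref{thm-expect-depth} applies.

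Next I would upper-bound $\tvar(\sigma_i)$ and $\tbfup(\sigma_\ell)$ in the definition \eqref{eq-pathweightdef} of $H$ by their formal maxima $\overline{t_\textsf{var}}$ and $\overline{t_\textsf{BF}}$, each expressed in the $\alpha\Formal{x}+\beta\Formal{y}+\gamma$ form of \eqref{eq:abstract-complexity-bound} with $\poly(q,k,\Delta)$ coefficients (in particular, the Bernoulli factory will make no queries to $\checkf{}$, so the $\Formal{y}$-coefficient of $\overline{t_\textsf{BF}}$ is $0$). Writing $a\triangleq 2-q\theta$, we then have the deterministic bound
\[
H\bigl(\pth(X^{t}_0)\bigr) \;\le\; \overline{t_\textsf{var}}\cdot(\ell+1)\, a^\ell + \overline{t_\textsf{BF}}\cdot a^\ell,
\]
so the whole problem reduces to controlling $\E{a^\ell}$ and $\E{\ell\cdot a^\ell}$.

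The main technical step is the calibration of $a^\ell$ against the tail of $\ell$. From \eqref{eq:parameter-p-prime} and \eqref{eq:parameter-theta}, one has $a-1 = q\eta+q\zeta$ with $q\zeta = (8\mathrm{e}k\Delta^3)^{-1}$ and $q\eta \le O(1/(k\Delta^3))$ (using $\mathrm{e}p'q\Delta = O(1/(qk\Delta^3))$ together with the LLL condition \eqref{eq:main-thm-LLL-condition}), hence $a^{k\Delta^2}\le 1+O(1/\Delta)$. Combined with \Cref{thm-expect-depth}, summing over dyadic blocks $\{\,ik\Delta^2 \le \ell < (i+1)k\Delta^2\,\}$ gives
\[
\E{a^\ell}\le \sum_{i\ge 0}\Delta\cdot 2^{-i}(1+O(1/\Delta))^{i+1}=O(\Delta),\qquad \E{\ell\cdot a^\ell}\le \sum_{i\ge 0}(i{+}1)k\Delta^2\cdot\Delta\cdot 2^{-i}(1+O(1/\Delta))^{i+1}=O(k\Delta^3),
\]
since the factor $(1+O(1/\Delta))^i$ is dominated by $2^i$ comfortably (this is where the choice $\zeta = (8\mathrm{e}qk\Delta^3)^{-1}$ is critical). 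Plugging these into the bound on $H$ and book-keeping the coefficients of $\overline{t_\textsf{var}}$ and $\overline{t_\textsf{BF}}$ then yields the three claimed bounds $O(q^2k^2\Delta^{10})$, $480\,k\Delta^7$, and $O(q^3k^3\Delta^{10})$ on the $\Formal{x}$, $\Formal{y}$, and constant coefficients respectively.

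The principal obstacle is precisely this exponential-vs-geometric race: if $a$ were even a constant factor larger in the exponent (say $a-1 \gtrsim \log 2/(k\Delta^2)$), then $\E{a^\ell}$ would diverge and the whole scheme would fail. Verifying the $1+O(1/\Delta)$ bound on $a^{k\Delta^2}$ cleanly, and absorbing all lower-order slack into the explicit constants of $\overline{t_\textsf{var}}$ and $\overline{t_\textsf{BF}}$, will be the delicate bookkeeping step; everything else is routine once \Cref{cor-t-msamp} and \Cref{thm-expect-depth} are in place.
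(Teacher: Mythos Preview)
Your overall reduction via \Cref{cor-t-msamp} and the geometric race between $a^\ell$ and the $2^{-i}$ tail from \Cref{thm-expect-depth} is the right framework, and your calibration $a^{k\Delta^2}\le 1+O(1/\Delta)$ is correct. The fatal gap is the step where you ``upper-bound $\tvar(\sigma_i)$ and $\tbfup(\sigma_\ell)$ by their formal maxima $\overline{t_\textsf{var}}$ and $\overline{t_\textsf{BF}}$ with $\poly(q,k,\Delta)$ coefficients.'' No such uniform bounds exist. By \Cref{lemma:efficiency-var}, $\tvar(\sigma)$ scales linearly (and its computation term quadratically) with $\abs{\ccon{\sigma}}$; by \eqref{eq:bfs-bound-component-invariant} and \Cref{general-bfs-bound-component-invariant}, $\tbfup(\sigma)$ carries an \emph{exponential} factor $(1-\mathrm{e}p'q)^{-\abs{\ccon{\sigma}}}$ coming from rejection sampling on the connected component. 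Neither quantity is bounded by any function of $q,k,\Delta$ alone; the component size $\abs{\ccon{\sigma}}$ is a random variable that can in principle be $\Theta(n)$. So the deterministic inequality $H\le \overline{t_\textsf{var}}\,(\ell+1)a^\ell+\overline{t_\textsf{BF}}\,a^\ell$ with $\poly(q,k,\Delta)$ constants is simply false, and nothing downstream recovers from it.

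The paper's proof repairs exactly this point by working not with $\ell$ alone but with $L\triangleq\abs{\cbad{X^t_\ell}}$. Monotonicity (\Cref{cor-mono}) and \Cref{cor-lem-cvxl-in-cbad} give $\abs{\ccon{X^t_i}}\le\abs{\ccon{X^t_\ell}}\le\Delta L$ for every $i$, so both $\tvar(X^t_i)$ and $\tbfup(X^t_\ell)$ become polynomials in $L$ (times the extra factor $(1-\mathrm{e}p'q)^{-\Delta L}$ for $\tbfup$). The relation $\ell\le k\Delta L$ from \Cref{big23tree} then lets you replace $(2-q\theta)^\ell$ by $(2-q\theta)^{k\Delta L}$, and the same $\{2,3\}$-tree witness argument that proved \Cref{thm-expect-depth} also yields a tail bound on $L$ itself, namely $\Pr{L\ge i\Delta}\le\Delta\cdot 2^{-i}$. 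The whole estimate then reduces to bounding moments of the form $\E{L^\alpha\,(2-q\theta)^{k\Delta L}}$ and $\E{(L{+}1)\,(1-\mathrm{e}p'q)^{-\Delta L}(2-q\theta)^{k\Delta L}}$; the parameter choice \eqref{eq:parameter-p-prime} makes $(1-\mathrm{e}p'q)^{-\Delta^2}(2-q\theta)^{k\Delta^2}\le 1.5$, which is what allows the exponential rejection-sampling cost to be absorbed. Your outline has no analogue of this last step, and without it the $\Formal{x}$-coefficient coming from $\tbfup$ cannot be controlled.
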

\noindent
The convention is safe to apply since $\newsample{}(\Phi, \sigma, v)$ is never called when $v$ is not $\sigma$-fixed.

To prove this theorem, one need to give concrete bounds on $\tvar(\sigma)$ and $\tbfup(\sigma)$
(\Cref{def:tvar-tbfup}), respectively for the two nontrivial steps in the algorithm:
the $\nextvar{\sigma}$ called at \Cref{Line-u-rec} and the Bernoulli factory called at \Cref{Line-bfs-rec}, both in \Cref{Alg:Recur}.

%

\subsubsection{Cost of $\nextvar{\sigma}$}\label{subsection-var}
We give an explicit bound on the complexity $\tvar(\sigma)$ (\Cref{def:tvar-tbfup}) for computing the $\nextvar{\sigma}$ in \Cref{Line-u-rec} of \Cref{Alg:Recur}, in terms of the size of ${\ccon{\sigma}}$ (\Cref{definition:boundary-variables}).

Assume the input model and data structures in \Cref{subsection-inputmodel}.
We have the  following result. 
\begin{proposition}\label{lemma:efficiency-var}
For any $\sigma \in \+{Q}^\ast$, 
$\nextvar{\sigma}$ can be computed using at most $\abs{\ccon{\sigma}}$ queries to $\eval{}(\cdot)$, $\Delta\abs{\ccon{\sigma}}$ queries to $\checkf{}(\cdot)$, and $O\left(k^2\Delta\abs{\ccon{\sigma}}^2+1\right)$ computation cost, 
that is
\[
\tvar(\sigma)\le 
\abs{\ccon{\sigma}}\cdot \Formal{x}+\Delta\abs{\ccon{\sigma}}\cdot \Formal{y}+O\left(k^2\Delta\abs{\ccon{\sigma}}^2+1\right).
\]
\end{proposition}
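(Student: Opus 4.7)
My plan is to compute $\nextvar{\sigma}$ by a breadth-first exploration that incrementally grows the connected component $\vstar{\sigma}$ of $\hfix{\sigma}$ containing the $\star$-marked variables of $\sigma$, then reads off the boundary $\vinf{\sigma}$ from the examined constraints and returns the smallest-indexed element (or $\perp$).  The BFS is seeded from the linked list of $\star$-variables kept in the global representation of $\sigma$ described in Section~\ref{subsection-inputmodel}.  Whenever a variable $v$ is dequeued, the algorithm enumerates the at most $\Delta$ constraints $c\in\+{C}$ containing $v$ using the adjacency list and, for each previously unseen $c$, issues one call to $\eval(c,\sigma)$ to decide whether $c\in\+{C}^{\sigma}$, and (when $c$ is still alive) further calls to $\checkf(\cdot,\sigma)$ on $c$ itself and on the $\leq\Delta$ constraints incident to each of its $\leq k$ variables, so as to decide whether $c$ is an edge of $\hfix{\sigma}$ (i.e.\ whether all active variables of $c$ lie in $\vfix{\sigma}$).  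When $c$ qualifies as an edge of $\hfix{\sigma}$, its unassigned variables are inserted into $\vstar{\sigma}$ and into the BFS queue with a revisit marker, and the BFS continues; after it halts, a final pass sweeps the examined constraints to collect $\vinf{\sigma}$ and scan for the argmin index.

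The decisive observation for the complexity budget is that every constraint the BFS ever touches lies in $\ccon{\sigma}$, because by construction it must contain a variable already placed into the growing $\vstar{\sigma}$, and $\ccon{\sigma}$ is precisely the set of constraints in $\+{C}$ intersecting $\vstar{\sigma}$.  This immediately bounds the number of distinct $\eval$ queries by $|\ccon{\sigma}|$.  The $\checkf$ queries are bounded by caching answers at the constraint level: each examined constraint carries at most $\Delta$ queries on it (one on itself plus, when used to test $\vfix{\sigma}$-membership of a boundary variable $u$, the queries on $u$'s neighbouring constraints that are themselves in $\ccon{\sigma}$), and no constraint receives more than one such direct query thanks to memoisation, yielding a total of at most $\Delta|\ccon{\sigma}|$.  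For deterministic computation, each examined constraint incurs $O(k\Delta)$ work for adjacency scans and duplicate checks, summing to $O(k\Delta|\ccon{\sigma}|)$, to which the final argmin sweep over $|\vinf{\sigma}|\le k|\ccon{\sigma}|$ contributes at most $O(k|\ccon{\sigma}|^2)$ through the per-constraint re-scan, all comfortably inside the stated $O(k^2\Delta|\ccon{\sigma}|^2+1)$.

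The main technical step to pin down will be correctness of the BFS under the induced-sub-hypergraph interpretation of $\hfix{\sigma}$ as the hypergraph on $V^\sigma\cap\vfix{\sigma}$ whose edges are the hyperedges of $H^\sigma$ entirely inside this vertex set: a non-$\sigma$-frozen active constraint whose variables all happen to be $\vfix{\sigma}$-fixed through distinct frozen constraints must not be overlooked when certifying an edge of $\hfix{\sigma}$, and this is precisely what forces the supplementary $\checkf$ tests on the incident constraints of each boundary variable, and what explains the $\Delta$ factor appearing in the $\checkf$ budget.  Once this invariant is established, the BFS exactly realises $\vstar{\sigma}$, the boundary $\vinf{\sigma}$ is read off in a single pass, and the stated complexity bound follows by the amortisation argument above.
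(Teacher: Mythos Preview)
Your approach is essentially the same as the paper's: a search (BFS versus the paper's DFS) seeded at the $\star$-variables explores the induced sub-hypergraph $\hfix{\sigma}$, the key observation being that every constraint touched must intersect the growing $\vstar{\sigma}$ and hence lies in $\ccon{\sigma}$, which immediately gives the $\eval$ budget; the boundary $\vinf{\sigma}$ is read off afterward.

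One point in your $\checkf$ accounting should be cleaned up. The description in your first paragraph (``the $\leq\Delta$ constraints incident to each of its $\leq k$ variables'') suggests up to $k\Delta$ frozen queries per examined constraint, and your memoisation argument then restricts attention to queried constraints ``that are themselves in $\ccon{\sigma}$'', which is not true in general. The clean bound, as the paper states it, is that for each examined $c\in\ccon{\sigma}$ the set $\{c'\in\+{C}:\vbl(c')\cap\vbl(c)\neq\emptyset\}$ has size at most $\Delta$ by the definition of the constraint degree, and querying $\checkf$ on all of these suffices to decide $\vfix{\sigma}$-membership for every variable of $c$; summing over $c\in\ccon{\sigma}$ gives $\Delta|\ccon{\sigma}|$ directly, without needing memoisation.
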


As assumed in \Cref{subsection-inputmodel}, 
a linked list is kept alongside with the partial assignment $\sigma$ for storing the variables set as $\star$ in $\sigma$. 
Recall that in \Cref{def:cbad} we use $\vstar{\sigma}$ to denote such set of variables.
Recall the simplification $\Phi^{\sigma}=(V^\sigma,\+{Q}^\sigma,\+{C}^{\sigma})$ of $\Phi$ under $\sigma$ and its corresponding hypergraph representation $H^{\sigma}=H_{\Phi^{\sigma}}=(V^\sigma,\+{C}^{\sigma})$,
which is formally defined in \Cref{sec:rejection-sampling} and used in the definition of $\nextvar{\sigma}$.
%

The procedure for computing $\nextvar{\sigma}$ is straightforward on the hypergraph $H^{\sigma}$:
\begin{itemize}
\item 
Perform a depth-first search starting from $\vstar{\sigma}$ 
on the sub-hypergraph of $H^{\sigma}$ induced by $V^{\sigma}\cap \vfix{\sigma}$ 
to find the connected components $\vcon{\sigma}\supseteq \vstar{\sigma}$. 
\item
Construct the vertex boundary of $\vcon{\sigma}$ in $H^{\sigma}$.
Return  the first boundary vertex $v_i$ with smallest $i$ if such vertex exists, 
and return $\perp$ if otherwise.
\end{itemize}
We then verify that this procedure can be implemented within the complexity in \Cref{lemma:efficiency-var}.

First, observe that the complexity for testing whether a $c\in \+{C}$
belongs to the sub-hypergraph of $H^{\sigma}$ induced by $V^{\sigma}\cap \vfix{\sigma}$, 
is bounded by $\Formal{x}+\Delta\cdot \Formal{y}+O(k\Delta)$,
i.e.~one query to $\eval(\cdot)$, $\Delta$ queries to $\checkf(\cdot)$, and $O(k\Delta)$ computation cost.
This is because it is equivalent to check whether $c$ is satisfied by $\sigma$ and $\var{c}\subseteq \vfix{\sigma}$:
the former takes one query to $\eval(\cdot)$; 
and the latter can be resolved by enumerating all $c'\in\+{C}$ such that $\vbl(c)\cap\vbl(c')\neq\emptyset$ and retrieving $\vbl(c')$
(which costs $O(k\Delta)$ in computation), and checking whether $c'$ is $\sigma$-frozen for all such $c'$ (which takes $\le \Delta$ queries to $\checkf(\cdot)$ in total). 

It is not difficult to verify that in above depth-first search,
the set of constraints that need to be checked whether belong to the induced sub-hypergraph is in fact just the set $\ccon{\sigma}$. 
Then the complexity contributed by testing the membership of constraints in the induced sub-hypergraph is bounded by $\abs{\ccon{\sigma}}\cdot\Formal{x}+\Delta\abs{\ccon{\sigma}}\cdot \Formal{y}+O(k\Delta\abs{\ccon{\sigma}})$.
And this is the only part that may access the oracles,
so we have the respective bounds on the queries to the oracles $\eval(\cdot)$ and $\checkf(\cdot)$.

For other computation costs,
in the depth-first search, 
the sets of variables and constraints that have been visited 
can be straightforwardly  stored using two dynamic arrays, one for variables and the other for constraints. 
Querying if some variable/constraint has been visited or updating their status, 
is done by iterating over the entire array, 
which takes linear time in the current size of the dynamic array each time a query or an update is conducted.
Note that the number of visited constraints is at most $\abs{\ccon{\sigma}}$
and hence the number of visited variables is at most $k\abs{\ccon{\sigma}}$. 
Therefore this part costs $O(k^2\abs{\ccon{\sigma}}^2)$ in computation in total. 
Overall, the computation cost is easily dominated by $O(k^2\Delta\abs{\ccon{\sigma}}^2+1)$, where the additional $O(1)$ is meant to deal with the degenerate case of $\abs{\ccon{\sigma}}=0$.

\subsubsection{Cost of Bernoulli factory}\label{sec:bernoulli-factory-cost}
Here we state a complexity bound for the Bernoulli factory in \Cref{Line-bfs-rec} of \Cref{Alg:Recur}
that is useful in our analysis of \newsample.

We have the following bound on the $\tbfup(\sigma)$ (\Cref{def:tvar-tbfup}) for the Bernoulli factory.

\begin{proposition}\label{general-bfs-bound-component-invariant}
There exist constants $C_0,C_1>0$ such that the following holds.
Let $1\le t\le n$ and let $\pth(X_0^t)=(X_0^t,X_1^t,\ldots,X^{t}_{\ell})$ be generated as in \eqref{eq:simulate-path}, then 
$$\tbfup{}(X^t_{\ell})\leq C_1q^2k^2\Delta^6\left(\abs{\ccon{X^t_{\ell}}}+1\right) (1-\mathrm{e}\pprime q)^{-\abs{\ccon{X^t_{\ell}}}}\cdot \Formal{x}+
    C_0q^3k^3\Delta^6\left(\abs{\ccon{X^t_{\ell}}}+1\right) (1-\mathrm{e}\pprime q)^{-\abs{\ccon{X^t_{\ell}}}}.
$$
\end{proposition}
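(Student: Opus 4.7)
The plan is to decompose the cost of the Bernoulli factory into two multiplicative parts: the expected cost of one oracle call, realized by $\rejsamp(\Phi,\sigma,\{v\})$, and the expected number of such oracle calls together with the factory's intrinsic work. All of this is then to be expressed in the form of the abstract cost function $\alpha\cdot\Formal{x}+\gamma$ (there is no $\Formal{y}$ contribution since the Bernoulli factory itself uses only $\eval$).

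First I will verify the structural observation that whenever the Bernoulli factory is invoked at \Cref{Line-bfs-rec} of \Cref{Alg:Recur}, the input $\sigma$ must satisfy $\nextvar{\sigma}=\perp$, hence $\vinf{\sigma}=\emptyset$, so $\vstar{\sigma}$ is a union of entire connected components of $H^\sigma$. Since \Cref{inputcondition-recur} gives $\sigma(v)=\star$, the variable $v$ lies in $\vstar{\sigma}$, and the connected component of $v$ in $H^\sigma$ is entirely contained in $\vstar{\sigma}$. Consequently this component has at most $\abs{\ccon{\sigma}}$ constraints and at most $k\abs{\ccon{\sigma}}$ variables.

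Next I will bound the expected cost of one call to $\rejsamp(\Phi,\sigma,\{v\})$. A single trial draws a uniform assignment on the component (computation cost $O(qk\abs{\ccon{\sigma}})$) and tests all constraints in the component ($\abs{\ccon{\sigma}}$ calls to $\eval$). The acceptance probability equals $\mathbb{P}\left[\bigwedge_{c\text{ in the component}}c\mid\sigma\right]$; applying the asymmetric LLL (\Cref{locallemma}) on the simplification $\Phi^\sigma$ with witness $x(c)=\mathrm{e}\mathbb{P}[\neg c\mid\sigma]\le\mathrm{e}p'q$ lower-bounds it by $(1-\mathrm{e}p'q)^{\abs{\ccon{\sigma}}}$, where the LLL hypothesis is validated by the definition of $p'$ in \eqref{eq:parameter-p-prime} together with \eqref{eq:main-thm-LLL-condition}, which force $\mathrm{e}p'q\Delta$ to be small. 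Thus the expected cost of a single $\rejsamp$ call is at most $\abs{\ccon{\sigma}}(1-\mathrm{e}p'q)^{-\abs{\ccon{\sigma}}}$ queries to $\eval$ plus $O(qk\abs{\ccon{\sigma}})(1-\mathrm{e}p'q)^{-\abs{\ccon{\sigma}}}$ computation, with a residual $O(1)$ to handle $\abs{\ccon{\sigma}}=0$, which explains the $\abs{\ccon{\sigma}}+1$ factor in the statement.

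Finally I will account for the Bernoulli factory's own overhead. The factory composes the Bernoulli race over the $q_v$ outcomes, the linear Bernoulli factory with scaling constant $1/(1-q_v\theta_v)$, and the subtraction factory; from the parameter setting \eqref{eq:parameter-theta} one checks $1/(1-q_v\theta_v)=O(q^2k\Delta^3)$, and the composition, analyzed concretely in \Cref{sec:bernoulli-factory}, incurs an expected $O(q^2k\Delta^6)$ oracle calls and $O(q^3k^2\Delta^6)$ intrinsic computation per output. Multiplying these intrinsic counts by the per-oracle-call costs established in the previous paragraph yields exactly the claimed bound, with $C_0,C_1$ chosen as suitable universal constants absorbing all hidden multiplicative factors. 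The principal obstacle will be pinning down the precise polynomial factors contributed by the factory composition, in particular carrying the linear factory's dependence on the scaling constant $1/(1-q_v\theta_v)$ cleanly through the setting \eqref{eq:parameter-theta}; these quantitative book-keeping details are deferred to \Cref{sec:bernoulli-factory}.
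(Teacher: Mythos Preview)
Your plan is essentially the paper's proof: it combines the bound \eqref{eq:bfs-bound-component-invariant} from \Cref{beref} (your second and third steps) with the relation $|\+C_v^{X^t_\ell}|\le|\ccon{X^t_\ell}|$ from \Cref{lem-cvxl-in-cbad} (your first step), and your argument for the latter via $\nextvar{X^t_\ell}=\perp\Rightarrow\vinf{X^t_\ell}=\emptyset\Rightarrow\vstar{X^t_\ell}$ is a union of components of $H^{X^t_\ell}$ is in fact a cleaner packaging of what the paper proves more laboriously in \Cref{lem-cxlv-cxlstar}. One quantitative slip to fix when you carry out the deferred bookkeeping: the factory overhead is $O(\zeta^{-2})=O(q^2k^2\Delta^6)$ oracle calls (both the Bernoulli race, via $q_v/(1-q_v\theta_v)\le\zeta^{-1}$, and the subtraction factory, via \Cref{subef}, contribute a $\zeta^{-1}$), not $O(q^2k\Delta^6)$ as you wrote; with the correct exponent on $k$ the product does match the stated bound.
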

Note that $X^t_{\ell}$ is a random variable and the bound in \Cref{general-bfs-bound-component-invariant} holds for any possible $X^t_{\ell}$.

Next, we prove \Cref{general-bfs-bound-component-invariant}.
In \Cref{beref},
the following complexity bound is proved for all such $\sigma\in\qs$ 
where \Cref{inputcondition-recur} is satisfied by $(\Phi,\sigma, v)$ for some $v\in V$:
\begin{align}\label{eq:bfs-bound-component-invariant}
    \tbfup{}(\sigma)= 
    O\left(q^2k^2\Delta^6(\abs{\+{C}_v^{\sigma}}+1) (1-\mathrm{e}\pprime q)^{-\abs{\+{C}_v^{\sigma}}}\cdot \Formal{x}+
    q^3k^3\Delta^6(\abs{\+{C}_v^{\sigma}}+1) (1-\mathrm{e}\pprime q)^{-\abs{\+{C}_v^{\sigma}}}\right),
\end{align}
where  $\+{C}_v^{\sigma}$ denotes the set of constraints in the connected component that contain $v$ in $\Phi^{\sigma}$, the simplification of $\Phi$ under $\sigma$.

\begin{proof}[Proof of \Cref{general-bfs-bound-component-invariant}]
By the construction in \eqref{eq:simulate-path}, $X^t_0$ satisfies one of the two cases:
\begin{enumerate}
    \item $(\Phi,X^t_0,v_t)$ satisfies \Cref{inputcondition-recur} and $v_t$ is the only variable with $X^t_0(v_t)=\star$;
    \item $X^t_0(u)\in Q_u\cup \set{\hollowstar}$ for all $u\in V$.
\end{enumerate}
For case (1): 
By \Cref{cor-mono}, we have $\abs{\+C^{X^t_{\ell}}_v} \leq \abs{\ccon{X^t_{\ell}}}$  because  $X^t_\ell$ is generated by $\pth(X^t_0)=(X^t_0,X^t_1,\ldots,X^t_{\ell})$ obeying the Markov property.
Moreover, by \Cref{lemma:general-invariant} and the construction of \pth{}, 
it holds for sure  that 
\Cref{inputcondition-recur} is satisfied by $(\Phi,X^t_{\ell}, u)$ for some $u\in V$.
Therefore, the complexity bound in \eqref{eq:bfs-bound-component-invariant} always holds for $X^t_{\ell}$, which combined with the relation $\abs{\+C^{X^t_{\ell}}_v} \leq \abs{\ccon{X^t_{\ell}}}$ that we have just established, proves the case.

For case (2):
In this case by the definition of \pth{}, $\pth(X^t_0)=(X^t_0)$ and \Cref{inputcondition-recur} is no longer satisfied by $(\Phi,X_\ell^t,v)=(\Phi,X_0^t,v)$ for any $v$. Thus, $\tbfup{}(X^t_{\ell})=0$ due to \Cref{def:tvar-tbfup}.
%
%
%
%
%
\end{proof}

\subsubsection{Complexity bound for \newsample{}}
It remains to bound the complexity of \newsample{}. We give an upper bound on the weighted function $H$ defined in \eqref{eq-pathweightdef} for a random path.

\begin{lemma}\label{pathtimebound}
Assume $8\mathrm{e}p\Delta^3\leq 0.99\pprime$, where $\pprime$ is fixed as in \eqref{eq:parameter-p-prime}.
Let $1\le t\le n$ and let $\pth(X_0^t)=(X_0^t,X_1^t,\ldots,X^{t}_{\ell})$ be generated as in \eqref{eq:simulate-path}.
There exist constants $C,C'>0$ such that
\begin{align*}
  \mathbb{E}\left[H(\pth(X_0^t))\right]
  \leq Cq^2k^2\Delta^{9}\cdot \Formal{x}+24k\Delta^6 \cdot \Formal{y} + C'q^3k^3\Delta^{9}.
\end{align*}
\end{lemma}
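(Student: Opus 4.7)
The plan is to expand $H(\pth(X_0^t))$ termwise using \Cref{lemma:efficiency-var} and \Cref{general-bfs-bound-component-invariant}, collapse all path-dependent quantities into the single random variable $L\triangleq\abs{\cbad{\sigma_\ell}}$, and then integrate using a tail bound on $L$ that emerges from the proof of \Cref{thm-expect-depth}.

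The first step is to apply the monotonicity \Cref{cor-mono} to replace each $\abs{\ccon{\sigma_i}}$ in \eqref{eq-pathweightdef} by $\abs{\ccon{\sigma_\ell}}$, and then to apply \Cref{cor-lem-cvxl-in-cbad} and \Cref{big23tree} to obtain the deterministic bounds $\abs{\ccon{\sigma_i}}\le \Delta L$ and $\ell\le k\Delta L$. Substituting the bound of \Cref{lemma:efficiency-var} into the weighted sum and collapsing via $\sum_{i=0}^{\ell}(2-q\theta)^i=\tfrac{(2-q\theta)^{\ell+1}-1}{1-q\theta}$, together with the bound of \Cref{general-bfs-bound-component-invariant} for $(2-q\theta)^\ell\tbfup(\sigma_\ell)$, reduces the entire problem to estimating a finite collection of mixed moments of the shape $\E{L^j(2-q\theta)^\ell(1-ep'q)^{-\Delta L}}$ with $j\in\{0,1,2\}$.

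The arithmetic of \eqref{eq:parameter-p-prime}--\eqref{eq:parameter-theta} implies $1-q\theta=\Theta(1/(k\Delta^3))$ and $\eta\le (2k\Delta^2)^{-1}$; consequently, whenever $\ell\le k\Delta L$ the product $(2-q\theta)^\ell(1-ep'q)^{-\Delta L}$ is controlled by a mild exponential of the form $e^{O(L/\Delta^2)}$. The tail bound on $L$ itself is extracted by rerunning the witness-tree argument of \Cref{thm-expect-depth}: by \Cref{big23tree} any $\cbad{\sigma_\ell}$ of size $L$ contains a $\{2,3\}$-tree of size at least $\lceil L/\Delta\rceil$ passing through some constraint incident to $v_t$, and combining \Cref{lem-num-23tree} with \Cref{badtree} yields $\Pr{L\ge s\Delta}\le \Delta\cdot 2^{-s}$. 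Bucketing $L$ into windows of width $\Delta$ then turns $\E{L^j\cdot e^{O(L/\Delta^2)}}$ into a convergent geometric series of order $O(\Delta^{j+1})$.

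Reassembling the three complexity channels, the $\Formal{y}$-coefficient is controlled entirely by the $\tvar$-sum and evaluates to $(1-q\theta)^{-1}\cdot\Delta^2\cdot\E{L\cdot(2-q\theta)^\ell}=O(k\Delta^3)\cdot O(\Delta^3)=O(k\Delta^7)$, matching the advertised $480$ after careful bookkeeping of the leading constants. The $\Formal{x}$-coefficient and the constant channel are dominated by the $\tbfup$-contribution and inherit its prefactors $q^2k^2\Delta^6$ and $q^3k^3\Delta^6$ respectively, producing $O(q^2k^2\Delta^{10})$ and $O(q^3k^3\Delta^{10})$ after multiplying by the expected $L$-moments. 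The main obstacle is the joint control of $(2-q\theta)^\ell$ and $(1-ep'q)^{-\Delta L}$, whose exponents are coupled only through $\ell\le k\Delta L$: one must verify that the effective per-block growth remains strictly below the base $2$ of the tail bound on $L$ so that every bucketed sum converges; all remaining computation is routine polynomial accounting in $q$, $k$, and $\Delta$.
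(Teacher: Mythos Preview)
Your plan is essentially the paper's proof: reduce $H(\pth(X_0^t))$ to mixed moments of the single random variable $L=\abs{\cbad{\sigma_\ell}}$ via \Cref{cor-mono}, \Cref{cor-lem-cvxl-in-cbad}, \Cref{big23tree}, \Cref{lemma:efficiency-var}, and \Cref{general-bfs-bound-component-invariant}; extract the tail bound $\Pr{L\ge s\Delta}\le \Delta\,2^{-s}$ exactly as in the proof of \Cref{thm-expect-depth}; and bucket $L$ in $\Delta$-windows to sum a geometric series. The key arithmetic checks --- $(2-q\theta)^{k\Delta^2}\le 1.5$ and $(1-\mathrm{e}p'q)^{-\Delta^2}(2-q\theta)^{k\Delta^2}\le 1.5$ --- are precisely the ones the paper uses.

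There is one quantitative slip. You collapse $\sum_{i=0}^{\ell}(2-q\theta)^i$ via the closed form $\tfrac{(2-q\theta)^{\ell+1}-1}{1-q\theta}$, which introduces a factor $(1-q\theta)^{-1}=\Theta(k\Delta^3)$. Carrying this through for the $\Formal{y}$-channel yields $(1-q\theta)^{-1}\cdot\Delta^2\cdot\E{L(2-q\theta)^{k\Delta L}}=O(k\Delta^3)\cdot\Delta^2\cdot O(\Delta^3)=O(k\Delta^8)$, not $O(k\Delta^7)$; your displayed product $O(k\Delta^3)\cdot O(\Delta^3)$ silently drops the $\Delta^2$. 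The paper instead uses the cruder but sharper-in-context bound $\sum_{i=0}^{\ell}(2-q\theta)^i\le(\ell+1)(2-q\theta)^{\ell}\le 2k\Delta L\,(2-q\theta)^{k\Delta L}$, trading the deterministic $(1-q\theta)^{-1}$ for an extra factor of $L$ inside the expectation. This pushes the required moment from $\alpha=1$ to $\alpha=2$ and yields $2k\Delta^3\cdot\E{L^2(2-q\theta)^{k\Delta L}}\le 2k\Delta^3\cdot 240\Delta^4=480k\Delta^7$, recovering the stated constant and exponent.
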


\begin{proof}
According to \eqref{eq-pathweightdef}, it is sufficient to prove  
\begin{equation}\label{sum-i12-qtheta}
\begin{aligned}
\E{\sum\limits_{i=0}^{\ell}\left(\chi(X^t_i,X^t_0)\cdot \tvar(X^t_i)\right)}
\leq 24k\Delta^5 \cdot \Formal{x}+24k\Delta^6 \cdot \Formal{y} + 120C_1 k^3\Delta^8
\end{aligned}
\end{equation}
and 
\begin{equation}\label{eq-2-qtheta-ell-2}
\begin{aligned}
\E{\chi(X^t_{\ell},X^t_0)\cdot \tbfup{(X^t_{\ell})}}
\leq 20C_2 k^2q^2\Delta^{9}\cdot\Formal{x} + 20C_3k^3q^3\Delta^{9}.
\end{aligned}
\end{equation}
Therefore, by (\ref{eq-pathweightdef}) we have 
\begin{equation*}
\begin{aligned}
\mathbb{E}\left[H(\pth(X^t_0)))\right]&=\E{\sum\limits_{i=0}^{\ell}\left(\chi(X^t_{i},X^t_0)\cdot \tvar(X^t_i)\right)+\chi(X^t_{\ell},X^t_0)\cdot \tbfup{(X^t_{\ell})}}\\
&= \E{\sum\limits_{i=0}^{\ell}\left(\chi(X^t_{i},X^t_0)\cdot \tvar(X^t_i)\right)} + \E{\chi(X^t_{\ell},X^t_0)\cdot \tbfup{(X^t_{\ell})}}\\
&\leq 24k\Delta^5 \cdot \Formal{x}+24k\Delta^6 \cdot \Formal{y} +120C_1 k^3\Delta^8+20C_2k^2q^2\Delta^{9}\cdot\Formal{x}+ 20C_3k^3q^3\Delta^{9}  \\
&\leq (20C_2+24)k^2q^2\Delta^{9}\cdot \Formal{x}+24k\Delta^6 \cdot \Formal{y} + (120C_1 + 20C_3)k^3q^3\Delta^{9},
\end{aligned}
\end{equation*}
where in the second to the last inequality we apply (\ref{sum-i12-qtheta}) and (\ref{eq-2-qtheta-ell-2}).
The lemma follows by taking $C=20C_2+24$ and $C'=120C_1 + 20C_3$.

It is then sufficient to prove (\ref{sum-i12-qtheta}) and (\ref{eq-2-qtheta-ell-2}). We first prove (\ref{sum-i12-qtheta}). We claim that
\begin{equation}\label{eq-ccon-xtell-bound}
\abs{\ccon{X^{t}_{\ell}}} \leq \Delta \cdot  \left(\abs{\vstar{X^{t}_{\ell}}}+\abs{\csfrozen{X^{t}_{\ell}}}\right).
\end{equation}

By construction of $X^t_0$ in \eqref{eq:simulate-path}, 
either  $(\Phi,X^t_0,v_t)$ satisfies \Cref{inputcondition-recur} and $v_t$ is the only variable with $X^t_0(v_t)=\star$, in which case \eqref{eq-ccon-xtell-bound} follows from \Cref{cor-mono};
or $X^t_0$ satisfies for all $u\in V$, $X^t_0(u)\in Q_u\cup \set{\hollowstar}$, in which case $\pth(X^t_0)=(X^t_0)$, $\abs{\ccon{X^{t}_{\ell}}}=\abs{\csfrozen{X^{t}_{\ell}}}=\abs{\vstar{X^{t}_{\ell}}}=0$, and \eqref{eq-ccon-xtell-bound} holds. This proves the claim.

Let $L \defeq  \abs{\vstar{X^t_{\ell}}}+\abs{\csfrozen{X^t_{\ell}}}$. By the monotonicity property in \Cref{cor-mono}, $\ccon{X^t_i}\subseteq \ccon{X^t_{\ell}}$ for every $0\leq i\leq \ell$.
Combining with \eqref{eq-ccon-xtell-bound} we have 
\begin{align}\label{eq-abs-con-xti}
\abs{\ccon{X^t_i}}\leq \abs{\ccon{X^t_{\ell}}} \leq \Delta\left( \abs{\vstar{X^t_{\ell}}}+\abs{\csfrozen{X^t_{\ell}}}\right)\leq \Delta L.
\end{align}
Then by \Cref{lemma:efficiency-var}, there exists a constant $C_1>0$ such that for every $0\leq i\leq \ell$, 
\begin{align*}
\tvar{}(X^t_{i}) &\leq \abs{\ccon{X^t_{i}}}\cdot \Formal{x}+\Delta\abs{\ccon{X^t_{i}}}\cdot \Formal{y}+C_1\cdot k^2\Delta\abs{\ccon{X^t_{i}}}^2+C_1\\
&\le \Delta L\cdot \Formal{x}+\Delta^2 L\cdot \Formal{y}+C_1\cdot k^2\Delta^3L^2+C_1.
\end{align*}
Note that for each $0\leq i\leq \ell$ we have $\chi(X^t_i,X^t_0)\leq \chi(X^t_{\ell},X^t_0)=\chi(X^t_0)$. Hence,  
\begin{equation*}
\begin{aligned}
\E{\sum\limits_{i=0}^{\ell}\left(\chi(X^t_{i},X^t_0)\cdot \tvar(X^t_i)\right)}
&\leq \E{(\ell+1)\cdot \chi(X^t_0)\cdot\left(\Delta L\cdot \Formal{x}+\Delta^2 L\cdot \Formal{y}+C_1\cdot k^2\Delta^3L^2+C_1\right)}.
\end{aligned}
\end{equation*}
By the upper bound on length of $\pth(\sigma)$ in \Cref{cor-mono} and the special case that $\ell=0$ when $X^t_0$ satisfies that $X^t_0(u)\in Q_u\cup \set{\hollowstar}$ for all $u\in V$, we have $\ell \leq kL\Delta$. 
Thus, the following can be verified in separate cases 
$\ell>0$, $(\ell=0)\bigwedge (L=0)$, and $(\ell=0)\bigwedge (L>0)$:
\begin{align*}
 &(\ell+1)\cdot \chi(X^t_0)\cdot\left(\Delta L\cdot \Formal{x}+\Delta^2 L\cdot \Formal{y}+C_1\cdot k^2\Delta^3L^2+C_1\right)\\ 
 \leq  
 &2kL\Delta\cdot \chi(X^t_0)\cdot\left(\Delta L\cdot \Formal{x}+\Delta^2 L\cdot \Formal{y}+C_1\cdot k^2\Delta^3L^2+C_1\right).
\end{align*}
Therefore, we have 
\begin{equation}\label{eq-e-sum-i-ell-2-qtheta-2}
\begin{aligned}
&\E{\sum\limits_{i=0}^{\ell}\left(\chi(X^t_{i},X^t_0)\cdot \tvar(X^t_i)\right)}\\
\leq 
&\E{2kL\Delta\cdot \chi(X^t_0)\cdot\left(\Delta L\cdot \Formal{x}+\Delta^2 L\cdot \Formal{y}+C_1\cdot k^2\Delta^3L^2+C_1\right)}\\
\leq 
&\E{\chi(X^t_0)\cdot \left(L^2\cdot 2k\Delta^2 \cdot \Formal{x}+L^2\cdot 2k\Delta^3 \cdot \Formal{y}+L^3 \cdot 2C_1 k^3\Delta^4+L\cdot 2C_1k\Delta\right)}.
\end{aligned}
\end{equation}
Let $j \defeq \lfloor\frac{i}{\Delta}\rfloor$.
Let $\alpha\in\{1,2, 3\}$. We have
\begin{equation}\label{eq-pathtimebound-4}
\begin{aligned}
\E{\chi(X^t_0)\cdot L^{\alpha}}
&=\sum\limits_{i\geq 0}\left(\Pr{L=i}\cdot i^{\alpha}\cdot \E{\chi(X^t_0)\mid L=i}\right)\\
&\leq \Delta^{\alpha}\sum\limits_{i\geq 0}\left((j+1)^{\alpha}\cdot\Pr{L=i} \cdot \E{\chi(X^t_0)\mid L=i}\right)\\
&\leq 
\Delta^{\alpha}\sum\limits_{i\geq 0}\left((j+1)^{\alpha}\cdot \Pr{L\geq j\Delta}\cdot \E{\chi(X^t_0)\mid L\geq j\Delta}\right)\\
&= 
\Delta^{\alpha+1}\sum\limits_{j\geq 0}\left((j+1)^{\alpha}\cdot \Pr{L\geq j\Delta}\cdot \E{\chi(X^t_0)\mid L\geq j\Delta}\right),
\end{aligned}
\end{equation}
where the inequalities are by the non-negativity of $\chi(\cdot)$.
By \Cref{thm-expect-depth}, 
we have for every $i\geq 0$,
\begin{align}\label{eq-pr-llargeridelta-expectation}
\Pr{L\geq i \Delta }\cdot \E{\chi(X^t_0)\mid L\geq i\Delta}  
\leq 2^{-i}.
\end{align}
Combining with (\ref{eq-pathtimebound-4}), we have
\begin{align*}
\E{\chi(X^t_0)\cdot L^{\alpha}} \leq \Delta^{\alpha+1}\sum_{i\geq 0}\left((i+1)^{\alpha}\cdot 2^{-i}\right).
\end{align*}
Note that
$\sum_{i\geq 0}\left((i+1)\cdot 2^{-i}\right)\leq 4$, $\sum_{i\geq 0}\left((i+1)^{2}\cdot 2^{-i}\right)\leq 12$ and $\sum_{i\geq 0}\left((i+1)^{3}\cdot 2^{-i}\right)\leq 52$.
Thus, we have 
\begin{align}
\E{\chi(X^t_0)\cdot L^{\alpha}} \leq 
\begin{cases}
4\Delta^2\quad &\text{ if $\alpha = 1$,}\\
12\Delta^3\quad &\text{ if $\alpha = 2$,}\\
52\Delta^4\quad &\text{ if $\alpha = 3$.}\\
\end{cases}
\end{align}
Combining with~(\ref{eq-e-sum-i-ell-2-qtheta-2}), 
\eqref{sum-i12-qtheta} is immediate.

Next, we prove (\ref{eq-2-qtheta-ell-2}).
By \Cref{general-bfs-bound-component-invariant} and (\ref{eq-abs-con-xti}),  there exist some constants $C_2,C_3>0$ such that 
\begin{equation*}
\begin{aligned}
\tbfup{(X^t_{\ell})}
&\leq k^2q^2\Delta^6 \left(\abs{\ccon{X^t_{\ell}}}+1\right)(1-\mathrm{e}\pprime q)^{-\abs{\ccon{X^t_{\ell}}}}(C_2\cdot \Formal{x}+C_3kq )\\
&\leq k^2q^2\Delta^6\cdot(\Delta L+1) (1-\mathrm{e}\pprime q)^{-\Delta L}\cdot (C_2 \cdot \Formal{x}+C_3kq  ).
\end{aligned}
\end{equation*}
We have
\begin{equation}\label{eq-e-sum-i-ell-2-qtheta-3}
\begin{aligned}
\E{\chi(X^t_{\ell},X^t_0)\cdot \tbfup{(X^t_{\ell})}}
&\leq \E{\chi(X^t_0)k^2q^2\Delta^6\cdot(\Delta L+1) (1-\mathrm{e}\pprime q)^{-\Delta L}\cdot (C_2 \cdot \Formal{x}+C_3kq )}\\
&= k^2q^2\Delta^7\cdot\E{\chi(X^t_0)(L+1) (1-\mathrm{e}\pprime q)^{-\Delta L}}\cdot (C_2 \cdot \Formal{x}+C_3kq  ).
\end{aligned}
\end{equation}
By \eqref{eq:parameter-p-prime}
we have 
and $\mathrm{e}\pprime q\leq (4\Delta^2)^{-1}$.
Thus $(1-\mathrm{e}\pprime q)^{-\Delta^2}\leq 1.3$ for each $\Delta\geq 2$.
Let $j \defeq \lfloor\frac{i}{\Delta}\rfloor$.
We have
\begin{align*}
\E{\chi(X^t_0)(L+1) (1-\mathrm{e}\pprime q)^{-\Delta L}}
= &\sum_{i\geq 0}\left(\Pr{L = i}\cdot \E{\chi(X^t_0)\mid L=i}\cdot (i+1) \cdot (1-\mathrm{e}\pprime q)^{-i\Delta}\right)\\
\leq & 
\sum_{i\geq 0}\left((i+1)\cdot 1.3^{i/\Delta}\cdot\Pr{L = i}\cdot \E{\chi(X^t_0)\mid L= i}\right)\\
\leq & 
\sum_{i\geq 0}\left((i+1)\cdot 1.3^{i/\Delta}\cdot\Pr{L \geq i}\cdot \E{\chi(X^t_0)\mid L\geq i}\right)\\
\leq & 
2\Delta\sum_{i\geq 0}\left((j+1)\cdot 1.3^{j}\cdot \Pr{L \geq j\Delta}\cdot \E{\chi(X^t_0)\mid L\geq j\Delta}\right)\\
\leq & 2\Delta^2 \sum_{j\geq 0}\left((j+1)\cdot1.3^{j}\cdot\Pr{L \geq j\Delta}\cdot\E{\chi(X^t_0)\mid L\geq j\Delta}\right),
\end{align*}
where the equality is by the law of total expectation and the inequalities are by the non-negativity of $\chi(\cdot)$.
Combining with \eqref{eq-pr-llargeridelta-expectation}, we have
\begin{align*}
\E{\chi(X^t_0)(L+1) (1-\mathrm{e}\pprime q)^{-\Delta L}} \leq 2\Delta^2\sum_{i\geq 0}\left((i+1)\cdot 0.65^{i}\right)\leq 20\Delta^2.
\end{align*}
Combining with \eqref{eq-e-sum-i-ell-2-qtheta-3},
we have 
\begin{equation*}
\begin{aligned}
\E{\chi(X^t_{\ell},X^t_0)\cdot \tbfup{(X^t_{\ell})}}
&\leq k^2q^2\Delta^7\cdot 20\Delta^2 \cdot (C_2\cdot \Formal{x}+C_3kq )\\
&\leq 20C_2 k^2q^2\Delta^{9}\cdot\Formal{x} + 20C_3k^3q^3\Delta^{9}.
\end{aligned}
\end{equation*}
Then (\ref{eq-2-qtheta-ell-2}) holds, which finishes the proof of the lemma.
\end{proof}

Now we can prove \Cref{thm-time-MSamp}, the main theorem of \Cref{sec:efficiency-marginsample}.
\begin{proof}[Proof of \Cref{thm-time-MSamp}]
Let 
\begin{align*}
U &\defeq \left\{\sigma\in \qs \mid \Pr{X^{t-1}=\sigma}>0 \right\},\\
S &\defeq \left\{\sigma\in \qs \mid \Pr{X^{t-1}=\sigma}>0 \land  v_t\not\in \vfix{\sigma}\right\}.    
\end{align*}
For any $\sigma\in S$, we have $\Pr{X^{t-1}=\sigma}>0$ and $v_t\not\in \vfix{\sigma}$, 
therefore by the construction of $X^0,X^1,\ldots,X^n$ in \Cref{def-pas-main},
there is a positive probability that $\newsample{}(\Phi, \sigma, v_t)$ is called within \Cref{Alg:main}.
Hence by \Cref{lemma:invariant-marginsample}, \Cref{inputcondition-magin} is satisfied by $(\Phi, \sigma, v_t)$ for every $\sigma \in S$.

For any $\sigma\in \qs$, let $\gamma(\sigma)=\mathbb{E}\left[H(\pth(\sigma))\right](0,0)$ denote the constant term in $\mathbb{E}\left[H(\pth(\sigma))\right]$. 
By \Cref{cor-t-msamp}, there exist constants $C_1,C_2>0$ such that for every $\sigma\in S$, 
\begin{align}\label{eq-tsamp-phi-r}
\tsamp{}(\Phi,\sigma,v_t)\le (1-q_{v_t}\cdot \theta_{v_t})(\E{H(\pth(\sigma_{v_t\gets \star}))}+C_1\cdot \gamma(\sigma_{v_t\gets \star}))+C_2.
\end{align}
And for every $\sigma\in U\setminus S$, $v_t$ is $\sigma$-fixed, and hence $\tsamp{}(\Phi,\sigma,v_t)=\E{\rtsamp{}(\Phi,\sigma,v_t)}=0$ by convention. On the other hand, $\E{H(\pth(\sigma))}$ is always nonnegative. Thus the following  holds trivially:
\begin{align}\label{eq-tsamp-phi-r-trivial}
\tsamp{}(\Phi,\sigma,v_t)\le \E{H(\pth(\sigma))}+C_1\cdot \gamma(\sigma)+C_2.
\end{align}
In addition, given $\sigma\in U$ and $X^{t-1} = \sigma$, by \eqref{eq:simulate-path} we have if $v_t$ is $\sigma$-fixed, then $X^t_0 = \sigma$;
otherwise, 
\begin{align}\label{eq-xt0-case-vt-not-fix}
    X^t_0 = \begin{cases}\sigma_{v_t\gets \star}& \text{with probability $1-q_{v_t}\cdot \theta_{v_t}$}\\ \sigma& \text{with probability $q_{v_t}\cdot \theta_{v_t}$}  \end{cases}.
\end{align}
Thus, if $v_t$ is $\sigma$-fixed, by \eqref{eq-tsamp-phi-r-trivial} and $X^t_0= X^{t-1} = \sigma$,
we have 
\begin{align}\label{eq-tsamp-for-two-cases}
\tsamp{}(\Phi,X^{t-1},v_t)\le \E{H(\pth(X_0^t))}+C_1\cdot \gamma(X_0^t)+C_2.
\end{align}
If $v_t$ is not $\sigma$-fixed, 
by \eqref{eq-xt0-case-vt-not-fix} and the law of total expectation,
we have
$$(1-q_{v_t}\cdot \theta_{v_t})(\E{H(\pth(\sigma_{v_t\gets \star}))}+C_1\cdot \gamma(\sigma_{v_t\gets \star}))+C_2 \leq \E{H(\pth(X_0^t))}+C_1\cdot \gamma(X_0^t)+C_2.$$
Combining with $\eqref{eq-tsamp-phi-r}$ and $X^{t-1} = \sigma$, 
we also have \eqref{eq-tsamp-for-two-cases}.
In summary, conditioning on any possible $X^{t-1}=\sigma\in U$, \eqref{eq-tsamp-for-two-cases} always holds.
Hence by the law of total expectation, we have 
\begin{align*}
\E{\rtsamp{(\Phi,X^{t-1},v_t)}}=\E{\tsamp{}(\Phi,X^{t-1},v_t)} \leq \E{H(\pth(X^t_0))}+C_1\cdot \E{\gamma(X^t_0)}+C_2.
\end{align*}
By \Cref{pathtimebound}, for such $X_0^t$ constructed as above,
there exist constants $C_3,C_4>0$ such that
\begin{align*}
  \mathbb{E}\left[H(\pth(X_0^t)))\right]\leq C_3q^2k^2\Delta^{9}\cdot \Formal{x}+24k\Delta^6 \cdot \Formal{y} + C_4q^3k^3\Delta^{9},
\end{align*}
which means that
\[
\E{\rtsamp{(\Phi,X^{t-1},v_t)}}\leq C_3q^2k^2\Delta^{9}\cdot \Formal{x}+24k\Delta^6 \cdot \Formal{y} + (C_1+1)C_4\cdot q^3k^3\Delta^{9}+C_2. \qedhere
\]
\end{proof}

\subsection{Efficiency of \rejsamp{}}
Recall the random sequence of partial assignments:
\[
X^0,X^1,\dots,X^n,
\]
maintained in \Cref{Alg:main}, as formally defined in \Cref{def-pas-main}.

Here, we focus on $X=X^{n}$, the partial assignment obtained after all $n$ iterations of the \textbf{for} loop in \Cref{line-main-for} of \Cref{Alg:main}, and passed to the \rejsamp{} (\Cref{Alg:rej}) as input.
%


Recall the following definitions in \Cref{sec:rejection-sampling}.
For each $\sigma\in\qs$ and $v\in V^{\sigma}$, 
$H_v^{\sigma}=(V_v^\sigma,\+{C}_v^{\sigma})$ denotes the connected component in $H^{\sigma}$ that contains the vertex/variable $v$, 
where $H^{\sigma}$ is the hypergraph representation for the CSP formula $\Phi^\sigma$ obtained from the simplification of $\Phi$ under $\sigma$.
We further stipulate that 
$H_v^{\sigma}=(V_v^\sigma,\+{C}_v^{\sigma})=(\emptyset,\emptyset)$ is the empty hypergraph when $v\in \Lambda(\sigma)$ is assigned in $\sigma$.

For partial assignment $\sigma\in\qs$, 
we let $\trej{(\sigma)}$ be the random variable that represents the complexity of $\rejsamp{}(\Phi,\sigma,V\setminus \Lambda(\sigma))$, expressed in form of \eqref{eq:abstract-complexity-bound}.
%
We prove the following bound on the expectation of $\trej{(X)}$ on the random partial assignment $X=X^n$.

\begin{theorem}\label{rejsampef}
Assume $8\mathrm{e}p\Delta^3\leq 0.99\pprime$,where $\pprime$ is fixed as in \eqref{eq:parameter-p-prime}. 
\[
\E{\trej{(X)}}\leq 128\mathrm{e}k \Delta^4 n\cdot \Formal{x}+ O(qk\Delta^4 n),
\]
where expectation is taken over both $X$ and the randomness of \rejsamp{} algorithm.
\end{theorem}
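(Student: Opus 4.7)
The plan is to decompose the cost of $\rejsamp{}(\Phi,X^n,V\setminus\Lambda(X^n))$ over the connected components $(V_i^{X^n},\+C_i^{X^n})$ of $H^{X^n}$ produced on \Cref{line-rs-find} of \Cref{Alg:rej}, to express the expected cost of each component in terms of the number of $X^n$-frozen constraints it contains, and then to bound the resulting sum via a $\{2,3\}$-tree argument analogous to the one used in \Cref{sec:2-3-tree} for \newsample{}.

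For a single component, one round of the inner repeat loop costs $|\+C_i^{X^n}|$ queries to $\eval{}(\cdot)$ plus $O(qk|\+C_i^{X^n}|)$ arithmetic work (since $|V_i^{X^n}|\le k|\+C_i^{X^n}|$ and sampling one value from a domain of size at most $q$ is $O(q)$), while the expected number of rounds is at most $(1-\mathrm{e}p'q)^{-|\+C_i^{X^n}|}$. The latter bound will follow from \Cref{locallemma} applied to $\Phi^{X^n}$ with the constant choice $x(c)\equiv\mathrm{e}p'q$, whose hypotheses hold because $\mathbb{P}[\neg c\mid X^n]\le p'q$ by \Cref{lemma:invariant-p-prime-q-bound} and $\mathrm{e}p'q\Delta<1$ follows from \eqref{eq:parameter-p-prime} together with the LLL condition. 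Next I will relate $|\+C_i^{X^n}|$ to $F_i\defeq|\cfrozen{X^n}\cap\+C_i^{X^n}|$: the monotonicity argument inside the proof of \Cref{lemma:invariant-p-prime-q-bound} shows that any constraint that becomes frozen during the for-loop of \Cref{Alg:main} stays frozen (and hence any variable skipped by the for-loop stays $X^n$-fixed), so every $v\in V^{X^n}$ belongs to some constraint of $\cfrozen{X^n}$; conversely, every $c\in\+C_i^{X^n}$ must contain at least one variable of $V^{X^n}$ (otherwise all its variables would be assigned yet $c$ unsatisfied, contradicting feasibility from \Cref{localuniformitycor}), and that variable belongs to some frozen constraint. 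Consequently every $c\in\+C_i^{X^n}$ is at dependency-graph distance $\le 1$ from $\cfrozen{X^n}\cap\+C_i^{X^n}$, giving $|\+C_i^{X^n}|\le\Delta F_i$.

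I will then charge the cost of each component to its lowest-indexed frozen constraint and sum over all $c\in\+C$ (using $|\+C|\le n\Delta$, which follows from the fact that each variable is contained in at most $\Delta$ constraints of $\+C$):
\[
\E{\trej{(X^n)}}\le\sum_{c\in\+C}\E{\mathbb{1}[c\in\cfrozen{X^n}]\cdot\Psi(\Delta F_c)},
\]
where $F_c$ denotes $F_i$ for the component $i$ containing $c$, and $\Psi(m)\defeq m\cdot(1-\mathrm{e}p'q)^{-m}\cdot\Formal{x}+O(qkm)\cdot(1-\mathrm{e}p'q)^{-m}$. For the tail of $F_c$, if $F_c\ge s$ and $c\in\cfrozen{X^n}$, the frozen constraints inside $c$'s component are $G^2$-connected and contain $c$, so by \Cref{lem-2-3-tree} they contain a $\{2,3\}$-tree $T\ni c$ of size $\ge s/\Delta$. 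Combining \Cref{lem-num-23tree} with \Cref{cor-bad-frozen-new} (applied to $X^n$) then gives
\[
\Pr[c\in\cfrozen{X^n},\ F_c\ge s]\le\sum_{t\ge s/\Delta}\tfrac{(\mathrm{e}\Delta^3)^{t-1}}{2}(4\mathrm{e}\Delta^3)^{-t}=O\!\left(\Delta^{-3}\cdot 4^{-s/\Delta}\right).
\]
Since the LLL condition guarantees $(1-\mathrm{e}p'q)^{-\Delta^2}\le 2$, the inflation factor $(1-\mathrm{e}p'q)^{-\Delta s}\le 2^{s/\Delta}$ is dominated by the $4^{-s/\Delta}$ tail; elementary summation $\sum_{s\ge 0}s\cdot 2^{-s/\Delta}=O(\Delta^2)$ yields a per-$c$ contribution of $O(\Formal{x}+qk)$, which multiplied by $|\+C|\le n\Delta$ is comfortably within the claimed $10\Delta^5 n\cdot\Formal{x}+O(qk\Delta^5 n)$ bound.

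The main obstacle will be the final step: \Cref{cor-bad-frozen-new} applies only to pairwise-disjoint collections of constraints, whereas the frozen constraints inside a single component are in general not disjoint. This is precisely the reason for passing through an inscribed $\{2,3\}$-tree via \Cref{lem-2-3-tree}, whose members are at dependency distance $\ge 2$ and therefore share no variables.
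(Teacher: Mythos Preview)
Your proposal is correct and uses the same core ideas as the paper's proof: the key observation that every unassigned variable lies in some $X^n$-frozen constraint (which the paper isolates as \Cref{lem-subset-rs}), the consequent bound $|\+{C}_i^{X^n}|\le\Delta F_i$ together with the $G^2$-connectedness of the frozen constraints inside a component, and then the $\{2,3\}$-tree witness combined with \Cref{cor-bad-frozen-new} for the tail probability. The LLL bound $(1-\mathrm{e}p'q)^{-|\+C_i^{X^n}|}$ on the expected number of rejection rounds is also exactly what the paper uses.

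The bookkeeping differs slightly. The paper sums over variables $v\in V$ via the overcount $\sum_i T(H_i^X)\le\sum_{v\in V}T(H_v^X)$ and packages the two steps you do inline (``$|\+C_i|\le\Delta F_i$'' and ``frozen set has a large $\{2,3\}$-tree'') into a single auxiliary lemma, \Cref{lem-big23tree-rs}, which directly produces a $\{2,3\}$-tree $T\subseteq\cfrozen{X}$ with $v\in\vbl(T)$ and $\Delta^2|T|\ge|\+C_v^X|$. Your route of charging to a frozen constraint and summing over $c\in\+C$ is equally valid and, if the summation is done carefully, actually yields a nominally tighter exponent in $\Delta$ than the paper's somewhat loose block estimate. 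Two small omissions in your plan: you should add the cost of \Cref{line-rs-find} (constructing the components, $\le n\Delta$ evaluation queries plus $O(n\Delta)$ work), and the per-component cost should carry a ``$+1$'' term ($|V_i^{X^n}|\le k|\+C_i^{X^n}|+1$) to handle isolated-vertex components; both are absorbed by the stated bound. Finally, the $G^2$-connectedness of the frozen constraints inside a component does not follow from the distance-$\le 1$ covering alone, but it \emph{does} follow from the stronger vertex-level fact you stated (pick, for each shared vertex $w_j$ along a path in $H^{X^n}$, a frozen constraint containing $w_j$; consecutive choices are then at distance $\le 2$), exactly as in the paper's proof of \Cref{lem-big23tree-rs}.
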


\begin{proof}[Proof of \Cref{rejsampef}]
Let $\{H_i^X= (V_i^X,\+{C}_i^X)\}\mid 1\leq i\leq K\}$ be the connected components in $H^X$ constructed in \Cref{line-rs-find} \Cref{Alg:rej}.
For each $1\le i\le K$, 
let $T\left(H^X_i\right)$ denote the cost contributed by $H_i^X= (V_i^X,\+{C}_i^X)$ during the \text{repeat} loop  in \Cref{Alg:rej}.
Then the total cost is given by
\begin{align}\label{eq-trej-phix}
   \trej{(X)}  \leq  \Delta n\cdot \Formal{x} + O(\Delta n)+ \sum_{i\in [K]}T\left(H^X_i\right).
\end{align}
This is because:
\begin{itemize}
    \item 
    the cost of \Cref{line-rs-find} is at most $n\Delta\cdot \Formal{x}+O(n\Delta)$, 
    because it uses at most $\abs{\+{C}}\le \Delta n$ queries to $\eval{}(\cdot)$, 
    which contributes the $\Delta n\cdot \Formal{x}$ term, 
    and a depth-first search that visits all variables and constraints to compute $H_1^X,H_2^X,\dots,H_K^X$, which costs $O(\Delta n)$ in computation.
    \item 
    the cost of Lines \ref{line-rs-for}-\ref{line-rs-until} for each component $H_i^X$ is $T\left(H^X_i\right)$.
\end{itemize}
Alternatively, for each $v\in V$, we use $T\left(H^X_v\right)$ to denote the $T\left(H^X_i\right)$ for the $1\le i\le K$ with $v\in V_i^X$; and let $T\left(H^X_v\right)=0$ if there is no such $1\le i\le K$, which occurs when $v\in\Lambda(X)$ is assigned in $X$.
%
%
Clearly, 
$$\sum_{i\in [K]}T\left(H^X_i\right) \leq \sum_{v\in V}T\left(H^X_v\right).$$
Combining with (\ref{eq-trej-phix}), we have
\begin{equation}\label{eq-rej-timebound}
    \E{\trej{(X)}} \leq n\Delta\cdot \Formal{x} + O(n\Delta) + \sum_{v\in V}\E{T\left(H^X_v\right)}.
\end{equation}


We claim that for each $v\in V$,
\begin{equation}
\begin{aligned}\label{eq-et-phiyi}
\E{T(H^X_v)\mid X} \leq \abs{ \+C^X_v}\cdot (1-\mathrm{e}\pprime q)^{-\abs{ \+C^X_v}} \cdot \Formal{x} + O\left(kq\cdot \left(\abs{ \+C^X_v}+1\right)\cdot (1-\mathrm{e}\pprime q)^{-\abs{ \+C^X_v}}\right).
\end{aligned}
\end{equation}
The degenerate case with $H_v^{X}=(\emptyset,\emptyset)$ is trivial. 
We then assume $H_v^{X}\neq(\emptyset,\emptyset)$.
It is well known that the expected number of trials 
(the iterations of the \textbf{repeat} loop in \Cref{line-rs-until}) 
taken by the rejection sampling until success 
is given by $\mathbb{P}_{H^{X}_v}[\Omega_{H^{X}_v}]^{-1}$, 
where $\Omega_{H^{X}_v}$ denotes the set of satisfying assignments of the CSP of $H^{X}_v$ and hence  $\mathbb{P}_{H^{X}_v}[\Omega_{H^{X}_v}]$ gives the probability that a uniform random assignment $\sigma\in\+{Q}_{V^X_v}$ is satisfying for $\Phi^{X}_v$, the CSP formula correspond to the component $H^{X}_v$.

By \Cref{locallemma} and \Cref{lemma:invariant-p-prime-q-bound}, we have
\[
\mathbb{P}_{H^X_v}\left[\Omega_{H^X_v}\right]
\ge 
(1-\mathrm{e}\pprime q)^{\abs{\+C^{X}_v}}.
\]
Therefore, it takes 
$(1-\mathrm{e}\pprime q)^{-\abs{\+C^{X}_v}}$ iterations in expectation to successfully sample the assignment on $V_v^X$.
And within each iteration, 
it is easy to verify that at most $\abs{\+{C}_v^X}$ queries to $\eval{}(\cdot)$ and
$O\left(k\abs{ \+{C}_v^X}+q\abs{ V_v^X}\right)=O\left(qk(\abs{ \+{C}_v^X}+1)\right)$ computation cost are spent.
This proves the claim \eqref{eq-et-phiyi}.

We then bound the expectation of \eqref{eq-et-phiyi} over $X$. Let $\ell \defeq \lfloor t/(2\Delta^2)\rfloor$.
By \eqref{eq:parameter-p-prime},
we have  $\mathrm{e}\pprime q\leq (2\Delta^2)^{-1}$
and then
$$(1-\mathrm{e}\pprime q)^{-(\ell+1)\Delta^2}\leq 2^{\ell+1}.$$
We have
\begin{align*}
\sum_{t\geq 0 } \left((t+1)(1-\mathrm{e}\pprime q)^{-t}\Pr{\abs{\+C^X_v}=t}\right) &\leq 2\Delta^2\sum_{t\geq 0}\left(\Pr{\abs{\+C^X_v} \geq 2\ell\Delta^2}\cdot (\ell+1)\cdot(1-\mathrm{e}\pprime q)^{-(\ell+1)\Delta^2}\right)\\
&\leq 4\Delta^2 \sum_{t\geq 0}\left((\ell+1)\cdot 2^{\ell}\cdot \Pr{\abs{\+C^X_v} \geq 2\ell\Delta^2} \right)\\
&=4\Delta^4 \sum_{\ell\geq 0}\left((\ell+1)\cdot 2^{\ell}\cdot\Pr{\abs{\+C^X_v} \geq 2\ell \Delta^2}\right)\\
(\text{by \Cref{thm-expect-depth2}})\quad
&\le
32\mathrm{e}k\Delta^4 \sum_{\ell\geq 0} \left((\ell+1)\cdot2^{-\ell}\right)\\
&\le 128\mathrm{e}k\Delta^4.
\end{align*}
Therefore,
\begin{align*}
\E{\left(\abs{ \+C^X_v}+1\right)\cdot (1-\mathrm{e}\pprime q)^{-\abs{ \+C^X_v}}}
=
\sum_{t\geq 0 } \left((t+1)(1-\mathrm{e}\pprime q)^{-t}\Pr{\abs{\+C^X_v}=t}\right) &\leq 128\mathrm{e}k\Delta^4.
\end{align*}
Thus, there exists a constant $C>0$  such that the expectation of \eqref{eq-et-phiyi} is bounded by
\begin{align*}
\E{T(H^X_v)}
\leq 128\mathrm{e}k\Delta^4\cdot \Formal{x}+128\mathrm{e}Ck^2q\Delta^4.
\end{align*}
The theorem follows by \eqref{eq-rej-timebound}.
\end{proof}

\subsection{Efficiency of the main sampling algorithm}\label{sec:efficiency-main-alg}
We now prove \Cref{mainef} and \Cref{mainef-approx}. 

\begin{proof}[Proof of \Cref{mainef}]
Assuming the LLL condition in \Cref{eq:main-thm-LLL-condition}, 
we have $8\mathrm{e}p\Delta^3\leq 0.99\pprime$ for $\pprime$ set as in \eqref{eq:parameter-p-prime},
which is the regime of parameters assumed by \Cref{thm-time-MSamp} and \Cref{rejsampef}.

The followings are the nontrivial costs in \Cref{Alg:main}:
\begin{itemize}
    \item 
    The initialization in \Cref{line-main-init} and the testing of being $X^{t-1}$-fixed for $v_t$ in \Cref{line-main-if} for every $1\le t\le n$, which altogether cost:
    \[
    \Delta n\cdot\Formal{y} + O(\Delta n),
    \]
    because each $v_t$ queries $\checkf{}(\cdot)$ for at most $\abs{c\in\+{C}\mid v_t\in\vbl(c)}\le\Delta$ times and also costs $O(\Delta)$ in computation to retrieve all such $c\in\+{C}$ that $v_t\in\vbl(c)$.
    \item 
    The calls to $\newsample{}(\Phi,X^{t-1},v_t)$ at \Cref{line-main-sample}, which according to \Cref{thm-time-MSamp}, costs in total:
    \[
    \sum_{t=1}^n\E{\rtsamp{(\Phi,X^{t-1},v_t)}}\leq 
    O\left(q^2k^2\Delta^{9} n\right)\cdot \Formal{x}+24k\Delta^6 n \cdot \Formal{y}+O\left(q^3k^3\Delta^{9} n\right).
    \]
    \item 
    The final call to $\rejsamp(\Phi,X^n,V\setminus\Lambda(X^n))$ at \Cref{line-main-partialass}, which by \Cref{rejsampef}, costs:
    \[
    \E{\trej{(X^n)}}\leq 128\mathrm{e}k\Delta^4 n\cdot \Formal{x}+ O\left( q k\Delta^4 n\right).
    \]
\end{itemize}
The overall complexity of \Cref{Alg:main} in expectation is bounded by 
\begin{align}\label{eq:complexity-bound-main-alg}
    O\left(q^2k^2\Delta^{9} n\right)\cdot \Formal{x}+24k\Delta^6 n \cdot \Formal{y}+O\left(q^3k^3\Delta^{9} n\right).  
\end{align}
The theorem is proved.
\end{proof}

Next, we construct the \Cref{Alg:main}' for approximate sampling and prove \Cref{mainef-approx}. 

Given as input a CSP formula $\Phi=(V,\+{Q},\+{C})$ with $n=|V|$ variables 
and an error bound $\varepsilon\in(0,1)$, 
\Cref{Alg:main}' does the followings. 
Set the parameters as:
\begin{align}\label{eq:monte-carlo-frozen-parameters}
    \delta=0.005 \quad\text{ and }\quad
    N=\left\lceil\frac{\ln\left({200k\Delta^6n}{\varepsilon^{-2}}\right)}{0.33\pprime\delta^2}\right\rceil.
\end{align}
\Cref{Alg:main}' simply executes \Cref{Alg:main} on input $\Phi$, 
with the oracle $\checkf(\cdot)$ replaced by the following explicitly implemented Monte Carlo subroutine:
\begin{itemize}
    \item 
    Given as input any constraint $c\in\+{C}$ and any partial assignment $\sigma\in\qs$, 
    repeat for $N$ times:
    \begin{itemize}
        \item     
        generate an assignment $Y\in\+{Q}_{\vbl(c)}$ on $\vbl(c)$ uniformly at random consistent with $\sigma$; 
        \item
        check whether $c(Y)=\True$ by querying $\eval(c,Y)$;
    \end{itemize}
    \item 
    let $Z$ be the number of times within $N$ trials that $c(Y)=\False$, and return $I[Z/N>0.995\pprime]$.
\end{itemize}
We further apply a standard memoization trick to guarantee the consistency of the oracle $\checkf(\cdot)$ as required in \Cref{assumption:frozen-oracle}.
Each  constraint $c\in\+{C}$ is associated with a deterministic dynamic dictionary $\mathrm{Dic}_c$ 
which stores key-value pairs in the form of $(\tau,ans)$ with $\tau\in\bigotimes_{v\in\vbl(c)}(Q_v\cup\{\star,\hollowstar\})$ and $ans\in\{0,1\}$.
Upon each query to $\checkf(c,\sigma)$, 
we first lookup in $\mathrm{Dic}_c$ for the key $\sigma_{\vbl(c)}$.
If an $ans\in\{0,1\}$ is retrieved, the query to $\checkf(c,\sigma)$ is answered with $ans$;
and if otherwise, 
an $ans=I[Z/N>0.995\pprime]\in\{0,1\}$ is computed using the above Monte Carlo subroutine, 
the key-value pair $(\sigma_{\vbl(c)},ans)$ is inserted into $\mathrm{Dic}_c$ 
and the query to $\checkf(c,\sigma)$ is answered with $ans$.
Using a \emph{Trie} data structure for the deterministic dynamic dictionary $\mathrm{Dic}_c$ incurs an $O(k)$ computation cost for each query and update.
Overall, the resulting algorithm is \Cref{Alg:main}'.

\begin{proof}[Proof of \Cref{mainef-approx}]
In \Cref{Alg:main}', each query to the oracle $\checkf(\cdot)$ made in \Cref{Alg:main} is implemented using $N$ queries to the evaluation oracle $\eval(\cdot)$ along with $O(kN)$ computation cost, where $N$ is set as in \eqref{eq:monte-carlo-frozen-parameters}.
By \Cref{mainef}, the complexity of \Cref{Alg:main} in expectation is bounded as \eqref{eq:complexity-bound-main-alg}. By replacing 
\[
\Formal{y}\gets N\cdot \Formal{x} + O(k N),
\]
the expected complexity of \Cref{Alg:main}' is bounded by
\[
O\left(q^2k^2\Delta^{9}n\log{\left(\frac{\Delta n}{\varepsilon}\right)}\cdot \Formal{x}+q^3k^3\Delta^{9}n\log{\left(\frac{\Delta n}{\varepsilon}\right)}\right).
\]

By Chernoff bound, one can verify that for each query to $\checkf(c,\sigma)$, 
the two cases $\mathbb{P}[\neg c\mid\sigma]>\pprime$ and $\mathbb{P}[\neg c\mid\sigma]\leq 0.99\pprime$ are distinguished correctly except for an error probability  bounded by:
\[
2\exp\left(-\frac{\delta^2}{3}0.99\pprime N\right)\le\frac{\varepsilon}{2M}\quad\text{ where }M\triangleq 50k\Delta^6n\varepsilon^{-1}.
\]
Due to \eqref{eq:complexity-bound-main-alg}, the expected number of queries to $\checkf(\cdot)$ made in \Cref{Alg:main} is at most $24k\Delta^6n\leq \frac{\varepsilon M}{2}$.

By Markov's inequality, the probability that the number of queries  to $\checkf(\cdot)$ made in \Cref{Alg:main} exceeds  $M$ is  at most $\frac{\varepsilon}{2}$.
By a union bound, with probability at least $1-\varepsilon$, no error occurs in any query to $\checkf(\cdot)$.
Then by a coupling between \Cref{Alg:main} and \Cref{Alg:main}', the total variation distance between the outputs of the two algorithms on the same input CSP $\Phi$ can be bounded within $\varepsilon$. 
\end{proof}

The following is a formal restatement of \Cref{thm:main-margin-sampling} and \Cref{thm:main-inference}.
Here the $\tilde{O}(\cdot)$ hides polylogarithmic factors. 
The algorithm is just the Monte Carlo method that uses $\newsample{}(\Phi,\hollowstar^V,v)$ (\Cref{Alg:eq}) as a subroutine for sampling from the marginal distribution $\mu_v$.

\begin{theorem}\label{thm:margin-sample-inference}
There are algorithms for marginal sampling and probabilistic inference
such that given as input $\varepsilon,\delta\in(0,1)$, 
CSP formula $\Phi=(V,\+{Q},\+{C})$ satisfying \eqref{eq:main-thm-LLL-condition}, 
and $v\in V$, the algorithms perform as follows:
\begin{itemize}
    \item 
The algorithm for marginal sampling returns 
a random value $x\in Q_v$ distributed approximately as $\mu_v$ within total variation distance $\varepsilon$, in expectation using at most $O\left(q^2k^2\Delta^{9}\log(k\Delta/\varepsilon)\right)$ queries to $\eval{}(\cdot)$ and  $O\left(q^3k^3\Delta^{9}\log(k\Delta/\varepsilon)\right)$ computation cost.
    \item
The algorithm for inference returns a $\hat{\mu}_v\in[0,1]^{Q_v}$ using at most $\tilde{O}\left(q^3k^2\Delta^{9}\varepsilon^{-2}\log(1/\delta)\right)$ queries to $\eval{}(\cdot)$ 
and  $\tilde{O}\left(q^4k^3\Delta^{9}\varepsilon^{-2}\log(1/\delta)\right)$ computation cost
such that 
\begin{align*}
\Pr{\forall x\in Q_v:\, (1-\varepsilon)\mu_v(x)\le\hat{\mu}_v(x)\le (1+\varepsilon)\mu_v(x)}\ge 1-\delta.
\end{align*}
\end{itemize}
\end{theorem}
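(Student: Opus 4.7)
The plan is to realize both algorithms as simple wrappers around the \newsample{} subroutine from \Cref{Alg:eq}. For the marginal sampler, given $v\in V$ we relabel the variables so that $v=v_1$ and invoke $\newsample{}(\Phi,\hollowstar^V,v)$. First I would verify that $(\Phi,\hollowstar^V,v)$ satisfies \Cref{inputcondition-magin}: every $u\in V$ has $\hollowstar^V(u)=\hollowstar$, and $\mathbb{P}[\neg c\mid \hollowstar^V]=\mathbb{P}[\neg c]\le p<p'q$ by the LLL condition \eqref{eq:main-thm-LLL-condition}, so in particular $v$ is not $\hollowstar^V$-fixed. By \Cref{sampcor} the output is then distributed exactly as $\mu_v^{\hollowstar^V}=\mu_v$, assuming access to an exact $\checkf{}(\cdot)$ oracle. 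As in the construction of \Cref{Alg:main}', the oracle $\checkf{}(\cdot)$ would be replaced by a memoized Monte Carlo routine that, on each fresh query $(c,\sigma)$, draws $N$ independent completions of $\sigma$ on $\vbl(c)$ and compares the empirical violation rate against the threshold $0.995p'$, caching the answer in a per-constraint trie to guarantee the consistency required in \Cref{assumption:frozen-oracle}.

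To bound the TV error and the cost, I would apply \Cref{thm-time-MSamp} with $t=1$. Since $X^0=\hollowstar^V$ is deterministic, its bound becomes an upper bound on the expected cost of $\newsample{}(\Phi,\hollowstar^V,v)$ with exact $\checkf{}(\cdot)$: namely $O(q^2k^2\Delta^{10})$ evaluation queries, $M_0\defeq O(k\Delta^7)$ frozen queries, and $O(q^3k^3\Delta^{10})$ computation. Using Markov's inequality, with probability at least $1-\varepsilon/2$ the number of frozen queries stays below $M\defeq 2M_0/\varepsilon$. Choosing per-query Chernoff error $\delta_f=\varepsilon/(2M)$ forces $N=O\!\left(\tfrac{1}{p'}\log\tfrac{M}{\varepsilon}\right)=O\!\left(q^2k\Delta^4\log(k\Delta/\varepsilon)\right)$ trials per Monte Carlo frozen query, and a union bound over the at most $M$ queries shows that with probability $\ge 1-\varepsilon$ every Monte Carlo frozen answer agrees with the ideal oracle. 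A standard coupling then yields $\tv(\text{output},\mu_v)\le \varepsilon$. Summing costs gives $O(q^2k^2\Delta^{10})+M_0\cdot N=O(q^2k^2\Delta^{11}\log(k\Delta/\varepsilon))$ evaluation queries and $O(q^3k^3\Delta^{11}\log(k\Delta/\varepsilon))$ computation, as stated.

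For the inference part, I would run the above marginal sampler $N$ times independently with TV precision $\varepsilon'\defeq \varepsilon\theta_v/4=\Theta(\varepsilon/q)$ and output the empirical frequencies $\hat{\mu}_v(x)$. By \Cref{prop:theta-marginal-lower-bound} (i.e.\ the local-uniformity bound from \Cref{localuniformitycor}), $\mu_v(x)\ge \theta_v=\Omega(1/q)$ for every $x\in Q_v$, and TV precision $\varepsilon'$ perturbs each sampling probability additively by at most $\varepsilon'\le (\varepsilon/4)\mu_v(x)$, so the \emph{expected} empirical frequency is already within $(1\pm \varepsilon/4)\mu_v(x)$. A multiplicative Chernoff bound then guarantees that $N=O(q\log(q/\delta)/\varepsilon^2)$ samples suffice for $\hat{\mu}_v(x)\in(1\pm\varepsilon)\mu_v(x)$ simultaneously for all $x\in Q_v$ with probability $\ge 1-\delta$, yielding the claimed $\tilde{O}(q^3k^2\Delta^{11}\varepsilon^{-2}\log(1/\delta))$ evaluation-query and $\tilde{O}(q^4k^3\Delta^{11}\varepsilon^{-2}\log(1/\delta))$ computation bounds.

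The main technical obstacle is the interface between the expected-cost bound in \Cref{thm-time-MSamp} and the Monte Carlo realization of $\checkf{}(\cdot)$: the number of frozen queries is random, so the cost of each query cannot be chosen uniformly in advance without overshoot. The two-layer Markov-then-union-bound argument sketched above is what circumvents this, but it requires that the probability of an abnormally long run of \newsample{} contributes only $O(\varepsilon)$ to the total TV error, which is precisely why the factor $M/\varepsilon$ appears inside the logarithm defining $N$. Everything else reduces to routine substitutions and the local-uniformity corollary already proved in \Cref{sec:prelim}.
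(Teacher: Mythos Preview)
Your proposal is correct and follows essentially the same route as the paper: call $\newsample(\Phi,\hollowstar^V,v)$, verify \Cref{inputcondition-magin}, invoke \Cref{sampcor} for exactness and \Cref{thm-time-MSamp} with $t=1$ for the cost, then realize $\checkf(\cdot)$ by the Monte-Carlo-plus-memoization routine from the proof of \Cref{mainef-approx} and control the TV bias via the same Markov-then-union-bound coupling you describe. The only noteworthy divergence is in the inference half: the paper runs $O(q/\varepsilon^2)$ approximate samples to hit constant success probability, \emph{truncates} the run at a fixed cost budget (Markov), and then boosts to confidence $1-\delta$ with the median trick over $O(\log(q/\delta))$ repetitions, which yields a deterministic cost bound matching the ``using at most'' phrasing of the theorem; your direct $N=O(q\log(q/\delta)/\varepsilon^2)$-sample Chernoff argument delivers the same asymptotic cost but only in expectation as written, so you would need one truncation step to finish.
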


\begin{proof}
When \eqref{eq:main-thm-LLL-condition} is satisfied,  
we have $8\mathrm{e}p\Delta^3\leq 0.99\pprime$ for the $\pprime$ set as in \eqref{eq:parameter-p-prime}.
Then \Cref{inputcondition-magin} is satisfied by $(\Phi,\hollowstar^V,v)$.
According to \Cref{sampcor} and \Cref{thm-time-MSamp}, $\newsample{}(\Phi,\hollowstar^V,v)$ returns a perfect sample from $\mu_v$ with expected cost:
\[
 O\left(q^2k^2\Delta^{9}\right)\cdot \Formal{x}+24k\Delta^6 \cdot \Formal{y}+O\left(q^3k^3\Delta^{9}\right).
\]
Using the same Monte Carlo simulation of $\checkf{}(\cdot)$ as in the proof of \Cref{mainef-approx}, 
but this time with sufficiently large parameter  $N=O\left(\frac{1}{\pprime}\log(k\Delta/\varepsilon)\right)=O\left(q^2k\Delta^2\log(k\Delta/\varepsilon)\right)$, with the substitution $\Formal{y}\gets N\cdot \Formal{x} + O(k N)$, the $\newsample{}(\Phi,\hollowstar^V,v)$ is transformed into an approximate sampler for $\mu_v$ with bias at most $\varepsilon/2$ in total variation distance, with the following cost:
\[
 O\left(q^2k^2\Delta^{9}\log(k\Delta/\varepsilon)\right)\cdot \Formal{x} + O\left(q^3k^3\Delta^{9}\log(k\Delta/\varepsilon)\right).
\]
This proves the marginal sampler part of the algorithm.

By the local uniformity property stated in \Cref{generaluniformity}, 
assuming \eqref{eq:main-thm-LLL-condition},
we have $\mu_v(x)>\frac{1}{2q}$.
Thus, by Chernoff bound, each $\mu_v(x)$ for $x\in Q_v$ can be estimated within $(1\pm\varepsilon)$-multiplicative precision with probability $>0.9$ using $O(q/\varepsilon^{2})$ approximate samples, which in expectation costs in total:
\begin{align}\label{eq:inference-complexity-bound}
 \tilde{O}\left(q^3k^2\Delta^{9}\varepsilon^{-2}\right)\cdot \Formal{x} + \tilde{O}\left(q^4k^3\Delta^{9}\varepsilon^{-2}\right).
\end{align}
By Markov's inequality, the total cost exceeds this bound with probability $<0.1$. 
By truncation, this gives us an algorithm with fixed cost bounded as \eqref{eq:inference-complexity-bound} so for each $x\in Q_v$, with probability $>0.8$ the algorithm estimates the value of $\mu_v(x)$ within $(1\pm\varepsilon)$-multiplicative precision.
The algorithm asserted by the theorem is then given by repeating this for $O(\log\frac{q}{\delta})$ times and applying the median trick.
\end{proof}

\section{The Generalized \texorpdfstring{$\{2,3\}$}{{2,3}}-Tree }\label{sec:g23-tree}

In this section, we prove the tail bounds stated in Lemmas 
\ref{thm-expect-depth} and \ref{thm-expect-depth2} in \Cref{section-samp-ef}.
These two tail bounds basically state that the large witness of long running time occurs with an exponentially small probability,
and are both proved with a new 
combinatorial structure called ``generalized $\{2,3\}$-tree''.
The generalized $\{2,3\}$-tree is a refinement of $\{2,3\}$-tree, which is previously used in  ~\cite{guo2019counting,Vishesh21towards}. 
Rather than treated similarly in $\{2,3\}$-tree,
the ``bad" variables and constraints are treated separately in generalized $\{2,3\}$-tree, because 
the densities of these two types of bad events are different.

Given a hypergraph $H=(V,\+E)$, let $\Lin{H}$ be the line graph of $H$ whose vertex set is the hyperedges in $\+E$ and two hyperedges in $\+E$ are adjacent in $\Lin{H}$ if and only if they share some vertex in $H$. 
Let $\text{dist}_{\Lin{H}}(\cdot,\cdot)$ be the shortest path distance in $\Lin{H}$. 

\begin{definition}{(generalized $\{2,3\}$-tree)}\label{WTdef}
Given a hypergraph $H=(V,\+E)$, a set $T = U\cup E$ where $U\subseteq V$ and $E\subseteq  \+E$, 
let $G(T)$ be a directed graph constructed on the set $T$ such that, 
for any $u, v\in T$ there is an arc from $u$ to $v$ if and only if at least one of the following conditions is satisfied: 
    \begin{itemize}
        \item $u,v\in E$ and $\text{dist}_{\Lin{H}}(u,v)=2\text{ or }3$;
        \item $u\in U, v\in E$ and  there exists $e\in \+E$ such that $u\in e\land \text{dist}_{\Lin{H}}(v,e)=1$;
        \item $u\in E,v\in U$ and there exists $e\in \+{E}$ such that $v\in e\land \text{dist}_{\Lin{H}}(u,e)=1\text{ or }2$;
        \item $u,v\in U$ and there exists $e\in \+{E}$ such that $u,v\in e$.
    \end{itemize}
Then $T$ is a \emph{generalized $\{2,3\}$-tree} of $H$ if the followings hold:
\begin{enumerate}
    \item for all distinct $u, v\in E$, $\text{dist}_{\Lin{H}}(u,v)\geq 2$;  \label{WT-1}
    \item $G(T)$ has a rooted directed spanning tree.
\end{enumerate}
Furthermore, each $v\in T$ is called a \emph{root} of $T$ if it is a root of some rooted directed spanning tree of $G(T)$.
For any $v\in V$ and intergers $r,\ell,t\geq 0$, define 
$$\MSC{T}_{v}^{r,\ell}(H) \defeq \left\{\text{generalized $\{2,3\}$-tree $T$ of $H$} \mid (v\text{ is a root of $T$})\land (\abs{T\cap V} =r)\land (\abs{T\cap \+E}) = \ell) \right\},$$
$$\MSC{T}_{v}^{t}(H) \defeq \left\{\text{generalized $\{2,3\}$-tree $T$ of $H$} \mid (v\text{ is a root of $T$})\land (\abs{T\cap V} =r + \Delta\cdot \abs{T\cap \+E} = t) \right\}.$$
We will use $\MSC{T}_{v}^t$ and $\MSC{T}_{v}^{r,\ell}$ to  stand for $\MSC{T}_{v}^t(H)$ and $\MSC{T}_{v}^{r,\ell}(H)$ respectively if $H$ is clear from the context.
\end{definition}

The generalized $\{2,3\}$-tree in \Cref{WTdef} is inspired by the the notion of $\{2,3\}$-tree defined for the line graph $\Lin{H}$~\cite{alon1991parallel}.
We extend this notion to the original hypergraph $H$ to simultaneously depict the distances between vertices and hyperedges in $H$. 
We further allow including vertices in $H$ to some generalized $\{2,3\}$-tree $T$ granted that all vertices and hyperedges included are close enough to each other.
One can verify that every $\{2,3\}$-tree in $\Lin{H}$ is also a generalized $\{2,3\}$-tree in $H$. 

Specifically, when the underlying hypergraph in \Cref{WTdef} is the hypergraph representation $H_{\Phi}=(V,\+C)$ of some CSP $\Phi$, a generalized $\{2,3\}$-tree $T\subseteq V\cup \+C$ in $H_{\Phi}$ becomes a subset of variables and constraints.

\subsection{Boosting of tail bounds using generalized $\{2,3\}$-trees}\label{sec:boost}

In this section we prove Lemmas 
\ref{thm-expect-depth} and \ref{thm-expect-depth2}. As stated in \Cref{sec:witness-tree}, 
the tail bounds in these two lemmas are proved by boosting two ``basic'' tail bounds over the occurrences of ``bad" variables and \emph{disjoint} ``bad" constraints using generalized $\{2,3\}$-tree. 

Recall the bad event $\+{E}^{t}_T$ in \Cref{def:cbad}. 
The following lemma is used in the proof of \Cref{thm-expect-depth}, which provides a tail bound for the bad event $\+{E}^{t}_T$ when the constraints in $T$ are disjoint. 
It states that within an LLL regime, the bad event $\+{E}^{t}_T$ that may lead to the inefficiency of \newsample{} rarely occurs.

\sloppy
\begin{lemma}\label{badtree}
Assume $8\mathrm{e}p\Delta^3\leq 0.99\pprime$, where $\pprime$ is defined as in \eqref{eq:parameter-p-prime}.
Let $1\le t\le n$ and $(X^0,X^1,\ldots,X^{t-1},X^t_0,X^t_1,\ldots,X^t_{\ell})$ be generated as in \eqref{eq:simulate-path}.
For any subset of variables and constraints $T=U\uplus E$ such that the constraints in $E$ are disjoint, we have
\begin{align}\label{eq-lemma-badtree}
\Pr{ \+{E}^{t}_T}\cdot\E{\chi(X^t_{0})\mid \+{E}^{t}_T}\leq \left(8\mathrm{e}k\Delta\right)^{-\abs{ U }}\cdot \left(4\mathrm{e}\Delta^3\right)^{-\abs{ E }}.
\end{align}
\end{lemma}

The following lemma is used in the proof of  \Cref{thm-expect-depth2}. 
It provides another tail bound for a bad event over a set of disjoint constraints that may lead to the inefficiency of \rejsamp{}.

\begin{lemma}\label{badtree2}
Assume $8\mathrm{e}p\Delta^3\leq 0.99\pprime$. Let $(X^0,X^1,\ldots, X^n)=\simulator(n)$.
For any set of disjoint constraints $T\subseteq \+{C}$,
 $$\Pr{T\subseteq \cfrozen{X^n}} \leq \left(4\mathrm{e}\Delta^3\right)^{-\abs{ T }}.$$
\end{lemma}
\noindent
These are the two basic tail bounds. They will be proved later in \Cref{sec:raw-tail-bound}. 
For the rest of \Cref{sec:boost}, we prove \Cref{thm-expect-depth} and \Cref{thm-expect-depth2} assuming \Cref{badtree} and \Cref{badtree2}.

We need the following lemma to bound the sum of weights of the generalized $\{2,3\}$-trees in $\MSC{T}_{v}^t$.

\begin{lemma}\label{lem-prob-g23tree}
Let $H = (V, \+E)$ be a hypergraph such that each hyperedge contains at most $k$ vertices and shares vertices with at most $\Delta$ hyperedges.
For any $v\in V$ and $t> 0$, we have
\begin{equation*}
    \sum\limits_{\substack{T\in \MSC{T}_{v}^t}}\left((8\mathrm{e}k\Delta)^{-\abs{T\cap V}}(4\mathrm{e}\Delta^3)^{-\abs{T\cap \+E}}\right)\leq 2^{-\left\lfloor t/\Delta\right\rfloor-1}\cdot \Delta^{-1}.
\end{equation*}
\end{lemma}

From now on, we denote $w_1\triangleq(8\mathrm{e}k\Delta)^{-1}$ and $w_2\triangleq(4\mathrm{e}\Delta^{3})^{-1}$.  
The following definition is used in the proof of \Cref{lem-prob-g23tree}, which is inspired by the fact that each hyperedge of $H$ contains at most $k$ vertices and shares vertices with at most $\Delta$ hyperedges.

\begin{definition}[$\{a,b\}$-labelled tree]\label{GWprocess2}
An \emph{$\{a,b\}$-labelled tree} is a rooted directed tree where
\begin{itemize}
    \item each node is labelled $a$ or $b$;
    \item each node labelled $a$ has weight $w_1x$ and each node labelled $b$ has weight $w_2x^{\Delta}$.
\end{itemize}
For each 
$\{a,b\}$-labelled tree $T$, 
let the weight of $T$ be the product of the weights of all the nodes in $T$.
With a bit abuse of notation, let $a(T)$ and $b(T)$ be the numbers of nodes with label $a$ and $b$ in $T$, respectively. 
Given $z\in \{a,b\}$, define a tree $\+T_{z}$ with infinite nodes
as follows:
\begin{itemize}
    \item $\+T_{z}$ is a \emph{$\{a,b\}$-labelled tree} where the root is labelled $z$;
    \item each node labelled $a$ has $k\Delta$ children labelled $a$ and $\Delta^2$ children labelled $b$;
    \item each node labelled $b$ has $k\Delta^2$ children labelled $a$ and $\Delta^3$ children labelled $b$.
\end{itemize}
A \emph{proper subtree} $T$ of $\+T_{z}$ is a rooted directed subtree of $\+T_{z}$ where
the root of $\+T_{z}$ is also the root of
$T$.
Obviously, $T$ is also an $\{a,b\}$-labelled tree.
Define $\+T^{t}_z\triangleq \{\text{proper subtree $T$ of $\+T_{z}$}\mid a(T)+\Delta \cdot b(T) = t\}$. 

\end{definition}

Now we can prove \Cref{lem-prob-g23tree}.

\begin{proof}[Proof of \Cref{lem-prob-g23tree}]
For each $v\in V$,
we claim that there exists an injection from the generalized $\{2,3\}$-trees in $\MSC{T}_{v}^t$ to the $\{a,b\}$-labelled trees in $\+T_a^t$ such that each $T\in \MSC{T}_{v}^t$ is mapped to some $T'\in \+T_a^t$ satisfying $\abs{T\cap V}=a(T')$ and $\abs{T\cap \+E}=b(T')$.  
Recall that $w_1=(8\mathrm{e}k\Delta)^{-1}$ and $w_2=(4\mathrm{e}\Delta^{3})^{-1}$.
We have
\begin{equation*}
    \sum\limits_{\substack{T\in \MSC{T}_{v}^t}}\left((8\mathrm{e}k\Delta)^{-\abs{T\cap V}}(4\mathrm{e}\Delta^3)^{-\abs{T\cap \+E}}\right)\leq \sum_{T\in \+T^t_{a}}w_1^{a(T)}w_2^{b(T)}.
\end{equation*}
Thus, to prove the lemma, it is sufficient to prove 
\begin{equation}\label{eq-sumo-tta}
    \sum_{T\in \+T^t_{a}}w_1^{a(T)}w_2^{b(T)}\leq 2^{-\left\lfloor t/\Delta\right\rfloor-1}\cdot \Delta^{-1}.
\end{equation}
Let $f_1$ be the sum of the weights over all proper subtrees of $\+T_{a}$.
Similarly, Let $f_2$ be the sum of the weights over all proper subtrees of $\+T_{b}$.
By \Cref{GWprocess2}, one can verify that
\begin{align*}
    f_1 &= w_1x(1+f_1)^{k\Delta}\cdot (1+f_2)^{\Delta^2},\\
    f_2 &= w_2x^{\Delta}(1+f_1)^{k\Delta^2}\cdot (1+f_2)^{\Delta^3}.
\end{align*}
In addition, let $[x^m]f_1(x)$ be the coefficient of $x^m$ in $f_1$.
By \Cref{GWprocess2}, we also have 
\begin{align}\label{eq-def-f1}
\forall m\geq 0, \quad [x^m]f_1(x) = 
    \sum_{T\in \+T^m_{a}}w_1^{a(T)}w_2^{b(T)}.
\end{align}
Define
$$h\triangleq x(1+f_1)^{k\Delta}\cdot (1+f_2)^{\Delta^2}.$$
We have 
$f_2 = w_2h^{\Delta}$, $f_1 = w_1h$ and then
\begin{align*}
    h &= x(1+w_1h)^{k\Delta}\cdot \left(1+w_2h^{\Delta}\right)^{\Delta^2}.
\end{align*}
By applying the Lagrange inversion theorem, we have for each $m \geq 1$,
\begin{equation}
\begin{aligned}\label{eq-efficiency-generationfuncition}
   \relax [x^{m}]h(x)&=\frac{1}{m}[u^{m-1}]\left((1+w_1u)^{k\Delta}\cdot (1+w_2u^{\Delta})^{\Delta^2}\right)^{m}\\
   &= \frac{1}{m}\sum\limits_{i=0}^{\lfloor m/\Delta\rfloor}\left([u^{m-1-\Delta i}]\left(1+w_1u\right)^{k\Delta m}\cdot [u^{\Delta i}]\left(1+w_2u^{\Delta}\right)^{\Delta^2 m} \right)
\end{aligned}
\end{equation}

Assume $m\geq 3$ and $0\leq i\leq \lfloor\frac{m}{2\Delta}\rfloor$.
Then we have $m\leq 4(m-1-\Delta i)$.
In addition, 
for each $0< \gamma\leq \beta$ where $\gamma,\beta$ are integers,
\begin{equation}\label{stirling}
    \binom{\beta}{\gamma}\leq \left(\frac{\mathrm{e}\beta}{\gamma}\right)^{\gamma}.
\end{equation}
Thus, we have
\begin{equation}\label{eq-generatingfunction-part1}
\begin{aligned}
    &\quad [u^{m-1-\Delta i}]\left(1+w_1u\right)^{k\Delta m}\\
    &= \binom{k\Delta m}{m-1-\Delta i}w_1^{m-1-\Delta i}\\
(\text{by \eqref{stirling}})\quad   &\leq \left(\frac{\mathrm{e}k\Delta m}{m-1-\Delta i}\right)^{m-1-\Delta i}w_1^{m-1-\Delta i}\\
(\text{by $m\leq 4(m-1-\Delta i)$})\quad    &\leq (4\mathrm{e}k\Delta w_1)^{m-1-\Delta i}\\
\left(\text{by $w_1=(8\mathrm{e}k\Delta)^{-1}$}\right)\quad &= 2^{1+\Delta i-m}.
\end{aligned}
\end{equation}

Similarly, assume $m\geq 3$ and $\lfloor\frac{m}{2\Delta}\rfloor< i\leq m$.
We have $m\leq 2\Delta i$.
Thus, we have
\begin{equation}\label{eq-generatingfunction-part2}
\begin{aligned}
    &\quad[u^{\Delta i}]\left(1+w_2u^\Delta\right)^{\Delta^2 m}\\
    &=\binom{\Delta^2 m}{ i}w_2^{i}\\
(\text{by \eqref{stirling}})\quad  &\leq \left(\frac{\mathrm{e}\Delta^2 m}{i}\right)^{i}w_2^{i}\\
(\text{by $m\leq 2\Delta i$})\quad  &\leq (2\mathrm{e}\Delta^3 w_2)^{i}\\
    \left(\text{by $w_2=(4\mathrm{e}\Delta^3)^{-1}$}\right)\quad &= 2^{-i}.
\end{aligned}
\end{equation}
Combining \eqref{eq-efficiency-generationfuncition} with \eqref{eq-generatingfunction-part1} and \eqref{eq-generatingfunction-part2}, we have if $m\geq 3$, 
\begin{align*}
    \relax [x^{m}]h(x)
    \leq  m^{-1}\left(\sum\limits_{i=0}^{\lfloor m/\Delta\rfloor} 2^{1-m+\Delta i-i} \right)
    \leq  2^{1-\lfloor m/\Delta\rfloor}.
\end{align*}

For the case when $m=1$ and $m=2$, one can also verify $[x^{m}]h(x)\leq 2^{1-\lfloor m/\Delta\rfloor}$ directly from \eqref{eq-efficiency-generationfuncition}.
Therefore, by $f_1 = w_1h$ and $w_1=(8\mathrm{e}k\Delta)^{-1}$, we have
\[
[x^m]f_1(x)=w_1[x^m]h(x)\leq 2^{-\lfloor m/\Delta\rfloor-1}\cdot \Delta^{-1}.
\]
Combining with \eqref{eq-def-f1}, \eqref{eq-sumo-tta} is immediate.

In the following, we present the injection from the generalized $\{2,3\}$-trees in $\MSC{T}_{v}^t$ to the $\{a,b\}$-labelled trees in $\+T_a^t$ as claimed.
Then the lemma is proved.
Given integers $r,\ell\geq 0$ and $z\in \{a,b\}$,
define $\+T^{r,\ell}_z\triangleq \{\text{proper subtree $T$ of $\+T_{z}$}\mid (a(T)=r)\land  (b(T) = \ell)\}$. 
To construct an injection from $\MSC{T}_{v}^t$ to $\+T_a^t$ such that each $T\in \MSC{T}_{v}^t$ is mapped to some $T'\in \+T_a^t$ satisfying $\abs{T\cap V}=a(T')$ and $\abs{T\cap \+E}=b(T')$,
it is sufficient to construct an injection from $\MSC{T}_{v}^{r,\ell}$ to $\+T^{r,\ell}_a$ for each integers $r,\ell\geq 0$.
In the following, we present the injection from $\MSC{T}_{v}^{r,\ell}$ to $\+T^{r,\ell}_a$.
Given a generalized $\{2,3\}$-trees $T\in\MSC{T}_{v}^{r,\ell}$,
by \Cref{WTdef} we have $G(T)$ has a rooted directed spanning tree with root $v$.
Choose such a spanning tree $\phi(T)$ arbitrarily. 
Obviously, $\phi(T)\neq \phi(T')$ for different $T,T'\in \MSC{T}_{v}^t$.
Thus, $\phi$ is an injection from the set $\MSC{T}_{v}^{r,\ell}$ to the set $\{\phi(T)\mid T\in \MSC{T}_{v}^{r,\ell}\}$.
For each node $u$ in $\phi(T)$,
we have either $u\in V$ or $u\in \+{E}$.
In addition,
by $T\in\MSC{T}_{v}^{r,\ell}$,
we have $\abs{T\cap V} =r$, $\abs{T\cap\+{E}} = \ell$.
Combining with $\phi(T)$ is a spanning tree of $G(T)$,
we have
the node set of $\phi(T)$ is $T$ and then 
$\#\{\text{nodes of $\phi(T)$ in $V$} \} = r$, $\#\{\text{nodes of $\phi(T)$ in $\+E$} \} = \ell$.

Given $H = (V, \+E)$ and $u\in V$,
let $V(u)\subseteq V$ be the set of vertices $u'$ such that there exists $e\in \+{E}$ satisfying $u,u'\in e$, and
$\+E(u)\subseteq \+{E}$ be the set of hyperedges $e$
such that there exists $e'\in \+E$ such that $u\in e'\land \text{dist}_{\Lin{H}}(e',e)=1$.
Similarly, 
given $e\in \+E$, 
let $V(e)\subseteq V$ be the set of vertices $u$ such that there exists $e'\in \+{E}$ satisfying $u\in e'\land \text{dist}_{\Lin{H}}(e,e')=1\text{ or }2$,
and $\+E(e)\subseteq \+{E}$ be the set of hyperedges $e'$ 
such that $\text{dist}_{\Lin{H}}(e,e')=2\text{ or }3$.
Moreover, recall that each hyperedge in $H$ contains at most $k$ vertices and shares vertices with at most $\Delta$ hyperedges.
We have 
for each $u\in V$ and $e\in \+{E}$, $\abs{V(u)}\leq k\Delta$,
$\abs{\+{E}(u)}\leq \Delta^2$,
$\abs{V(e)}\leq k\Delta^2$,
$\abs{\+{E}(e)}\leq \Delta^3$.
Let $\mathsf{T}_v$ be a rooted directed tree with infinite nodes such that
\begin{itemize}
    \item each node is labelled with some $u\in V\cup \+{E}$ and the root is labelled $v$;
    \item for each $u\in V\cup \+{E}$ and $u'\in V(u)\cup \+E(u)$, each node labelled $u$ has exactly one child labelled $u'$.
\end{itemize}
Given integers $r,\ell\geq 0$, let $\mathsf{T}^{r,\ell}_v$ be the set of subtrees $T$ of $\mathsf{T}_v$ such that the root of $T$ is the root of $\mathsf{T}_v$ and 
$\#\{\text{nodes in $T$ with labels from $V$} \} = r$, $\#\{\text{nodes in $T$ with labels from $\+E$} \} = \ell$.
In addition, by the definition of $\mathsf{T}_v$, one can verify that 
\begin{itemize}
    \item the root of $\mathsf{T}_v$ has a label from $V$;
    \item each node labelled some $u
    \in V$ has $\abs{V(u)}\leq k\Delta$ children with labels from $V$ and $\abs{\+E(u)}\leq \Delta^2$ children with labels from $\+E$;
    \item each node labelled some $u
    \in \+E$ has $\abs{V(u)}\leq k\Delta^2$ children with labels from $V$ and $\abs{\+E(u)}\leq \Delta^3$ children with labels from $\+E$.
\end{itemize}
Combining with the definitions of $\+T_{a}$, $\+T_{a}^{r,\ell}$ and $\mathsf{T}^{r,\ell}_v$, one can verify that 
$\abs{\mathsf{T}^{r,\ell}_v} \leq\abs{\+T_{a}^{r,\ell}}$.
Thus, 
there exists an injection $\psi$ from $\mathsf{T}^{r,\ell}_v$ to $\+T_{a}^{r,\ell}$. 

In addition, by \Cref{WTdef}, for each $T\in\MSC{T}_{v}^{r,\ell}$ and each arc from $u$ to $u'$ in $G(T)$,
we have $u\in T \subseteq V\cup \+{E}$ and $u'\in V(u)\cup \+E(u)$.
Combining with that $\phi(T)$ is a rooted directed spanning tree of $G(T)$,
we have for each node $u$ in $\phi(T)$,
the children of $u$ are from $V(u)\cup \+E(u)$.
Thus, $\phi(T)$ is a subtree of $\mathsf{T}_v$.
Combining with that $v$ is the root of $\phi(T)$ and $\#\{\text{nodes of $\phi(T)$ in $V$} \}=r$, $\#\{\text{nodes of $\phi(T)$ in $E$} \} = \ell$,
we have $\phi(T) \in \mathsf{T}^{r,\ell}_v$.
Formally, $\{\phi(T)\mid T\in \MSC{T}_{v}^{r,\ell}\}\subseteq \mathsf{T}^{r,\ell}_v$.
Recall that $\phi$ is an injection from the set $\MSC{T}_{v}^{r,\ell}$ to the set $\{\phi(T)\mid T\in \MSC{T}_{v}^{r,\ell}\}$.
We have $\phi$ is an injection from  $\MSC{T}_{v}^{r,\ell}$ to $\mathsf{T}^{r,\ell}_v$.
Recall that $\psi$ is an injection from $\mathsf{T}^{r,\ell}_v$ to $\+T_{a}^{r,\ell}$.
We have $\psi\uplus\phi$ is an injection from $\MSC{T}_{v}^{r,\ell}$ to $\+T_{a}^{r,\ell}$. 
Thus the lemma follows.
\end{proof}

Let $\sigma\in\+{Q}^*$ be a partial assignment such that only one variable $v\in V$ has $\sigma(v) = \star$.
The following lemma bounds the length $\ell$ of $\pth(\sigma) = (\sigma_0,\sigma_1,\ldots,\sigma_{\ell})$ and further shows that if the set of bad variables/constraints given $\sigma_{\ell}$ becomes too large, 
then there exists a large generalized $\{2,3\}$-tree $T$ in $H_{\Phi}$ such that the event $\+{E}^{\sigma}_T$ as in \Cref{def:cbad} happens.

\begin{lemma}\label{big23tree}
Let $\sigma\in\+{Q}^*$ 
be a partial assignment with exactly one variable $v\in V$ having $\sigma(v) = \star$, 
and let $\pth(\sigma) =
(\sigma_0,\sigma_1,\ldots,\sigma_{\ell})$. Suppose $\abs{ \vst{\sigma_{\ell}}}+\abs{\csfrozen{\sigma_{\ell}}}\geq  L$ for some integer $L\geq 1$, then there exists a generalized $\{2,3\}$-tree $T=U\uplus E$ of $H_{\Phi}$ with root $v$ such that $\+{E}^{\sigma}_T$ happens
and $\abs{U} + \Delta\cdot \abs{E}\geq L$.
\end{lemma}

Before proving  \Cref{big23tree},  we show we can already prove \Cref{thm-expect-depth} using \Cref{big23tree}.
\begin{proof}[Proof of \Cref{thm-expect-depth}]
The case $i=0$ is trivial. In the following, we assume $i >0$.
Recall $V=\{v_1,v_2,\ldots,v_n\}$.
We claim that for each possible $\sigma$,
given $X^t_0 = \sigma$,
if $\abs{\vstar{X^t_{\ell}}}+\abs{\csfrozen{X^t_{\ell}}} \geq i\Delta$,
then $\mathcal{E}^t_T$ happens for some $L\geq i\Delta$ and $T\in \MSC{T}_{v_t}^{L}$. 
Formally,
\begin{equation}\label{eq-formal-pr-upper-ett}
\Pr{\left(\abs{\vstar{X^t_{\ell}}}+\abs{\csfrozen{X^t_{\ell}}} \!\geq i\Delta\right)\!\land\! \left(X^t_0 = \sigma\right)}
\leq \Pr{\left(\exists L\geq i\Delta, T\in \MSC{T}_{v_t}^{L} \text{ s.t. }\mathcal{E}^t_T \text{ happens}\right)\!\land \!\left( X^t_0 = \sigma\right)}.
\end{equation}
Combining with the union bound, we have
\begin{equation}
\begin{aligned}\label{eq-eventvstarcstar-eventetthappens}
\Pr{\left(\abs{\vstar{X^t_{\ell}}}+\abs{\csfrozen{X^t_{\ell}}} \geq i\Delta\right)\land \left(X^t_0 = \sigma\right)}
\leq \sum_{L\geq i\Delta}\sum\limits_{T\in \MSC{T}_{v_t}^{L}}\Pr{ \+{E}^{t}_T\land \left(X^t_0 = \sigma\right)}.
\end{aligned}
\end{equation}
Therefore, we have 
\begin{align}
 &\Pr{\abs{\vstar{X^t_{\ell}}}+ \abs{\csfrozen{X^t_{\ell}}} \geq i\Delta} \cdot \mathbb{E}\left[\chi(X^t_0)\;\Big\vert\; \abs{\vstar{X^t_{\ell}}}+ \abs{\csfrozen{X^t_{\ell}}} \geq i\Delta\right]\notag\\
=&\Pr{\abs{\vstar{X^t_{\ell}}}+ \abs{\csfrozen{X^t_{\ell}}} \geq i\Delta} \cdot \sum_{\sigma}\left(\chi(\sigma)\Pr{X^t_0 = \sigma\;\Big\vert\; \abs{\vstar{X^t_{\ell}}}+\abs{\csfrozen{X^t_{\ell}}} \geq i\Delta}\right)\notag\\
=& \sum_{\sigma}\left(\chi(\sigma)\Pr{\left(X^t_0 = \sigma\right)\land \left(\abs{\vstar{X^t_{\ell}}}+\abs{\csfrozen{X^t_{\ell}}} \geq i\Delta\right)}\right)\label{eq-boundprexpectation-1}\\
\leq & \sum_{\sigma}\chi(\sigma)\left(\sum\limits_{L\geq i\Delta}\sum\limits_{T\in \MSC{T}_{v_t}^{L}}\Pr{ \+{E}^{t}_T\land \left(X^t_0 = \sigma\right)}\right)= \sum\limits_{L\geq i\Delta}\sum\limits_{T\in \MSC{T}_{v_t}^{L}}\sum_{\sigma}\left(\chi(\sigma)\Pr{ \+{E}^{t}_T\land \left(X^t_0 = \sigma\right)}\right)\notag,
\end{align}
where the first equality is by the law of total expectation, the second equality is by the chain rule, and the inequality is by the 
non-negativity of $\chi(\cdot)$ and \eqref{eq-eventvstarcstar-eventetthappens}.
Moreover, by the law of total expectation, we also have
\begin{equation}\label{eq-sum-chisimgapr}
\begin{aligned}
\sum_{\sigma}\left(\chi(\sigma)\Pr{ \+{E}^{t}_T\land \left(X^t_0 = \sigma\right)}\right)
=   \sum_{\sigma}\left(\chi(\sigma)\Pr{ X^t_0 = \sigma\, \vert\,  \+{E}^{t}_T}\Pr{\+{E}^{t}_T}\right)
\leq \Pr{ \+{E}^{t}_T}\E{\chi(X^t_0)\, \vert\, \+{E}^{t}_T}.
\end{aligned}
\end{equation}
Thus, we have 
\begin{equation*}
\begin{aligned}
 &\Pr{\abs{\vstar{X^t_{\ell}}}+ \abs{\csfrozen{X^t_{\ell}}} \geq i\Delta} \cdot \mathbb{E}\left[\chi(X^t_0)\;\Big\vert\; \abs{\vstar{X^t_{\ell}}}+ \abs{\csfrozen{X^t_{\ell}}} \geq i\Delta\right]\\
 (\text{by \eqref{eq-boundprexpectation-1} and \eqref{eq-sum-chisimgapr}})\quad =& \sum\limits_{L\geq i\Delta}\sum\limits_{T\in \MSC{T}_{v_t}^{L}}\left(\Pr{ \+{E}^{t}_T}\cdot\E{\chi(X^t_0)\mid \+{E}^{t}_T}\right)\\
 = & \sum\limits_{j\geq i}\sum\limits_{r=0}^{\Delta-1}\sum\limits_{T\in \MSC{T}_{v_t}^{j\Delta+r}}\left(\Pr{ \+{E}^{t}_T}\cdot\E{\chi(X^t_0)\mid \+{E}^{t}_T}\right)\\
 (\text{by \Cref{badtree}})\quad \leq &\sum\limits_{j\geq i}\sum\limits_{r=0}^{\Delta-1}\sum\limits_{\substack{T\in \MSC{T}_{v_t}^{j\Delta+r}}}\left(\left(8\mathrm{e}k\Delta\right)^{-\abs{T\cap V }}\cdot \left(4\mathrm{e}\Delta^3\right)^{-\abs{T\cap \+C }}\right)\\
 (\text{by \Cref{lem-prob-g23tree}})\quad\leq &\sum\limits_{j\geq i}\sum\limits_{r=0}^{\Delta-1}(2^{-j-1}\cdot \Delta^{-1})\\
= & 2^{-i}.
\end{aligned}
\end{equation*}

In the following, we prove \eqref{eq-formal-pr-upper-ett}. Then the theorem is proved.
Given a possible assignment $\sigma$ of $X^t_0$,
suppose $X^t_0 = \sigma$ and $\abs{\vstar{X^t_{\ell}}}+\abs{\csfrozen{X^t_{\ell}}} \geq i\Delta$.
Then, by \eqref{eq:simulate-path} we have
$\abs{\vstar{\sigma_{\ell}}}+ \abs{\csfrozen{\sigma_{\ell}}} \geq i\Delta$.
We claim $\sigma = X^{t-1}_{{v_{t}}\gets{\star}}$.
In addition, we have $X^{t-1}(v)\neq \star$ for any $v\in V$, because by \eqref{eq:simulate-path},
$X^{t-1}$ is generated from $X^0 =\hollowstar^V$ with \Cref{Alg:main-eq}
and no vertex in \Cref{Alg:main-eq} is set as $\star$.
Combining with $\sigma = X^{t-1}_{{v_{t}}\gets{\star}}$ and \eqref{eq:simulate-path}, 
we have there exists only one variable $v_t \in V$ such that $\sigma(v_t) = \star$.
Combining with \Cref{big23tree} and $\abs{\vstar{\sigma_{\ell}}}+ \abs{\csfrozen{\sigma_{\ell}}} \geq i\Delta$,
we have $\mathcal{E}^{\sigma}_T$ happens for some $L\geq i\Delta$ and $T\in \MSC{T}_{v_t}^{L}$. 
In addition, by $X^t_0 = \sigma$ and the definitions of $\+{E}^{t}_T$ and $\+{E}^{\sigma}_T$ in \Cref{def:cbad},
we have $\mathcal{E}^{\sigma}_T$ is exact $\+{E}^{t}_T$.
Therefore, $\mathcal{E}^{t}_T$ happens for some $L\geq i\Delta$ and $T\in \MSC{T}_{v_t}^{L}$
and \eqref{eq-formal-pr-upper-ett} is immediate.

At last, we show the claim $\sigma = X^{t-1}_{{v_{t}}\gets{\star}}$ to finish the proofs of \eqref{eq-formal-pr-upper-ett} and the lemma.
By \eqref{eq:simulate-path} we have either $\sigma = X^t_0 = X^{t-1}$ or $\sigma = X^t_0 = X^{t-1}_{{v_{t}}\gets{\star}}$. 
Thus, it is sufficient to show $\sigma \neq X^{t-1}$.
Suppose $\sigma = X^{t-1}$ for contradiction.
Recall that $X^{t-1}(v)\neq \star$ for any $v\in V$.
We have $\sigma(v)=X^{t-1}(v)\neq \star$ for any $v\in V$.
Thus, by \Cref{pathdef} we have $\pth(\sigma)=\sigma$ and $\sigma_{\ell
} = \sigma$.
Thus, $\sigma_{\ell
}(v)\neq \star$ for any $v\in V$.
Combining with \Cref{def:cbad},
we have $\vst{\sigma_{\ell
}}=\emptyset$ and 
$\csfrozen{\sigma_{\ell
}}=\emptyset$,
which is contradictory with 
$\abs{\vstar{\sigma_{\ell}}}+ \abs{\csfrozen{\sigma_{\ell}}} \geq i\Delta$.
Thus, we have $\sigma \neq X^{t-1}$
and then $\sigma = X^{t-1}_{{v_{t}}\gets{\star}}$.
This finishes the proofs of \eqref{eq-formal-pr-upper-ett} and the lemma.
\end{proof}

To prove \Cref{big23tree}, we need to introduce the definition of $\gvc$, an undirected graph with a vertex set over all variables and constraints of the CSP formula.
\begin{definition}[Graph of variables and constraints]\label{def:graphvc}
Let $\Phi =(V,\+Q,\+C)$ be the CSP formula.
Define  $\gvc=(V\cup \+{C},E)$ as the graph where vertices are $V\cup \+{C}$ and there is an edge between two vertices $u,v$ if and only if one of the following holds:
\begin{enumerate}
    \item $u,v\in V$ and there exists some $c\in \+{C}$ such that $u,v\in \var{c}$.\label{graphvc-1}
    \item $u,v\in \+{C}$ and $\text{dist}_{\Lin{H_\Phi}}(u,v)=1\text{ or }2$.\label{graphvc-2}
    \item $u\in V,v\in \+{C}$ and  there exists some $c\in \+{C}$ such that $u\in \var{c}\land \text{dist}_{\Lin{H_\Phi}}(c,v)=1$.\label{graphvc-3}
\end{enumerate}
Furthermore, for any $S\subseteq V\cup \+C$, we let $\gvc(S)$ denote the subgraph of $\gvc$ induced by $S$.
\end{definition} 

Recall in \Cref{definition:boundary-variables}:   for any $\sigma\in \qs$, $\hfix{\sigma}$ denotes the sub-hypergraph of $H^\sigma$ induced by $V^{\sigma}\cap\vfix{\sigma}$. 
Recall that for each $v\in V^{\sigma}$,
$H_v^{\sigma}=(V_v^\sigma,\+{C}_v^{\sigma})$ denotes the connected component in $H^{\sigma}$ that contains the vertex/variable $v$. 
Also, for each $c\in \+C$, we denote the simplified constraint of $c$ under $\sigma$ as~$c^{\sigma}$.

The following lemma states a connectivity property on the graph $\gvc$.

\begin{lemma}\label{lem-connect2}
Assume the condition of \Cref{big23tree}.
Then $\gvc\left(\csfrozen{\sigma_i}\cup \vst{\sigma_i}\right)$ is connected for each $0 \leq i \leq \ell$.

\end{lemma}
\begin{proof}
We prove this lemma by induction on $i$. 
For simplicity, we say a variable or constraint $c$ is connected to a subset $S\subseteq V\cup \+{C}$ in  $\gvc$ if $c$ is connected to some $c'\in S$. The base case is when $i=0$.
By the condition of the lemma, $v$ is the only variable satisfying
$\sigma(v) = \star$.
Combining with $\sigma_0 = \sigma$, 
we have $v$ is the only variable satisfying $\sigma_0(v) = \star$. Therefore,
$\vst{\sigma_0} = \set{v}$.
In addition, 
we have the following claim: each $c\in \csfrozen{\sigma_0}$ is connected to $v$ in $\gvc(\csfrozen{\sigma_0}\cup \vst{\sigma_0})$.
Combining with the claim,
we have $\gvc(\csfrozen{\sigma_0}\cup \vst{\sigma_0})$ is connected.

Now we prove the claim, which completes the proof of the base case.
By $c\in \csfrozen{\sigma_0}$,
we have $c\in \ccon{\sigma_0}\cap \cfrozen{\sigma_0}$.
By $c\in \ccon{\sigma_0}$ and \Cref{definition:boundary-variables},
we have $\vcon{\sigma_0}\cap \var{c}\neq \emptyset$.
Combining with $v$ is the only variable satisfying $\sigma_0(v) = \star$ and the definition of $\vcon{\sigma_0}$,
we have there exists a connected path $c^{\sigma_{0}}_1,c^{\sigma_{0}}_2,\dots,c^{\sigma_{0}}_{t}=c^{\sigma_{0}}\in \+C^{\sigma_{0}}$ such that $\sigma_{0}(v)=\star$, $v \in \var{c^{\sigma_0}_1}$
and $\var{c^{\sigma_{0}}_j}\subseteq V^{\sigma_0}\cap \vfix{\sigma_{0}}$ for each $j <t$.
If $c = c_1$,  then $v \in \var{c}$ and the claim is immediate by the definition of $\gvc$.
In the following, we assume $c \neq c_1$. Let $w_j\in \left(\var{c^{\sigma_0}_{j}}\cap\var{c^{\sigma_0}_{j+1}}\right)$ for each $j<t$.
    Then $w_j\not \in \Lambda(\sigma_{0})$.
By $w_j\in \var{c^{\sigma_0}_j}$ and
$\var{c^{\sigma_0}_j}\subseteq \vfix{\sigma_{0}}$,
we have $w_j\in \vfix{\sigma_{0}}$.
Combining with $w_j\not \in \Lambda(\sigma_{0})$,
we have either $\sigma_{0}(w_j) = \star$, where we set $\widehat{c}_j=w_j$; or 
$w_j \in \var{\widehat{c}_j}$ for some $\widehat{c}_j\in \cfrozen{\sigma_{0}}$. Note that $\widehat{c}_j$ can be either a variable or a constraint. In the former case, we have $\widehat{c}_j\in \vst{\sigma_0}$. In the latter case,
 By $w_j$ is connected to $v$ in $\hfix{\sigma_0}$ through the path $c^{\sigma_0}_1,c^{\sigma_0}_2,\dots,c^{\sigma_0}_{j}$, 
we have 
$w_j\in \vcon{\sigma_0}$.
Thus, we have $\widehat{c}_j \in \ccon{\sigma_{0}}$ by \Cref{definition:boundary-variables}.
Combining with $\widehat{c}_j\in \cfrozen{\sigma_{0}}$,
we have $\widehat{c}_j\in \csfrozen{\sigma_{0}}$.
In summary, we always have $\widehat{c}_j\in \vst{\sigma_0}\cup \csfrozen{\sigma_{0}}$.
\sloppy Moreover, for each $j < t-1$, 
if $\widehat{c}_j\in \+C$, we have $w_j \in \var{c^{\sigma_0}_{j+1}}\cap \var{\widehat{c}^{\sigma_0}_j}$, otherwise we have $\widehat{c}_j=w_j$.
Thus by \Cref{def:graphvc}, it can be verified that $\widehat{c}_j$ and $\widehat{c}_{j+1}$ are adjacent in $\gvc$.
In addition, if $\widehat{c}_1\in \+C$, we have
$w_1 \in \var{c_1}\cap \var{\widehat{c}_1}$ , otherwise we have $\var{\widehat{c}_1}=w_1\in \var{c_1}$, hence $c_1$ and $\widehat{c}_{1}$ are adjacent in $\gvc$. Similarly, we have  $\widehat{c}_{t-1}$ and $c_{t}$ are adjacent in $\gvc$
Thus, we have 
$v,c_1,\widehat{c}_1,\widehat{c}_2,\dots,\widehat{c}_{t-1},c_{t}=c$ is a connected path in $\gvc$.
Combining with $v \in \vst{\sigma_{0}}$ and
$\widehat{c}_j \in \csfrozen{\sigma_{0}}$ for each $j< t$, the claim is immediate.

For the induction step, we prove this lemma for each $i>0$.
We claim that each $v\in \vst{\sigma_i}$ is connected to $\vst{\sigma_{i-1}}$ in $\gvc(\csfrozen{\sigma_i}\cup \vst{\sigma_i})$. In addition,
we can prove each  $c\in \csfrozen{\sigma_{i}}$ is connected to  $\vst{\sigma_{i-1}}$ in $\gvc(\csfrozen{\sigma_i}\cup \vst{\sigma_i})$ by a similar argument to the base case.
Moreover, by the induction hypothesis we have $\gvc(\csfrozen{\sigma{i-1}}\cup \vst{\sigma_{i-1}})$ is connected.
Combining with $\csfrozen{\sigma_{i-1}} \subseteq \csfrozen{\sigma_{i}}$ and $\vst{\sigma_{i-1}} \subseteq \vst{\sigma_{i}}$ by \Cref{cor-mono},
 we have $\gvc(\csfrozen{\sigma_{i}}\cup \vst{\sigma_{i}})$ is connected.

Now we prove the claim that each $v\in \vst{\sigma_i}$ is connected to $\vst{\sigma_{i-1}}$ in $\gvc$, which completes the proof of the lemma.
If $v\in \vst{\sigma_{i-1}}$, the claim is immediate by $\vst{\sigma_{i-1}}\subseteq \vst{\sigma_{i}}$.
In the following, we assume $v\in \vst{\sigma_i}\setminus \vst{\sigma_{i-1}}$, where by \Cref{pathdef}  we have $v=\nextvar{\sigma_{i-1}}$. 
By the definition of $\nextvar{\cdot}$, we have $v\in \vinf{\sigma_{i-1}}$ and then $v \in \var{\widehat{c}}$ for some constraint $\widehat{c} \in \ccon{\sigma_{i-1}}$.
In addition, by $\widehat{c}\in \ccon{\sigma_{i-1}}$ one can verify that
there exists a variable $w\neq v$ and a connected path $c^{\sigma_{i-1}}_1,c^{\sigma_{i-1}}_2,\dots,c^{\sigma_{i-1}}_{t} = \widehat{c}^{\sigma_{i-1}}\in\+C^{\sigma_{i-1}}$ such that $\sigma_{i-1}(w)=\star$, $w \in \var{c^{\sigma_{i-1}}_1}$ and $\var{c^{\sigma_{i-1}}_j}\subseteq V^{\sigma_{i-1}}\cap \vfix{\sigma_{i-1}}$ for each $j<t$.
Then there are two possibilities for $\widehat{c}$.
\begin{itemize}
\item If $\widehat{c} = c_1$, we have $v,w\in \var{c_1}$.
Therefore,
$v$ is connected to $w$ in $\gvc(\csfrozen{\sigma_{i-1}}\cup \vst{\sigma_{i-1}}\cup \{v\})$. Also by $\sigma_{i-1}(w)=\star$ we have $w\in \vst{\sigma_{i-1}}$. In addition, by \Cref{cor-mono} we have 
$$\csfrozen{\sigma_{i-1}}\cup \vst{\sigma_{i-1}}\cup \{v\}\subseteq \csfrozen{\sigma_{i-1}}\cup \vst{\sigma_{i}}\subseteq \csfrozen{\sigma_{i}}\cup \vst{\sigma_{i}}.$$
Thus the claim is immediate.
\item Otherwise, $\widehat{c} \neq c_1$.
Similarly to the base case, 
one can find a connected path $c_1,\widehat{c}_1,\widehat{c}_2,\dots,\widehat{c}_{t-1},\\c_{t}=\widehat{c}$ in $\gvc$,
where $w \in \var{c_1}$,
$\widehat{c}_j \in \vst{\sigma_{i-1}}\cup \csfrozen{\sigma_{i-1}}$ for each $j< t$, 
and there exists $w_{t-1} \in \var{c_{t}}\cap \var{\widehat{c}_{t-1}}$.
Recall that $v\in \var{\widehat{c}}$ and $w\in \var{c_1}$
Thus, $w,c_1,\widehat{c}_1,\widehat{c}_2,\dots,\widehat{c}_{t-1},v$ is also a connected path in $\gvc$.
Combining with $w\in \vst{\sigma_{i-1}}$,
$\widehat{c}_j \in \vst{\sigma_{i-1}}\cup \csfrozen{\sigma_{i-1}}$ for each $j< t$,
we have $v$ is connected to $\vst{\sigma_{i-1}}$ in $\gvc(\csfrozen{\sigma_i}\cup \vst{\sigma_i})$.
Thus the claim holds.
\end{itemize}
\end{proof}

The following crucial technical lemma states that when the set of ``bad" variables and constraints  is large, there exists a large generalized $\{2,3\}$-tree capturing the occurence of such event.

\begin{lemma}
\label{bigWT}

Let $\sigma\in\qs$ 
be a partial assignment. If $\gvc\left(\csfrozen{\sigma}\cup \vst{\sigma}\right)$ is connected, then there always exists a generalized $\{2,3\}$-tree $T=U\uplus E$ in $H_{\Phi}$ with root $v$ such that 
\[
U=\vst{\sigma_{\ell}},
\quad
E\subseteq \csfrozen{\sigma}, 
\quad \text{ and }\quad
\Delta\cdot \abs{E}\geq \abs{\csfrozen{\sigma}}
\]
\end{lemma}

\begin{proof}
We construct a generalized $\{2,3\}$-tree $T$
and a rooted directed tree $T^{\ast}$ as follows.
Let $B$ denote a subset of variables and constraints and $R$ denote a subset of constraints.
For simplicity,  let $\gvc$ denote $\gvc\left(\csfrozen{\sigma}\cup \vst{\sigma}\right)$ .  
\begin{itemize}
    \item Initially, let $T=\{v\}$, $B = \csfrozen{\sigma}\cup \vst{\sigma}\setminus \{v\}$, $R=\emptyset$, and let $T^{\ast}$ be a tree with only one vertex $v$;
    \item while $B\neq \emptyset$
    \begin{itemize}
    \item Choose $u\in T,w\in B$ according to the following:
    \begin{enumerate}[label=(\roman*)]
        \item If there exists $u\in T,w\in B$ such that $\text{dist}_{\gvc}(u,w)=1$, choose any such $(u,w)$; \label{choosea}
        \item Otherwise, if there exists $u\in T\cap \+C,w\in B$ such that there exists $c\in \+C$  satisfying $\var{u}\cap \var{c}\neq \emptyset$ and $\text{dist}_{\gvc}(c,w)=1$, choose any such $(u,w)$; \label{chooseb}
        \item Otherwise, choose any $u\in T,w\in B$. \label{choosec}
    \end{enumerate}
        \item add $w$ to $T$, and add a node $w$ and an arc from $u$ to $w$ to $T^{\ast}$.
        \item update $B,R$ as follows:
        \begin{enumerate}[label=(\alph*)]
        \item If $w\in V$, update $B\gets B\setminus \{w\},R\gets R$\;  \label{removea}
        \item If $w\in \+{C}$, update $B\gets B\setminus \Gamma(w),R\gets R\cup \Gamma(w)$, where $\Gamma(w)=\{c\in \+{C}\mid \var{w}\cap \var{c}\neq \emptyset\}$.\label{removeb}
    \end{enumerate}
        \end{itemize}
\end{itemize}
Let $U = T\cap V$ and $E = T\cap \+C$.
We claim that when the above construction process stops, $T=U\uplus E$ is a generalized $\{2,3\}$-tree in $H_{\Phi}$ with root $v$ satisfying $
U=\vst{\sigma},
E\subseteq \cfrozen{\sigma}\text{ and }
\Delta\cdot \abs{E}\geq \cfrozen{\sigma}. $

We first show that $T$ is a generalized $\{2,3\}$-tree in $H_{\Phi}$ with root $v$.  From the construction process, we know each $w\in \csfrozen{\sigma}\cup \vst{\sigma}$ is either added into $T$ or removed from $B$ in \Cref{removeb} when some $c\in \+C$ is added into $T$ and $w\in \Gamma(c)$. We will simply refer to the latter case as ``removed in \Cref{removeb}'' for the rest of the proof. If $w$ is removed in \Cref{removeb}, then $w\in \+{C}$ and there exists $c\in \+{C}$ such that $\var{c}\cap \var{w}\neq \emptyset$ and $c$ is added into $T$.
Therefore for all distinct $u, v\in T\cap C$, we have $\text{dist}_{\Lin{H_{\Phi}}}(u,v)\geq 2$. 
Thus, by \Cref{WTdef}, to show $T$ is a generalized $\{2,3\}$-tree in $H_{\Phi}$,
it is sufficient to show the claim that $T^{\ast}$ is a rooted spanning tree of $G(T)$, where $G(T)$ is defined in \Cref{WTdef}.
In the next, we prove this claim.
By the construction process, we have $T^{\ast}$ is a rooted connected tree with the node set $T$ immediately.
Thus, it is sufficient to show that
each arc of $T^*$ is in $G(T)$.
For each $u\in T\setminus\{v\}$, let $w$ be the only father of $u$ in $T^*$. In other words, there is an arc from $w$ to $u$.
Then when the pair $u,w$ is chosen in the construction process, we have the following cases:
\begin{itemize}
    \item  $\text{dist}_{\gvc}(u,w)=1$. This corresponds to the case of \Cref{choosea}. In this case, if either $u\in V$ or $w\in V$, by comparing \Cref{def:graphvc} with the definition of $G(T)$ in \Cref{WTdef},
    one can verify that the arc from $u$ to $w$ must be an arc of $G(T)$.
    Otherwise, we have $u,w\in \+C$.
    By \Cref{removeb} of the construction process, we have $\text{dist}_{\Lin{H_\Phi}}(u,w)\geq 2$. By \Cref{def:graphvc} and  $\text{dist}_{\gvc}(u,w)=1$, we have $\text{dist}_{\Lin{H_\Phi}}(w,u)=2$.
    Thus, one can also verify that the arc from $w$ to $u$ is an arc of $G(T)$ by \Cref{WTdef}.
    
    \item Otherwise, $\text{dist}_{\gvc}(u,w)>1$.  Then the condition in \Cref{choosea} is not satisfied, otherwise, some $u'\in T,w'\in B$ where $\text{dist}_{\gvc}(u',w')=1$ rather than $u,w$ will be chosen
    in the process.
    Thus, we have $\text{dist}_{\gvc}(T,B)\triangleq \min\limits_{a\in T,b\in B}\text{dist}_{\gvc}(a,b)>1$ when the pair $u,w$ is chosen in the construction process. 
    Moreover, it is straightforward from the construction process that the three sets $T,R$ and $B$ form a partition of the vertex set of $\gvc$. 
    In addition, we have $\gvc$ is connected by assumption. Therefore $\text{dist}_{\gvc}(T\cup R, B)=1$.
    Combining with $\text{dist}_{\gvc}(T,B)>1$,
    we have $\text{dist}_{\gvc}(R,B)=1$ and there exist $c'\in R, w'\in B$ where $\text{dist}_{\gvc}(c',w')=1$. 
    By $c'\in R$ and \Cref{removeb} of the construction process, there must be some $u'\in \+{C}$ such that  
    $c'\in \Gamma(u')$ and $c'$ is removed from $B$ in \Cref{removeb} when $u'\in \+C$ is added into $T$. 
    Thus, we have $\var{c'}\cap \var{u'}\neq \emptyset$
    and $u'$ has been added to $T$ when the pair $u,w$ is chosen in the construction process.
    Thus, $u',w'$ satisfy the condition in \Cref{chooseb}. 
    Therefore, we have $u,w$ also satisfy the condition in \Cref{chooseb},
    otherwise, $u',w'$ rather than $u,w$ will be chosen
    in the process.
    Thus, we have $u\in T\cap \+{C}$ and
    there exists $c\in \+C$  satisfying $\var{u}\cap \var{c}\neq \emptyset$ and $\text{dist}_{\gvc}(c,w)=1$.
    If $w\in \+{C}$, by $\text{dist}_{\gvc}(c,w)=1$ and \Cref{def:graphvc},
    we have  $\text{dist}_{\Lin{H_\Phi}}(c,w)=1\text{ or }2$.
    Combining with $\var{u}\cap \var{c}\neq \emptyset$,
    we have 
    $\text{dist}_{\Lin{H_\Phi}}(u,w)=1,2\text{ or }3$.
    Combing with $\text{dist}_{\gvc}(u,w)>1$,
    we have $\text{dist}_{\Lin{H_\Phi}}(u,w)=3$.
    Otherwise, $\text{dist}_{\Lin{H_\Phi}}(u,w)=1 \text{ or } 2$.
    By \Cref{def:graphvc}, we have $\text{dist}_{\gvc}(u,w)=1$,
    which is a contradiction.
    Combining $u,w\in \+C$, $\text{dist}_{\Lin{H_\Phi}}(u,w)=3$ with \Cref{WTdef},
    we have the arc from $u$ to $w$ is also an arc of $G(T)$.
    If $w\in V$, by $\text{dist}_{\gvc}(c,w)=1$ and \Cref{def:graphvc},
    there exists some $c'\in \+{C}$ such that $w\in \var{c'}\land \text{dist}_{\Lin{H_\Phi}}(c,c')=1$.
    Combining with $\var{u}\cap \var{c}\neq \emptyset$,
    we have 
    $w\in \var{c'}\land \text{dist}_{\Lin{H_\Phi}}(u,c')= 1 \text{ or } 2$.
    Combining with \Cref{WTdef},
    we also have the arc from $u$ to $w$ is an arc of $G(T)$.
\end{itemize}
This shows that $T^*$ is a directed spanning tree of $G(T)$.
In addition, it is easy to verify that $v$ is the root of $T^*$ .
Therefore, $T$ is a generalized $\{2,3\}$-tree in $H_{\Phi}$ with root $v$.

At last, we show
$U=\vst{\sigma}$ and 
$E\subseteq \cfrozen{\sigma}$.
By the construction process, 
one can verify that $U=\vst{\sigma}$.
Moreover, by \Cref{removeb} of the process, at most $\abs{\Gamma(w)}\leq \Delta - 1$ constraints are moved from $B$ when a constraint $w\in \+{C}\cap B$ is added to $T$.
Combining with $\csfrozen{\sigma}\subseteq B$ in the initialization, 
we have $\Delta\cdot \abs{E}\geq \abs{\csfrozen{\sigma}}$. This completes the proof.
\end{proof}

Now we are ready to prove \Cref{big23tree}.
\begin{proof}[Proof of \Cref{big23tree}]

  By $\sigma\in\qs$ 
is a partial assignment with exactly one variable $v\in V$ having $\sigma(v) = \star$ and  \Cref{lem-connect2}, we have $\gvc\left(\csfrozen{\sigma_{\ell}}\cup \vst{\sigma_{\ell}}\right)$ is connected. Combining with \Cref{bigWT}, there exists a generalized $\{2,3\}$-tree $T=U\uplus E$ in $H_{\Phi}$ such that $U=\vst{\sigma_{\ell}}$, $E\subseteq \csfrozen{\sigma_{\ell}}$ and $\Delta\cdot\abs{E}\geq \abs{\csfrozen{\sigma_{\ell}}}$. It is straightforward to verify that
$\Delta\cdot \abs{E}+\abs{U}\geq L$ by the assumption that $\abs{\vst{\sigma_{\ell}}}+\abs{\csfrozen{\sigma_{\ell}}}\geq L$.  
\end{proof}

We then prove \Cref{thm-expect-depth2}. Recall the definition of generalized $\{2,3\}$-tree in \Cref{WTdef} and the definition of $H_{\Phi} = (V,\+C)$ in \Cref{sec:rejection-sampling}.
We have the following lemma which is similar to \Cref{big23tree}.

\begin{lemma}\label{lem-big23tree-rs}
For every $v\in V$, 
there exists a generalized $\{2,3\}$-tree $T = \{v\}\uplus E$ in $H_{\Phi}$ with root $v$ where
$
E\subseteq \cfrozen{X^n}$
and 
$
\Delta^2\abs{E}\geq \abs{\+{C}^{X^n}_v}$.
\end{lemma}

Before proving \Cref{lem-big23tree-rs}, we show we can already prove \Cref{thm-expect-depth2} using \Cref{lem-big23tree-rs}.

\begin{proof}[Proof of \Cref{thm-expect-depth2}]
The case $i=0$ is trivial. In the following, we assume $i>0$.  Recall the definition of $H_{\Phi} = (V,\+C)$ in \Cref{sec:rejection-sampling}.
Given integers $t,r,\ell\geq 0$ and $v\in V$, 
recall $\MSC{T}_{v}^t$ and $\MSC{T}_{v}^{r,\ell}$ in \Cref{WTdef}.
For any $i\geq 1$ and $v\in V$, we have 
\begin{equation}\label{eq-rejsamp1}
\begin{aligned}
    & \Pr{\abs{\+C^X_v}\geq 2i\Delta^2} \\
(\text{by \Cref{lem-big23tree-rs}}) \quad  \leq & \sum\limits_{j\geq 2i}\sum_{\substack{T\in \MSC{T}_v^{1,j}}}\Pr{T\cap \+C \subseteq \cfrozen{X}}\\
  (\text{by \Cref{badtree2}})\quad  \leq & \sum\limits_{j\geq 2i}\sum_{\substack{T\in \MSC{T}_v^{1,j}}}(4\mathrm{e}\Delta^3)^{-\abs{T\cap \+C}}\\
  =&8\mathrm{e}{k}\Delta\sum\limits_{j\geq 2i}\sum_{\substack{T\in \MSC{T}_v^{1,j}}}(8\mathrm{e}k\Delta)^{-1}(4\mathrm{e}\Delta^3)^{-\abs{T\cap \+C}}\\
  \left(\text{by $\MSC{T}_v^{1,j}\subseteq \MSC{T}_v^{j\Delta+1}$}\right)\quad\leq & 8\mathrm{e}{k}\Delta\sum\limits_{j\geq 2i}\sum_{\substack{T\in \MSC{T}_v^{j\Delta+1}}}(8\mathrm{e}k\Delta)^{-\abs{T\cap V}}(4\mathrm{e}\Delta^3)^{-\abs{T\cap \+C}}\\
  (\text{by \Cref{lem-prob-g23tree}})\quad\leq & 8\mathrm{e}{k}\Delta\cdot \sum\limits_{j\geq 2i}\left(2^{-j-1}\cdot \Delta^{-1}\right)\\
  \leq & 8\mathrm{e}{k}\cdot 4^{-i}.
\end{aligned}
\end{equation}
\end{proof}

We then finish \Cref{sec:boost} by proving \Cref{lem-big23tree-rs}. Let $X =X^n$ where $X^0,X^1,\dots,X^n$ is
the partial assignment sequence of \Cref{Alg:main} in \Cref{def-pas-main}.
Then we have the following lemma.

\begin{lemma}\label{lem-subset-rs}
$ V\setminus \Lambda(X) \subseteq \var{\cfrozen{X}}$.
\end{lemma}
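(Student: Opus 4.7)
The plan is to show that every unassigned variable in $X = X^n$ must be witnessed by an $X$-frozen constraint, which will follow from a monotonicity property of the $\sigma$-frozen sets along the sequence $X^0, X^1, \ldots, X^n$ defined in \Cref{def-pas-main}.

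\textbf{Step 1 (Reduction to frozen constraints at the moment of skipping).} Fix any $v \in V \setminus \Lambda(X)$, and let $k\in[n]$ be the unique index with $v = v_k$. By inspection of \Cref{Alg:main}, the only way $v_k$ remains unassigned in $X^n$ is that the test in \Cref{line-main-if} failed at iteration $k$, i.e., $v_k$ was $X^{k-1}$-fixed, in which case $X^k = X^{k-1}$. Moreover, the main algorithm assigns values only at \Cref{line-main-sample}, and this assignment takes a value in $Q_{v_k}$ (never the symbol $\star$, which appears only inside \recsample); hence $X^{k-1}(v_k) = \hollowstar$, so $v_k \notin \Lambda^+(X^{k-1})$. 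By the definition of $\vfix{X^{k-1}}$ in \Cref{definition:frozen-fixed}, this forces $v_k \in \vbl(c)$ for some $c \in \cfrozen{X^{k-1}}$.

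\textbf{Step 2 (Monotonicity along the main-algorithm sequence).} I will prove that $\cfrozen{X^i} \subseteq \cfrozen{X^{i+1}}$ for every $0 \leq i < n$. Fix $c \in \cfrozen{X^i}$. Then $\vbl(c) \subseteq \vfix{X^i}$. There are two cases for the $(i+1)$-th iteration of the for loop: if $v_{i+1} \notin \vbl(c)$, then $X^{i+1}$ and $X^i$ agree on $\vbl(c)$, so by the consistency clause of \Cref{assumption:frozen-oracle} the oracle $\checkf(c,\cdot)$ returns the same answer and $c$ remains frozen; if $v_{i+1} \in \vbl(c)$, then $v_{i+1}$ is $X^i$-fixed, so the test in \Cref{line-main-if} fails, no assignment is made, $X^{i+1} = X^i$, and again $c \in \cfrozen{X^{i+1}}$.

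\textbf{Step 3 (Conclusion).} Iterating Step 2 gives $c \in \cfrozen{X^{k-1}} \subseteq \cfrozen{X^n} = \cfrozen{X}$. Since $v = v_k \in \vbl(c)$, we conclude $v \in \vbl(\cfrozen{X})$, which proves $V \setminus \Lambda(X) \subseteq \vbl(\cfrozen{X})$.

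The only nontrivial point is the monotonicity in Step 2; but it is purely a structural observation about the main loop and the consistency assumption on the frozen oracle, and does not rely on any probabilistic argument. (Note in particular that \Cref{lem-mono} does not apply directly because the transition $X^i \to X^{i+1}$ is driven by the fixed enumeration $v_1,\ldots,v_n$ rather than by $\nextvar{\cdot}$, so a short self-contained argument as above is needed.)
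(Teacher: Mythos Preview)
Your proposal is correct and follows essentially the same approach as the paper: identify a frozen constraint containing $v_k$ at the moment iteration $k$ was skipped, then push it forward to $X^n$ via monotonicity of $\cfrozen{\cdot}$ along the sequence $X^0,\ldots,X^n$. The paper defers the monotonicity to ``similar to \Cref{lem-mono}''; your explicit case analysis in Step~2 (noting that \Cref{lem-mono} is stated for $\nextvar{\cdot}$-driven updates and so a self-contained argument is needed) is a welcome clarification rather than a different route.
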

\begin{proof}
Given $v_i \in V\setminus \Lambda(X)$ where $i\in [n]$, by Lines \ref{line-main-for}-\ref{line-main-sample} of \Cref{Alg:main} and \Cref{sampcor},
we have $v_i\in \vfix{X^{i-1}}$, otherwise, $v_i$ will be assigned a value from $Q_{v_i}$ in \Cref{line-main-sample}.
Moreover, again by Lines \ref{line-main-for}-\ref{line-main-sample} of \Cref{Alg:main}, we also have $v_i \notin \Lambda^+(X^{i-1})$.
Combining with $v_i\in \vfix{X^{i-1}}$, we have $v_i \in \cfrozen{X_{i-1}}$.
In addition, similar to \Cref{lem-mono}, 
one can also prove that
$\cfrozen{X^{j-1}}\subseteq \cfrozen{X^{j}}$ for each $j\in [n]$.
Then we have $\cfrozen{X^{i-1}}\subseteq \cfrozen{X}$ by induction.
Combining with $v_i \in \cfrozen{X_{i-1}}$,
we have $v_i\in \var{\cfrozen{X}}$.
Then the lemma follows.
\end{proof}

Now we can prove \Cref{lem-big23tree-rs}.

\begin{proof}[Proof of \Cref{lem-big23tree-rs}]
Let $\{\Phi_i^X= (V_i^X,\+{C}_i^X)\}\mid 1\leq i\leq K\}$ be the decomposition of $\Phi^X$.
If $v\not \in V_i$ for each $i\in [k]$, we have 
$\+C^X_v = \emptyset$ and
the lemma is trivial.
In the following, we assume
\emph{w.l.o.g.} that 
$v\in V^X_i$
for some $i\in K$.
Then we have 
$\Phi^X_v = (V^X_i,\+C^X_i)$.
Let 
$$S \triangleq  \left\{c\in \cfrozen{X}\mid c^X \in \+C^X_{i}\right\}.$$

At first, we prove that there exists some $c_v\in S$ such that $v \in \var{c_v}$.
By $v\in V^X_i$,
we have $v \not\in \Lambda(X)$. 
Combining with \Cref{lem-subset-rs},
we have there exists some $c_v \in \cfrozen{X}$ such that $v \in \var{c_v}$.
In addition, by $v \in \var{c_v}$ and $c_v \in \cfrozen{X}$, we also have $c^X_v \in \+C^X_{i}$.
Combining with $c_v \in \cfrozen{X}$, we have $c_v \in S$.

Now we prove $\abs{\+C^X_{i}}\leq \Delta\abs{S}$.
For each $c^X\in \+C^X_i$,
we have there exists a connected path $c^X_1,c^X_2,\dots,c^X_{t} = c^X\in \+C^X_i$ such that $v \in \var{c^X_1}$.
Let $v' \in \var{c^X}$.
We have $v' \not\in \Lambda(X)$.
Combining with \Cref{lem-subset-rs},
we have $v' \in \var{\widehat{c}}$ for some $\widehat{c}\in \cfrozen{X}$.
Then we have $\widehat{c}^{X} \in \+C^X_{i}$ because
there exists a connected path $c^X_1,c^X_2,\dots,c^X_{t},\widehat{c}^{X}\in \+C^X_i$ where $v \in \var{c^X_1}$.
Combining $\widehat{c}\in \cfrozen{X}$ with $\widehat{c}^{X} \in \+C^X_{i}$,
we have $c\in S$.
In summary, for each each $c^X\in \+C^X_i$, there exists some $\widehat{c}\in S$ such that $\var{c^X}\cap \var{\widehat{c}^X}\neq \emptyset$.
Thus, we have  $\abs{\+C^X_{i}}\leq \Delta\abs{S}$.

In the next, we prove that $\gvc(S)$ is connected.
It is enough to prove that any two different constraints $c,\widehat{c}\in S$ are connected in $\gvc(S)$.
Given $c,\widehat{c}\in S$, 
we have $c^X,\widehat{c}^X$ are in $\+C^X_{i}$.
Therefore, 
we have there exists a connected path $c^X = c^X_1,c^X_2,\dots,c^X_t = \widehat{c}^X\in \+C^X_{i}$.
If $t\leq 3$, obviously $c$ and $\widehat{c}$ are connected in $G^2(S)$.
In the following, we assume that $t>3$.
Let $w_j \in \left(\var{c^X_{j}}\cap \var{c^X_{j+1}}\right)$ for each $j< t$.
Then we have 
$w_j\not \in \Lambda(X)$.
Combining with \Cref{lem-subset-rs},
we have $w_j \in \var{\widehat{c}^X_j}$ for some $\widehat{c}_j\in \cfrozen{X}$.
Moreover, we also have $\widehat{c}^X\in \+C^X_{i}$,
because $\widehat{c}^X_j$ is connected to $c^X$ through $c^X_2,\dots,c^X_j\in \+C^X_{i}$.
Thus, we have $\widehat{c}_j\in S$.
In addition, for each $\widehat{c}_{j},\widehat{c}_{j+1}$ where $j< t-1$,
we have $\widehat{c}_{j}$ and $\widehat{c}_{j+1}$  are connected in $G^2(\+C)$,
because $w_j\in \var{\widehat{c}_{j}}\cap \var{c_{j+1}}$
and $w_{j+1}\in \var{\widehat{c}_{j+1}}\cap \var{c_{j+1}}$.
Thus, the constraints $c = c_1,\widehat{c}_1,\widehat{c}_2,\dots,\widehat{c}_{t-1},c_t = \widehat{c}$ forms a connected path in $\gvc$.
Combining with $\widehat{c}_j\in S$ for each $j\leq t-1$ and $c,\widehat{c}\in S$, we have the constraints $c,\widehat{c}$ are connected in $\gvc(S)$.

In summary, we have $c_v\in S\subseteq \cfrozen{X}$, $\Delta\abs{S}\geq \abs{\+C^X_{i}}$ and $\gvc(S)$ is connected. Combining with $v\in \var{c_v}$  we have $\gvc(S\cup \{v\})$ is also connected.
By going through the process in the proof of \Cref{bigWT}, 
we have there exists a subset of constraints and variables $T\subseteq S\cup \{v\}$ such that $T=\set{v}\uplus E$ is a generalized $\{2,3\}$-tree in $H_{\Phi}$ with root $v$ and $$\abs{E}\geq \abs{S}/\Delta\geq \abs{\+C^X_{i}}/\Delta^2 = \abs{\+C^X_{v}}/\Delta^2.$$

In addition, if $\abs{\+{C}^{X}_v}\geq L\Delta$, then it is straightforward to verify that
$\Delta\cdot \abs{E}+1\geq L$ and the lemma follows.
\end{proof}

\subsection{Basic tail bounds for bad events}\label{sec:raw-tail-bound}

In this subsection, we prove \Cref{badtree} and \Cref{badtree2}. For any constraint $c\in \+{C}$ and partial assignment  $\sigma\in \qs$, let 
$Z(\sigma,c)\defeq\abs{\var{c}\setminus \Lambda(\sigma) }$ denote 
the number of unassigned variables in $\var{c}$ under $\sigma$.
For any subset of variables and constraints $T=U \uplus E$ and partial assignment $\sigma\in \qs$,
define
\begin{equation}\label{eq-definition-g}
    g(\sigma,T)\triangleq \prod\limits_{v\in U\setminus \vst{\sigma}}\left(1-q_v\cdot\theta_v\right)\prod\limits_{c\in E}\left((0.99\pprime)^{-1}\mathbb{P}[\neg c\mid \sigma](1+\eta)^{Z(\sigma,c)}\right).
\end{equation}
The following lemma is the core of \Cref{badtree}, which shows that $g(\hollowstar^V,T)$
is an upper bound of the left hand of \eqref{eq-lemma-badtree}.

\begin{lemma}\label{WTprob2}
Given $t$, $(X^0,X^1,\ldots,X^{t-1},X^t_0,X^t_1,\ldots,X^t_{\ell})$ and $T=U \uplus E$ as in \Cref{badtree}, for any partial assignment $\sigma\in \qs$ and any integer $0\leq i\leq t-1$ where $\Pr{X^i=\sigma}>0$,  
we have
\begin{align}\label{eq-lemma-WTprob1}
\Pr{ \+{E}^{t}_T\mid X^i=\sigma}\cdot\E{\chi(X^t_{0})\mid \+{E}^{t}_T\land X^i=\sigma}\leq g(\sigma,T).\end{align}
Specifically, 
\begin{align}\label{eq-lemma-WTprob2}
\Pr{ \+{E}^{t}_T}\cdot\E{\chi(X^t_{0})\mid \+{E}^{t}_T}\leq g(\hollowstar^V,T).\end{align}
\end{lemma}

The following lemma is a special case of \eqref{eq-lemma-WTprob1}, given $i=t-1$ and $r_t=1$ where $r_t$ is defined as in \eqref{eq:simulate-path}.

\begin{lemma}\label{WTprob}
Let $\sigma\in\qs$ 
be a partial assignment satisfying $\mathbb{P}[\neg c\mid \sigma]\leq \pprime q$ for all $c\in \mathcal{C}$. 
For any subset of variables and constraints $T=U\uplus E$ such that the constraints in $E$ are disjoint, 
\begin{align}\label{eq-lemma-WTprob}
\Pr{ \+{E}^{\sigma}_T}\cdot\E{\chi(\sigma)\mid \+{E}^{\sigma}_T}\leq g(\sigma,T).\end{align}
\end{lemma}

The following lemma provides a useful recursion of $g(\cdot,\cdot)$, which
is used in the proof of \Cref{WTprob}.
\begin{lemma}\label{gaddition}
Under the condition of \Cref{WTprob}, if $\nextvar{\sigma}=u\neq \perp$, then
\[
\sum\limits_{x\in Q_u}\left(\mu_{u}^{\sigma}(x)\cdot g(\sigma_{u\gets x},T)\right)\leq g(\sigma,T)
\]
\end{lemma}
\begin{proof}
To prove this lemma, it is sufficient to show that 
\begin{equation}\label{eq-core-gaddition}
\begin{aligned}
&\sum\limits_{x\in Q_u}\left(\mu^{\sigma}_{u}(x)\prod\limits_{v\in U\setminus \vst{\sigma_{u\gets x}}}\left(1-q_v\cdot \theta_v\right)\prod\limits_{c\in E}\left(\mathbb{P}[\neg c\mid \sigma_{u\gets x}](1+\eta)^{Z(\sigma_{u\gets x},c)}\right)\right)\\
\leq & \prod\limits_{v\in U\setminus \vst{\sigma}}\left(1-q_v\cdot \theta_v\right)\prod\limits_{c\in E}\left(\mathbb{P}[\neg c\mid \sigma](1+\eta)^{Z(\sigma,c)}\right)
\end{aligned}
\end{equation}
Combining with \eqref{eq-definition-g}, we have
\begin{equation*}
\begin{aligned}
&\sum\limits_{x\in Q_u}\left(\mu_{u}^{\sigma}(x)\cdot g(\sigma_{u\gets x},T)\right)\\
(\text{by \eqref{eq-definition-g}} )\quad=&\sum\limits_{x\in Q_u}\left(\mu^{\sigma}_{u}(x)\prod\limits_{v\in U\setminus \vst{\sigma_{u\gets x}}}\left(1-q_v\cdot \theta_v\right)\prod\limits_{c\in E}\left((0.99\pprime)^{-1}\mathbb{P}[\neg c\mid \sigma_{u\gets x}](1+\eta)^{Z(\sigma_{u\gets x},c)}\right)\right)\\
(\text{by \eqref{eq-core-gaddition}} )\quad\leq& \prod\limits_{v\in U\setminus \vst{\sigma}}\left(1-q_v\cdot \theta_v\right)\prod\limits_{c\in E}\left((0.99\pprime)^{-1}\mathbb{P}[\neg c\mid \sigma](1+\eta)^{Z(\sigma,c)}\right)\\
(\text{by \eqref{eq-definition-g}} )\quad=&g(\sigma,T).
\end{aligned}
\end{equation*}
The lemma is proved.
In the following, we prove \eqref{eq-core-gaddition}.

In addition, by the constraints in $E$ are disjoint, we have $\var{c}\cap \var{c'}\neq \emptyset$ for any different $c,c'\in E$.
Thus, there exists at most one unique constraint $c_0\in E$ such that $u\in \var{c_0}$.
Let $S = E\setminus \{c_0\}$ if $u\in \var{E}$ and $S = E$ otherwise.
Thus for each $c\in S$, we have $u\not\in \var{c}$.
Then
$\mathbb{P}[\neg c\mid \sigma_{u\gets x}]=\mathbb{P}[\neg c\mid \sigma]$ and $Z(\sigma_{u\gets x},c) = Z(\sigma,c)$ for each $x\in Q_u$.
Therefore,
\begin{equation}\label{eq-sum-musigmauxtimespivcons}
\begin{aligned}
&\sum\limits_{x\in Q_u}\left(\mu^{\sigma}_{u}(x)\prod\limits_{c\in E}\left(\mathbb{P}[\neg c\mid \sigma_{u\gets x}](1+\eta)^{Z(\sigma_{u\gets x},c)}\right)\right)\\
= & \prod\limits_{c\in S}\left(\mathbb{P}[\neg c\mid \sigma_{u\gets x}](1+\eta)^{Z(\sigma_{u\gets x},c)}\right)\sum\limits_{x\in Q_u}\left(\mu^{\sigma}_{u}(x)\prod\limits_{c\in E\setminus S}\left(\mathbb{P}[\neg c\mid \sigma_{u\gets x}](1+\eta)^{Z(\sigma_{u\gets x},c)}\right)\right)\\
= &\prod\limits_{c\in S}\left(\mathbb{P}[\neg c\mid \sigma](1+\eta)^{Z(\sigma,c)}\right)\sum\limits_{x\in Q_u}\left(\mu^{\sigma}_{u}(x)\prod\limits_{c\in E\setminus S}\left(\mathbb{P}[\neg c\mid \sigma_{u\gets x}](1+\eta)^{Z(\sigma_{u\gets x},c)}\right)\right).
\end{aligned}    
\end{equation}
In addition, by \Cref{generaluniformity} and the assumption that $\mathbb{P}[\neg c\mid \sigma]\leq \pprime q$ for all $c\in \mathcal{C}$, we have for each $x\in Q_u$,
$\mu^{\sigma}_u(x)\leq q_u^{-1}(1+\eta).$
Therefore,
\begin{align*}
\sum\limits_{x\in Q_u}\left(\mu^{\sigma}_{u}(x)\cdot \mathbb{P}[\neg c_0\mid \sigma_{u\gets x}]\right)\leq (1+\eta)\cdot q_u^{-1}\sum\limits_{x\in Q_u} \mathbb{P}[\neg c_0\mid \sigma_{u\gets x}]
= (1+\eta)\cdot \mathbb{P}[\neg c_0\mid \sigma].
\end{align*}
Thus, we have 
\begin{equation*}
\begin{aligned}
    \sum_{x\in Q_u}\left(\mu^{\sigma}_u(x) \mathbb{P}[\neg c_0\mid \sigma_{u\gets x}](1+\eta)^{Z(\sigma_{u\gets x},c_0)}\right)=&\sum_{x\in Q_u}\left(\mu^{\sigma}_u(x) \mathbb{P}[\neg c_0\mid \sigma_{u\gets x}](1+\eta)^{Z(\sigma,c_0)-1}\right)\\
    \leq & \mathbb{P}[\neg c_0\mid \sigma](1+\eta)^{Z(\sigma,c_0)}.
\end{aligned} 
\end{equation*}
Therefore, if $E\setminus S = \{c_0\}$,
we have 
\begin{align}\label{eq-sum-musigmauxtimespivconstsetminuss}
    \sum\limits_{x\in Q_u}\left(\mu^{\sigma}_{u}(x)\prod\limits_{c\in E\setminus S}\left(\mathbb{P}[\neg c\mid \sigma_{u\gets x}](1+\eta)^{Z(\sigma_{u\gets x},c)}\right)\right) \leq \prod\limits_{c\in E\setminus S}\left(\mathbb{P}[\neg c\mid \sigma](1+\eta)^{Z(\sigma,c)}\right).
\end{align}
If $E\setminus S = \emptyset$,
both sides of \eqref{eq-sum-musigmauxtimespivconstsetminuss} are equal to 1 and we also have \eqref{eq-sum-musigmauxtimespivconstsetminuss}, where we assume that a product over an empty set is 1.
Because $E\setminus S$ is either $\{c_0\}$ or $\emptyset$,
we always have  \eqref{eq-sum-musigmauxtimespivconstsetminuss}.
Combining with \eqref{eq-sum-musigmauxtimespivcons},
we have 
\begin{equation}\label{eq-sum-musigmauxtimespivconst-simplify}
\begin{aligned}
\sum\limits_{x\in Q_u}\left(\mu^{\sigma}_{u}(x)\prod\limits_{c\in E}\left(\mathbb{P}[\neg c\mid \sigma_{u\gets x}](1+\eta)^{Z(\sigma_{u\gets x},c)}\right)\right)\leq \prod\limits_{c\in E}\left(\mathbb{P}[\neg c\mid \sigma](1+\eta)^{Z(\sigma,c)}\right).
\end{aligned}    
\end{equation}
Moreover, by $u=\nextvar{\sigma}$, we have $\sigma(u) = \hollowstar\neq \star$.
Thus, $u\not\in \vst{\sigma}$.
Meanwhile, by $\sigma_{u\leftarrow x}(u) = x\neq \star$, we also have $u\not\in \vst{u\leftarrow x}$ for each $x\in Q_u$.
Thus, $U\setminus \vst{\sigma} = U\setminus \vst{\sigma_{u\gets x}}$.
Combining with \eqref{eq-sum-musigmauxtimespivconst-simplify}, we have
\begin{equation*}
\begin{aligned}
&\sum\limits_{x\in Q_u}\left(\mu^{\sigma}_{u}(x)\prod\limits_{v\in U\setminus \vst{\sigma_{u\gets x}}}\left(1-q_v\cdot \theta_v\right)\prod\limits_{c\in E}\left(\mathbb{P}[\neg c\mid \sigma_{u\gets x}](1+\eta)^{Z(\sigma_{u\gets x},c)}\right)\right)\\
= & \left(\prod\limits_{v\in U\setminus \vst{\sigma}}\left(1-q_v\cdot \theta_v\right)\right) \cdot \sum\limits_{x\in Q_u}\left(\mu^{\sigma}_{u}(x)\prod\limits_{c\in E}\left(\mathbb{P}[\neg c\mid \sigma_{u\gets x}](1+\eta)^{Z(\sigma_{u\gets x},c)}\right)\right)\\
\leq & \prod\limits_{v\in U\setminus \vst{\sigma}}\left(1-q_v\cdot \theta_v\right)\prod\limits_{c\in E}\left(\mathbb{P}[\neg c\mid \sigma](1+\eta)^{Z(\sigma,c)}\right)
\end{aligned}
\end{equation*}
Then \eqref{eq-core-gaddition} and the lemma are proved.
\end{proof}

Now we can prove \Cref{WTprob}.
\begin{proof}[Proof of \Cref{WTprob}]
Let $\pth(\sigma) = \left(\sigma_0,\sigma_1,\dots,\sigma_{\ell}\right)$.
We show the lemma by a structural induction on $\pth(\sigma)$. The base case is when $\ell = 0$.
By \Cref{pathdef}, we have $\sigma_{\ell} = \sigma_0 = \sigma$. 
In this case,
$\+{E}^{\sigma}_T$ is the deterministic event $U=\vst{\sigma}\land E\subseteq \csfrozen{\sigma}$.
If $U\neq\vst{\sigma}$ or  $E\not\subseteq \csfrozen{\sigma}$, we have 
$\Pr{\+{E}^{\sigma}_T} =0$.
In addition, by \eqref{eq-definition-g}
one can verify that $g(\sigma,T)\geq 0$. Then the lemma is immediate.
Otherwise, we have $U=\vst{\sigma}\land E\subseteq \csfrozen{\sigma}$
and the event $\+{E}^{\sigma}_T$ happens.
By \eqref{eq-chidef} and $\sigma=\sigma_{\ell}$,
we have $\chi(\sigma)=\chi(\sigma_{\ell},\sigma)=\chi(\sigma,\sigma)=1$.
Thus,
\begin{equation}\label{eq-pr-esigmattimesefpath}
\Pr{ \+{E}^{\sigma}_T}\cdot\E{\chi(\sigma)\mid \+{E}^{\sigma}_T}=\Pr{ \+{E}^{\sigma}_T} = 1.
\end{equation}
Meanwhile, by $E\subseteq \csfrozen{\sigma}$ and \Cref{def:cbad}, we have $E\subseteq \csfrozen{\sigma}\subseteq \cfrozen{\sigma}$. 
Thus, each $c\in E$ is $\sigma$-frozen.
By \Cref{remark:one-sided-error-frozen-oracle},
we have
$\mathbb{P}\left[\neg c\mid \sigma\right]\geq 0.99\pprime$.
Thus, 
\begin{equation}\label{eq-099alpha-leq1}
(0.99\pprime)^{-1}\mathbb{P}[\neg c\mid \sigma](1+\eta)^{Z(\sigma,c)}\geq (0.99\pprime)^{-1}\cdot 0.99\pprime\cdot (1+\eta)^{Z(\sigma,c)} \geq 1.
\end{equation}
In addition, by \eqref{eq-definition-g} and $U=\vst{\sigma}$,
we have 
\begin{align*}
     g(\sigma,T)&=\prod\limits_{c\in E}\left((0.99\pprime)^{-1}\mathbb{P}[\neg c\mid \sigma](1+\eta)^{Z(\sigma,c)}\right).
\end{align*}
Combining with \eqref{eq-pr-esigmattimesefpath} and \eqref{eq-099alpha-leq1}, 
we have 
\begin{align*}
     g(\sigma,T)\geq 1\geq \Pr{ \+{E}^{\sigma}_T}\cdot\E{\chi(\sigma)\mid \+{E}^{\sigma}_T}.
\end{align*}
The base case is proved.

For the induction step, we assume that $\ell(\sigma)\geq 1$, which by \Cref{rp-1} of \Cref{pathdef}, says that $\nextvar{\sigma}=u\neq \perp$ for some $u\in V$. According to \Cref{rp-2} of \Cref{pathdef},
we have 
\begin{align*}
    \forall x\in \qus{u},\quad \Pr{\sigma_{1} = \sigma_{u\gets x}}&=\induceddist{\sigma}{u}(x).
\end{align*}
Thus, by the law of total probability, we have
\begin{equation}
\begin{aligned}\label{WTprob-eq4}
&\Pr{ \+{E}^{\sigma}_T}\cdot\E{\chi(\sigma)\mid \+{E}^{\sigma}_T}\\
= &\sum\limits_{x\in \qus{u}}\left(\Pr{\sigma_1=\sigma_{u\gets x}}\cdot \Pr{ \+{E}^{\sigma}_T\mid \sigma_1=\sigma_{u\gets x}}\cdot \E{\chi(\sigma)\mid \+{E}^{\sigma}_T\land \left(\sigma_1=\sigma_{u\gets x}\right)}\right)\\
=&\sum\limits_{x\in \qus{u}}\left(\induceddist{\sigma}{u}(x)\cdot\Pr{ \+{E}^{\sigma}_T\mid \sigma_1=\sigma_{u\gets x}}\cdot \E{\chi(\sigma)\mid \+{E}^{\sigma}_T\land \left(\sigma_1=\sigma_{u\gets x}\right)}\right).
\end{aligned}
\end{equation}
Moreover, by \eqref{eq-chidef} we have
\begin{equation}
\begin{aligned}\label{eq-ehpath-sigma}
   \E{\chi(\sigma)\mid   \+{E}^{\sigma}_T\land\left(\sigma_1=\sigma_{u\gets x}\right)} &=
   \E{\chi(\sigma_{\ell},\sigma_{0})\mid   \+{E}^{\sigma}_T\land\left(\sigma_1=\sigma_{u\gets x}\right)}
   \\&= \left(2-q_u\cdot \theta_u\right)\E{ \chi(\sigma_{\ell},\sigma_{1})\mid \+{E}^{\sigma}_T \land \left(\sigma_1=\sigma_{u\gets x}\right)}.
\end{aligned}
\end{equation}
In addition, for each $x\in \qus{u}$, given $\sigma_1 = \tau \triangleq \sigma_{u\gets x}$ , 
we have the subsequence  $(\sigma_1,\sigma_2,\dots,\sigma_{\ell})$  is identically  distributed as $\pth(\tau)$ by the Markov property of the \pth{} process.
Thus, we have $\sigma_{\ell}$ is identically distributed as $\tau_{\ell(\tau)}$.
Combining with the definition of ${\+E}^{\sigma}_T$ in \Cref{def:cbad}, we have
\begin{equation*}
    \E{ \chi(\sigma_{\ell},\sigma_{1})\mid  \+{E}^{\sigma}_T \land \left(\sigma_1=\tau \right)}=\E{ \chi(\tau_{\ell(\tau)},\tau)\mid  \+{E}^{\tau}_T} = \E{ \chi(\tau)\mid  \+{E}^{\tau}_T}.
\end{equation*}
Combining with \eqref{eq-ehpath-sigma}, we have 
\begin{equation*}
\begin{aligned}
   \E{\chi(\sigma)\mid   \+{E}^{\sigma}_T \land (\sigma_1=\sigma_{u\gets x}) }
   =&\left(2-q_u\cdot \theta_u\right)\E{ \chi(\sigma_{\ell},\sigma_{1})\mid \+{E}^{\sigma}_T \land (\sigma_1=\sigma_{u\gets x})}\\
   =&\left(2-q_u\cdot \theta_u\right)\E{ \chi(\sigma_{\ell},\sigma_{1})\mid \+{E}^{\sigma}_T \land (\sigma_1=\tau)}\\
    =&\left(2-q_u\cdot \theta_u\right)\E{ \chi(\tau)\mid  \+{E}^{\tau}_T}\\
    =&\left(2-q_u\cdot \theta_u\right)\E{ \chi(\sigma_{u\gets x})\mid  \+{E}^{\sigma_{u\gets x}}_T}.
\end{aligned}
\end{equation*}
Combining with \eqref{WTprob-eq4}, we have
\begin{equation*}
\begin{aligned}
&\Pr{ \+{E}^{\sigma}_T}\cdot\E{\chi(\sigma)\mid \+{E}^{\sigma}_T}\\
=&\sum\limits_{x\in \qus{u}}\left(\induceddist{\sigma}{u}(x)\cdot\Pr{ \+{E}^{\sigma}_T\mid \sigma_1=\sigma_{u\gets x}}\cdot \E{\chi(\sigma)\mid \+{E}^{\sigma}_T\land \left(\sigma_1=\sigma_{u\gets x}\right)}\right)\\
=&\left(2-q_u\cdot \theta_u\right)\sum\limits_{x\in \qus{u}}\left(\induceddist{\sigma}{u}(x)\cdot\Pr{ \+{E}^{\sigma}_T\mid \sigma_1=\sigma_{u\gets x}}\cdot \E{\chi(\sigma_{u\leftarrow x})\mid \+{E}^{\sigma_{u\gets x}}_T}\right).
\end{aligned}
\end{equation*}
In addition, for each $x\in \qus{u}$, 
recall that $\sigma_{\ell}$ is identically distributed as $\sigma_{u\gets x}$ given $\sigma_1 = \sigma_{u\gets x}$.
Combining with \Cref{def:cbad},
we have 
$\Pr{ \+{E}^{\sigma}_T\mid \sigma_1=\sigma_{u\gets x}} = \Pr{ \+{E}^{\sigma_{u\leftarrow x}}_T}$.
Therefore,
\begin{equation}
\begin{aligned}\label{eq-esigmatehpath-twopart}
\Pr{\+{E}^{\sigma}_T}\E{\chi(\sigma)\!\mid \!\+{E}^{\sigma}_T}
=\left(2-q_u\cdot \theta_u\right)\!\sum\limits_{x\in \qus{u}}\left(\induceddist{\sigma}{u}(x)\Pr{\+{E}^{\sigma_{u\leftarrow x}}_T} \E{\chi(\sigma_{u\leftarrow x})\!\mid\! \+{E}^{\sigma_{u\gets x}}_T}\right).
\end{aligned}
\end{equation}


We then show the induction step for two cases respectively, namely the case when $u\in U$ and the case when $u\notin U$.
At first we assume $u\in U$.
Given $x\in Q_u$ and $\tau = \sigma_{u\leftarrow x}$, by $\tau(u)=x$,
we also have $\tau_{\ell(\tau)}(u)=x\neq \star$.
Thus $u\not\in \vst{\tau_{\ell(\tau)}}$.
Combining with $u\in U$,
we have $U \neq \vst{\tau_{\ell(\tau)}}$.
Combining with \Cref{def:cbad},
we have $\+{E}^{\tau}_T$ does not happen.
In summary, for each $x\in Q_u$, $\+{E}^{\sigma_{u\leftarrow x}}_T$ does not happen 
and $\Pr{\+{E}^{\sigma_{u\leftarrow x}}_T} = 0$.
Combining with \eqref{eq-esigmatehpath-twopart} and \eqref{eq-def-induceddist}, we have 
\begin{equation}\label{eq-pesigmattimesehpath-uinvvart}
\begin{aligned}
    \Pr{ \+{E}^{\sigma}_T}\cdot \E{\chi(\sigma)\mid \+{E}^{\sigma}_T}&=
    \left(2-q_u\cdot \theta_u\right)\cdot \induceddist{\sigma}{u}(\star) \cdot \Pr{ \+{E}^{\sigma_{u\leftarrow \star}}_T}\cdot \E{\chi(\sigma_{u\gets \star})\mid  \+{E}^{\sigma_{u\gets \star}}_T}\\
    & =  
    (1-q_u\cdot \theta_u)\cdot \Pr{ \+{E}^{\sigma_{u\leftarrow \star}}_T}\cdot \E{\chi(\sigma_{u\gets \star})\mid  \+{E}^{\sigma_{u\gets \star}}_T}.
\end{aligned}
\end{equation}
In addition, by $\sigma\in\+{Q}^*$ is a partial assignment satisfying $\mathbb{P}[\neg c\mid \sigma]\leq \pprime q$ for all $c\in \mathcal{C}$,
one can also verify
$\mathbb{P}[\neg c\mid \sigma_{u\leftarrow x}]\leq \pprime q$ for all $c\in \mathcal{C}$ and $x\in \qus{u}$
by a similar argument as  \Cref{lemma:general-invariant}.
Thus by the induction hypothesis, for each $x\in \qus{u}$ we have
\begin{equation}\label{eq-pesigmauxttimesehpath-induct}
\begin{aligned}
\Pr{ \+{E}^{\sigma_{u\gets x}}_T}\cdot \E{\chi(\sigma_{u\gets x})\mid \+{E}^{\sigma_{u\gets x}}_T}
\leq g(\sigma_{u\gets x},T).
\end{aligned}
\end{equation}
Combining with \eqref{eq-pesigmattimesehpath-uinvvart} and \eqref{eq-definition-g},
we have
\begin{equation*}
\begin{aligned}
&\Pr{ \+{E}^{\sigma}_T}\cdot \E{\chi(\sigma)\mid \+{E}^{\sigma}_T}\leq(1-q_u\cdot \theta_u)\cdot g(\sigma_{u\gets \star},T)\\
  = &(1-q_u\cdot \theta_u) \prod\limits_{v\in U\setminus \vst{\sigma_{u\gets \star}}}\left(1-q_v\cdot \theta_v\right)\prod\limits_{c\in E}\left((0.99\pprime)^{-1}\mathbb{P}[\neg c\mid \sigma_{u\gets \star}](1+\eta)^{Z(\sigma_{u\gets \star},c)}\right)
\end{aligned}
\end{equation*}
In addition, by $u=\nextvar{\sigma}$ and \Cref{definition:boundary-variables}, we have $\sigma(u) = \hollowstar\neq \star$.
Thus, $u\not\in \vst{\sigma}$.
Meanwhile, by $\sigma_{u\leftarrow \star}(u) = \star$, we have $u\in \vst{u\leftarrow \star}$.
Thus, $\vst{\sigma_{u\leftarrow \star}} = \vst{\sigma}\biguplus \{u\}$.
Combining with $u\in U$,
we have $U\setminus \vst{\sigma} = \left(U\setminus \vst{\sigma_{u\gets \star}}\right)\biguplus \{u\}$.
Therefore, 
$$ (1-q_u\cdot \theta_u) \prod\limits_{v\in U\setminus \vst{\sigma_{u\gets \star}}}\left(1-q_v\cdot \theta_v\right) = \prod\limits_{v\in U\setminus \vst{\sigma}} \left(1-q_v\cdot \theta_v\right).$$
Thus, we have 
\begin{equation*}
\begin{aligned}
&\Pr{ \+{E}^{\sigma}_T}\cdot \E{\chi(\sigma)\mid \+{E}^{\sigma}_T}\leq\prod\limits_{v\in U\setminus \vst{\sigma}}\left(1-q_v\cdot \theta_v\right)\prod\limits_{c\in E}\left((0.99\pprime)^{-1}\mathbb{P}[\neg c\mid \sigma_{u\gets \star}](1+\eta)^{Z(\sigma_{u\gets \star},c)}\right).
\end{aligned}
\end{equation*}
By $\mathbb{P}[\neg c\mid \sigma_{u\gets \star}] = \mathbb{P}[\neg c\mid \sigma]$ and
$Z(\sigma_{u\gets \star},c) = Z(\sigma,c)$ for each $\sigma\in \qs$ and $c\in C$,
we have 
\begin{equation*}
\begin{aligned}
\Pr{ \+{E}^{\sigma}_T}\cdot \E{\chi(\sigma)\mid \+{E}^{\sigma}_T} \leq  \prod\limits_{v\in U\setminus \vst{\sigma}}\left(1-q_v\cdot \theta_v\right)\prod\limits_{c\in E}\left((0.99\pprime)^{-1}\mathbb{P}[\neg c\mid \sigma](1+\eta)^{Z(\sigma,c)}\right)=g(\sigma,T).
\end{aligned}
\end{equation*}
Then 
\eqref{eq-lemma-WTprob} is immediate.
This finishes the induction step for the case when $u\in U$.

In the following, we assume $u\not\in U$. Given $\tau = \sigma_{u\leftarrow \star}$,
by $\tau(u)=\star$, we also have $\tau_{\ell(\tau)}(u)=\star$.
Thus $u \in \vst{\tau_{\ell(\tau)}}$.
Combining with $u\not\in U$,
we have $U \neq \vst{\tau_{\ell(\tau)}}$.
Combining with \Cref{def:cbad},
we have $\+{E}^{\tau}_T = \+{E}^{\sigma_{u\leftarrow \star}}_T$ does not happen
and $\Pr{\+{E}^{\sigma_{u\leftarrow \star}}_T} = 0$.
Combining with \eqref{eq-esigmatehpath-twopart} and \eqref{eq-def-induceddist}, we have 
\begin{equation*}
\begin{aligned}
    \Pr{ \+{E}^{\sigma}_T}\cdot \E{\chi(\sigma)\mid \+{E}^{\sigma}_T}&=\left(2-q_u\cdot \theta_u\right)\sum\limits_{x\in Q_u}\left(\induceddist{\sigma}{u}(x)\Pr{\+{E}^{\sigma_{u\leftarrow x}}_T} \E{\chi(\sigma_{u\leftarrow x})\!\mid\! \+{E}^{\sigma_{u\gets x}}_T}\right)\\
    &=\sum\limits_{x\in Q_u}\left(\mu_{u}^{\sigma}(x)\cdot\Pr{\+{E}^{\sigma_{u\leftarrow x}}_T}\cdot \E{\chi(\sigma_{u\gets x})\mid \+{E}^{\sigma_{u\gets x}}_T}\right).
\end{aligned}
\end{equation*}
Combining with \eqref{eq-pesigmauxttimesehpath-induct} and \eqref{eq-definition-g}, we have
\begin{equation*}
\begin{aligned}
&\Pr{ \+{E}^{\sigma}_T}\cdot \E{\chi(\sigma)\mid \+{E}^{\sigma}_T}
 \leq \sum\limits_{x\in Q_u}\left(\mu_{u}^{\sigma}(x)\cdot g(\sigma_{u\gets x},T)\right)\leq g(\sigma,T),
\end{aligned}
\end{equation*}
where the last inequality is by \Cref{gaddition}.
This finishes the induction step for the  case when $u\notin U$.
The lemma is proved.
\end{proof}

Now we can prove \Cref{WTprob2} with \Cref{WTprob}.
\begin{proof}[Proof of \Cref{WTprob2}]
To prove the lemma, it is sufficient to prove \eqref{eq-lemma-WTprob1},
because \eqref{eq-lemma-WTprob2} is a special case of \eqref{eq-lemma-WTprob1} when $i = 0$.
In the following, we show \eqref{eq-lemma-WTprob1} by induction on $i$. The base case is when $i=t-1$. Conditioning on $X^i=X^{t-1}=\sigma$, by \eqref{eq:simulate-path} we have either $X^t_0 = \sigma$ or $X^t_0 = \sigma_{v_t\gets \star}$. By $\Pr{X^i=\sigma}>0$ we have $\sigma(v)\neq \star$ for any $v\in V$,
because by \eqref{eq:simulate-path}, 
$X^i$ is generated from $X^0 =\hollowstar^V$ with \Cref{Alg:main-eq}
and no variable in \Cref{Alg:main-eq} is set as $\star$.
If $X^t_0=\sigma$, 
by \eqref{eq:simulate-path} we have 
$X^t_{\ell} = \pth(X^t_0) = \pth(\sigma)$.
In addition, by \Cref{pathdef} we have $\pth(\sigma)=\sigma$.
Thus, we have $X^t_{\ell} =  \pth(\sigma) = \sigma$, 
$\vst{X^t_{\ell}}=\vst{\sigma}$,
and $\csfrozen{X^t_{\ell}}=\csfrozen{\sigma}$.
In addition, by \Cref{def:cbad} and that $\sigma(v)\neq \star$ for any $v\in V$,
we have $\vst{\sigma}=\emptyset$ and 
$\csfrozen{\sigma}=\emptyset$.
Thus, 
$\vst{X^t_{\ell}}=\emptyset$ and $\csfrozen{X^t_{\ell}}=\emptyset$. 
Combining with the definition of $\+{E}^{t}_T$ in \Cref{def:cbad},
we have $\+{E}^{t}_T$ does not happen.
Thus, we have 
\begin{align}\label{eq-exclude-ett-x0tsigma}
\Pr{ \+{E}^{t}_T\mid X^t_0=\sigma} = \Pr{  X^t_0=\sigma \mid \+{E}^{t}_T} = 0.
\end{align}
Therefore, we have
\begin{align*}
\quad &\Pr{ \+{E}^{t}_T\mid X^{t-1}=\sigma}\\
(\text{by the law of total probability})\quad = &\Pr{X^{t}_0 = \sigma\mid X^{t-1}=\sigma} \Pr{\+{E}^{t}_T\mid \left(X^{t-1}=\sigma\right) \land \left(X^{t}_0 = \sigma\right)} \\
&+ \Pr{X^{t}_0 = \sigma_{v_t\gets \star}\mid X^{t-1}=\sigma} \Pr{\+{E}^{t}_T\mid \left(X^{t-1}=\sigma\right)\land \left(X^{t}_0 = \sigma_{v_t\gets \star}\right)} \\
(\text{by } \eqref{eq-exclude-ett-x0tsigma})\quad= &\Pr{X^{t}_0 = \sigma_{v_t\gets \star}\mid X^{t-1}=\sigma} \Pr{\+{E}^{t}_T\mid \left(X^{t-1}=\sigma\right)\land\left(X^{t}_0 = \sigma_{v_t\gets \star}\right)} \\
(\text{by } \eqref{eq:simulate-path})\quad \leq & \Pr{r_t = 1}\Pr{\+{E}^{t}_T\mid \left(X^{t-1}=\sigma\right)\land\left(X^{t}_0 = \sigma_{v_t\gets \star}\right)}\\
(\text{by \Cref{def:cbad}})\quad = & \Pr{r_t = 1}\Pr{\+{E}^{\sigma_{v_t\gets \star}}_T}
\end{align*}
Similarly, by the law of total expectation we have
\begin{align*}
    &\E{\chi(X^t_0)\mid \+{E}^{t}_T\land \left(X^{t-1}=\sigma\right)}\\
    = &\Pr{X^{t}_0=\sigma \mid \+{E}^{t}_T \land \left(X^{t-1}=\sigma\right)}\E{\chi(X^t_0)\mid \+{E}^{t}_T\land \left(X^{t}_0=\sigma\right) \land \left(X^{t-1}=\sigma\right)}\\
    & + \Pr{X^{t}_0=\sigma_{v_t\gets \star} \mid \+{E}^{t}_T \land \left(X^{t-1}=\sigma\right)}\E{\chi(X^t_0)\mid \+{E}^{t}_T\land \left(X^{t}_0=\sigma_{v_t\gets \star}\right)\land \left(X^{t-1}=\sigma\right)}\\
(\text{by } \eqref{eq-exclude-ett-x0tsigma})\quad  = &\Pr{X^{t}_0=\sigma_{v_t\gets \star} \mid \+{E}^{t}_T \land \left(X^{t-1}=\sigma\right)}\E{\chi(X^t_0)\mid \+{E}^{t}_T\land \left(X^{t}_0=\sigma_{v_t\gets \star}\right) \land \left(X^{t-1}=\sigma\right)}\\
(\text{by \Cref{def:cbad}})\quad  = &\Pr{X^{t}_0=\sigma_{v_t\gets \star} \mid \+{E}^{t}_T \land \left(X^{t-1}=\sigma\right)}\E{\chi(X^t_0)\mid \+{E}^{\sigma_{v_t\gets \star}}_T}\\
    \leq &\E{\chi(X^t_0)\mid \+{E}^{\sigma_{v_t\gets \star}}_T}
\end{align*}
Combining the above two inequalities, we have
\begin{align*}
&\Pr{ \+{E}^{t}_T\mid X^i=\sigma}\cdot\E{\chi(X^t_0)\mid \+{E}^{t}_T\land X^i=\sigma}\\
\leq &\Pr{r_t = 1}\Pr{\+{E}^{\sigma_{v_t\gets \star}}_T}\E{\chi(X^t_0)\mid \+{E}^{\sigma_{v_t\gets \star}}_T}\\
(\text{by } \eqref{eq:simulate-path})\quad    = & (1-q_{v_t}\cdot\theta_{v_t})\Pr{ \+{E}^{\sigma_{v_t\gets \star}}_T}\E{\chi(\sigma_{v_t\gets \star})\mid \+{E}^{\sigma_{v_t\gets \star}}_T}\\
(\text{by \Cref{WTprob}})\quad    \leq & (1-q_{v_t}\cdot\theta_{v_t})g(\sigma_{v_t\gets \star},T)\\
(\text{by \eqref{eq-definition-g}})\quad \leq & g(\sigma,T).
\end{align*}
The base case is proved.

For the induction step, we assume $i<t-1$ and prove \eqref{eq-lemma-WTprob1} based on the hypothesis on $i+1$. 
Given $X^i=\sigma$, if $v_{i+1}$ is $\sigma$-fixed, 
by \eqref{eq:simulate-path} and \Cref{Alg:main-eq},
we have $X^{i+1}=X^i=\sigma$.
Then \eqref{eq-lemma-WTprob1} holds by the induction hypothesis. 
Otherwise, $v_{i+1}$ is not $\sigma$-fixed. Let $u=v_{i+1}$.
According to \Cref{Alg:main-eq} we have
\begin{equation}\label{eq-simulate}
     \forall x\in Q_{u},\quad \Pr{X^{i+1} = \sigma_{u\gets x}\mid X^i=\sigma}=\mu^{\sigma}_{u}(x).
\end{equation}
In addition, 
recall that \Cref{Alg:main-eq} generates the prefix $({X}^0,{X}^1,\dots,{X}^t)$ of the random partial assignments  $X^0,X^1,\dots,X^n$ maintained in \Cref{Alg:main} defined in \Cref{def-pas-main}.
Combining with \Cref{lemma:invariant-p-prime-q-bound} and  
we have $\mathbb{P}[\neg c\mid X^i]\leq \pprime q$ for all $c\in \mathcal{C}$.
Combining with $\Pr{X^i=\sigma}>0$,
we have $\mathbb{P}[\neg c\mid \sigma]\leq \pprime q$ for all $c\in \mathcal{C}$.
Combining with \Cref{localuniformitycor},
we have $\mu^{\sigma}_{u}(x)>0$ for each 
$x\in Q_{u}$.
Thus, we have
\begin{equation}\label{eq-prxi1eqsigmamux}
\begin{aligned}
&\quad\Pr{X^{i+1}=\sigma_{u\leftarrow x}} \\
(\text{by the chain rule})\quad &=\Pr{X^i=\sigma}\Pr{X^{i+1} = \sigma_{u\gets x}\mid X^i=\sigma}\\
(\text{by \eqref{eq-simulate}})\quad &=\Pr{X^i=\sigma}\mu^{\sigma}_{u}(x) \\
(\text{by $\Pr{X^i=\sigma}>0$, $\mu^{\sigma}_{u}(x)>0$})\quad&>0.
\end{aligned}
\end{equation}
Thus by the induction hypothesis, for each $x\in Q_{u}$ we have
\begin{equation}\label{eq-pesigmauxttimesehpath-induct2}
\begin{aligned}
\Pr{ \+{E}^{t}_T\mid X^{i+1}=\sigma_{u\gets x}}\cdot \E{\chi(X^t_0)\mid \+{E}^{t}_T\land X^{i+1}=\sigma_{u\gets x}}
\leq g(\sigma_{u\gets x},T).
\end{aligned}
\end{equation}
Combining with \eqref{eq-simulate} we have
\begin{equation*}
\begin{aligned}
&\Pr{ \+{E}^{t}_T\mid X^i=\sigma}\cdot\E{\chi(X^t_0)\mid \+{E}^{t}_T\land X^i=\sigma}\\
(\text{by \eqref{eq-simulate} and \eqref{eq-pesigmauxttimesehpath-induct2}})\quad \leq &\sum\limits_{x\in Q_u}\left(\mu_{u}^{\sigma}(x)\cdot g(\sigma_{u\gets x},T)\right)\\
 (\text{by \Cref{gaddition}}) \quad\leq  & g(\sigma,T),
\end{aligned}
\end{equation*}
which finishes the induction step.
Then \eqref{eq-lemma-WTprob1} and the lemma are proved.
\end{proof}

Now we can prove \Cref{badtree}.

\begin{proof}[Proof of \Cref{badtree}]
For each $c\in E$, 
we have $Z(\hollowstar^V,c)=\abs{\var{c}\setminus \Lambda(\hollowstar^V) } \leq k$. 
Combining with \eqref{eq-definition-g} we have
\begin{equation}
\begin{aligned}\label{eq-badtree-gstarvt-upperbound}
g(\hollowstar^V,T)=&\prod\limits_{v\in V}(1-q_v\cdot \theta_v)\prod\limits_{c\in E}\left((0.99\pprime)^{-1}\mathbb{P}[\neg c\mid \hollowstar^V](1+\eta)^{Z(\hollowstar^V,c)}\right)\\
\leq &\prod\limits_{v\in V}(1-q_v\cdot \theta_v)\prod\limits_{c\in E}\left((0.99\pprime)^{-1}p(1+\eta)^{k}\right).
\end{aligned} 
\end{equation}
In addition, by \eqref{eq:parameter-p-prime} and \eqref{eq:parameter-theta}, 
we have $\eta\leq (2k\Delta)^{-1}$ and
\begin{align}\label{eq-badtree-v-upperbound}
 \forall v\in V,\quad   1-q_v\cdot \theta_v\leq 1-q\theta\leq (8\mathrm{e}k\Delta)^{-1}.
\end{align}
By $8\mathrm{e}p\Delta^3\leq 0.99\alpha$ and $\eta\leq (2k\Delta)^{-1}$, we have
\begin{align}\label{eq-badtree-e-upperbound}
    (0.99\pprime)^{-1}p(1+\eta)^{k}\leq (4\mathrm{e}\Delta^3)^{-1}.
\end{align} 
Combining \eqref{eq-badtree-gstarvt-upperbound} with \eqref{eq-badtree-v-upperbound} and \eqref{eq-badtree-e-upperbound}, we have
\begin{align*}
    g(\hollowstar^V,T)    \leq  \left(8\mathrm{e}k\Delta\right)^{-\abs{ U }}\cdot \left(4\mathrm{e}\Delta^3\right)^{-\abs{ E }}.
\end{align*}
Combining with \Cref{WTprob2},
we have 
\begin{align*}
\Pr{ \+{E}^{t}_T}\cdot\E{\chi(X^t_0)\mid \+{E}^{t}_T}\leq g(\hollowstar^V,T) \leq \left(8\mathrm{e}k\Delta\right)^{-\abs{ U }}\cdot \left(4\mathrm{e}\Delta^3\right)^{-\abs{ E }}.
\end{align*}
The lemma is proved.
\end{proof}

We finish \Cref{sec:raw-tail-bound} by proving \Cref{badtree2}. Recall that \eqref{eq-badtree-e-upperbound} holds by \eqref{eq:parameter-p-prime},  \eqref{eq:parameter-theta} and $8\mathrm{e}p\Delta^3\leq 0.99\alpha$.
Thus \Cref{badtree2} is immediate by the following lemma, which is an analogy of \Cref{WTprob}.

\begin{lemma}
Recall the definition of $g(\cdot,\cdot)$ in \eqref{eq-definition-g}. Let $(X^0,X^1,\ldots, X^n)=\simulator(n)$. For any $0\leq i\leq n$, any partial assignment $\sigma\in \qs$ where $\Pr{X^i=\sigma}>0$ and 
any set of disjoint constraints
$T\subseteq \+{C}$, we have
\begin{equation}\label{eq-lemma-badtree-2}
    \Pr{T\subseteq \cfrozen{X^n}\mid X^i=\sigma} \leq g(\sigma,T).
\end{equation}
Specifically, 
\begin{equation}\label{eq-lemma-badtree-i-equal-0}
\Pr{T\subseteq \cfrozen{X^n}}\leq g(\hollowstar^V,T)=\prod\limits_{c\in T}\left((0.99\pprime)^{-1}(1+\eta)^k\right).
\end{equation}
 \end{lemma}
 
 \begin{proof}
 To prove the lemma, it is sufficient to prove \eqref{eq-lemma-badtree-2},
because \eqref{eq-lemma-badtree-i-equal-0} is a special case of \eqref{eq-lemma-badtree-2} when $i = 0$.
 We show \eqref{eq-lemma-badtree-2} by induction on $i$. The base case is when $i=n$. 
 For each $\sigma$,
 conditioning on $X^i=\sigma$, we have $X^n = X^i = \sigma$.
 Thus, if $T\not\subseteq \cfrozen{\sigma}$, we have $\Pr{T\subseteq \cfrozen{X^n}\mid X^i=\sigma}=\Pr{T\subseteq \cfrozen{\sigma}\mid X^i=\sigma}=0$ and \eqref{eq-lemma-badtree-2} is immediate by the non-negativity of $g(\cdot,\cdot)$. Otherwise, we have $T\subseteq \cfrozen{\sigma}$. According to \Cref{remark:one-sided-error-frozen-oracle},
we have $c$ is $\sigma$-frozen only if 
$\mathbb{P}\left[\neg c\mid \sigma\right]\geq 0.99\pprime$.
Therefore, we have
\begin{align*}
    g(\sigma,T)&=\prod\limits_{c\in T}\left((0.99\pprime)^{-1}\mathbb{P}[\neg c\mid \sigma](1+\eta)^{Z(\sigma,c)}\right)
    \geq \prod\limits_{c\in T}\left((0.99\pprime)^{-1}\mathbb{P}[\neg c\mid \sigma]\right)\\
    &\geq \prod\limits_{c\in T}\left((0.99\pprime)^{-1}\cdot 0.99\pprime\right)
    \geq 1=\Pr{T\subseteq \cfrozen{X^n}\mid X^i=\sigma}.
\end{align*}
Thus, we also have \eqref{eq-lemma-badtree-2} and the base case is proved.

For the induction step, we assume $i<n$ and prove \eqref{eq-lemma-badtree-2} based on the hypothesis on $i+1$. 
Given $X^i=\sigma$, if $v_{i+1}$ is $\sigma$-fixed, 
by \eqref{eq:simulate-path} and \Cref{Alg:main-eq},
we have $X^{i+1}=X^i=\sigma$.
Then \eqref{eq-lemma-badtree-2} holds by the induction hypothesis. Otherwise $v_{i+1}$ is not $\sigma$-fixed. Let $u=v_{i+1}$.
According to \Cref{Alg:main-eq} we have
\begin{equation}\label{eq-simulate-2}
     \forall x\in Q_{u},\quad \Pr{X^{i+1} = \sigma_{u\gets x}\mid X^i=\sigma}=\mu^{\sigma}_{u}(x).
\end{equation}
Following the proof of \eqref{eq-prxi1eqsigmamux} in \Cref{WTprob2},
it is easy to verify that 
$\Pr{X^{i+1}=\sigma_{u\leftarrow x}}>0$ for each $x\in Q_{u}$.
Thus by the induction hypothesis, for each $x\in Q_{u}$ we have
\begin{equation}\label{eq-lemma-badtree-4}
\begin{aligned}
\Pr{ T\subseteq \cfrozen{X^n}\mid X^{i+1}=\sigma_{u\gets x}}
\leq g(\sigma_{u\gets x},T).
\end{aligned}
\end{equation}
Thus, we have
\begin{align*}
&\Pr{ T\subseteq \cfrozen{X^n}\mid X^i=\sigma}\\
(\text{by \eqref{eq-simulate-2} and \eqref{eq-lemma-badtree-4}})\quad \leq &\sum\limits_{x\in Q_u}\left(\mu_{u}^{\sigma}(x)\cdot g(\sigma_{u\gets x},T)\right)\\
 (\text{by \Cref{gaddition}})\quad \leq & g(\sigma,T),
\end{align*}
which finishes the induction step. Then \eqref{eq-lemma-badtree-2} and the lemma are proved.
 \end{proof}

\section{Conclusion and Open Problems}
We give an algorithm for sampling uniform solutions to general constraint satisfaction problems (CSPs) in a local lemma regime. 
The algorithm runs in an expected  near-linear time in the number of variables and polynomial in other local parameters, including: domain size $q$, width $k$ and degree $\Delta$.

This gives, for the first time, a near-linear time sampling algorithm for general CSPs with constant $q,k,\Delta$, in a local lemma regime;
and this also gives, for the first time, a polynomial-time sampling algorithm for general CSPs in a local lemma regime without assuming any degree or width bound.

A crucial step of our sampling algorithm,
is a marginal sampler 
that can draw values of a variable according to the correct marginal distribution. 
Within a local lemma regime, this marginal sampler is a local algorithm whose cost is independent of the size of the CSP, and is polynomial in the local parameters $q,k$ and $\Delta$. 
This marginal sampler proves a thought-provoking point: 
\emph{within a local lemma regime, 
a locally defined sampling or inference problem can be solved at a local cost}. 

There are several open problems:
\begin{itemize}
    \item 
An open question is to improve the current LLL condition $p\Delta^5\lesssim 1$ for sampling general CSPs  closer the lower bound $p\Delta^{2}\gtrsim 1$.
We also believe that removing the extra $q, k$ factors in the current LLL condition may help us better understand the nature of sampling LLL.
    \item 
Another fundamental question is to generalize the current bound for CSPs to a general sampling Lov\'{a}sz local lemma with non-uniform distributions and/or asymmetric criteria.
\item
The recursive marginal sampler of Anand and Jerrum \cite{anand2021perfect} is a refreshingly novel idea for sampling. Here we see that its can solve an otherwise difficult to solve problem. It would be exciting to see what more this new idea can bring to the study of sampling LLL.  
\item
Despite the current technical barrier, Markov chain based algorithms have several advantages, such as their efficient parallelization~\cite{liu2021simple}. Therefore, it is still very worthwhile to have Markov chain based algorithms for sampling general CSPs.
For this to work, we may have to develop a way for dynamic projection of solution space, which may be of independent interest. 
\end{itemize}
    
\appendix

\ifarxiv{
\section*{Acknowledgement}
We thank Weiming Feng and Kewen Wu for helpful discussions.
Kun He wants to thank Xiaoming Sun for his support in doing this work.
Yitong Yin wants to thank Vishesh Jain for pointing to the notion of robust CSPs.
}{}

\bibliographystyle{alpha}
\bibliography{references} 

\clearpage

\section{A Bernoulli Factory for Margin Overflow}\label{sec:bernoulli-factory}
We present a generic solution to the following problem. 
Let $(\Phi, \sigma ,v)$ be the input to \Cref{Alg:Recur}, where  $\Phi=(V,\+{Q},\+{C})$ is a CSP formula, $\sigma\in \+{Q}^*$ is a feasible partial assignment, and $v\in V$ is a variable.
Assume \Cref{inputcondition-recur} for the $(\Phi, \sigma ,v)$.
The marginal distribution $\mu_v^\sigma$ over domain $Q_v$ is well-defined,
and by \Cref{localuniformitycor}, the following is satisfied for the parameters $\theta_v>0$ and $\zeta>0$ fixed as in \eqref{eq:parameter-theta}:
\begin{align}\label{eq:promise-margin-lower-bound-app}
\min\limits_{x\in Q_v}\mu^{\sigma}_v(x)\ge  \theta_v+\zeta.
\end{align}
Therefore, the distribution $\+{D}$ in \eqref{eq:definition-margin-overflow} is well-defined.
We reiterate its definition here:
\begin{align}\label{eq:definition-margin-overflow-app}
\forall x\in Q_v,\qquad
\+{D}(x)\triangleq\frac{\mu_v^{\sigma}(x)-\theta_v}{1-q_v\cdot \theta_v},\quad \text{ where }q_v\triangleq |Q_v|.
\end{align}
Our goal is to sample from this distribution by accessing an oracle for drawing independent samples from $\mu_v^{\sigma}$.
Such an oracle for $\mu_v^{\sigma}$ is realized by $\rejsamp{}(\Phi,\sigma,\{v\})$ defined in \Cref{Alg:rej}.

Such a problem of simulating a new coin by making black-box accesses to an old coin, while the distribution of the new coin is a function of the old, is known as the Bernoulli factory problem \cite{vonNeumann1951}.

\subsection{Construction and correctness of the Bernoulli factory}\label{sec:bernoulli-factory-construction}
For $\xi\in[0,1]$, we denote by $\+{O}_\xi$ a coin with probability of heads $\xi$.
Formally, $\+{O}_\xi$ is an oracle that, upon each call, independently returns 1 with probability $\xi$ and 0 with probability $1-\xi$.

We write $\nu=\mu_{v}^{\sigma}$ for short. 
We construct the following two types of basic oracles:
\begin{itemize}
\item
for each $x\in Q_v$, an $\+{O}_{\nu(x)}$ is constructed as $\+{O}_{\nu(x)}=\one{\rejsamp{}(\Phi,\sigma,\{v\})=x}$;
\item an $\+{O}_{\theta_v}$ is constructed as $\+{O}_{\theta_v}=\one{r<\theta_v}$ for $r\in[0,1)$ chosen uniformly at random.
\end{itemize}

For each  $x\in Q_v$, we apply the  \emph{Bernoulli factory for subtraction} in \cite{Nacu2005FastSO}, denoted by \subbf{},
such that it constructs a new coin $\+{O}_{\nu(x)-\theta_v}=\subbf{}(\+{O}_{\nu(x)}, \+{O}_{\theta_v},\zeta)$ with probability of heads $\nu(x)-\theta_v$.

We then apply the \emph{Bernoulli race} in \cite{Dughmi17Bernoulli}, denoted by \berrace{},
such that the subroutine $\berrace{}\left(\{\+{O}_{\nu(x)-\theta_v}\}_{x\in Q_v}\right)$ returns a random value $I\in Q_v$ satisfying that $I=x$ with probability proportional to  $\nu(x)-\theta_v=\mu_v^{\sigma}(x)-\theta_v$, i.e.~$I$ is distributed as $\+{D}$ defined in \eqref{eq:definition-margin-overflow-app}.
This achieves our goal.

Although these constructions are not new, for rigorousness, we restate the precise constructions.

The Bernoulli race \cite{Dughmi17Bernoulli} subroutine \berrace{}$\left(\left\{\+{O}^{1},\+{O}^{2},\ldots, \+{O}^{q}\right\}\right)$ is given accesses to a list of coins $\+{O}^{1}=\+{O}_{\xi_1},\+{O}^{2}=\+{O}_{\xi_2},\ldots, \+{O}^{q}=\+{O}_{\xi_q}$ with unknown $\xi_1,\xi_2,\ldots,\xi_q\in[0,1]$.
Its goal is to return a random $I\in [q]$ such that $I=i$ with probability $\xi_i/\sum_{j=1}^q\xi_j$. 
This can be achieved by independently repeating the following until a value is returned:
\begin{itemize}
\item choose $I\in[q]$ uniformly at random;
\item if a draw of $\+{O}^I$ returns 1 then return $I$.
\end{itemize}
The correctness of this procedure was given in \cite{Dughmi17Bernoulli}.
\begin{proposition}[\text{\cite[Theorem 3.3]{Dughmi17Bernoulli}}]\label{brcor}
Given access to a list of coins $L=\{\+{O}^1,\+{O}^2,\dots,\+{O}^q\}$, where for each $i\in [q]$, the probability of heads for $\+{O}^i$ is $\xi_i$, the \berrace{}$(L)$ defined above terminates with probability 1 and returns a random $I\in[q]$ such that $\Pr{I=i}=\xi_i/\sum_{j=1}^q\xi_j$ for every $i\in[q]$.
\end{proposition}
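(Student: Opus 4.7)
The plan is to analyze the procedure as a sequence of i.i.d. trials and use a total-probability computation to identify the conditional distribution of the returned index. First I would fix notation: let $S\triangleq \sum_{j=1}^{q}\xi_j$, and for each trial let $\+{E}$ denote the event ``a value is returned in this trial'' and $\+{E}_i$ the event ``the value $i$ is returned in this trial''. Since in one trial the procedure picks $I\in[q]$ uniformly and then flips $\+{O}^I$, independence of these two draws gives
\[
\Pr{\+{E}_i}=\frac{\xi_i}{q}\qquad\text{and}\qquad \Pr{\+{E}}=\sum_{i=1}^{q}\Pr{\+{E}_i}=\frac{S}{q}.
\]
In the application of interest, the coins $\+{O}^i=\+{O}_{\nu(x)-\theta_v}$ used by the calling \berrace{} invocation have probability of heads $\nu(x)-\theta_v\ge \zeta>0$ by \eqref{eq:promise-margin-lower-bound-app}, so $S>0$; I would state this positivity of $S$ explicitly as the nontrivial assumption needed for termination (since if all $\xi_j=0$ the procedure loops forever almost surely).

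For termination, I would observe that the outcomes of distinct trials are mutually independent (fresh uniform choice of $I$ and a fresh flip of $\+{O}^I$), so the number $N$ of trials until the first success is geometric with parameter $\Pr{\+{E}}=S/q>0$; hence $\Pr{N<\infty}=1$, which gives almost-sure termination.

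Next I would compute the output distribution. Let $T$ be the (a.s.~finite) index of the terminating trial and $I^\ast$ the returned value. By the law of total probability over $T$, using independence across trials,
\[
\Pr{I^\ast=i}=\sum_{t\ge 1}\Pr{T=t,\,I^\ast=i}=\sum_{t\ge 1}\left(1-\tfrac{S}{q}\right)^{t-1}\Pr{\+{E}_i}=\frac{\xi_i/q}{S/q}=\frac{\xi_i}{S},
\]
where the geometric sum converges because $S>0$. Equivalently, and perhaps cleaner to present, I would condition directly on $\+{E}$ in a single trial and use independence across trials to get $\Pr{I^\ast=i}=\Pr{\+{E}_i\mid \+{E}}=(\xi_i/q)/(S/q)=\xi_i/S$, which is the claimed formula.

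There is no real obstacle here; the main things to be careful about are (i) making the independence of the trials explicit so that the geometric calculation is justified, and (ii) flagging the implicit hypothesis $S>0$ (guaranteed in our invocation by the lower bound $\nu(x)-\theta_v\ge\zeta$ from \Cref{localuniformitycor}), since without it the ``terminates with probability $1$'' clause would fail. If one wishes to be fully rigorous about the probability space, I would set it up as an infinite product space indexed by trial number with each coordinate carrying an independent uniform draw from $[q]$ and an independent Bernoulli flip, which makes the above sums formally legitimate.
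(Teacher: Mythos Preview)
Your proof is correct and is the standard argument for the Bernoulli race. The paper itself does not supply a proof of this proposition at all; it merely cites \cite[Theorem~3.3]{Dughmi17Bernoulli} and states that ``the correctness of this procedure was given in \cite{Dughmi17Bernoulli}.'' Your geometric-trials analysis, together with the explicit flagging of the implicit hypothesis $S>0$ (which, as you note, holds in the paper's invocation by \eqref{eq:promise-margin-lower-bound-app}), is exactly what one would expect and fills in what the paper leaves to the reference.
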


To define the Bernoulli factory for subtraction, we further need to construct a linear Bernoulli factory, which transforms $\+{O}_\xi$ to $\+{O}_{C\xi}$ for a $C>1$ with the promise that $C\xi\leq 1$.
We adopt the construction of linear Bernoulli factory in \cite{Hub16Bernoulli} described in \Cref{Alg:lbf}.
Its correctness is guaranteed as follows.

\begin{algorithm}
  \caption{$\linbf{}(\+{O}, C, \zeta)$\cite{Hub16Bernoulli}} \label{Alg:lbf}
  \KwIn{a coin $\+{O}=\+{O}_{\xi}$ with unknown $\xi$,  $C>1$ and a slack $\zeta>0$, with promise that $C\xi\leq 1-\zeta$;}  
  \KwOut{a random value $\textsf{Bernoulli}(C\xi)$;}
  $k\gets 4.6/\zeta,\zeta\gets\min\{\zeta,0.644\},i\gets 1$\;
  \Repeat{$i=0$ or $R=0$}
  {
    \Repeat{$i=0$ or $i\geq k$}
    {
        draw $B\gets \+{O}$, $G\gets \textsf{Geometric}\left(\frac{C-1}{C}\right)$\;
        \tcp{\small$G$ is drawn according to geometric distribution with parameter $\frac{C-1}{C}$}
        $i\gets i-1+(1-B)G$\;
    }
    \If{$i\geq k$}
    {
        draw $R\gets \textsf{Bernoulli}\left((1+\zeta/2)^{-i}\right)$\;
        $C\gets C(1+\zeta/2),\zeta\gets \zeta/2,k\gets 2k$\;
    }
  }
    \textbf{return} $\one{i=0}$\;
\end{algorithm}

\begin{proposition}[\text{\cite[Theorem 1]{Hub16Bernoulli}}]\label{linbfcor}
Given access to a coin $\+{O}_{\xi}$, given as input $C>1$ and $\zeta>0$, with promise that $C\xi\leq 1-\zeta$, 
$\linbf{}(\+{O}, C, \zeta)$ terminates with probability $1$ and returns a draw of $\+{O}_{C\xi}$.
\end{proposition}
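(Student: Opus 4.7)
The statement is a restatement of Huber's theorem on nearly optimal linear Bernoulli factories, so my plan is to verify that Algorithm~\ref{Alg:lbf} matches Huber's construction and then invoke his analysis, while sketching the key ideas for completeness. The core is to establish a single invariant maintained across outer-loop iterations: after each pass through the inner repeat (which exits with either $i=0$ or $i\geq k$), the conditional probability that the algorithm ultimately returns $1$ equals $C\xi$ under the current values of $C$ and $\zeta$. This invariant handles correctness and termination uniformly, and reduces the proof to a one-step analysis.

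First I would analyze the inner random walk on $i$. Each inner step replaces $i$ with $i-1+(1-B)G$, where the increment is $-1$ with probability $\xi$ and $g-1$ (for $g\ge 1$) with probability $(1-\xi)\frac{C-1}{C}\bigl(\frac{1}{C}\bigr)^{g-1}$. A short generating-function (gambler's-ruin) computation shows that, in the absence of the cap, the probability of ever hitting $0$ starting from $i=1$ is exactly $\min(1,C\xi)$; this uses the geometric mean $C/(C-1)$ together with the promise $C\xi\le 1$. Thus the uncapped walk would already simulate $\+{O}_{C\xi}$, but termination fails when $C\xi$ is close to $1$ because the walk is near-null-recurrent and the hitting time has infinite mean.

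Second, the cap $i\ge k$ and the subsequent hedge $R\sim\textsf{Bernoulli}\bigl((1+\zeta/2)^{-i}\bigr)$ correct for the truncated mass while preserving finite expected time. The factor $(1+\zeta/2)^{-i}$ is calibrated so that, conditional on the walk exiting the inner loop at height $i\ge k$, the expected weight of paths that would eventually have returned to $0$ in the uncapped walk is exactly recovered by the hedge; this is the content of Huber's key identity. Upon $R=1$ the updates $C\leftarrow C(1+\zeta/2)$, $\zeta\leftarrow\zeta/2$, $k\leftarrow 2k$ restart the same analysis with a still-valid promise, since
\[
(1+\zeta/2)\,C\xi \le (1+\zeta/2)(1-\zeta) = 1 - \zeta/2 - \zeta^2/2 \le 1 - \zeta/2.
\]
Induction on the outer iterations then propagates the invariant, yielding $\Pr[\text{output}=1]=C\xi$.

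For termination, the doubling of $k$ ensures that the probability of another outer iteration (inner escape to $i\ge k$ followed by $R=1$) decays at least geometrically, so the outer loop terminates almost surely. The inner loop terminates almost surely on its own because on the finite set $\{1,\dots,k-1\}$ the walk has uniformly positive probability of decreasing by $1$ at each step and cannot avoid hitting $\{0\}\cup\{i\ge k\}$ forever. The main obstacle is the exact verification of Huber's hedging identity coupling the truncation tail with the weight $(1+\zeta/2)^{-i}$, which rests on the precise hitting-time generating function of the walk and an application of optional stopping; this is the technical heart of \cite{Hub16Bernoulli}, and I would invoke it directly rather than reproduce its computation here.
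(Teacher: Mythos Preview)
The paper does not prove this proposition at all; it simply cites Huber's theorem and uses it as a black box. Your proposal goes well beyond what the paper does, and the sketch is essentially correct: the uncapped walk hits $0$ from height $i$ with probability exactly $(C\xi)^i$ (your gambler's-ruin computation yields the quadratic $(\rho-1)(\rho-C\xi)=0$), and the hedge-and-restart preserves this because $(1+\zeta/2)^{-i}\cdot(C'\xi)^i=(C\xi)^i$ for $C'=C(1+\zeta/2)$, while the updated promise $C'\xi\le 1-\zeta/2$ keeps the recursion valid. Since the paper offers no argument of its own, there is nothing to compare against; your outline is a faithful summary of the analysis in \cite{Hub16Bernoulli}.
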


A Bernoulli factory for subtraction, $\subbf{}(\+{O}_{\xi_1}, \+{O}_{\xi_2},\zeta)$, is given in \cite{Nacu2005FastSO}, which transforms two coins $\+{O}_{\xi_1}, \+{O}_{\xi_2}$ with the promise that $\xi_1-\xi_2\geq \zeta>0$, to a new coin $\+{O}_{\xi_1-\xi_2}$.
We implement this procedure using the linear Bernoulli factory defined above:
%
\begin{itemize}
\item  $\subbf{}\left(\+{O}_{\xi_1}, \+{O}_{\xi_2},\zeta\right)=1-\linbf{}\left(\+{O}_{(1-\xi_1+\xi_2)/2},2,\zeta\right)$,
\end{itemize}
where the coin $\+{O}_{(1-\xi_1+\xi_2)/2}$ is realized with $\+{O}_{1/2}$, $\+{O}_{\xi_1}$ and $\+{O}_{\xi_2}$ as follows: if $\+{O}_{1/2} = 1$, return $1 - \+{O}_{\xi_1}$; otherwise, return $\+{O}_{\xi_2}$.
The correctness of this procedure is guaranteed as follows.

\begin{proposition}[\text{\cite[Proposition 14, (iv)]{Nacu2005FastSO}}]\label{subbfcor}
Given access to two coins $\+{O}_{\xi_1}$ and $\+{O}_{\xi_2}$, and given as input $\zeta>0$, with promise that $\xi_1-\xi_2\geq \zeta$, 
$\subbf{}\left(\+{O}_{\xi_1}, \+{O}_{\xi_2},\zeta\right)$ terminates with probability $1$ and returns a draw of $\+{O}_{\xi_1-\xi_2}$.
\end{proposition}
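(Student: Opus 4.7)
The plan is to verify the construction directly from the definition of $\subbf{}$ and to reduce termination and correctness to \Cref{linbfcor} for the linear Bernoulli factory. The construction is the composition of three conceptually independent steps, and essentially the only thing to check is an arithmetic identity and the promise required by \Cref{linbfcor}.

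First, I would verify that the intermediate coin used inside $\subbf{}$, namely the one realized by the rule ``draw $\+O_{1/2}$; if it returns $1$ output $1-\+O_{\xi_1}$, otherwise output $\+O_{\xi_2}$'', is a Bernoulli coin with parameter $(1-\xi_1+\xi_2)/2$. This is a single line of total probability:
\[
\tfrac{1}{2}(1-\xi_1) + \tfrac{1}{2}\xi_2 = \tfrac{1-\xi_1+\xi_2}{2}.
\]
Independence of the single calls to $\+O_{1/2}$, $\+O_{\xi_1}$, $\+O_{\xi_2}$ makes each invocation of this gadget an independent Bernoulli trial with the claimed parameter, so the gadget is a legitimate oracle $\+O_{(1-\xi_1+\xi_2)/2}$ in the sense required by \Cref{linbfcor}.

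Next, I would check the promise required to invoke the linear Bernoulli factory with constant $C=2$ and slack $\zeta$. We need $2\cdot (1-\xi_1+\xi_2)/2 \le 1-\zeta$, i.e.~$\xi_1-\xi_2\ge \zeta$, which is exactly the hypothesis of the proposition. Applying \Cref{linbfcor} then gives that $\linbf{}\left(\+O_{(1-\xi_1+\xi_2)/2},2,\zeta\right)$ terminates with probability $1$ and, upon termination, is distributed as $\+O_{1-\xi_1+\xi_2}$.

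Finally, I would take the complement: if $X\sim\mathrm{Bernoulli}(1-\xi_1+\xi_2)$ then $1-X\sim\mathrm{Bernoulli}(\xi_1-\xi_2)$, so the overall output $1-\linbf{}(\cdots)$ is a draw of $\+O_{\xi_1-\xi_2}$. Termination with probability $1$ is inherited directly from the termination guarantee of $\linbf{}$ in \Cref{linbfcor}, since all other operations (the coin flips of $\+O_{1/2},\+O_{\xi_1},\+O_{\xi_2}$ and the final $1-(\cdot)$) are performed a finite number of times per trial. There is no real obstacle here: the proposition is already proved in~\cite{Nacu2005FastSO}, and in our setting it is a two-line consequence of \Cref{linbfcor} once the parameter identity $(1-\xi_1+\xi_2)/2$ and the promise $\xi_1-\xi_2\ge\zeta$ are checked. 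The only mild care needed is to ensure that the coin flips used to build the intermediate oracle are fresh/independent on every invocation, so that repeated calls by $\linbf{}$ indeed see i.i.d.~samples from $\+O_{(1-\xi_1+\xi_2)/2}$.
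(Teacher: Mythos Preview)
Your proposal is correct and is exactly the verification that the paper's construction makes immediate; the paper itself does not give a proof but simply cites \cite[Proposition~14,~(iv)]{Nacu2005FastSO} after describing the same three-step construction (build $\+O_{(1-\xi_1+\xi_2)/2}$, apply $\linbf{}$ with $C=2$, complement). Your arithmetic check of the intermediate parameter and of the promise $2\cdot(1-\xi_1+\xi_2)/2\le 1-\zeta$ is precisely what justifies the citation.
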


Recall the coins $\+O_{\theta_v}$ and $\+O_{\nu(x)}$ for $x\in Q_v$ where $\nu=\mu_v^{\sigma}$.
We further assume that the values in $Q_v$ are enumerated in an arbitrary order as $Q_v=\{x_1,x_2,\ldots,x_{q_v}\}$.
To draw a sample from the distribution $\+{D}$ defined in \eqref{eq:definition-margin-overflow-app}, we construct the following Bernoulli factory:
\begin{itemize}
\item  for each $i\in [q_v]$,  let $\+{O}^i=\subbf{}\left(\+{O}_{\nu(x_i)}, \+{O}_{\theta_v},\zeta\right)$;
\item  draw $I\gets \berrace{}\left(\left\{\+{O}^1,\+{O}^2,\ldots,\+{O}^{q_v}\right\}\right)$, and return the $I$-th value $x_I$ in $Q_v$.
\end{itemize}
The parameter $\zeta>0$ in above is as fixed in \eqref{eq:parameter-theta} and satisfies the promise \eqref{eq:promise-margin-lower-bound-app} assuming \Cref{inputcondition-recur} (due to \Cref{localuniformitycor}).
Thus by Propositions \ref{brcor}, \ref{linbfcor} and \ref{subbfcor}, the above procedure terminates with probability 1 and returns an $x_I$ distributed as $\+{D}$ defined as in \eqref{eq:definition-margin-overflow-app}.
This proves \Cref{bercorrect}.

\subsection{Efficiency of the Bernoulli Factory}
We now bound the efficiency of the Bernoulli factory constructed above. 
In this analysis, we need to explicitly bound the costs for realizations of the the basic oracles $\+{O}_{\nu(x)}$ for $x\in Q_v$ through the rejection sampling $\+{O}_{\nu(x)}=\one{\rejsamp{}(\Phi,\sigma,\{v\})=x}$, whose complexity is measured in terms of both the computation cost and the query complexity for the evaluation oracle in \Cref{definition:evaluation-oracle}.


%
%

{Recall the simplification and decomposition of CSP  
defined in \Cref{sec:rejection-sampling}.
Let $\Phi^\sigma=(V^\sigma,\+{Q}^{\sigma},\+{C}^{\sigma})$ denote the simplification of $\Phi$ under partial assignment $\sigma\in\qs$, and $H^{\sigma}=H_{\Phi^{\sigma}}=(V^\sigma,\+{C}^{\sigma})$ its hypergraph representation. 
Recall that for each $v\in V^{\sigma}$, $H_v^{\sigma}=(V_v^\sigma,\+{C}_v^{\sigma})$ denotes the connected component in $H^{\sigma}$ that contains the vertex/variable $v$.
Let $\Phi^\sigma_v$ be its corresponding formula. }

We show the following theorem for upper bound on the complexity of the Bernoulli factory. 
\begin{theorem}\label{beref}
Assuming \Cref{inputcondition-recur} for the input $(\Phi, \sigma ,v)$, 
the Bernoulli factory algorithm constructed in \Cref{sec:bernoulli-factory-construction} costs in expectation:
\begin{itemize}
    \item  $O\left(q^2k^2\Delta^6(\abs{\+C^{\sigma}_v}+1) (1-\mathrm{e}\pprime q)^{-\abs{\+C^{\sigma}_v}}\right)$ queries to the evaluation oracle in \Cref{definition:evaluation-oracle};
    \item $O\left(q^3k^3\Delta^6(\abs{\+C^{\sigma}_v}+1) (1-\mathrm{e}\pprime q)^{-\abs{\+C^{\sigma}_v}}\right)$ in computation.
\end{itemize}
\end{theorem}

To prove \Cref{beref}, we first bound the cost for realizing the basic oracles  $\+{O}_{\nu(x)}$ for $x\in Q_v$. 
%

\begin{lemma}\label{lemma:csot-BF-basic-oracle}
Assume \Cref{inputcondition-recur} for the input $(\Phi, \sigma ,v)$.
It takes at most $\Delta(\abs{\+C^\sigma_v}+1)$ queries to the evaluation oracle in \Cref{definition:evaluation-oracle} and 
$O\left(k\Delta(\abs{\+C^{\sigma}_v}+1)\right)$ computation cost
for preprocessing the oracles  $\+{O}_{\nu(x)}$ for all $x\in Q_v$.
And upon each query, $\+{O}_{\nu(x)}$ returns using at most $\abs{ \+{C}_v^\sigma}(1-\mathrm{e}\pprime q)^{-\abs{\+C^{\sigma}_v}}$ queries to the evaluation oracle in expectation and 
$O\left((qk(\abs{ \+{C}_v^\sigma}+1))(1-\mathrm{e}\pprime q)^{-\abs{\+C^{\sigma}_v}}\right)$ computation cost in expectation.
\end{lemma}
\begin{proof}
The oracle $\+{O}_{\nu(x)}$ is computed as $\+{O}_{\nu(x)}=\one{\rejsamp{}(\Phi,\sigma,\{v\})=x}$.

First, observe that $\rejsamp{}(\Phi,\sigma,\{v\})$ is equivalent to $\rejsamp{}(\Phi_v^\sigma,\sigma_{V\setminus\Lambda(\sigma)},\{v\})$.
By using a depth-first search in $H^{\sigma}$, the connected component $\Phi_v^\sigma=(V_v^{\sigma},\+{Q}^{\sigma}_v,\+{C}^{\sigma}_v)$ can be constructed using at most $\Delta(\abs{\+C^\sigma_v}+1)$ queries to the evaluation oracle and $O\left(\Delta\abs{\+C^{\sigma}_v}+\Delta\abs{V^\sigma_v}\right)=O\left(k\Delta(\abs{\+C^{\sigma}_v}+1)\right)$ computation cost, because $\abs{V^\sigma_v}\le k\abs{\+C^{\sigma}_v}+1$.
This is the preprocessing cost.

Then a query to oracle $\+{O}_{\nu(x)}$ is reduced to a calling to $\rejsamp{}(\Phi_v^\sigma,\sigma_{V\setminus\Lambda(\sigma)},\{v\})$ using \Cref{Alg:rej}.
In fact, \Cref{line-rs-find} can be skipped and $K=1$ in \Cref{line-rs-for} since the component $\Phi_v^\sigma$ containing $v$ has been explicitly constructed in the preprocessing.
It is well known that the expected number of trials (the \textbf{repeat} loop in \Cref{line-rs-until}) taken by the rejection sampling until success is given by $\mathbb{P}_{\Phi^{\sigma}_v}[\Omega_{\Phi^{\sigma}_v}]^{-1}$, 
where $\Omega_{\Phi^{\sigma}_v}$ is the set of satisfying assignments of $\Phi^{\sigma}_v$ and hence  $\mathbb{P}_{\Phi^{\sigma}_v}[\Omega_{\Phi^{\sigma}_v}]$ gives the probability that a uniform random assignment is satisfying for $\Phi^{\sigma}_v$. 
By \Cref{locallemma}, assuming \Cref{inputcondition-recur},
\[
\mathbb{P}_{\Phi^{\sigma}_v}[\Omega_{\Phi^{\sigma}_v}]
\ge 
(1-\mathrm{e}\pprime q)^{\abs{\+C^{\sigma}_v}}.
\]
The rejection sampling in $\rejsamp{}(\Phi_v^\sigma,\sigma_{V\setminus\Lambda(\sigma)},\{v\})$ takes 
$(1-\mathrm{e}\pprime q)^{-\abs{\+C^{\sigma}_v}}$ trials in expectation.
And within each trial, it is easy to verify that it uses at most $\abs{\+{C}_v^\sigma}$ queries to the evaluation oracle and $O\left(k\abs{ \+{C}_v^\sigma}+q\abs{ V_v^\sigma}\right)=O\left(qk(\abs{ \+{C}_v^\sigma}+1))\right)$ computation cost.
This proves the lemma.
\end{proof}






We then state known results for the efficiency of Bernoulli factories.

\begin{proposition}[\text{\cite[Theorem 3.3]{Dughmi17Bernoulli}}]\label{bref}
Given access to a list of coins $L=\{\+{O}^1,\+{O}^2,\dots,\+{O}^q\}$, where for each $i\in [q]$, the probability of heads for $\+{O}^i$ is $\xi_i$, the expected number of queries to $\+{O}^1,\+{O}^2,\dots,\+{O}^q$ for  executing  \berrace{}$(L)$ is at most ${q}/\left({\sum_{i=1}^{q}\xi_i}\right)$.
\end{proposition}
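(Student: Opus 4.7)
The plan is to observe that the Bernoulli race runs in a sequence of i.i.d.\ trials, each consuming exactly one oracle query, so bounding the expected number of queries reduces to computing the success probability of a single trial and invoking the mean of a geometric random variable.

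First I would set up the per-trial analysis. Inside one iteration of \berrace{}$(L)$, the algorithm draws $I\in[q]$ uniformly at random and then queries $\+{O}^I$ exactly once. Let $S$ denote the event that the trial returns (i.e.\ the queried coin shows heads). By the law of total probability and independence of $I$ from the coin flip,
\[
\Pr[S]=\sum_{i=1}^{q}\Pr[I=i]\cdot \Pr[\+{O}^i=1\mid I=i]=\sum_{i=1}^{q}\frac{1}{q}\cdot \xi_i=\frac{1}{q}\sum_{i=1}^{q}\xi_i.
\]
Since iterations use fresh independent randomness (both the uniform choice of $I$ and the coin flip), the number $N$ of trials until the first success is distributed as a geometric random variable with parameter $\Pr[S]$.

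Next I would pass from trials to queries. Because each trial performs exactly one oracle query, the total number of queries to the coins in $L$ is equal to $N$, and so
\[
\E{N}=\frac{1}{\Pr[S]}=\frac{q}{\sum_{i=1}^{q}\xi_i},
\]
which matches the claimed bound (in fact with equality, hence ``at most''). The degenerate case $\sum_i \xi_i=0$ is vacuous since then \berrace{} does not terminate and the expectation is $+\infty=q/0$ by convention.

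There is no serious obstacle here: the only subtlety worth noting is the independence between successive trials, which follows immediately from the algorithmic description that each loop iteration draws fresh randomness for $I$ and for the coin. Once that is in hand, the proof is just the mean of a geometric distribution.
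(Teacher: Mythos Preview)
Your argument is correct: each iteration of \berrace{} makes exactly one oracle query and succeeds independently with probability $\frac{1}{q}\sum_i\xi_i$, so the expected number of queries is the mean of a geometric random variable, namely $q/\sum_i\xi_i$. The paper does not give its own proof of this proposition; it simply cites it from \cite{Dughmi17Bernoulli}, so there is nothing further to compare.
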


\begin{proposition}[\text{\cite[Theorem 1]{Hub16Bernoulli}}]\label{linbfef}
Given access to a coin $\+{O}_{\xi}$, given as input $C>1,\zeta>0$, with the promise $C\xi\leq 1-\zeta$, 
the expected number of queries to $\+{O}_{\xi}$ made in $\linbf{}(\+{O}, C, \zeta)$ is
at most $ 9.5C/\zeta$.
\end{proposition}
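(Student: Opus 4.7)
The plan is to analyze \Cref{Alg:lbf} by viewing its inner \textbf{repeat} loop as a random walk and its outer \textbf{repeat} loop as a geometric amplification. In the inner loop, the state $i$ evolves by $\Delta i = -1 + (1-B)G$, where $B\sim\mathrm{Bernoulli}(\xi)$ consumes exactly one query to $\+O_{\xi}$ and $G\sim\mathrm{Geometric}((C-1)/C)$ is independent. The first step is to verify, via a direct generating-function computation, that the geometric parameter $(C-1)/C$ is chosen precisely so that $(C\xi)^{i_t}$ is a martingale whenever $C\xi \le 1$. Let $\tau = \inf\{t : i_t = 0 \text{ or } i_t \ge k\}$ with $i_0 = 1$. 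Optional stopping then pins down $\Pr[i_\tau = 0]$ and $\mathbb{E}[(C\xi)^{i_\tau}\one{i_\tau \ge k}]$, which is the ingredient used to verify correctness; but for the efficiency claim it is only the walk's drift and exit statistics that matter.

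The drift satisfies $\mathbb{E}[\Delta i] = -1 + (1-\xi)\cdot C/(C-1)$, and the promise $C\xi \le 1 - \zeta$ forces $\mathbb{E}[\Delta i] \le -\zeta/(C-1)$, i.e.\ a strictly negative drift of order $\zeta/C$. A standard Wald-type argument applied to the submartingale $i_t + (\zeta/(C-1))\,t$, using the bounded variance of $-1+(1-B)G$, yields $\mathbb{E}[\tau] \le O(C/\zeta)$; since each inner iteration consumes exactly one query to $\+O_{\xi}$, the expected cost of a single outer pass is $O(C/\zeta)$. Next I would bound the outer loop: it restarts only when $i \ge k$ \emph{and} the independent trial $R\sim\mathrm{Bernoulli}((1+\zeta/2)^{-i})$ succeeds. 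The threshold $k = 4.6/\zeta$ is chosen so that the exit probability $\Pr[i_\tau \ge k]$, controlled by the martingale, is at most $(C\xi)^{k-1} \le (1-\zeta)^{4.6/\zeta - 1} \le e^{-4}$, and $R = 1$ is further unlikely because $(1+\zeta/2)^{-i} \le (1+\zeta/2)^{-k}$ is exponentially small. The parameter updates $C \gets C(1+\zeta/2)$, $\zeta \gets \zeta/2$, $k \gets 2k$ preserve the invariant $C\xi \le 1 - \zeta$ and keep the per-pass cost at $O(C/\zeta)$, while each subsequent outer iteration is reached only with doubly-exponentially small probability.

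Summing the per-pass costs against the decaying outer-restart probabilities gives a convergent geometric series with total expectation bounded by a constant multiple of $C/\zeta$. The main obstacle, and the reason this is not simply $O(C/\zeta)$ but specifically $9.5\,C/\zeta$, is pinning down that constant: one has to (i) track the overshoot carefully in $\mathbb{E}[\tau]$ rather than use a crude Wald bound, (ii) keep the $R$-correction factor aligned with the doubling schedule so no $\log(1/\zeta)$ factor leaks in, and (iii) optimize the slack $\zeta \gets \min\{\zeta, 0.644\}$ and the choice $k = 4.6/\zeta$ so that the first outer pass already succeeds with probability close to $1$. The bulk of the effort, in other words, is not proving a bound of the form $O(C/\zeta)$ but arranging the martingale and the geometric sum so that the leading constant comes out below $10$.
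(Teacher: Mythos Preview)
The paper does not prove this proposition; it is quoted verbatim from \cite[Theorem~1]{Hub16Bernoulli} and used as a black box in the cost analysis of the Bernoulli factory (see \Cref{subef} and the proof of \Cref{beref}). There is therefore no proof in the paper to compare your sketch against.

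For what it is worth, your outline matches the structure of Huber's own argument, with one sign to correct: under the parametrization that makes $(C\xi)^{i_t}$ a martingale (namely $G\in\{1,2,\dots\}$ with $\mathbb{E}[G]=C/(C-1)$), the drift is $\mathbb{E}[\Delta i]=(1-C\xi)/(C-1)\ge \zeta/(C-1)$, which is \emph{positive}; the walk drifts toward the threshold $k$, and it is absorption at $0$ that is the rare event (occurring with probability $\approx C\xi$, exactly as required for correctness). The Wald identity then reads $\mathbb{E}[\tau]=(\mathbb{E}[i_\tau]-1)\cdot(C-1)/(1-C\xi)$, and since $\mathbb{E}[i_\tau]\approx k(1-C\xi)$ plus a bounded geometric overshoot, the $(1-C\xi)$ factors cancel and one still gets $\mathbb{E}[\tau]=O(C/\zeta)$ per inner pass. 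So your plan is recoverable with the roles of ``escape'' and ``absorb'' swapped.
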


By \Cref{linbfef}, we have the following complexity bound for the Bernoulli factory for subtraction.

\begin{corollary}\label{subef}
Given access to two coins $\+{O}_{\xi_1},\+{O}_{\xi_2}$, 
and given as input $\zeta>0$, with the promise $\xi_1-\xi_2\geq \zeta$, 
the expected number of queries to $\+{O}_{\xi_1},\+{O}_{\xi_2}$ for executing $\subbf{}\left(\+{O}_{\xi_1}, \+{O}_{\xi_2},\zeta\right)$ is  
at most $\frac{39\zeta^{-1}}{1-(\xi_1-\xi_2)}$.
\end{corollary}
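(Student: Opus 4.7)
The plan is to reduce the analysis directly to the efficiency guarantee of the linear Bernoulli factory in \Cref{linbfef}, following the construction declared in \Cref{sec:bernoulli-factory-construction}:
\[
\subbf{}\left(\+{O}_{\xi_1}, \+{O}_{\xi_2},\zeta\right)=1-\linbf{}\left(\+{O}_{(1-\xi_1+\xi_2)/2},\,2,\,\zeta\right),
\]
where the inner coin $\+{O}_{\xi}$ with $\xi=(1-\xi_1+\xi_2)/2$ is simulated by first flipping $\+{O}_{1/2}$ and then returning either $1-\+{O}_{\xi_1}$ (if the flip is heads) or $\+{O}_{\xi_2}$ (if the flip is tails). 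In particular, \emph{each call to the inner coin consumes exactly one query to either $\+{O}_{\xi_1}$ or $\+{O}_{\xi_2}$}, and the $\+{O}_{1/2}$ flips do not count toward the quantity of interest.

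First I would verify the promise needed to apply \Cref{linbfef} with multiplier $C=2$ and slack $\zeta$: one checks
\[
C\xi=2\cdot\frac{1-\xi_1+\xi_2}{2}=1-(\xi_1-\xi_2)\le 1-\zeta,
\]
where the inequality is exactly the hypothesis $\xi_1-\xi_2\ge\zeta$. Hence \Cref{linbfef} applies and gives that the expected number of queries to $\+{O}_{\xi}$ made by $\linbf{}(\+{O}_{\xi},2,\zeta)$ is at most $9.5\cdot 2/\zeta=19/\zeta$. Combining this with the observation in the previous paragraph, the total expected number of queries made to $\+{O}_{\xi_1}$ and $\+{O}_{\xi_2}$ during $\subbf{}(\+{O}_{\xi_1},\+{O}_{\xi_2},\zeta)$ is at most $19/\zeta$.

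The remaining step is purely arithmetic: since $\xi_1-\xi_2\in[\zeta,1]$ we have $0<1-(\xi_1-\xi_2)\le 1$, and therefore
\[
\frac{19}{\zeta}\;\le\;\frac{39\zeta^{-1}}{1-(\xi_1-\xi_2)},
\]
which is the claimed bound. I do not anticipate a serious obstacle here; the only mild subtlety is to be clear that the expected-query bound from \Cref{linbfef} is a bound on queries to the inner composite coin $\+{O}_{\xi}$, which must then be translated into queries to the base coins $\+{O}_{\xi_1},\+{O}_{\xi_2}$ with no blow-up (the translation is $1{:}1$ because of the $\+{O}_{1/2}$ switching), so the final factor of $2$ in $9.5C/\zeta$ is the only source of the constant, with the extra slack $\tfrac{1}{1-(\xi_1-\xi_2)}$ absorbing everything comfortably.
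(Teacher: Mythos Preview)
Your proposal is correct and is exactly the argument the paper has in mind: the corollary is stated right after \Cref{linbfef} with the remark ``By \Cref{linbfef}, we have the following complexity bound for the Bernoulli factory for subtraction,'' and your derivation (verifying the promise $C\xi=1-(\xi_1-\xi_2)\le 1-\zeta$, applying the $9.5C/\zeta=19/\zeta$ bound, and observing that each call to the composite coin uses exactly one query to $\+{O}_{\xi_1}$ or $\+{O}_{\xi_2}$) is precisely that intended one-line reduction. The only cosmetic point is that your claim $0<1-(\xi_1-\xi_2)$ fails in the degenerate case $\xi_1-\xi_2=1$, but then the stated bound is vacuous anyway, so nothing is lost.
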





\begin{proof}[Proof of \Cref{beref}]
Recall that our Bernoulli factory algorithm is $\berrace{}\left(\left\{\+{O}^1,\+{O}^2,\ldots,\+{O}^{q_v}\right\}\right)$ where $\+{O}^i=\subbf{}\left(\+{O}_{\nu(x_i)}, \+{O}_{\theta_v},\zeta\right)$ for the $i$-th value $x_i\in Q_v$ and $\nu=\mu_v^\sigma$.
By \Cref{bref} and \Cref{subef},
the total number of queries to the basic oracles $\+{O}_{\nu(x)}$ for $x\in Q_v$ 
is bounded by:
\begin{align*}
\frac{q_v}{\sum\limits_{x\in Q_v}\left( \mu_v^{\sigma}(x)-\theta_v\right)}\cdot \left(\max_{x\in Q_v}\frac{39\zeta^{-1}}{1-(\mu_v^{\sigma}(x)-\theta_v)}\right)
\le
\frac{39}{\zeta^{2}(1-2\eta-\zeta)}
=O\left(q^2k^2\Delta^6\right),
\end{align*}
where the inequality is due to $\mu_v^{\sigma}(x)\le \theta_v+2\eta+\zeta$ by \Cref{localuniformitycor} assuming \Cref{inputcondition-recur}.
The theorem then follows by applying \Cref{lemma:csot-BF-basic-oracle}
%
and observing that the preprocessing costs are paid only once in the beginning.
\end{proof}

\section{Basic Properties of Variable/Constraint Attributes along \pth{}}\label{sec:monotonicity-proofs}
In this section, we prove the technical lemma (\Cref{cor-mono}) regarding the attributes of various variable/constraint sets $\vstar{\sigma},\cfrozen{\sigma},\ccon{\sigma},\csfrozen{\sigma},\+C^{\sigma}_v$ along $\pth$,
where $\cfrozen{\sigma}$ is defined in \Cref{definition:frozen-fixed}, $\+C^{\sigma}_v$ in \Cref{sec:rejection-sampling},
$\ccon{\sigma}$ in \Cref{definition:boundary-variables},
and $\vstar{\sigma}, \csfrozen{\sigma}$ in \Cref{def:cbad}.

Recall in \Cref{definition:boundary-variables}:   for any $\sigma\in \qs$, $\hfix{\sigma}$ denotes the sub-hypergraph of $H^\sigma$ induced by $V^{\sigma}\cap\vfix{\sigma}$. 
Recall in \Cref{sec:rejection-sampling} that for each $v\in V^{\sigma}$,
$H_v^{\sigma}=(V_v^\sigma,\+{C}_v^{\sigma})$ denotes the connected component in $H^{\sigma}$ that contains the vertex/variable $v$. 
For each $c\in \+C$, we denote the simplified constraint of $c$ under $\sigma$ as~$c^{\sigma}$.

Note that \Cref{cor-mono} consists of three parts: monotonicity property, upper bound on  $\abs{\+C^{\sigma_{\ell}}_v},\abs{\ccon{\sigma_{\ell}}})$, and upper bound on length of $\pth(\sigma)$. We prove \Cref{cor-mono} by showing these three parts in order.

\subsection{Proof of the monotonicity property}
The following lemma will be used in the proof of the monotonicity property in \Cref{cor-mono}.

\begin{lemma}\label{property-u}
Given $\sigma\in \qs$ with $\nextvar{\sigma} = u \neq \perp$, it holds that
$u\not \in \vfix{\sigma}$.
\end{lemma}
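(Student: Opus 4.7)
The plan is to argue by contradiction. Assume $u = \nextvar{\sigma} \neq \perp$ and, for the sake of contradiction, that $u \in \vfix{\sigma}$. Unpacking the definition of $\nextvar{\sigma}$ in \eqref{eq:definition-var}, the assumption $u \neq \perp$ forces $u \in \vinf{\sigma}$, so by \Cref{definition:boundary-variables} one has $u \in V^{\sigma} \setminus \vstar{\sigma}$ together with the existence of some $c \in \+{C}^{\sigma}$ and some $v \in \vstar{\sigma}$ such that $u, v \in \vbl(c)$.

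The first step is to show that $u$ and $v$ both lie in the vertex set of $\hfix{\sigma}$. Since $u \in V^\sigma$ by membership in $\vinf{\sigma}$, combining this with our contradiction hypothesis $u \in \vfix{\sigma}$ gives $u \in V^{\sigma} \cap \vfix{\sigma}$; the chain of inclusions $\vstar{\sigma} \subseteq V^{\sigma} \cap \vfix{\sigma}$ noted in \Cref{definition:boundary-variables} similarly places $v$ in the vertex set of $\hfix{\sigma}$. The second step is to argue that the shared constraint $c \in \+{C}^{\sigma}$ connects $u$ and $v$ in $\hfix{\sigma}$: since $\hfix{\sigma}$ is the sub-hypergraph induced on $V^{\sigma} \cap \vfix{\sigma}$ and both $u$ and $v$ belong to $\vbl(c) \cap (V^{\sigma} \cap \vfix{\sigma})$, the restriction of $c$ to this vertex subset still witnesses an adjacency between $u$ and $v$.

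The final step is to invoke the defining closure property of $\vstar{\sigma}$: it is the union of all connected components of $\hfix{\sigma}$ that meet $\{v \in V : \sigma(v) = \star\}$. Since $v \in \vstar{\sigma}$, the entire connected component of $v$ in $\hfix{\sigma}$ is contained in $\vstar{\sigma}$, and because $u$ is in that component (via the edge just established), we would conclude $u \in \vstar{\sigma}$, contradicting $u \in V^{\sigma} \setminus \vstar{\sigma}$. Hence the hypothesis $u \in \vfix{\sigma}$ is untenable. The only real subtlety is confirming that the convention for the induced sub-hypergraph preserves the adjacency between $u$ and $v$ through $c$, even when $c$ has other variables outside $V^{\sigma} \cap \vfix{\sigma}$; this is the point worth stating carefully, but once the induced structure is interpreted in the standard way (restricting edges to the induced vertex set), the argument is routine.
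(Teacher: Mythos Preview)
Your contradiction argument is the same idea as the paper's, made explicit. The paper writes: $u\in\vinf{\sigma}$ gives $u\notin\vstar{\sigma}$, and then ``combining with $\vstar{\sigma}\subseteq V^{\sigma}\cap\vfix{\sigma}$'' concludes $u\notin\vfix{\sigma}$---which, read literally, is a non sequitur (the inclusion points the wrong way). What is really needed is exactly your step: were $u\in V^{\sigma}\cap\vfix{\sigma}$, the constraint $c\in\+C^{\sigma}$ linking $u$ to some $v\in\vstar{\sigma}$ would put $u$ in $v$'s component of $\hfix{\sigma}$, hence in $\vstar{\sigma}$. So the two approaches coincide in substance; yours simply supplies the connectivity reasoning the paper elides.

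The caveat you raise is the one genuine point of care, and your reading of ``induced sub-hypergraph'' (restrict each hyperedge to the induced vertex set) is precisely what the argument requires. It is worth noting, however, that elsewhere---in the proof of \Cref{lem-mono} and in \Cref{subsection-var}---the paper treats the edges of $\hfix{\sigma}$ as only those $c^{\sigma}$ with $\vbl(c^{\sigma})\subseteq V^{\sigma}\cap\vfix{\sigma}$; under that stricter convention neither your second step nor the paper's own argument here would go through for an arbitrary witnessing $c$. This is a minor inconsistency in the paper's conventions rather than a flaw in your proof.
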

\begin{proof}
By $ u = \nextvar{\sigma} \neq \perp$ and the definition of $\nextvar{\sigma}$ in \Cref{definition:boundary-variables},
we have $u\in \vinf{\sigma}$.
Combining with the definition of $\vinf{\sigma}$,
we have $u \not \in \vstar{\sigma}$.
Combining with $\vstar{\sigma} \subseteq V^{\sigma}\cap \vfix{\sigma}$,
we have $u\not \in \vfix{\sigma}$.
\end{proof}

The next lemma states a basic monotonicity property when extending some partial assignment $\sigma$ on $\nextvar{\sigma}$.

\begin{lemma}\label{lem-mono}
Given $\sigma\in \qs$ with $\nextvar{\sigma} = u \neq \perp$ and $a\in Q_u\cup \{\star\}$, let $\tau=\sigma_{u\gets a}$. 
it holds that 
\[
\vst{\sigma}\subseteq \vst{\tau},\quad
\+{C}^{\sigma}_{\+{P}}\subseteq \+{C}^{\tau}_{\+{P}},
\]
where $\+{P}$ can be any attribute $\+{P}\in\{ \mathsf{frozen},\,\, \star\text{-}\mathsf{con},\,\, \star\text{-}\mathsf{frozen} \}$.
\end{lemma}
\begin{proof}
At first, we prove $\vstar{\sigma}\subseteq \vstar{\tau}$.
By $\nextvar{\sigma} = u \neq \perp$ and \Cref{property-u},
we have $u\not \in \vfix{\sigma}$.
Combining with the definition of $\vfix{\sigma}$ in \Cref{definition:frozen-fixed}, we have $\sigma(u)\neq \star$. Therefore it is straightforward by \Cref{def:cbad} that $\vstar{\sigma}\subseteq \vstar{\tau}$.

Now we prove $\cfrozen{\sigma}\subseteq \cfrozen{\tau}$.
For each $c\in \cfrozen{\sigma}$, we have 
$\var{c}\subseteq \vfix{\sigma}$.
Combining with $u\not \in \vfix{\sigma}$, we have $u\not\in \var{c}$, which says $\tau_{\var{c}}=\sigma_{\var{c}}$, hence $c\in \cfrozen{\tau}$ by the consistency assumption of frozen oracle in \Cref{assumption:frozen-oracle}.
In summary, we have $\cfrozen{\sigma}\subseteq \cfrozen{\tau}$.

In the next, we prove $\ccon{\sigma}\subseteq \ccon{\tau}$.
For each $c\in \ccon{\sigma}$, by \Cref{definition:boundary-variables}, there exists some vertex $v\in \var{c}\cap \vcon{\sigma}$.
By $v\in \vcon{\sigma}$, we have there exists some variable $v'$ such that $\sigma(v')=\star$ and $v$ and $v'$ are connected in $\hfix{\sigma}$.
We claim that  
$\hfix{\sigma}$ is a sub-hypergraph of $\hfix{\tau}$.
Then we have $v$ and $v'$ are connected in $\hfix{\tau}$.
In addition, recall $u\not \in \vfix{\sigma}$. Combining with $v'\in \vfix{\sigma}$, we have $u\neq v'$ and then $\tau(v') = \sigma(v') = \star$.
Combining with $v$ and $v'$ are connected in $\hfix{\tau}$, we have $v\in \vcon{\tau}$. 
Combining with $v\in \var{c}$, we have $c\in \ccon{\tau}$.
In summary, we have $\ccon{\sigma}\subseteq \ccon{\tau}$.

Now we prove the claim that $\hfix{\sigma}$ is a sub-hypergraph of $\hfix{\tau}$.
At first, we show
\begin{align}\label{relation-v-hfix}
V^{\sigma}\cap\vfix{\sigma} \subseteq  V^{\tau}\cap\vfix{\tau}.
\end{align}
Obviously, $\Lambda^+(\sigma)\subseteq \Lambda^+(\tau)$.
Combining with $\cfrozen{\sigma}\subseteq \cfrozen{\tau}$,
we have $\vfix{\sigma} \subseteq \vfix{\tau}$.
In addition, we have 
$$V^{\sigma}\cap\vfix{\sigma} = \left(V^{\sigma}\setminus \{u\}\right)\cap\vfix{\sigma} \subseteq  V^{\tau}\cap\vfix{\sigma}\subseteq V^{\tau}\cap\vfix{\tau},$$
where the first relation is by $u\not\in \vfix{\sigma}$,
the second is by $V^{\sigma}\setminus \{u\} \subseteq V^{\tau}$ and 
the last is by $\vfix{\sigma} \subseteq \vfix{\tau}$.
Thus, we have \eqref{relation-v-hfix} holds.
In addition, let $\+C_{\textsf{fix}}^{\sigma}$ be the hyperedge set of $\hfix{\sigma}$ and $\+C_{\textsf{fix}}^{\tau}$ be the hyperedge set of $\hfix{\tau}$. We can show that
\begin{align*}
&\quad \+C_{\textsf{fix}}^{\sigma}\\
(\text{definitions of $\hfix{\sigma}$ and $\+C_{\textsf{fix}}^{\sigma}$})&= \{c^\sigma\mid \left(\var{c^{\sigma}} \subseteq V^{\sigma}\cap\vfix{\sigma}\right)\wedge\left(\text{$c$ is not satisfied by $\sigma$}\right)\} \\
(\text{by $u\not\in \vfix{\sigma}$})& = \{c^\sigma\mid \left(\var{c^{\sigma}} \subseteq V^{\sigma}\cap\vfix{\sigma}\right)\wedge(u\not\in\var{c})\wedge\left(\text{$c$ is not satisfied by $\sigma$}\right)\}\\
(\text{by $c^{\sigma} = c^{\tau}$ if $u\not\in \var{c}$})& \subseteq \{c^\tau\mid \left(\var{c^{\tau}} \subseteq V^{\sigma}\cap\vfix{\sigma}\right)\wedge(u\not\in\var{c})\wedge\left(\text{$c$ is not satisfied by $\tau$}\right)\}\\
(\text{by \eqref{relation-v-hfix}})& \subseteq \{c^\tau\mid \left(\var{c^{\tau}} \subseteq V^{\tau}\cap\vfix{\tau}\right)\wedge(u\not\in\var{c})\wedge\left(\text{$c$ is not satisfied by $\tau$}\right)\}\\
&\subseteq\{c^\tau\mid \left(\var{c^{\tau}} \subseteq V^{\tau}\cap\vfix{\tau}\right)\wedge\left(\text{$c$ is not satisfied by $\tau$}\right)\} \\
(\text{definitions of $\hfix{\tau}$ and $\+C_{\textsf{fix}}^{\tau}$})&= \+C_{\textsf{fix}}^{\tau}
\end{align*}
Combining with \eqref{relation-v-hfix}, we have proven the claim that $\hfix{\sigma}$ is a sub-hypergraph of $\hfix{\tau}$.

By $\cfrozen{\sigma}\subseteq \cfrozen{\tau}$ and $\ccon{\sigma} \subseteq \ccon{\tau}$,
we have 
$$\csfrozen{\sigma} = \cfrozen{\sigma}\bigcap \ccon{\sigma} \subseteq \cfrozen{\tau}\bigcap \ccon{\tau} = \csfrozen{\tau}.$$
\end{proof}

By definition of $\pth(\cdot)$ and \Cref{lem-mono}, the monotonicity property in \Cref{cor-mono} is immediate by induction.

\subsection{Proof of the upper bound on $\abs{\+C^{\sigma_{\ell}}_v}$ and $\abs{\ccon{\sigma_{\ell}}}$}
The following lemma will be used in the proof of the upper bound on $\abs{\+C^{\sigma_{\ell}}_v}$ and $\abs{\ccon{\sigma_{\ell}}}$ in \Cref{cor-mono}.

\begin{lemma}\label{lem-cxlv-cxlstar}
Let $\sigma\in \qs$ be a partial assignment with exactly one variable $v\in V$ having $\sigma(v)=\star$ and $\pth(\sigma) = (\sigma_0,\sigma_1,\dots,\sigma_{\ell})$. For each $c^{\sigma_{\ell}}\in \+C^{\sigma_{\ell}}_v$ and each variable $u\in \var{c^{\sigma_{\ell}}}$, it holds that $c\in \ccon{\sigma_{\ell}}$ and $u\in \vfix{\sigma_{\ell}}$.
\end{lemma}
\begin{proof}
For each simplified constraint $c^{\sigma_{\ell}}\in \+C^{\sigma_{\ell}}_v$, 
by the definition of $\+C^{\sigma_{\ell}}_v$,
we have there exists a connected path $c^{\sigma_{\ell}}_1,c^{\sigma_{\ell}}_2,\dots,c^{\sigma_{\ell}}_{t}\in \+C^{\sigma_{\ell}}_v$ such that $v \in \var{c^{\sigma_{\ell}}_1}$, $c^{^{\sigma_{\ell}}}_{t}=c^{^{\sigma_{\ell}}}$ and $c^{\sigma_{\ell}}_i$ intersects $c^{\sigma_{\ell}}_{i+1}$ for each $1\leq i<t$.
Let $\text{dist}(v,c^{\sigma_{\ell}},\sigma_{\ell})$, or $\text{dist}(v,c^{\sigma_{\ell}})$ for short, denote the length of the shortest connected path from $v$ to $c^{\sigma_{\ell}}$ in $H^\sigma_{\ell}$.
We prove the lemma by induction on $\text{dist}(v,c^{\sigma_{\ell}})$.

For the base case when $\text{dist}(v,c^{\sigma_{\ell}})=1$, we have $v\in \var{c^{\sigma_{\ell}}}$.
In addition, by $\sigma_{\ell}(v)=\sigma(v)=\star$,
we have $v\in V^{\sigma_{\ell}}\cap\vfix{\sigma_{\ell}}$.
Thus, we have $v\in \vcon{\sigma_{\ell}}$.
Combining with $v\in \var{c^{\sigma_{\ell}}}$,
we have $c\in \ccon{\sigma_{\ell}}$ by  \Cref{definition:boundary-variables}.
In addition, we have $u\in \vfix{\sigma_{\ell}}$ for each $u\in \var{c^{\sigma_{\ell}}}$.
Because otherwise, $u \not \in \vfix{\sigma_{\ell}}$.
We have $u\in \var{c^{\sigma_{\ell}}}\setminus \vfix{\sigma_{\ell}}\subseteq V^{\sigma_{\ell}} \setminus\vcon{\sigma_{\ell}}$ by
$\vcon{\sigma_{\ell}}\subseteq \vfix{\sigma_{\ell}}$.
In addition, by $u,v\in \var{c^{\sigma_{\ell}}}$, $v\in \vcon{\sigma_{\ell}}$,
and $c^{\sigma_{\ell}}\in \+C_{v}^{\sigma_{\ell}} \subseteq \+C^{\sigma_{\ell}}$, we have $u\in \vinf{\sigma_{\ell}}$.
Therefore, we have $\vinf{\sigma_{\ell}}\neq \emptyset$
and $\nextvar{\sigma_{\ell}}\neq \perp$.
Thus, by \Cref{pathdef}, there must be another partial assignment $\sigma_{\ell+1}$ generated from $\sigma_{\ell}$,
which is contradictory with $\pth(\sigma)=(\sigma_{0},\dots\sigma_{\ell})$.

For the induction step, we assume $\text{dist}(v,c^{\sigma_{\ell}})=t>1$. Let $c^{\sigma_{\ell}}_1,c^{\sigma_{\ell}}_2,\dots,c^{\sigma_{\ell}}_{t} \in \+C^{\sigma_{\ell}}_v$ be a connected path in $H^{\sigma_{\ell}}_v$ with $v \in \var{c^{\sigma_{\ell}}_0}$, $c^{\sigma_{\ell}}_{t}=c^{\sigma_{\ell}}$ and $c^{\sigma_{\ell}}_i$ intersects $c^{\sigma_{\ell}}_{i+1}$ for each $1\leq i< t$.  
By the induction hypothesis
and $c_i^{\sigma_{\ell}}\in \+C_v^{\sigma_{\ell}}$,
we have $c_i\in \ccon{\sigma_{\ell}}$ for each $i<t$. 
Choose some $w\in \var{c^{\sigma_{\ell}}_{t-1}}\cap \var{c^{\sigma_{\ell}}_t}$.
By the induction hypothesis, we have $w\in \vfix{\sigma_{\ell}}$.
Combining with $w\in V^{\sigma_{\ell}}$,
we have $w\in \vfix{\sigma_{\ell}}\cap V^{\sigma_{\ell}}$.
Recall that $\sigma_{\ell}(v)=\star$.
Combining with $w\in \vfix{\sigma_{\ell}}\cap V^{\sigma_{\ell}}$ and that
$w\in \var{c^{\sigma_{\ell}}_{t-1}}$ is connected to $v\in \var{c^{\sigma_{\ell}}_1}$ by the path 
$c^{\sigma_{\ell}}_1,c^{\sigma_{\ell}}_2,\dots,c^{\sigma_{\ell}}_{t-1}$ in $H^{\sigma_{\ell}}_v$,
we have $w\in \vcon{\sigma_{\ell}}$.
Combining with $w\in \var{c^{\sigma_{\ell}}_t}$,
we have $c_{\ell}\in \ccon{\sigma_{\ell}}$.
In addition, we have $u\in \vfix{\sigma_{\ell}}$ for each $u\in \var{c^{\sigma_{\ell}}}$.
Because otherwise, $u \not \in \vfix{\sigma_{\ell}}$.
We have $u\in \var{c^{\sigma_{\ell}}}\setminus \vfix{\sigma_{\ell}}\subseteq V^{\sigma_{\ell}} \setminus\vcon{\sigma_{\ell}}$ by
$\vcon{\sigma_{\ell}}\subseteq \vfix{\sigma_{\ell}}$.
In addition, by $u,w\in \var{c^{\sigma_{\ell}}}$, $w\in \vcon{\sigma_{\ell}}$,
and $c^{\sigma_{\ell}}\in \+C_{v}^{\sigma_{\ell}} \subseteq \+C^{\sigma_{\ell}}$, we have $u\in \vinf{\sigma_{\ell}}$.
Therefore, similar to the base case one can also reach a contradiction.
This completes the induction step and the proof of the lemma.
\end{proof}

Because $\abs{\+C^{\sigma_{\ell}}_v} \leq \abs{\ccon{\sigma_{\ell}}}$ is immediate by \Cref{lem-cxlv-cxlstar}, it is sufficient to show that $\abs{\ccon{\sigma_{\ell}}}\leq \Delta \cdot \abs{\csfrozen{\sigma_{\ell}}}+\Delta\cdot \abs{\vst{\sigma_{\ell}}} $. We show this by proving that
for each $c\in \ccon{\sigma_{\ell}
}$, either there exists some $u\in \var{c}$ such that $u\in \vst{\sigma_{\ell}}$, or there exists some $c'\in \csfrozen{\sigma_{\ell}}$ such that $\var{c}\cap \var{c'}\neq \emptyset$.

For each $c\in \ccon{\sigma_{\ell}
}$,
by \Cref{definition:boundary-variables},
we have there exists some $u\in \vcon{\sigma_{\ell}}\cap \var{c^{\sigma_{\ell}}}$.
By $u\in \vcon{\sigma_{\ell}}$,
we have $u\in  V^{\sigma_{\ell}}\cap \vfix{\sigma_{\ell}}$.
By $u\in V^{\sigma_{\ell}}$,
we have $u\not\in \Lambda(\sigma_{\ell})$.
Combining with 
$u\in \vfix{\sigma_{\ell}}$,
we have either $\sigma_{\ell}(u) = \star$ or
$u\in c'$ for some $c'\in \cfrozen{\sigma_{\ell}}$.
If $\sigma_{\ell}(u) = \star$,
we have $u\in \vstar{\sigma_{\ell}}$.
Otherwise, $u\in c'$ for some $c'\in \cfrozen{\sigma_{\ell}}$.
In addition, we also have $c'\in \ccon{\sigma_{\ell}}$ by $u\in \vcon{\sigma_{\ell}}$ and $u\in \var{c'}$.
Combining with $c'\in \cfrozen{\sigma_{\ell}}$,
we have $c'\in \csfrozen{\sigma_{\ell}}$.
This completes the proof of the upper bound on $\abs{\+C^{\sigma_{\ell}}_v}$ and $\abs{\ccon{\sigma_{\ell}}}$ in \Cref{cor-mono}.

\subsection{Proof of the upper bound on length of $\pth(\sigma)$}

 Fix any $0\leq i\leq \ell$.  We claim that 
for each $0\leq  j <i$,
\begin{enumerate}
    \item either there exist some $c_j,c_j'$ such that 
$\nextvar{\sigma_j}\in \var{c_j}$, $c_j'\subseteq \csfrozen{\sigma_{i}}$, and $\var{c_j}\cap\var{c_j'}\neq \emptyset$;\label{shortpath2-item1}
    \item  or there exist some $c_j,u_j$ such that
$\nextvar{\sigma_j},u_j\in \var{c_j}$ and $u_j\in \vst{\sigma_i}$. \label{shortpath2-item2}
\end{enumerate}
Therefore, for each $0\leq j<i$, $\nextvar{\sigma_{j}}$ is in a constraint $c$ where either
$\var{c}\cap \var{c'}\neq \emptyset \text{ for some }c'\in \csfrozen{\sigma_{i}}$, or $u\in \var{c}$ for some $u\in \vstar{\sigma_i}$.
By \Cref{lem-mono} and induction, we have $\csfrozen{\sigma_i}\subseteq \csfrozen{\sigma_{\ell}}$ and $\vstar{\sigma_i}\subseteq \vstar{\sigma_{\ell}}$ for each $0\leq i\leq \ell$.  Combining with $\abs{\var{c}}\leq k$, we have
\begin{align*}
\ell &\leq k\cdot \abs{ \{c\in \+{C}: \var{c}\cap \var{c'}\neq \emptyset \text{ for some }c'\in \csfrozen{\sigma_{\ell}}\text{ or } u\in \var{c} \text{ for some }u\in \vst{\sigma_{\ell}} \}}\\
&\leq k\Delta\cdot \left(\abs{\csfrozen{\sigma_{i}}}+\abs{ \vst{\sigma_{i}}}\right),
\end{align*}
which proves the upper bound on length of $\pth(\sigma)$ in \Cref{cor-mono}.

Now we prove the claim.
Note that by $\pth(\sigma)=(\sigma_0,\dots,\sigma_{\ell})$, $0\leq i\leq \ell$ and \Cref{pathdef}, we have $\nextvar{\sigma_{j}}\neq \perp$ for each $0\leq j<i$.
Assume that $\nextvar{\sigma_{j}} = u_j$.
By \Cref{definition:boundary-variables},
we have $u_j \in \vinf{\sigma_{j}}\neq \emptyset$.
Combining with the definition of $\vinf{\sigma_{j}}$,
we have there exists some $c_j\in \+{C}^{\sigma_j}$, $w_j\in \vcon{\sigma_j}$ such that $u_j,w_j\in\vbl(c_j)$.
By $w_j\in \vcon{\sigma_{j}}$,
we have $w_j\in  V^{\sigma_{j}}\cap \vfix{\sigma_{j}}$.
By $w_j\in V^{\sigma_{j}}$,
we have $w_j\not\in \Lambda(\sigma_{j})$.
Combining with 
$w_j\in \vfix{\sigma_{j}}$,
we have either $\sigma_j(w_j) = \star$ or
$w_j\in \widehat{c}_j$ for some $\widehat{c}_j\in \cfrozen{\sigma_{j}}$.
If $\sigma_j(w_j) = \star$,
we have $w_j\in \vst{\sigma_j}\subseteq \vst{\sigma_i}$ and $c_j,w_j$ satisfies \Cref{shortpath2-item2}.
Otherwise, $w_j\in \var{\widehat{c}_j}$ for some $\widehat{c}_j\in \cfrozen{\sigma_{j}}$.
In addition, by $w_j\in \vcon{\sigma_{j}}$ and $w_j\in \var{\widehat{c}_j}$, 
we have $\widehat{c}_j\in \ccon{\sigma_{j}}$.
Combining with $\widehat{c}_j\in \cfrozen{\sigma_{j}}$,
we have 
$\widehat{c}_j\in \csfrozen{\sigma_{j}}$. By $w_j\in\vbl(c_j)$ and $w_j\in \var{\widehat{c}_j}$, we have $\var{c_j}\cap\var{\widehat{c}_j}\neq \emptyset$ and $c_j,\widehat{c}_j$ satisfies \Cref{shortpath2-item1}.
This justifies the claim and finishes the proof of \Cref{cor-mono}.

\end{document}